\newcommand{\Bf}[1]{{\bf #1}}
\newcommand{\Rm}[1]{{\rm #1}}
\newcommand{\ol}{\overline}
\newcommand{\ve}{\varepsilon}
\newcommand{\CSEPMet}{\Bf{CSEPMet}}
\newcommand{\QET}{\Bf{QET}}
\newcommand{\sep}{~\middle|~}
\newcommand{\fa}[1]{\forall{#1}~.~}
\newcommand{\ex}[1]{\exists{#1}~.~}
\newcommand{\lam}[1]{\lambda{#1}~.~}
\newcommand{\Typ}{\Bf{Typ}}
\newcommand{\lsem}{[\![}
\newcommand{\rsem}{]\!]}
\newcommand{\sem}[1]{\lsem #1\rsem}
\newcommand{\dom}{\Rm{dom}}
\newcommand{\Obj}[1]{\Bf{Obj}(#1)}
\newcommand{\Set}{\Bf{Set}}
\newcommand{\CAT}{\Bf{CAT}}
\newcommand{\arrow}{\rightarrow}
\newcommand{\Arrow}{\Rightarrow}
\newcommand{\bul}{\bullet}
\newcommand{\id}{\Rm{id}}
\newcommand{\unit}{\Bf I}
\newcommand{\ox}{\otimes}
\newcommand{\CC}{\mathbb C}
\newcommand{\DD}{\mathbb D}
\newcommand{\EE}{\mathbb E}
\newcommand{\NN}{\mathbb N}
\newcommand{\QQ}{\mathbb Q}
\newcommand{\RR}{\mathbb R}
\newcommand{\TT}{\mathbb T}
\newcommand{\VV}{\mathbb V}
\newcommand{\darrow}{\mathbin{\dot\arrow}}
\newcommand{\dto}{\mathbin{\dot\arrow}}
\newcommand{\dtimes}{\mathbin{\dot\times}}
\newcommand{\dArrow}{\mathbin{\dot\Arrow}}
\newcommand{\rue}{\ar@{=}[u]}
\newcommand{\rueh}[1]{\ar@{=}[u]^-{#1}}
\newcommand{\ruem}[1]{\ar@{=}[u]_-{#1}}
\newcommand{\ruen}[1]{\ar@{=}[u]|-{#1}}
\newcommand{\ruue}{\ar@{=}[uu]}
\newcommand{\ruueh}[1]{\ar@{=}[uu]^-{#1}}
\newcommand{\ruuem}[1]{\ar@{=}[uu]_-{#1}}
\newcommand{\ruuen}[1]{\ar@{=}[uu]|-{#1}}
\newcommand{\ruuue}{\ar@{=}[uuu]}
\newcommand{\ruuueh}[1]{\ar@{=}[uuu]^-{#1}}
\newcommand{\ruuuem}[1]{\ar@{=}[uuu]_-{#1}}
\newcommand{\ruuuen}[1]{\ar@{=}[uuu]|-{#1}}
\newcommand{\rdh}[1]{\ar[d]^-{#1}}
\newcommand{\rdm}[1]{\ar[d]_-{#1}}
\newcommand{\rde}{\ar@{=}[d]}
\newcommand{\rdeh}[1]{\ar@{=}[d]^-{#1}}
\newcommand{\rdem}[1]{\ar@{=}[d]_-{#1}}
\newcommand{\rden}[1]{\ar@{=}[d]|-{#1}}
\newcommand{\rdde}{\ar@{=}[dd]}
\newcommand{\rddeh}[1]{\ar@{=}[dd]^-{#1}}
\newcommand{\rddem}[1]{\ar@{=}[dd]_-{#1}}
\newcommand{\rdden}[1]{\ar@{=}[dd]|-{#1}}
\newcommand{\rddde}{\ar@{=}[ddd]}
\newcommand{\rdddeh}[1]{\ar@{=}[ddd]^-{#1}}
\newcommand{\rdddem}[1]{\ar@{=}[ddd]_-{#1}}
\newcommand{\rddden}[1]{\ar@{=}[ddd]|-{#1}}
\newcommand{\rrh}[1]{\ar[r]^-{#1}}
\newcommand{\rrm}[1]{\ar[r]_-{#1}}
\newcommand{\rre}{\ar@{=}[r]}
\newcommand{\rreh}[1]{\ar@{=}[r]^-{#1}}
\newcommand{\rrem}[1]{\ar@{=}[r]_-{#1}}
\newcommand{\rren}[1]{\ar@{=}[r]|-{#1}}
\newcommand{\rrue}{\ar@{=}[ru]}
\newcommand{\rrueh}[1]{\ar@{=}[ru]^-{#1}}
\newcommand{\rruem}[1]{\ar@{=}[ru]_-{#1}}
\newcommand{\rruen}[1]{\ar@{=}[ru]|-{#1}}
\newcommand{\rruue}{\ar@{=}[ruu]}
\newcommand{\rruueh}[1]{\ar@{=}[ruu]^-{#1}}
\newcommand{\rruuem}[1]{\ar@{=}[ruu]_-{#1}}
\newcommand{\rruuen}[1]{\ar@{=}[ruu]|-{#1}}
\newcommand{\rruuue}{\ar@{=}[ruuu]}
\newcommand{\rruuueh}[1]{\ar@{=}[ruuu]^-{#1}}
\newcommand{\rruuuem}[1]{\ar@{=}[ruuu]_-{#1}}
\newcommand{\rruuuen}[1]{\ar@{=}[ruuu]|-{#1}}
\newcommand{\rrdh}[1]{\ar[rd]^-{#1}}
\newcommand{\rrde}{\ar@{=}[rd]}
\newcommand{\rrdeh}[1]{\ar@{=}[rd]^-{#1}}
\newcommand{\rrdem}[1]{\ar@{=}[rd]_-{#1}}
\newcommand{\rrden}[1]{\ar@{=}[rd]|-{#1}}
\newcommand{\rrdde}{\ar@{=}[rdd]}
\newcommand{\rrddeh}[1]{\ar@{=}[rdd]^-{#1}}
\newcommand{\rrddem}[1]{\ar@{=}[rdd]_-{#1}}
\newcommand{\rrdden}[1]{\ar@{=}[rdd]|-{#1}}
\newcommand{\rrddde}{\ar@{=}[rddd]}
\newcommand{\rrdddeh}[1]{\ar@{=}[rddd]^-{#1}}
\newcommand{\rrdddem}[1]{\ar@{=}[rddd]_-{#1}}
\newcommand{\rrddden}[1]{\ar@{=}[rddd]|-{#1}}
\newcommand{\rrrm}[1]{\ar[rr]_-{#1}}
\newcommand{\rrre}{\ar@{=}[rr]}
\newcommand{\rrreh}[1]{\ar@{=}[rr]^-{#1}}
\newcommand{\rrrem}[1]{\ar@{=}[rr]_-{#1}}
\newcommand{\rrren}[1]{\ar@{=}[rr]|-{#1}}
\newcommand{\rrrue}{\ar@{=}[rru]}
\newcommand{\rrrueh}[1]{\ar@{=}[rru]^-{#1}}
\newcommand{\rrruem}[1]{\ar@{=}[rru]_-{#1}}
\newcommand{\rrruen}[1]{\ar@{=}[rru]|-{#1}}
\newcommand{\rrruue}{\ar@{=}[rruu]}
\newcommand{\rrruueh}[1]{\ar@{=}[rruu]^-{#1}}
\newcommand{\rrruuem}[1]{\ar@{=}[rruu]_-{#1}}
\newcommand{\rrruuen}[1]{\ar@{=}[rruu]|-{#1}}
\newcommand{\rrruuue}{\ar@{=}[rruuu]}
\newcommand{\rrruuueh}[1]{\ar@{=}[rruuu]^-{#1}}
\newcommand{\rrruuuem}[1]{\ar@{=}[rruuu]_-{#1}}
\newcommand{\rrruuuen}[1]{\ar@{=}[rruuu]|-{#1}}
\newcommand{\rrrde}{\ar@{=}[rrd]}
\newcommand{\rrrdeh}[1]{\ar@{=}[rrd]^-{#1}}
\newcommand{\rrrdem}[1]{\ar@{=}[rrd]_-{#1}}
\newcommand{\rrrden}[1]{\ar@{=}[rrd]|-{#1}}
\newcommand{\rrrdde}{\ar@{=}[rrdd]}
\newcommand{\rrrddeh}[1]{\ar@{=}[rrdd]^-{#1}}
\newcommand{\rrrddem}[1]{\ar@{=}[rrdd]_-{#1}}
\newcommand{\rrrdden}[1]{\ar@{=}[rrdd]|-{#1}}
\newcommand{\rrrddde}{\ar@{=}[rrddd]}
\newcommand{\rrrdddeh}[1]{\ar@{=}[rrddd]^-{#1}}
\newcommand{\rrrdddem}[1]{\ar@{=}[rrddd]_-{#1}}
\newcommand{\rrrddden}[1]{\ar@{=}[rrddd]|-{#1}}
\newcommand{\rrrre}{\ar@{=}[rrr]}
\newcommand{\rrrreh}[1]{\ar@{=}[rrr]^-{#1}}
\newcommand{\rrrrem}[1]{\ar@{=}[rrr]_-{#1}}
\newcommand{\rrrren}[1]{\ar@{=}[rrr]|-{#1}}
\newcommand{\rrrrue}{\ar@{=}[rrru]}
\newcommand{\rrrrueh}[1]{\ar@{=}[rrru]^-{#1}}
\newcommand{\rrrruem}[1]{\ar@{=}[rrru]_-{#1}}
\newcommand{\rrrruen}[1]{\ar@{=}[rrru]|-{#1}}
\newcommand{\rrrruue}{\ar@{=}[rrruu]}
\newcommand{\rrrruueh}[1]{\ar@{=}[rrruu]^-{#1}}
\newcommand{\rrrruuem}[1]{\ar@{=}[rrruu]_-{#1}}
\newcommand{\rrrruuen}[1]{\ar@{=}[rrruu]|-{#1}}
\newcommand{\rrrruuue}{\ar@{=}[rrruuu]}
\newcommand{\rrrruuueh}[1]{\ar@{=}[rrruuu]^-{#1}}
\newcommand{\rrrruuuem}[1]{\ar@{=}[rrruuu]_-{#1}}
\newcommand{\rrrruuuen}[1]{\ar@{=}[rrruuu]|-{#1}}
\newcommand{\rrrrde}{\ar@{=}[rrrd]}
\newcommand{\rrrrdeh}[1]{\ar@{=}[rrrd]^-{#1}}
\newcommand{\rrrrdem}[1]{\ar@{=}[rrrd]_-{#1}}
\newcommand{\rrrrden}[1]{\ar@{=}[rrrd]|-{#1}}
\newcommand{\rrrrdde}{\ar@{=}[rrrdd]}
\newcommand{\rrrrddeh}[1]{\ar@{=}[rrrdd]^-{#1}}
\newcommand{\rrrrddem}[1]{\ar@{=}[rrrdd]_-{#1}}
\newcommand{\rrrrdden}[1]{\ar@{=}[rrrdd]|-{#1}}
\newcommand{\rrrrddde}{\ar@{=}[rrrddd]}
\newcommand{\rrrrdddeh}[1]{\ar@{=}[rrrddd]^-{#1}}
\newcommand{\rrrrdddem}[1]{\ar@{=}[rrrddd]_-{#1}}
\newcommand{\rrrrddden}[1]{\ar@{=}[rrrddd]|-{#1}}
\newcommand{\rle}{\ar@{=}[l]}
\newcommand{\rleh}[1]{\ar@{=}[l]^-{#1}}
\newcommand{\rlem}[1]{\ar@{=}[l]_-{#1}}
\newcommand{\rlen}[1]{\ar@{=}[l]|-{#1}}
\newcommand{\rlue}{\ar@{=}[lu]}
\newcommand{\rlueh}[1]{\ar@{=}[lu]^-{#1}}
\newcommand{\rluem}[1]{\ar@{=}[lu]_-{#1}}
\newcommand{\rluen}[1]{\ar@{=}[lu]|-{#1}}
\newcommand{\rluue}{\ar@{=}[luu]}
\newcommand{\rluueh}[1]{\ar@{=}[luu]^-{#1}}
\newcommand{\rluuem}[1]{\ar@{=}[luu]_-{#1}}
\newcommand{\rluuen}[1]{\ar@{=}[luu]|-{#1}}
\newcommand{\rluuue}{\ar@{=}[luuu]}
\newcommand{\rluuueh}[1]{\ar@{=}[luuu]^-{#1}}
\newcommand{\rluuuem}[1]{\ar@{=}[luuu]_-{#1}}
\newcommand{\rluuuen}[1]{\ar@{=}[luuu]|-{#1}}
\newcommand{\rlde}{\ar@{=}[ld]}
\newcommand{\rldeh}[1]{\ar@{=}[ld]^-{#1}}
\newcommand{\rldem}[1]{\ar@{=}[ld]_-{#1}}
\newcommand{\rlden}[1]{\ar@{=}[ld]|-{#1}}
\newcommand{\rldde}{\ar@{=}[ldd]}
\newcommand{\rlddeh}[1]{\ar@{=}[ldd]^-{#1}}
\newcommand{\rlddem}[1]{\ar@{=}[ldd]_-{#1}}
\newcommand{\rldden}[1]{\ar@{=}[ldd]|-{#1}}
\newcommand{\rlddde}{\ar@{=}[lddd]}
\newcommand{\rldddeh}[1]{\ar@{=}[lddd]^-{#1}}
\newcommand{\rldddem}[1]{\ar@{=}[lddd]_-{#1}}
\newcommand{\rlddden}[1]{\ar@{=}[lddd]|-{#1}}
\newcommand{\rlle}{\ar@{=}[ll]}
\newcommand{\rlleh}[1]{\ar@{=}[ll]^-{#1}}
\newcommand{\rllem}[1]{\ar@{=}[ll]_-{#1}}
\newcommand{\rllen}[1]{\ar@{=}[ll]|-{#1}}
\newcommand{\rllue}{\ar@{=}[llu]}
\newcommand{\rllueh}[1]{\ar@{=}[llu]^-{#1}}
\newcommand{\rlluem}[1]{\ar@{=}[llu]_-{#1}}
\newcommand{\rlluen}[1]{\ar@{=}[llu]|-{#1}}
\newcommand{\rlluue}{\ar@{=}[lluu]}
\newcommand{\rlluueh}[1]{\ar@{=}[lluu]^-{#1}}
\newcommand{\rlluuem}[1]{\ar@{=}[lluu]_-{#1}}
\newcommand{\rlluuen}[1]{\ar@{=}[lluu]|-{#1}}
\newcommand{\rlluuue}{\ar@{=}[lluuu]}
\newcommand{\rlluuueh}[1]{\ar@{=}[lluuu]^-{#1}}
\newcommand{\rlluuuem}[1]{\ar@{=}[lluuu]_-{#1}}
\newcommand{\rlluuuen}[1]{\ar@{=}[lluuu]|-{#1}}
\newcommand{\rllde}{\ar@{=}[lld]}
\newcommand{\rlldeh}[1]{\ar@{=}[lld]^-{#1}}
\newcommand{\rlldem}[1]{\ar@{=}[lld]_-{#1}}
\newcommand{\rllden}[1]{\ar@{=}[lld]|-{#1}}
\newcommand{\rlldde}{\ar@{=}[lldd]}
\newcommand{\rllddeh}[1]{\ar@{=}[lldd]^-{#1}}
\newcommand{\rllddem}[1]{\ar@{=}[lldd]_-{#1}}
\newcommand{\rlldden}[1]{\ar@{=}[lldd]|-{#1}}
\newcommand{\rllddde}{\ar@{=}[llddd]}
\newcommand{\rlldddeh}[1]{\ar@{=}[llddd]^-{#1}}
\newcommand{\rlldddem}[1]{\ar@{=}[llddd]_-{#1}}
\newcommand{\rllddden}[1]{\ar@{=}[llddd]|-{#1}}
\newcommand{\rllle}{\ar@{=}[lll]}
\newcommand{\rllleh}[1]{\ar@{=}[lll]^-{#1}}
\newcommand{\rlllem}[1]{\ar@{=}[lll]_-{#1}}
\newcommand{\rlllen}[1]{\ar@{=}[lll]|-{#1}}
\newcommand{\rlllue}{\ar@{=}[lllu]}
\newcommand{\rlllueh}[1]{\ar@{=}[lllu]^-{#1}}
\newcommand{\rllluem}[1]{\ar@{=}[lllu]_-{#1}}
\newcommand{\rllluen}[1]{\ar@{=}[lllu]|-{#1}}
\newcommand{\rllluue}{\ar@{=}[llluu]}
\newcommand{\rllluueh}[1]{\ar@{=}[llluu]^-{#1}}
\newcommand{\rllluuem}[1]{\ar@{=}[llluu]_-{#1}}
\newcommand{\rllluuen}[1]{\ar@{=}[llluu]|-{#1}}
\newcommand{\rllluuue}{\ar@{=}[llluuu]}
\newcommand{\rllluuueh}[1]{\ar@{=}[llluuu]^-{#1}}
\newcommand{\rllluuuem}[1]{\ar@{=}[llluuu]_-{#1}}
\newcommand{\rllluuuen}[1]{\ar@{=}[llluuu]|-{#1}}
\newcommand{\rlllde}{\ar@{=}[llld]}
\newcommand{\rllldeh}[1]{\ar@{=}[llld]^-{#1}}
\newcommand{\rllldem}[1]{\ar@{=}[llld]_-{#1}}
\newcommand{\rlllden}[1]{\ar@{=}[llld]|-{#1}}
\newcommand{\rllldde}{\ar@{=}[llldd]}
\newcommand{\rlllddeh}[1]{\ar@{=}[llldd]^-{#1}}
\newcommand{\rlllddem}[1]{\ar@{=}[llldd]_-{#1}}
\newcommand{\rllldden}[1]{\ar@{=}[llldd]|-{#1}}
\newcommand{\rlllddde}{\ar@{=}[lllddd]}
\newcommand{\rllldddeh}[1]{\ar@{=}[lllddd]^-{#1}}
\newcommand{\rllldddem}[1]{\ar@{=}[lllddd]_-{#1}}
\newcommand{\rlllddden}[1]{\ar@{=}[lllddd]|-{#1}}
\newenvironment{choice}{\left\{\begin{array}{ll}}{\end{array}\right.}
\newcommand{\adjunction}[3]{
  \ar@<.4pc>[#1]^-{#2}
  \ar@{}[#1]|-*=0[@]{\bot}
  \ar@<-.4pc>@{<-}[#1]_-{#3}
}
\newcommand{\cornerUL}[1][dr]{\save*!/#1-1.2pc/#1:(-1,1)@^{|-}\restore}
\newcommand{\attention}[1]{\textcolor{red}{#1}}
\renewcommand{\attention}[1]{#1}
\newcommand{\BRel}[1]{\Bf{BRel}(#1)}
\newcommand{\brelc}{\BRel\CC}
\newcommand{\brelf}[1]{\brelfn{#1}:\BRel{#1}\to{#1}^2}
\newcommand{\brelfn}[1]{p_{#1}}
\newcommand{\nm}[1]{\lceil{#1}\rceil}
\newcommand{\unm}[1]{\lfloor{#1}\rfloor}
\newcommand{\Div}[2]{\Bf{Div}_{#1}({#2})}
\newcommand{\Divfn}[2]{V_{#1,#2}}
\newcommand{\Divf}[2]{\Divfn{#1}{#2}:\Div{#1}{#2}\to{#2}}
\newcommand{\Meas}{\Bf{Meas}}
\newcommand{\Span}[1]{\Bf{Span}(#1)}
\newcommand{\QBS}{\Bf{QBS}}
\newcommand{\ev}{\mathrm{ev}}
\newcommand{\mDiv}[4]{\Bf{Div}(#1,#4,#2,#3)}
\newcommand{\tmem}[1]{{\em #1\/}}
\newcommand{\tmop}[1]{\ensuremath{\operatorname{#1}}}
\newcommand{\coden}[2]{#1^{[#2]}}
\newcommand{\codeng}[3]{\attention{#1^{[#2],#3}}}
\newcommand{\ap}{\mathbin{\bul}}
\newcommand{\qRp}{\mathcal R^+}
\newcommand{\qRm}{\mathcal R^\times}
\newcommand{\qRg}[1]{\mathcal R^+_{#1}}
\newcommand{\qRa}{\mathcal R}
\newcommand{\qB}{\mathcal B}
\newcommand{\qN}{\mathcal N}
\newcommand{\qZ}{\mathcal Z}
\newcommand{\qQ}{\mathcal Q}
\newcommand{\qQs}{\mathcal Q_s}
\newcommand{\asgn}{{\mathsf\Delta}}
\newcommand{\divntv}{\mathsf{TV}} 
\newcommand{\divnkl}{\mathsf{KL}} 
\newcommand{\divnhd}{\mathsf{HD}} 
\newcommand{\divnchi}{\mathsf{Chi}} 
\newcommand{\divndp}{\mathsf{DP}} 
\newcommand{\divnpwdp}{\mathsf{pwDP}} 
\newcommand{\divnrn}{\mathsf{Re}} 
\newcommand{\divnnc}{\mathsf{NC}} 
\newcommand{\divnzcdp}{\mathsf{zCDP}} 
\newcommand{\divntcdp}{\mathsf{tCDP}} 
\newcommand{\divnf}{{}^{f}\mathsf{Div}} 
\newcommand{\divnOpfun}[2]{\langle{#2}\rangle^*#1} 
\newcommand{\divnCostDiv}[2]{{\mathsf C}(#1,#2)} 
\newcommand{\asg}[2]{\asgn^{#1}_{#2}} 
\newcommand{\divtv}[1]{\divntv_{#1}} 
\newcommand{\divdp}[2]{\divndp_{#2}^{#1}} 
\newcommand{\divpwdp}[2]{\divnpwdp_{#2}^{#1}} 
\newcommand{\divrn}[2]{{}^{#1}\divnrn_{#2}} 
\newcommand{\divnc}[1]{\divnnc_{#1}} 
\newcommand{\divtcdp}[2]{{}^{#1}\divntcdp_{#2}} 
\newcommand{\divf}[1]{{}^{f}\mathsf{Div}_{#1}} 
\newcommand{\asgnpc}{\asgn_{c}}
\newcommand{\divOpfun}[4]{(\divnOpfun {#1} {#2})^{#3}_{#4}} 
\newcommand{\divCostDiv}[3]{\divnCostDiv{#1}{#2}_{#3}} 
\newcommand{\pwr}{P}
\newcommand{\dist}{D}
\newcommand{\giry}{G}
\newcommand{\sgiry}{G_{s}}
\newcommand{\apRHL}{{\sf apRHL}}
\newcommand{\mF}{\textcolor{black}{F}}
\newcommand{\mEQ}{\textcolor{black}{\mathrm{Eq}}}
\newcommand{\mT}{\textcolor{black}{\mathrm{Top}}}
\newcommand{\extplus}{\mathbin{\bar{+}}}
\newcommand{\adjL}{L}
\newcommand{\adjR}{K}
\newcommand{\probqbs}{P}
\newcommand{\fm}[3]{\{#3\}_{#2\in #1}}
\newcommand{\mcite}[1]{(\cite{#1})}
\newcommand{\kl}{^\sharp}
\newcommand{\aquick}{\mathsf{qsort}}
\newcommand{\ainsert}{\mathsf{isort}}
\newif\ifdebugappendix
\renewenvironment{toappendix}{**********BEGIN APPENDIX**********}{**********END APPENDIX********** \\}
\renewenvironment{appendixproof}{\begin{proof}}{\end{proof}}
\theoremstyle{plain}
\newtheorem{therm}{Theorem}
\newtheorem{proposition}{Proposition}
\newtheorem{lemma}{Lemma}
\newtheorem{corollary}{Corollary}
\theoremstyle{definition}
\newtheorem{definition}{Definition}
\title{Divergences on Monads for Relational Program Logics}
\author{Tetsuya Sato \and Shin{-}ya Katsumata}
\begin{document}






\maketitle

\begin{abstract}
  Several relational program logics have been introduced for
  integrating reasoning about relational properties of programs and
  measurement of quantitative difference between computational
  effects.  Towards a general framework for such logics, in this
  paper, we formalize quantitative difference between computational
  effects as {\em divergence on monad}, then develop a relational
  program logic acRL that supports generic computational effects and
  divergences on them.  To give a categorical semantics of acRL
  supporting divergences, we give a method to obtain {\em graded
    strong relational liftings} from divergences on monads.  We derive
  two instantiations of acRL for the verification of 1) various differential
  privacy of higher-order functional
  probabilistic programs and 2) difference of distribution of costs
  between higher-order functional programs with
  probabilistic choice and cost counting operations.
\end{abstract}


\section{Introduction}

Comparing behavior of programs is one of the fundamental activities in
the verification and analysis of programs, and many concepts,
techniques and formal systems have been proposed for this purpose,
such as {\em product program construction} \mcite{BGBHS2017POPL}, {\em
  relational Hoare logic} \mcite{BentonRelationalHoare2004}, {\em
  higher-order relational refinement types}
\mcite{DBLP:conf/popl/BartheGAHRS15} and so on.

Several recent relational program logics integrate compositional
reasoning about relational properties of programs and
over-approximation of {\em quantitative difference} between
computational effects of programs; the latter is done in the style of
{\em effect system} \mcite{DBLP:conf/popl/LucassenG88}.  One
successful logic of this kind is Barthe et al.'s {\em approximate
  probabilistic relational Hoare logic} (\apRHL{} for short) designed
for verifying \emph{differential privacy} of probabilistic programs
\mcite{DBLP:conf/popl/BartheKOB12}. A judgment of \apRHL{} is of the
form $c\sim_{\epsilon,\delta}c':\Phi\Arrow\Psi$, and its intuitive
meaning is that for any state pair $(\rho,\rho')$ related by $\Phi$,
the $\epsilon$-distance between two probability distributions of final
states $\sem c\rho$ and $\sem{c'}{\rho'}$ is below $\delta$, and final
states satisfy $\Psi$.  Another relational program logic that measures
the difference between computational effects of programs is
\c{C}i\c{c}ek et al.'s {\sf RelCost}
\mcite{Cicek:2017:RCA:3093333.3009858}.  The target of the reasoning
is a higher-order programming language equipped with cost counting
effect. When we derive a judgment
$\Delta; \Psi; \Gamma \vdash M_1 \ominus M_2 \precsim n \colon \Phi$
in {\sf RelCost}, the sound semantics ensures that the difference of
cost counts by $M_1$ and $M_2$ is bound by $n$.

A high-level view on these relational program logics is that they
integrate the feature of measuring quantitative difference between
computational effects into relational program logic. We are interested
in extracting mathematical essence of this design and making
relational program logics versatile. Towards this goal, we contribute the
following development.
\begin{itemize}
\item We introduce a structure called {\em divergence on monad} for
  measuring quantitative difference between computational effects
  (Section \ref{sec:divergence_of_monads},
  \ref{sec:divergence:examples}). This generalizes various statistical
  divergences, such as Kullback-Leibler divergence and total variation
  distance on probability distributions. After exploring examples of
  divergence on monads, we introduce a method to transfer divergences
  on a monad to those on another monad through monad opfunctors.
  
\item The key structure to integrate divergences on monads and
  relational program logics is something called {\em graded strong
    relational lifting} of monads that extends given divergences. We
  present a general construction of such liftings from divergences on
  monads in Section \ref{sec:codensity_lifting}. This generalization
  shows that the development of relational program logics with
  quantitative measurement on computational effects can be done with
  various combinations of monads and divergences on them.
  
\item We introduce a generic relational program logic (called acRL)
  over Moggi's computational metalanguage (the simply-typed lambda
  calculus with monadic types) in Section \ref{sec:acRL}. Inside acRL,
  we can use graded strong relational liftings constructed from
  divergences on a monad, and reason about relational properties of
  programs together with quantitative difference of computational
  effects. To illustrate how the reasoning works in acRL, we
  instantiate it with the computational metalanguage having effectful
  operations for continuous random sampling (Section \ref{ex:prob})
  and cost counting operation (Section \ref{sec:probcost}).
\end{itemize}

\section{Preliminaries}
\label{sec:setting}

We assume basic knowledge about category theory \mcite{cwm2} and
Moggi's model of computational effects \mcite{moggicomputational}.
The definition of monad \cite[Chapter VI]{cwm2} and Kleisli category
\cite[Section VI.5]{cwm2} are omitted. 

In this paper, a Cartesian category (CC for short) is specified by a
category $\CC$ with a designated final object $1$ and a binary product
functor $(\times)\colon \CC^2\to\CC$. The associated pairing operation
and projection morphisms are denoted by
$\langle-,-\rangle,\pi_1,\pi_2$, respectively. The unique morphism to the
terminal object is denoted by $!_I:I\to 1$. A Cartesian closed category (CCC for
short) is a CC $(\CC,1,(\times))$ with a specified exponential functor
$(\Arrow)\colon \CC^{\mathrm{op}}\times\CC\to\CC$. The associated
evaluation morphism and currying operation is denoted by
$\ev,\lambda(-)$ respectively.

Let $(\CC,1,(\times))$ be a CC. A {\em global element} of $I\in\CC$ is
a morphism of type $1\arrow I$. For a category $\CC$, we define the
functor $U^\CC:\CC\to\Set$ by $U^\CC=\CC(1,-)$. When $\CC$ is obvious,
$U^\CC$ is denoted by $\CC$.
Morphisms in $\CC$ act on global elements
by the composition. To emphasize this action, we introduce a dedicated
notation $(\ap)$ whose type is $\CC(I,J)\times UI\to UJ$. Of course,
$f\ap x\triangleq f\circ x=(Uf)(x)$. We also define the partial
application of a binary morphism $f \colon I \times J \rightarrow K$ to a
global element $i\in UI$ by
$f_i \triangleq f \circ \langle i \circ !_J, \tmop{id}_J \rangle:J\to
K$.
When $\CC$ is a CCC, there is an evident isomorphism $\unm{-}\colon U(I\Arrow J)\cong\CC(I,J)$.
We write $\nm{-}$ for its inverse.

A monad $(T,\eta,\mu)$ on a category $\CC$ determines the operation
$(-)\kl:\CC(I,TJ)\to\CC(TI,TJ)$ called {\em Kleisli extension}. It is
defined by $f\kl\triangleq\mu_J\circ Tf$. A monad may be given as a
{\em Kleisli triple} \cite[Definition 1.2]{moggicomputational}. A
{\em strong monad} on a CC $(\CC,1,(\times))$ is a pair of a monad
$(T,\eta,\mu)$ and a natural transformation
$\theta_{I,J}\colon I\times TJ\to T(I\times J)$ called {\em strength}. It
should satisfy four axioms; see \cite[Definition
3.2]{moggicomputational} for detail.

In a CC-SM $(\CC,1,(\times),T,\eta,\mu,\theta)$, the
application of the strength to a global element can be expressed by
the unit and the Kleisli extension of $T$ \cite[Proof of Proposition 3.4]{moggicomputational}:
\begin{equation}
 \label{eq:stun}
  \theta_{I, J} \ap\langle i,c\rangle
 = ((\eta_{I\times J})_i)^{\sharp}\ap c \quad (i\in UI,c\in U(TJ)).
\end{equation}
We will use this fact in Proposition
\ref{prop:divergence:cost:projection} and Proposition
\ref{pp:grastrrellift}.

There are plenty of examples of C(C)Cs. For the models of
probabilistic computation, we will later use CC $\Meas$ of
measurable spaces and CCC $\QBS$ of quasi-Borel spaces
\mcite{HeunenKSY17}. Their definitions are deferred to Section
\ref{sec:qbs}.

\subsection{Category of Binary Relations}\label{sec:brel}

We next introduce the category $\brelc$ of binary relations over
$\CC$-objects. This category is equivalent to {\em subscones} of
$\CC^2$ \mcite{DBLP:conf/csl/MitchellS92}. It offers an underlying
category for relational reasoning about programs interpreted in $\CC$.
\begin{itemize}
\item An object in $\brelc$ is a triple $(I_1,I_2,R)$ where
 $R\subseteq UI\times UJ$.
\item A morphism from $(I_1,I_2,R)$ to $(J_1,J_2,S)$ in $\brelc$ is a pair of
  $\CC$-morphisms $f_1\colon I_1\arrow J_1$ and $f_2\colon I_2\arrow J_2$ such that for any
 $(i_1,i_2)\in R$, we have $(f_1\ap i_1,f_2\ap i_2)\in S$.
\end{itemize}
When $X$ is a name of a $\brelc$-object, by $X_1,X_2$ we mean its
first and second component, and by $R_X$ we mean its third component; so
$X=(X_1,X_2,R_X)$. By $(x_1,x_2)\in X$ we mean $(x_1,x_2)\in R_X$.
For objects $X,Y\in\brelc$ and a morphism
$(f_1,f_2)\colon (X_1,X_2)\to (Y_1,Y_2)$ in $\CC^2$, by
\begin{displaymath}
 (f_1,f_2)\colon X\darrow Y
\end{displaymath}
we mean that $(f_1,f_2)\in\brelc(X,Y)$, that is, for any
$(x_1,x_2)\in X$, we have $(f_1\ap x_1,f_2\ap x_2)\in Y$. We say that
$X\in\brelc$ is an {\em endorelation} (over $I$) if $X_1=X_2(=I)$.

We next define the forgetful functor $\brelf\CC$ by
\begin{displaymath}
 \brelfn\CC X\triangleq(X_1,X_2),\quad \brelfn\CC(f_1,f_2)\triangleq(f_1,f_2).
\end{displaymath}
For $(I_1,I_2)\in\CC^2$, by $\brelc_{(I_1,I_2)}$
we mean the complete boolean
algebra $\{X\in\brelc~|~X_1=I_1\wedge X_2=I_2\}$ with the order given by
$X\le Y\iff R_X\subseteq R_Y$.

When $\CC$ is a C(C)C, so is $\brelc$
\cite[Proposition 4.3]{DBLP:conf/csl/MitchellS92}. We specify a final object, a binary
product functor and an exponential functor (in case $\CC$ is a CCC) on
$\brelc$ by:
\begin{align*}
  \dot 1
 & \triangleq
    (1,1,\{(\id_1,\id_1)\}) \\
 X\dtimes Y
  & \triangleq
  (X_1\times Y_1, X_2\times Y_2,
    \{(\langle x_1,y_1\rangle,\langle x_2,y_2\rangle)~|~(x_1,x_2)\in X,(y_1,y_2)\in Y \})\\
 X\dArrow Y
  & \triangleq
  (X_1\Arrow Y_1,X_2\Arrow Y_2,
    \{(f_1,f_2)~|~\fa{(x_1,x_2)\in X}(\ev\circ\langle f_1,x_1\rangle,\ev\circ\langle f_2,x_2\rangle)\in Y\}
  ).
\end{align*}

\section{Divergences on Objects}

We introduce the concept of {\em divergence} on objects in a CC
$\CC$. Major differences between divergence and metric are threefold:
1) it is defined over objects in $\CC$, 2) no axioms is imposed on it,
and 3) it takes values in a partially ordered monoid called {\em
 divergence domain}, which we define below.
\begin{definition}
  A {\em divergence domain} $\qQ=(Q,\le,0,(+))$ is a partially ordered
 commutative monoid whose poset part is a complete lattice.
\end{definition}
The monoid addition $(+)$ is only required to be monotone; no
interaction with the sup / inf is required. We reserve the letter
$\qQ$ to denote a general divergence domain. Examples of divergence
domains are:
\begin{align*}
  \qN&=(\mathbb{N}\cup\{\infty\},\leq,0,(+)),&
 \qRp&=([0,\infty],\le,0,(+)),\\
  \qRm&=([0,\infty],\leq,1,(\times)),&
 \qRg 1&=([0,\infty],\le,0,\lam{(p,q)}p+q+pq),\\
  \qZ&=(\mathbb{Z}\cup\{\infty,-\infty\},\leq,0,(\extplus)),&
 \qRa&=([-\infty,\infty],\le,0,(\extplus))
\end{align*}
Here, $\extplus$ is an extension of the addition by
$r \extplus (-\infty) = (-\infty) \extplus r=-\infty$.
\begin{definition}
  Let $\CC$ be a CC.  A {\em $\qQ$-divergence} on an object
 $I\in\CC$ is a function $d\colon (UI)^2\to \qQ$.
\end{definition}
A suitable notion of morphism between $\CC$-objects with divergences
is {\em nonexpansive morphism}.
\begin{definition}
  Let $\CC$ be a CC.  We define the category $\Div\qQ\CC$ of
  $\qQ$-divergences on $\CC$-objects and nonexpansive morphisms
  between them by the following data.
  \begin{itemize}
  \item An object is a pair $(I,d)$ of an object $I\in\CC$ and a
    $\qQ$-divergence $d$ on $I$.
  \item A morphism from $(I,d)$ to $(J,e)$ is a $\CC$-morphism
    $f\colon I\to J$ such that for any $x_1,x_2\in UI$,
    $e(f\ap x_1,f\ap x_2)\le d(x_1,x_2)$ holds.
  \end{itemize}
  For an object $X\in\Div\qQ\CC$, by $d_X$ we mean its $\qQ$-divergence part.
  We also define the forgetful functor $\Divf \qQ\CC$ by
  $\Divfn \qQ\CC(I,d)\triangleq I$ and $\Divfn \qQ\CC(f)\triangleq f$.
\end{definition}
We remark that the
forgetful functor $\Divf\qQ\Set$ is a (Grothendieck) {\em fibration},
and the functor $\ol U\colon \Div\qQ\CC\to\Div\qQ\Set$ defined by
$\ol U(I,d)\triangleq (UI,d)$ and $\ol U(f)\triangleq f$ makes the
following commutative square a pullback in $\CAT$ (the large category
of categories and functors between them):
\begin{displaymath}
  \xymatrix@C=2cm{
  \Div\qQ\CC \rdm{\Divfn\qQ\CC}\cornerUL \rrh{\ol U} & \Div\qQ\Set \rdh{\Divfn\qQ\Set} \\
    \CC \rrm{U} & \Set
 }
\end{displaymath}
Therefore this pullback diagram asserts that $\Divf\qQ\CC$ arises from
the {\em change-of-base} of the fibration $\Divfn\qQ\Set$ along the
global section functor $U:\CC\to\Set$ \mcite{jacobscltt}.

\section{Divergences on Monads}
\label{sec:divergence_of_monads}

We introduce the concept of {\em divergence on monad} as a
quantitative measure of difference between computational effects.
This is hinted from Barthe and Olmedo's composable divergences on
probability distributions
\mcite{DBLP:conf/icalp/BartheO13}. Divergences on monads are defined
upon two extra data called {\em grading monoid} and {\em basic
  endorelation}.

\begin{definition}
  A {\em grading monoid} is a partially ordered 
  monoid $(M,\le,1,(\cdot))$.
\end{definition}
\begin{definition}
  A {\em basic endorelation} is a functor $E\colon \CC\to\brelc$ such
  that $E I$ is an endorelation on $I$.
\end{definition}
Grading monoids will be used when formulating
$(\varepsilon,\delta)$-differential privacy as a divergence on a
monad. Basic endorelations specify which global elements are regarded
as identical.  Any CC $\CC$ has at least two basic endorelations of
{\em equality relations} and {\em total relations}:
\begin{align*}
  \mEQ I&\triangleq(I,I,\{(i,i)~|~i\in UI\})
  & \mT I&\triangleq(I,I,UI\times UI).
\end{align*}
Other examples of basic endorelations can be found in concrete
categories.
\begin{itemize}
\item The category $\Div \qQ\CC$ of $\qQ$-divergences on $\CC$-objects has a basic
  relation $E_\delta$ parameterized by $\delta\in \qQ$. It collects all
  pairs of global elements whose divergence is bound by $d$.
  That is, $E_\delta(I,d)\triangleq (I,I,\{(x_1,x_2)~|~d(x_1,x_2)\le\delta\})$.
  
\item The category of preorders and monotone functions has the basic
  endorelation $E_{eq}$ collecting equivalent global elements:
  $E_{eq}(I,\le)\triangleq (I,I,\{(x,y)~|~x\le y\wedge y\le x\})$.

\end{itemize}
\begin{definition}\label{def:div}
  Let $(\CC,1,(\times),T,\eta,\mu,\theta)$ be a CC-SM, $\qQ$ be a
  divergence domain, $(M,\le,1,(\cdot))$ be a grading monoid and $E\colon \CC\to\brelc$
  be a basic endorelation.  An {\em $E$-relative $M$-graded
    $\qQ$-divergence} (when $M=1$, we drop ``$M$-graded'') on the
  monad $T$ is a doubly-indexed family of $\qQ$-divergences
  $ \asgn = \{\asg m I \colon  (U(TI))^2 \to \qQ\}_{m\in M,I\in\CC} $
  satisfying the following conditions:
  \begin{description}
  \item[Monotonicity] For any $m\le m'$ in $M$, $I\in\CC$ and
    $c_1,c_2\in U(TI)$,
    \begin{displaymath}
      \asg m I(c_1,c_2)\geq \asg {m'} I(c_1,c_2).
    \end{displaymath}

  \item[$E$-unit Reflexivity] For any $I\in\CC$,
    \begin{displaymath}
      \sup_{(x_1,x_2) \in E I}\asg 1 I(\eta_I\ap x_1,\eta_I\ap x_2)\le 0.
    \end{displaymath}

  \item[$E$-composability] For any $m_1,m_2\in M$, $I,J\in\CC$,
    $c_1,c_2\in U(TI)$ and $f_1,f_2\colon I\to TJ$,
    \begin{displaymath}
      \asg {m_1\cdot m_2} J(f_1\kl\ap c_1,f_2\kl\ap {c_2}) \le \asg {m_1}
      I(c_1,c_2) + \sup_{(x_1,x_2) \in E I}\asg{m_2}J(f_1\ap x_1,f_2\ap x_2).
    \end{displaymath}
  \end{description}
  We write $\mDiv TM\qQ E$ for the collection of $E$-relative
  $M$-graded $\qQ$-divergences on $T$.
  We introduce a partial order $\preceq$ on $\mDiv TM\qQ E$ by:
  \begin{displaymath}
    \asgn_1\preceq \asgn_2
    \iff
    \fa{m\in M,I\in\CC,c_1,c_2\in U(TI)}
    (\asgn_1)^m_I(c_1,c_2) \ge (\asgn_2)^m_I(c_1,c_2).
  \end{displaymath}
\end{definition}
The $E$-composability condition is a generalization of the
composability of differential privacy stated as \cite[Theorem
1]{DBLP:conf/icalp/BartheO13}. What is new in this paper is that 1) we
introduce a condition on the monad unit ($E$-unit reflexivity), and
that 2) the sup computed in $E$-unit reflexivity and $E$-composability
scans global elements related by $E$, while \cite{DBLP:conf/icalp/BartheO13}
only considers the case where $E=\mEQ$. We will later show that both
$E$-unit reflexivity and $E$-composability play an important role when
connecting divergences, relational liftings of $T$, and the monad
structure of $T$ - these conditions are necessary and sufficient to
construct {\em strong graded relational liftings} of $T$ satisfying
{\em fundamental property} with respect to divergences (Proposition
\ref{lem:eq}).


\section{Examples of Divergences on Monads}
\label{sec:divergence:examples}

\subsection{Cost Difference for Deterministic Computations}

To aid in understanding the $E$-unit reflexivity and $E$-composability
conditions, we illustrate a few divergences on an elementary
monad: the {\em cost count monad} $T = \NN \times -$ on $\Set$. Its
unit and Kleisli extension are defined by
\begin{align*}
  \eta_I (x) & \triangleq (0,x)
  & f\kl(i,x) & \triangleq (i + \pi_1(f(x)), \pi_2(f(x)))
  &(x\in I,i\in\NN,f\colon I\to TJ).
\end{align*}
The monad $T$ can be used to record the cost incurred by deterministic
computations. For instance, consider the quick sort algorithm $\aquick$
and the insertion sort algorithm $\ainsert$, both of which are modified so
that they tick a count whenever they compare two elements to be
sorted. These two modified sort programs are interpreted as functions
$\sem\aquick,\sem\ainsert\colon \NN^*\to T(\NN^*)$, so that the first
component of $\sem\aquick(x)$ and that of $\sem\ainsert(x)$ report the
number of comparisons performed during sorting $x$.

We first define an $\qN$-divergence
$\mathsf{C}_I $ on $TI$, for each $I\in\Set$, by
\begin{displaymath}
  {\mathsf{C}_I} {((i,x),(j,y))} \triangleq |i - j|.
\end{displaymath}
This divergence $\mathsf{C}_I$ computes the difference of costs between two
computations $(i,x),(j,y) \in TI$, ignoring their return values.  The
family $\mathsf C=\fm \Set I{\mathsf{C}_I}$ \emph{forms} a
$\mT$-relative $\qN$-divergence on $T$.  The $\mT$-unit reflexivity of
$\mathsf{C}$ means that the difference of costs between pure
computations is zero:
\[
\mathsf{C}_I (\eta_I (x),\eta_I (y)) = \mathsf{C}_I ((0,x),(0,y)) = 0.
\]
The $\mT$-composability of $\mathsf{C}$ says that we can limit the
cost difference of two runs of programs $f\kl(i,x)$ and $g\kl(j,y)$ by
the sum of cost difference of the preceding computations $(i,x),(j,y)$
and that of two programs $f,g\colon I\to TJ$. The latter is measured by
taking the sup of cost difference of $f(x)$ and $g(y)$, where $(x,y)$
range over the basic endorelation $\mT I$.
\begin{align*}
\mathsf{C}_I (f\kl(i,x),g\kl(j,y))& = 
\mathsf{C}_I (i + \pi_1(f(x)), \pi_2(f(x)),j + \pi_1(g(y)), \pi_2(g(y)))\\
&\leq
|i - j| + \sup_{x,y \in I} |\pi_1(f(x)) - \pi_1(g(y))|\\
&=
\mathsf{C}_I((i,x),(j,y)) + \sup_{(x,y) \in \mT I} \mathsf{C}_J(f(x),g(y)). 
\end{align*}

We remark that $\mathsf C$ is {\em not} an $\mEQ$-relative
$\qN$-divergence on $T$ because the $\mEQ$-composability fails: when
$f(x)=(0,w)$, $f(y)=(1,w)$ and $f(z) = (0,v)$ (for $z \neq x,y$) we
have $\mathsf{C}_I((0,x),(0,y)) = 0$ and
$ \sup_{(x,y) \in \mEQ I} \mathsf{C}_J(f(x),f(y)) = 0$, but we have
$\mathsf{C}_J (f\kl(0,x),f\kl(0,y)) = \mathsf{C}_J((0,w),(1,w)) = 1$.


Alternatively, we may consider the following $\qN$-divergence
$\mathsf{C}'_I$ on $TI$ for each $I\in\Set$:
\[
\mathsf{C}'_I((i,x),(j,y)) \triangleq
\begin{cases}
|i - j| & x = y \\
\infty & x \neq y
\end{cases}.
\]
This divergence is {\em sensitive} on return values of computations.
When return values of two computations agree, $\mathsf C'$ measures
the cost difference as done in $\mathsf{C}$, but when they do not
agree, the cost difference is judged as $\infty$. This divergence is
an $\mEQ$-relative $\qN$-divergence on $T$.

\begin{toappendix}
  \begin{proposition}
   The family $\mathsf{C}' = \{{\mathsf{C}'_I} \colon (\mathbb{N} \times I)^2 \to \qN \}_{I \in \Set}  $ of $\qN$-divergences defined by
\[
\mathsf{C}'_I((i,x),(j,y)) \triangleq
\begin{cases}
|i - j| & x = y \\
\infty & x \neq y
\end{cases}.
\]
   is a $\mEQ$-relative $\qN$-divergence on the monad $\mathbb{N} \times - $.
  \end{proposition}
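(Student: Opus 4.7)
The plan is to verify the two nontrivial conditions of Definition~\ref{def:div} ($\mEQ$-unit reflexivity and $\mEQ$-composability); monotonicity is vacuous since no grading is present.

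For $\mEQ$-unit reflexivity, I would observe that $\eta_I(x) = (0, x)$ for every $x \in I$, so $\mathsf{C}'_I(\eta_I(x), \eta_I(x)) = |0 - 0| = 0$ (the equal-return-value clause of the definition applies), and hence the supremum over $(x, x) \in \mEQ I$ is $0$.

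For $\mEQ$-composability, I would do a case analysis. Let $c_1 = (i, x_1)$, $c_2 = (j, x_2) \in TI$ and $f_1, f_2 \colon I \to TJ$. If $x_1 \neq x_2$, then $\mathsf{C}'_I(c_1, c_2) = \infty$ and the inequality holds trivially. Otherwise $x_1 = x_2 =: x$; writing $f_k(x) = (a_k, u_k)$ for $k = 1, 2$, the Kleisli extensions compute to $f_1\kl(c_1) = (i + a_1, u_1)$ and $f_2\kl(c_2) = (j + a_2, u_2)$. If $u_1 = u_2$, then the left-hand side equals $|(i + a_1) - (j + a_2)| \le |i - j| + |a_1 - a_2| = \mathsf{C}'_I(c_1, c_2) + \mathsf{C}'_J(f_1(x), f_2(x))$, which is bounded above by the right-hand side by instantiating the supremum at the pair $(x, x) \in \mEQ I$. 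If instead $u_1 \neq u_2$, then $\mathsf{C}'_J(f_1\kl(c_1), f_2\kl(c_2)) = \infty$, but also $\mathsf{C}'_J(f_1(x), f_2(x)) = \infty$, so the supremum over $\mEQ I$ on the right is already $\infty$ and the inequality holds.

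The main subtlety is the propagation of the $\infty$ value: a return-value mismatch after Kleisli extension (once the starting return values already agree) can only arise from a mismatch of $\pi_2(f_1(x))$ and $\pi_2(f_2(x))$, which is precisely the scenario that a supremum over the equality endorelation witnesses. No obstacle of substance is expected beyond this bookkeeping.
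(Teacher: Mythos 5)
Your proof is correct and follows essentially the same route as the paper's: verify unit reflexivity directly from $\eta_I(x)=(0,x)$, then establish composability by a case analysis on whether the relevant return values agree, using the explicit formula for the Kleisli extension and the triangle inequality $|(i+a_1)-(j+a_2)|\le|i-j|+|a_1-a_2|$. The only (harmless) difference is that you split cases on agreement of $\pi_2(f_1(x))$ and $\pi_2(f_2(x))$ at the specific point $x$, whereas the paper splits on global agreement of $f_1$ and $f_2$ over all of $I$; both case analyses are exhaustive and yield the same bounds.
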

  \begin{proof}
  The monotonicity of $\mathsf{C}'$ is obvious.
  
  We show the $\mEQ$-unit-reflexivity of $\mathsf{C}'$.
    For all $(x,y) \in \mEQ I$ (that is, $x = y \in I$), we have 
    \[
      \mathsf{C}'_I (\eta_I (x),\eta_I (y)) = {\mathsf{C}'_I}
      {((0,x),(0,y))} = 0.
    \]
    
    We show the $\mEQ$-composability of $\mathsf{C}'$.
    Let $(i,x), (j,y) \in \mathbb{N} \times I$ and $f,g \colon I \to \mathbb{N} \times J$.
    We write $f(z) = (i_z,f_z)$ and $g(z) = (j_z,g_z)$ for each $z \in Z$.
    \begin{itemize}
    \item
 	If $x = y$ and $x_z = y_z$ for all $z \in I$, we have
    \begin{align*}
      \mathsf{C}'_J (f\kl(i,x),g\kl(j,y))
      &= \mathsf{C}'_J(i +  i_x,f_{x}),(j +  j_x,g_{x}))\\
      &= |(i +  i_x) - (j +  j_x)|\leq |i  - j | + |i_x - j_x| \\
      &\leq \mathsf{C}'_I((i,x),(j,y)) + \sup_{(x,y) \in \mEQ I (\iff x = y \in I)}\mathsf{C}'_J(f(x),g(y))
    \end{align*}
    \item
    If $x \neq y$ or $f_z \neq g_z$ for some $z \in I$, we have 
    \[
    \mathsf{C}'_J (f\kl(i,x),g\kl(j,y))
    \leq
 	\infty
 	=
    \mathsf{C}'_I((i,x),(j,y)) + \sup_{(x,y) \in \mEQ I (\iff x = y \in I)}\mathsf{C}'_J(f(x),g(y)).
    \]
    \end{itemize}
  This completes the proof.
  \end{proof}
\end{toappendix}

\subsection{Cost Difference for Nondeterministic Computations}\label{sec:relcost}

Deterministic and nondeterministic
computations with cost counting can be respectively modeled by the
monads $(\mathbb{N} \times -)$ and $\pwr(\mathbb{N} \times -)$ on
$\Set$.

We define the divergences for cost difference as in Table
\ref{tab:divcost}.  
These divergences extract the upper bound of cost difference
between two computations. 
The divergences $\mathsf{C}$ and $\mathsf{NC}$ measure the usual distance 
of costs for deterministic and nondeterministic computations respectively.
The divergence $\mathsf{NCI}$ measures the \emph{subtraction} of costs of two
nondeterministic computations.
For results of two nondeterministic computations $A, B \in \pwr(\mathbb{N} \times I)$,
the divergence $\mathsf{NCI}_I(A,B)$ is an upper bound of $i - j$ 
for all possible choices of $(i,x) \in A$ and $(j,y) \in B$, where
a lower bound of $i - j$ is also given by $-\mathsf{NCI}_I(B,A)$.
The same idea to measure the difference of costs between two programs
by subtraction also appears in \mcite{Cicek:2017:RCA:3093333.3009858,Radicek:2017:MRR:3177123.3158124}.
If either $A$ or $B$ is empty, we fail to get an information of costs.
We then have $\mathsf{NCI}_I(A,B) = -\infty$.
On the other hand, if both $A$ and $B$ are not empty,
their \emph{cost intervals} are defined by
\[
[l_A,h_A] \triangleq [\inf_{(i,x) \in A} i, \sup_{(i,x) \in A} i],\qquad [l_B,h_B] \triangleq [\inf_{(j,y) \in B} j, \sup_{(j,y) \in B} j].
\]
We then have $\mathsf{NCI}_I(A,B) = h_A - l_B$ and $-\mathsf{NCI}_I(B,A) = l_A - h_B$.

\begin{table}
  \caption{($1$-graded) $\mT$-relative $Q$-divergences for cost counting monads}
  \begin{tabular}{cccl}
    \hline $\asgn \in \mDiv T1\qQ\mT$ & $T$ & $\qQ$ & Definition of ${\asg {} I}(c_1,c_2)$ \\
    \hline $\mathsf{C}$ & $\mathbb{N} \times - $ & $\qN$ & $ {\mathsf{C}_I} {((i,x),(j,y))} = |i - j| $ \\
    \hline $\divnnc$ & $\pwr(\mathbb{N} \times -)$ & $\qN$ & ${\divnc I}{(A,B)} = \sup_{(i,x) \in A, (j,y) \in B} |i - j|$ \\ 
    \hline $\mathsf{NCI}$ & $\pwr(\mathbb{N} \times -)$ & $\qZ$ &
    $\mathsf{NCI}_I(A,B) = \sup_{(i,x) \in A, (j,y) \in B} i - j $\\
    \hline
  \end{tabular}
  \label{tab:divcost}
\end{table}

\begin{toappendix}
 
  \begin{proposition}
   The family $\mathsf{NC} = \{\mathsf{NC}_I \colon (\pwr(\mathbb{N} \times I))^2 \to \qN \}_{I \in \Set}$ of $\qN$-divergences defined by
   \[
    \mathsf{NC}_I(A,B) \triangleq  \sup_{(i,x) \in A, (j,x) \in B} |i - j|\]
   is a $\mT$-relative $\qN$-divergence on the monad $\pwr(\mathbb{N} \times - )$.
  \end{proposition}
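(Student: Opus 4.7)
The plan is to verify the three conditions of Definition~\ref{def:div} for $\mathsf{NC}$ on the monad $T = \pwr(\NN \times -)$ with trivial grading monoid $M = 1$, divergence domain $\qN$, and basic endorelation $\mT$. Monotonicity is vacuous because $M=1$. The monad structure I will use is $\eta_I \ap x = \{(0,x)\}$ and $f\kl \ap A = \{(i+i',\, z) \mid (i,x) \in A,\ (i',z) \in f\ap x\}$ for $f \colon I \to TJ$; this powers the remaining two conditions.

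For $\mT$-unit reflexivity I simply compute: for every $(x,y) \in \mT I$,
\[
\divnc I(\eta_I\ap x,\ \eta_I\ap y) = \divnc I(\{(0,x)\},\{(0,y)\}) = |0-0| = 0,
\]
so the supremum over $\mT I$ is $0$, as required.

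The substantive step is $\mT$-composability. Given $A, B \in U(TI)$ and $f, g \colon I \to TJ$, a generic pair of elements of $f\kl\ap A$ and $g\kl\ap B$ has the form $(i+i',\, z)$ and $(j+j',\, w)$, arising from witnesses $(i,x) \in A$, $(i',z) \in f\ap x$, $(j,y) \in B$, $(j',w) \in g\ap y$. The triangle inequality
\[
|(i+i') - (j+j')| \leq |i-j| + |i'-j'|
\]
splits the defining supremum into two pieces: the $|i-j|$ piece, taken over $(i,x) \in A$ and $(j,y) \in B$ only, equals $\divnc I(A,B)$; the $|i'-j'|$ piece is bounded above by $\sup_{(x,y) \in \mT I} \divnc J(f\ap x,\, g\ap y)$ because the available $x$ and $y$ lie in $UI \times UI = \mT I$. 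Summing the two bounds gives the composability inequality.

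The main subtlety I anticipate is the empty-set corner case: when $A$ or $B$ (or every $f\ap x$, $g\ap y$ arising from them) is empty, one of $f\kl\ap A$, $g\kl\ap B$ is empty and the defining supremum is taken over an empty index set, which evaluates to $0$ in $\qN$. In that situation the right-hand side of the composability inequality is a sum of nonnegatives and is thus $\geq 0$, so the inequality is preserved uniformly and no separate case analysis is needed.
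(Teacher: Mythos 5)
Your proposal is correct and follows essentially the same route as the paper's proof: the same identity-cost unit computation for $\mT$-unit reflexivity, and the same decomposition of elements of $f\kl\ap A$, $g\kl\ap B$ via the triangle inequality $|(i+i')-(j+j')|\le|i-j|+|i'-j'|$ followed by splitting the supremum into the two expected pieces. Your extra remark on the empty-set case (the supremum over an empty index set being $0$, the bottom of $\qN$) is a harmless refinement the paper leaves implicit, and it is handled correctly.
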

  \begin{proof}
  The monotonicity of $\mathsf{NC}$ is obvious.
  
  We show the $\mT$-unit-reflexivity of $\mathsf{NC}$ .
  For all $(x,y) \in \mT I$ (that is, $x,y \in I$), we have 
    \[
      \mathsf{NC}_I (\eta_I (x),\eta_I (y)) = {\mathsf{NC}_I} {(\{
        (0,x) \},\{ (0,y) \} )} = |0 - 0| = 0.
    \]
    
   We show the $\mT$-composability of $\mathsf{NC}$.
   For all $f,g \colon I \to \pwr(\mathbb{N} \times J )$ and $A,B \in \pwr(\mathbb{N} \times I )$, we have
    \begin{align*}
      \mathsf{NC}_J (f\kl A,g\kl B)
      &= \sup \{ |i - j| \mid (i,x) \in f\kl(A), (j,y) \in g\kl(B) \}\\
      &=
        \sup \left\{
        |i_1 + i_2 - j_1 - j_2|~\middle|
        \begin{array}{l@{}}
          (i_1,x) \in A, (j_1,y) \in B, \\
          (i_2,x') \in f(x), (j_2,y') \in g(y)
        \end{array}
      \right\}\\
      &\leq \sup\{ |i_1 - j_1| ~|~ (i_1,x) \in A, (j_1,y) \in B \}\\
      &\qquad + \sup_{(x,y) \in \mT I (\iff x,y \in I)}\{|i_2 - j_2|  ~|~ (i_2,x') \in f(x), (j_2,y') \in g(y) \}\\
      &= \mathsf{NC}_I (A, B) + \sup_{(x,y) \in \mT I (\iff x,y \in I)}\mathsf{NC}_J (f(x),g(y)).
    \end{align*}
  This completes the proof.
  \end{proof}
\begin{proposition}
   The family $\mathsf{NCI} = \{\mathsf{NCI}_I \colon (\pwr(\mathbb{N} \times I))^2 \to \qN \}_{I \in \Set}$ of $\qZ$-divergences defined by
   \[
   \mathsf{NCI}_I(A,B) \triangleq
     \sup_{(i,x) \in A, (j,y) \in B} i - j 
   \]
   is a $\mT$-relative $\qZ$-divergence on the monad $\pwr(\mathbb{N} \times - )$.
  \end{proposition}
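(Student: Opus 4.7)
The plan is to verify the three defining conditions of an $E$-relative $M$-graded $\qQ$-divergence from Definition \ref{def:div}, instantiated with $E = \mT$, $M = 1$, $\qQ = \qZ$. Monotonicity holds vacuously, since the grading monoid is trivial. The $\mT$-unit reflexivity reduces to a direct calculation: for arbitrary $x,y \in I$,
\[
\mathsf{NCI}_I(\eta_I(x),\eta_I(y)) = \mathsf{NCI}_I(\{(0,x)\},\{(0,y)\}) = 0 - 0 = 0,
\]
so the sup over $(x,y) \in \mT I$ is bounded by $0$ as required.

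The substantive step is $\mT$-composability. First I would unfold the Kleisli extension of $\pwr(\mathbb{N}\times -)$ to obtain
\[
f\kl A = \{(i+k,u) \mid (i,x)\in A,\ (k,u)\in f(x)\},
\]
and likewise for $g\kl B$. Then, assuming $A, B$ and the $f(x), g(y)$ encountered below are all nonempty, any element $(i+k) - (j+l)$ of the set over which $\mathsf{NCI}_J(f\kl A, g\kl B)$ takes its sup decomposes as $(i-j) + (k-l)$, and each summand is dominated by the corresponding divergence:
\[
(i-j) \le \mathsf{NCI}_I(A,B),\qquad (k-l) \le \mathsf{NCI}_J(f(x),g(y)) \le \sup_{(x,y)\in \mT I}\mathsf{NCI}_J(f(x),g(y)).
\]
Monotonicity of $\extplus$ on $\qZ$ then gives the composability bound, and taking the sup over all such tuples on the left preserves it. This mirrors exactly the shape of the proof for $\mathsf{NC}$ given in the preceding proposition, only with signed subtraction in place of absolute value.

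The one point that requires care is the behavior at $-\infty$, since the sup over the empty set in $\qZ$ is $-\infty$ and $\extplus$ absorbs $-\infty$. I would dispose of the degenerate cases by a brief case split: if $A = \emptyset$ or $B = \emptyset$, then $f\kl A = \emptyset$ or $g\kl B = \emptyset$, making the LHS equal to $-\infty$, and correspondingly $\mathsf{NCI}_I(A,B) = -\infty$ forces the RHS to be $-\infty$ as well; if instead $A,B$ are nonempty but $f(x) = \emptyset$ for every $(i,x)\in A$ (symmetrically for $g$), then $f\kl A = \emptyset$ gives LHS $= -\infty$, which is trivially below any RHS value. In the remaining case the sup-splitting above applies. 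This bookkeeping is the only mildly nontrivial aspect; the rest is an elementary sup-of-sum estimate, so I do not anticipate any real obstacle.
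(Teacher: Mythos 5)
Your proposal is correct and follows essentially the same route as the paper's proof: the same unit-reflexivity computation and the same sup-splitting of $(i_1+i_2)-(j_1+j_2)$ into $(i_1-j_1)+(i_2-j_2)$ for $\mT$-composability, mirroring the argument for $\mathsf{NC}$. Your explicit case split for the empty-set/$-\infty$ situations is extra bookkeeping the paper leaves implicit, and it is handled correctly.
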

  \begin{proof}
  The monotonicity of $\mathsf{NCI}$ is obvious.
  
  We show the $\mT$-unit-reflexivity of $\mathsf{NCI}$ .
  For all $(x,y) \in \mT I$ (that is, $x,y \in I$), we have 
    \[
      \mathsf{NCI}_I (\eta_I (x),\eta_I (y)) = {\mathsf{NCI}_I} {(\{
        (0,x) \},\{ (0,y) \} )} = 0 - 0 = 0.
    \]
    
   We show the $\mT$-composability of $\mathsf{NCI}$.
   For all $f,g \colon I \to \pwr(\mathbb{N} \times J )$ and $A,B \in \pwr(\mathbb{N} \times I )$, we have
    \begin{align*}
      \mathsf{NCI}_J (f\kl A,g\kl B)
      &= \sup \{ i - j \mid (i,x) \in f\kl(A) \land (j,y) \in g\kl(B) \}\\
      &=
        \sup \left\{
        i_1 + i_2 - j_1 - j_2~\middle|
        \begin{array}{l@{}}
          (i_1,x) \in A, (j_1,y) \in B, \\
          (i_2,x') \in f(x), (j_2,y') \in g(y)
        \end{array}
      \right\}\\
      &\leq \sup\{ i_1 - j_1  ~|~ (i_1,x) \in A, (j_1,y) \in B \}\\
      &\qquad + \sup_{(x,y) \in \mT I (\iff x,y \in I)}\{ i_2 - j_2  ~|~ (i_2,x') \in f(x), (j_2,y') \in g(y) \}\\
      &= \mathsf{NCI}_I (A, B) + \sup_{(x,y) \in \mT I (\iff x,y \in I)}\mathsf{NCI}_J (f(x),g(y)).
    \end{align*}
  This completes the proof.
  \end{proof}
\end{toappendix}

\subsection{Divergences for Differential Privacy}

Differential privacy (DP for short) is a quantitative definition of
privacy of randomized queries in databases. DP is based on the idea of
noise-adding anonymization against background-knowledge attacks.
In the study of DP, a query is modeled by a measurable function
$c \colon I \to \giry J$, where $I$ and $J$ are measurable spaces of
inputs and outputs respectively, and $\giry J$ is the measurable space
of all probability measures over $J$; here $\giry$ itself refers to
the {\em Giry monad} (\cite{Giry1982}; see also Section
\ref{sec:meas}).
\begin{definition}[Differential Privacy, \mcite{DworkMcSherryNissimSmith2006}]
  Let $c \colon I \to \giry J$ be a morphism in $\Meas$, representing
  a randomized query.  The query $c$ satisfies
  $(\varepsilon,\delta)$-differential privacy ($\epsilon,\delta\ge 0$
  are reals) if for any adjacent datasets
  $(d_1,d_2) \in R_{\mathrm{adj}}$\footnote{ Strictly speaking,
    differential privacy depends on the definition of adjacency of
    datasets.  The adjacency relation $R_{\mathrm{adj}}$ is usually
    defined as $\{(d_1,d_2) | \rho(d_1,d_2) \leq 1\}$ with a metric
    $\rho$ over $I$.}, the following holds:
  \begin{displaymath}
    \forall S \subseteq_{\mathrm{measurable}} J.~ \Pr[ c(d_1) \in S]
    \leq
    \exp(\varepsilon) \Pr[ c(d_2) \in S] + \delta.
  \end{displaymath}
\end{definition}
To express this definition in terms of divergence on monad, we
introduce a doubly-indexed family of $\qRp$-divergence
$\divndp=\{\divdp\varepsilon J\}_{\varepsilon\in[0,\infty],J\in\Meas}$
on $\giry J$ by
\begin{displaymath}
  \divdp\varepsilon J (\mu_1,\mu_2)\triangleq
  \sup_{S\in \Sigma_J } (\mu_1(S)  - \exp(\varepsilon) \mu_2(S))\quad
  (\mu_1,\mu_2\in \giry J).
\end{displaymath}
Then the query $c \colon I \to \giry J$ satisfies
$(\varepsilon,\delta)$-DP if and only if
\begin{displaymath}
  \fa{(d_1,d_2)\in R_{\mathrm{adj}}}
  \divdp\varepsilon J(c(d_1),c(d_2))\le\delta.
\end{displaymath}
The pair $(\varepsilon,\delta)$ indicates the difference between
output probability distributions $c(d_1)$ and $c(d_2)$ of the
query $c$ for given datasets $d_1$ and $d_2$.  Intuitively, the
parameter $\varepsilon$ is an upper bound of the ratio
$\Pr[c(d_1) = s]/\Pr[c(d_2) = s] $ of probabilities which indicates
the leakage of privacy.  If $\varepsilon$ is large, attackers can
distinguish the datasets $d_1$ and $d_2$ from the outputs of the query
$c$. The parameter $\delta$ is the probability of
failure of privacy protection.

The family $\divndp$ forms an $\mEQ$-relative $\qRp$-graded
$\qRp$-divergence on the Giry monad $\giry$ \cite[Lemma
6]{DBLP:conf/lics/SatoBGHK19}. This is proved by extending the
composability of the divergence for DP on discrete probability
distributions shown as \cite[Lemmas 3 and
6]{DBLP:conf/popl/BartheKOB12} and \cite[Proposition
5]{DBLP:conf/icalp/BartheO13}, based on the \emph{composition theorem}
of DP \cite[Section 3.5]{DworkRothTCS-042}.

The conditions in Definition \ref{def:div} on $\divndp$ corresponds to
the following basic properties of DP:
\begin{description}
\item[(monotonicity)] The monotonicity of $\divndp$ corresponds to
  weakening the differential privacy of queries: if $c$ satisfies
  $(\varepsilon,\delta)$-DP and $\varepsilon \leq \varepsilon'$ and
  $\delta \leq \delta'$ holds, then $c$ satisfies
  $(\varepsilon',\delta')$-DP.
  
\item[($\mEQ$-unit reflexivity)] The $\mEQ$-unit reflexivity of
  $\divndp$ implies
  $\divdp 0 J(\eta_J \circ h(x),\eta_J \circ h(x)) = 0$ for any
  measurable function $h \colon I \to J $ and $x \in I$.  This,
  together with the composability below, ensures the {\em robustness}
  of DP of a query $c\colon I\to\giry J$ with respect to deterministic
  postprocessing:
  \begin{equation}
    \label{eq:dpostpro}
    \fa{h:J\to K}
    \text{$c$ is $(\epsilon,\delta)$-DP}
    \implies
    \text{$\giry h\circ c$ is $(\epsilon,\delta)$-DP}.
  \end{equation}
  In fact, the divergence $\divndp$ is reflexive: we have
  $\divdp 0J(\mu,\mu) = 0$ for every $\mu \in \giry J$. Therefore
  $h\colon J\to K$ and $\giry h$ in \eqref{eq:dpostpro} can be replaced by
  $h\colon J\to\giry K$ and $h\kl$; the replaced condition states the {\em
    robustness} of DP of a query with respect to probabilistic
  postprocessing.
  
\item[($\mEQ$-composability)] The $\mEQ$-composability of $\divndp$
  corresponds to the known property of DP called the {\em sequential
    composition theorem} \mcite{DworkRothTCS-042}.  If
  $c_1 \colon I \to \giry J'$ and $c_2 \colon J' \to \giry J$ are
  $(\varepsilon_1,\delta_1)$-DP and $(\varepsilon_2,\delta_2)$-DP
  respectively, then the sequential composition
  $c_2\kl\circ c_1 \colon I \to \giry J$ of the queries $c_1$ and
  $c_2$ is $(\varepsilon_1+\varepsilon_2,\delta_1+\delta_2)$-DP.
\end{description}

\paragraph*{A Non-Example: Pointwise Differential Privacy.}
We stated above that a parameter $(\varepsilon, \delta)$ of DP
intuitively gives an upper bound of the probability ratio
$\Pr[c(d_1) = s]/\Pr[c(d_2) = s]$ and the probability of failure of
privacy protection.  However, strictly speaking, there is a gap
between the definition of $(\varepsilon,\delta)$-DP and this intuition
of $\varepsilon$ and $\delta$.  \emph{Pointwise differential privacy}
(\cite[Definition 3.2]{Prasad803anote} and \cite[Proposition
1.2.3]{Hall_newstatistical}) is a finer definition of DP that is
faithful to the intuition.
\begin{definition}
  A measurable function $c \colon I \to \giry J$ (regarded as a query)
  is {\em pointwise $(\varepsilon,\delta)$-differentially private} if
  whenever $d_1$ and $d_2$ are adjacent, for some $A \in \Sigma_J$
  with $\Pr[ c(d_1) \notin A] \leq \delta$, we have
\[
\forall s \in A.~\Pr[c(d_1) = s] \leq \exp(\varepsilon) \Pr[c(d_2) = s],
\]
which is equivalent to \footnote{ Remark that $\Pr[c(d_1) = s]$ and
  $\Pr[c(d_2) = s]$ are Radon-Nikodym derivatives of $c(d_1)$ and
  $c(d_2)$ with respect to a measure $\nu$ such that $c(d_1), c(d_2) \ll \nu$.
  [$\implies$] Obvious. [$\impliedby$] By Radon-Nikodym theorem
  we can take the Radon-Nikodym derivatives $\Pr[c(d_1) = s]$ and $\Pr[c(d_2) = s]$
  with respect to $\nu = c(d_1) + c(d_2)$.
  The inequality does not depend on the choice of $\nu$.
  }
\[
\forall S \subseteq_{\mathrm{measurable}} A.~\Pr[c(d_1) \in S] \leq \exp(\varepsilon) \Pr[c(d_2) \in S].
\]
\end{definition}
To express this definition in terms of divergence on monad, we
introduce a doubly-indexed family of $\qRp$-divergences
$\divnpwdp= \{\divpwdp\varepsilon
J\}_{\varepsilon\in \qRp,J\in\Meas}$ called \emph{pointwise
  indistinguishability}:
\[
  \divpwdp\varepsilon J(\mu_1,\mu_2) \triangleq \inf
  \left\{\mu_1(J \setminus A) ~|~ A \in \Sigma_X \wedge
    (\forall{S \in \Sigma_J}. S \subseteq A \implies \mu_1(S) \leq
    \exp(\varepsilon) \mu_2(S)) \right\}.
\]
Then  $c \colon I \to \giry J$ is pointwise $(\varepsilon,\delta)$-differentially private
if and only if
\begin{displaymath}
  \fa{(d_1,d_2)\in R_{\mathrm{adj}}}
  \divpwdp\varepsilon J(c(d_1),c(d_2)) \leq \delta.
\end{displaymath}
The family $\divnpwdp$ is obviously reflexive:
$\divpwdp\varepsilon J(\mu,\mu) = 0$ holds for any $\mu \in \giry J$
and $\varepsilon \geq 0$. Hence it is $\mEQ$-unit reflexive too.
However, it is not $\mEQ$-composable. We let $3 = \{0,1,2\}$ and
$2 = \{0,1\}$ be discrete spaces, and let
$\alpha = \exp(\varepsilon)$.  We define two probability distributions
$\mu_1,\mu_2 \in \giry 3$ by
\[
\mu_1 \triangleq \frac{1}{10}\mathbf{d}_0 + \frac{9}{10}\mathbf{d}_1, \qquad 
\mu_2 \triangleq \frac{9}{10\alpha}\mathbf{d}_1 + (1 - \frac{9}{10\alpha})\mathbf{d}_2.
\]
We then have $\divpwdp\varepsilon 3(\mu_1,\mu_2) = \frac{1}{10}$ with $A = \{1,2\}$
since $\frac{1}{10} > \exp(\varepsilon) \cdot 0$,  $\frac{9}{10} \leq \exp(\varepsilon) \cdot \frac{9}{10\alpha}$, and $0 \leq \exp(\varepsilon) \cdot (1 - \frac{9}{10\alpha})$.
Next, we define $f \colon 3 \to \giry 2$ 
by
\[
f(0) \triangleq \frac{1}{10}\mathbf{d}_0 + \frac{9}{10}\mathbf{d}_1,\quad
f(1) \triangleq \frac{9}{10}\mathbf{d}_0 + \frac{1}{10}\mathbf{d}_1, \quad
f(2) \triangleq \mathbf{d}_1.
\]
We then calculate
\[
f\kl(\mu_1) = \frac{82}{100}\mathbf{d}_0 + \frac{18}{100}\mathbf{d}_1, \quad 
f\kl(\mu_2) = \frac{81}{100\alpha}\mathbf{d}_0 + (\frac{100\alpha- 90 + 9}{100\alpha})\mathbf{d}_1.
\]
Then, we obtain $\divpwdp\varepsilon 2(f\kl(\mu_1),f\kl(\mu_2)) = \frac{82}{100}$ with $A = \{0\}$ since $ \frac{82}{100} >  \exp(\varepsilon) \frac{81}{100\alpha}$.
Hence $\divpwdp\varepsilon 2(f\kl(\mu_1),f\kl(\mu_2))  = \frac{82}{100} > \frac{1}{10} = \divpwdp\varepsilon 3(\mu_1,\mu_2)$.
Thus $\divnpwdp$ is not $\mEQ$-composable, because by the reflexivity of $\divnpwdp$, we have $\sup_{(x,y) \in \mEQ_3} \divpwdp 0{}(f(x),f(y)) = 0$.

\paragraph*{Various Relaxations of Differential Privacy}
Since the seminal work on DP by \cite{DworkMcSherryNissimSmith2006},
various relaxations of differential privacy have been proposed:
{\em R\'enyi DP} \mcite{MironovCSF17},
{\em zero-concentrated DP} \mcite{BSTCC16} and
{\em truncated zero-concentrated DP} \mcite{BDRSSTOC18}.
They give tighter bounds of differential privacy.
These relaxations of differential privacy can be expressed by suitable
divergences on the Giry monad $\giry$ and sub-Giry monad $\sgiry$; see
Table \ref{tab:divdp} for their definitions. There,
$\alpha,w\in(1,\infty)$ are non-grading parameters for $\divnrn$ and
$\divntcdp$.  Each row of the table represents that $\asgn$ is an
$\mEQ{}$-relative $\qQ$- (resp. $\qQs$-) divergences on $\giry$
(resp. $\sgiry$), and the definition of $\asg {} I(\mu_1,\mu_2)$
follows.
\begin{table}
  \caption{$\mEQ{}$-relative $M$-graded $\qQ$- ($\qQs$-)divergences on $\giry$ ($\sgiry$)}
  \begin{tabular}{ccccll} \hline $\asgn$ & $M$ & $\qQ$ & $\qQs$ &
                                                                      Definition of ${\asg m I}(\mu_1,\mu_2)$ & Composability proof \\
    \hline $\divndp$ & $\qRp$ & $\qRp$ & $\qRp$ &
                                                  $\sup_{S
                                                  \in\Sigma_I}
                                                  (\mu_1(S) -
                                                  \exp(\varepsilon)
                                                  \mu_2(S))$ &
                                                               \mcite{DBLP:conf/icalp/BartheO13}
    \\ \hline $\divrn\alpha{}$ & 1 & $\qRp$ & $\qRa$ &
                                                       $\frac{1}{\alpha
                                                       - 1} \log
                                                       \int_I \left(
                                                       \frac{\mu_1(x)}{\mu_2(x)}
                                                       \right)^\alpha
                                                       \mu_2(x)~dx.$ &
                                                                       \mcite{MironovCSF17}
    \\ \hline $\divnzcdp$ & $\qRp$ & $\qRp$ & $\qRa$ &
                                                       $\sup_{1 <
                                                       \alpha}
                                                       \frac{1}{\alpha}(\divrn
                                                       \alpha
                                                       I(\mu_1,\mu_2)
                                                       - m)$ &
                                                               \mcite{BSTCC16}
    \\ \hline $\divtcdp w{}$ & 1 & $\qRp$ & $\qRa$ &
                                                     $\sup_{1 < \alpha
                                                     < w}
                                                     \frac{1}{\alpha}(\divrn
                                                     \alpha
                                                     I(\mu_1,\mu_2))$ & \mcite{BDRSSTOC18} \\ \hline
  \end{tabular}
  \label{tab:divdp}
\end{table}

\subsection{Statistical Divergences and Composablity of  $f$-Divergences}
\label{sec:sdcfd}

Apart from differential privacy, various distances between
(sub-)probability distributions are introduced in probability theory.
They are called {\em statistical divergences}. Examples include: {\em
  total variation distance} $\divntv$, {\em Hellinger distance}
$\divnhd$, {\em Kullback-Leibler divergence} $\divnkl$, and {\em
  $\chi^2$-divergence} $\divnchi$; they are defined in Table
\ref{tab:divstat}. These statistical divergences are $\mEQ$-relative
divergences on the Giry monad $\giry$ (and $\sgiry$ for $\divntv$); see
the same table for their divergence domains.  Question marks in the
column of $\qQs$ means that we do not know with which monoid structure
the $\mEQ$-composability holds. We remark that these divergences are
also reflexive, that is, $\asgn(c,c)=0$. $\mEQ$-composability of
these divergences in discrete form are proved
in~\mcite{DBLP:conf/icalp/BartheO13,olmedo2014approximate}.  Later,
\cite{DBLP:conf/lics/SatoBGHK19} extends their results to
the composability of divergences in continuous form.
\begin{table}
  \caption{Statistical divergences that are $\mEQ{}$-relative $\qQ$-
    (resp. $\qQs$-) divergences on $\giry$ (resp. $\sgiry$)}
  \begin{tabular}{lcccl} \hline Name & $\asgn$ & $\qQ$ & $\qQs$ &
                                                                      Definition of ${\asg m I}(\mu_1,\mu_2)$\\
    \hline Total variation distance&
    
    $\divntv$ & $\qRp$ & $\qRp$ &
                                             $\frac{1}{2}\int_I|\mu_1(x)
                                             - \mu_2(x)|~dx$ 
    \\ \hline Kullback-Leibler divergence & $\divnkl$ & $\qRp$ & ? &
                                           $\int_I \mu_1(x) \log
                                           \left(
                                           \frac{\mu_1(x)}{\mu_2(x)}\right)~dx$                                                                  
    \\ \hline Hellinger distance & $\divnhd$ & $\qRp$ & ? &
                                           $\frac{1}{2}\int_I \left(
                                           \sqrt{\mu_1(x)} -
                                           \sqrt{\mu_2(x)} \right)^2
                                           ~dx$ 
                                                  \\ \hline
    $\chi^2$-divergence &$\divnchi$ & $\qRg 1$ & ? &
                                    $\int_I \frac{(\mu_1(x) -
                                    \mu_2(x))^2}{\mu_2(x)}~dx$ 
    \\ \hline
  \end{tabular}
  \label{tab:divstat}
\end{table}

Each of four divergences in Table \ref{tab:divstat} can be expressed
as an \emph{$f$-divergence} $\divnf$ (\cite{fdiv,Csiszar67,fdivm}):
\[
  \divf I(\mu_1,\mu_2) \triangleq \int_I
  \mu_2(x)f\left(\frac{\mu_1(x)}{\mu_2(x)}\right)dx.
\]
Here, $f$ is a parameter called {\em weight function}, and has to be a
convex function $f \colon [0,\infty) \to \RR$, continuous at $0$ and
satisfying $\lim_{x\to +0}xf(x)=0$. Weight functions for four
divergences $\divntv, \divnkl,\divnhd,\divnchi$ are in Table
\ref{tab:example_parameters}.  In fact, $\divndp^\varepsilon$ is also
an $f$-divergence with weight function
$f(t)=\max (0,t-\exp(\varepsilon))$; see~\cite[Proposition
2]{DBLP:conf/icalp/BartheO13}. We also remark that R\'enyi divergence
$\divrn \alpha {}$ of order $\alpha$ is the logarithm of the
$f$-divergence with weight function $f(t) = t^\alpha$.

$f$-divergences have several nice properties such as reflexivity,
postprocessing inequality, joint-convexity, duality and
continuity~\mcite{Csiszar67,1705001_2006}. However, the
$\mEQ$-composability of $f$-divergences is not guaranteed in
general. Here we provide a sufficient condition for the $\mEQ$-composability
of $\divnf$ over a specific form of divergence domain.
\begin{proposition}
  \label{prop:composable:f-divergence:weight}
  Let $\gamma\geq 0$ be a nonnegative real number,
  $\qRg\gamma =([0,\infty],\le,0,\lam{(p,q)}p+q+\gamma pq)$ be the
  divergence domain, and $f$ be a weight function such that $f \geq 0$
  and $f(1) = 0$.  If there exists $\alpha, \beta, \beta' \in \RR$
  such that, for all $x,y,z,w \in [0,1]$, the following hold (suppose $0 f (0/0) = 0$):
  \begin{align*}
    0 & \leq (\beta' z + (1-\beta') x) + \gamma x f\left({z}/{x}\right) \\
    xy f\left({zw}/{xy}\right) &\leq
                                 (\beta w + (1-\beta) y) x f\left({z}/{x}\right) \nonumber + (\beta' z + (1-\beta') x) y f\left({w}/{y}\right)\\
      & \quad + \gamma xy f\left({z}/{x}\right)f\left({w}/{y}\right) \nonumber + \alpha (x-z)(w-y),
  \end{align*}
  then $\divnf$ is an $\mEQ{}$-relative $\qRg\gamma$-divergence on the
  Giry monad $\giry$.  When $\alpha = 0$ and $\beta,\beta' \in [0,1]$,
  $\giry$ can be replaced with the sub-Giry monad $\sgiry$.
\end{proposition}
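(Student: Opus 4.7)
The plan is to verify the three conditions of Definition~\ref{def:div} for $\asgn = \divnf$; the grading monoid is trivial ($M=1$) so monotonicity is vacuous, and $\mEQ$-unit reflexivity reduces to the observation that $\eta_I(x) = \mathbf{d}_x$ has Radon--Nikodym derivative $1$ with respect to itself, whence $\divf I(\eta_I(x), \eta_I(x)) = f(1) = 0$ by hypothesis. All the substance sits in $\mEQ$-composability.

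For composability, fix $\mu_1,\mu_2\in\giry I$ and $f_1,f_2\colon I\to\giry J$, and abbreviate $D\triangleq\divf I(\mu_1,\mu_2)$ and $S\triangleq\sup_x\divf J(f_1(x),f_2(x))$. First I would form the joint measures $P(du,dv)=\mu_1(du)f_1(u)(dv)$ and $Q(du,dv)=\mu_2(du)f_2(u)(dv)$ on $I\times J$. Since $f_i^\sharp\mu_i$ is the $J$-marginal of the corresponding joint measure, the postprocessing inequality for $f$-divergences (listed among the basic properties just before the proposition) gives $\divf J(f_1^\sharp\mu_1,f_2^\sharp\mu_2)\le\divf{I\times J}(P,Q)$, so it suffices to bound the right-hand side by $D+S+\gamma D S$, which is precisely the sum of $D$ and $S$ in $\qRg\gamma$.

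The key move is to apply the second hypothesized inequality pointwise under the substitution $x=\mu_2(u)$, $z=\mu_1(u)$, $y=f_2(u)(v)$, $w=f_1(u)(v)$, where the $\mu_i$ and $f_i(u)$ are taken as densities against common dominating probability measures (e.g.\ $\tfrac{1}{2}(\mu_1+\mu_2)$ on $I$ and $\tfrac{1}{2}(f_1(u)+f_2(u))$ on $J$) so that $x,y,z,w\in[0,1]$. The left-hand side integrates by Fubini to $\divf{I\times J}(P,Q)$, and I would dispatch the four right-hand-side contributions as follows. Term~1 contributes exactly $D$ because its inner $v$-integral $\int(\beta f_1(u)(v)+(1-\beta)f_2(u)(v))\,dv=1$ for probability kernels; Term~4 vanishes for the same reason, since $\int(f_1(u)(v)-f_2(u)(v))\,dv=0$. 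For Terms~2 and~3, whose integrands both carry the factor $\divf J(f_1(u),f_2(u))\le S$ after the inner $v$-integration, the subtle point is that the coefficient $\beta'\mu_1(u)+(1-\beta')\mu_2(u)$ need not be nonnegative; here the first hypothesized inequality enters, ensuring that the combined coefficient $(\beta'\mu_1(u)+(1-\beta')\mu_2(u))+\gamma\mu_2(u)f(\mu_1(u)/\mu_2(u))$ is nonnegative so that bounding by $S$ preserves the direction. Integrating the combined coefficient over $u$ yields $1+\gamma D$, so Terms~2 and~3 together contribute at most $S+\gamma D S$, and summing with Term~1 delivers the target $D+S+\gamma D S$.

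For the sub-Giry refinement, the kernels have total mass at most $1$, so the inner $v$-integrals are only bounded by $1$ rather than equal to it; the hypothesis $f\ge 0$ together with $\beta,\beta'\in[0,1]$ is exactly what keeps Terms~1 and~2 correctly bounded after this weakening, while $\alpha=0$ simply annihilates Term~4, which no longer vanishes automatically. The main technical obstacle I anticipate is measure-theoretic bookkeeping (choice of dominating measure, Fubini, and extending the convention $0\cdot f(0/0)=0$ to null sets where a density vanishes), but the algebraic heart of the proof is the combination of Terms~2 and~3 enabled by the first hypothesized inequality.
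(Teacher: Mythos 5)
Your proposal is correct in its algebraic core, and that core coincides with the paper's: both arguments apply the hypothesized pointwise inequality with $x,z$ playing the role of densities of $\mu_2,\mu_1$ and $y,w$ those of the kernels, both split the resulting bound into the same pieces, and both invoke the first hypothesized inequality at exactly the same spot, namely to make the combined coefficient $(\beta' z+(1-\beta')x)+\gamma x f(z/x)$ nonnegative so that the inner integral can be bounded by $\sup_{x}\divf J(f_1(x),f_2(x))$ (this is precisely how the paper handles its term $V_2$, and your Term~1/Term~4 bookkeeping matches its $V_1$ and $V_3$, including the role of $\beta,\beta'\in[0,1]$, $f\ge 0$ and $\alpha=0$ in the sub-Giry case). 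The route is nevertheless genuinely different: you form the joint measures $P,Q$ on $I\times J$, use Radon--Nikodym densities against dominating measures, Fubini/Tonelli, and the data-processing inequality to reduce to $\divf{I\times J}(P,Q)$, whereas the paper never forms joint measures or fibrewise densities; it uses the characterization of $\divnf$ as a supremum over finite measurable partitions, approximates the kernel evaluations $h(-)(A_i)$, $k(-)(A_i)$ by monotone sequences of simple functions, applies the inequality to finite sums, and passes to the limit (Jensen's inequality supplying the monotonicity needed to exchange sup and limit). Your version is shorter and more conceptual; the paper's buys validity over arbitrary measurable spaces without any disintegration.

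Two caveats, one cosmetic and one substantive. Cosmetic: with the dominating measures $\tfrac12(\mu_1+\mu_2)$ and $\tfrac12(f_1(u)+f_2(u))$ the densities lie in $[0,2]$, not $[0,1]$; use the unnormalized sums $\mu_1+\mu_2$ and $f_1(u)+f_2(u)$ instead, which keeps all your mass computations intact. Substantive: your pointwise substitution needs a \emph{jointly measurable} version of the fibrewise densities $(u,v)\mapsto \bigl(d f_i(u)/d(f_1(u)+f_2(u))\bigr)(v)$, and you need $\int_J y(u,\cdot)\,f\bigl(w(u,\cdot)/y(u,\cdot)\bigr)\,d\kappa(u)$ to equal $\divf J(f_1(u),f_2(u))$ for a.e.\ $u$. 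Taking the product-space derivative of $Q$ against $R(du,dv)=\rho(du)\,\kappa(u)(dv)$ gives joint measurability, but identifying its second factor with the fibrewise density \emph{simultaneously for all} $B\in\Sigma_J$ requires a countably generated $\sigma$-algebra (or standard Borel hypotheses); over the full generality of $\Meas$, on which the Giry monad here lives, this is exactly the step the paper's partition/simple-function device is engineered to avoid. So as written your proof is complete for countably generated (e.g.\ standard Borel) spaces and needs either that restriction or the partition argument to cover all of $\Meas$. Finally, dispose of the trivial cases $\divf I(\mu_1,\mu_2)=\infty$ or $\sup_x\divf J(f_1(x),f_2(x))=\infty$ first, so that your signed Terms~1 and~4 (which can change sign when $\beta\notin[0,1]$ or $\alpha\neq 0$) are absolutely integrable and Fubini legitimately applies; the combined Terms~2+3 are nonnegative, so Tonelli handles them unconditionally.
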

The proof of this proposition generalizes
and integrates the proofs given in
\cite[Section 5.A.2]{olmedo2014approximate}. This proposition is
applicable to prove the composability of divergences in Table
\ref{tab:divstat} by choosing suitable parameters; see Table
\ref{tab:example_parameters}.
\begin{table}
  \caption{Parameters for Proposition
    \ref{prop:composable:f-divergence:weight}}
  \begin{tabular}{cccccc} \hline $\divnf$ & Weight function $f$ &
    $\gamma$ & $\alpha$ & $\beta$ & $\beta'$ \\ \hline $\divntv$ &
    $f(t) = |t - 1| / 2$ & $0$ & $0$ & $1$ & $0$ \\ \hline $\divnkl$ &
    $f(t) = f \log(t) - t + 1$ & $0$ & $-1$ & $1$ & $1$ \\ \hline
    $\divnhd$ & $f(t) = (\sqrt{t} - 1)^2 / 2$ & $0$ & $-1/4$ & $1/2$ &
    $1/2$ \\ \hline $\divnchi$ & $f(t) = (t - 1)^2 / 2$ & $1$ & $-2$ &
    $2$ & $2$ \\ \hline
  \end{tabular}
  \label{tab:example_parameters}
\end{table}
\begin{appendixproof}
  (Proof of Proposition \ref{prop:composable:f-divergence:weight})
  We have $\mEQ{}$-unit reflexivity because the reflexivity $\divf I (\mu,\mu) = 0$ is obtained from $f(1) = 0$.
  We show $\mEQ{}$-composability.
  To show this, we prove a bit stronger statement.
  Consider three positive weight functions $f, f_1, f_2 \geq 0$ with
  $f(1) = f_1(1) = f_2(1) = 0$.  Assume that there are some
  $\alpha, \beta, \beta' \in \RR$ satisfying the following conditions:
\begin{itembox}[l]{{\bf (A')}}
  for all $x,y,z,w \in [0,1]$,
  $0 \leq (\beta' z + (1-\beta') x) + \gamma x
  f_1\left({z}/{x}\right)$ and
  \begin{align*}
    xy f\left({zw}/{xy}\right) &\leq
                                 (\beta w + (1-\beta) y) x f_1\left({z}/{x}\right) + (\beta' z + (1-\beta') x) y f_2\left({w}/{y}\right)\\
                               & \quad + \gamma xy f_1\left({z}/{x}\right)f_2\left({w}/{y}\right) + \alpha (x-z)(w-y).
  \end{align*}
\end{itembox}
  Let $\mu_1, \mu_2 \in \sgiry I$, and let
  $h,k \colon I \to \sgiry J$.  We want to show the composability in
  the sense of \cite[Definition 5.2]{olmedo2014approximate}:
  \begin{equation}\label{inequality:composition:f-div:target0}
  \begin{split}
    &{{}^f\mathsf{Div}}_J(h\kl \mu_1, k\kl \mu_2)\\
    & \leq {{}^{f_1}\mathsf{Div}}_I(\mu_1, \mu_2) +  \sup_{x \in I} {{}^{f_2}\mathsf{Div}}_J(h(x), k(x))
    + \gamma {{}^{f_1}\mathsf{Div}}_I(\mu_1, \mu_2)\cdot \sup_{x \in I} {{}^{f_2}\mathsf{Div}}_J(h(x), k(x)).
  \end{split}
  \end{equation}
  We first fix a \emph{measurable partition} $\{A_i\}_{i = 0}^n$ of $J$, that is
  a family $\{A_i\}_{i = 0}^n$ of measurable subsets $A_i \in \Sigma_J$
  satisfying $i \neq j \implies A_i \cap A_j = \emptyset$ and 
  $\bigcup_{i = 0}^n A_i = J$.
  For each $0 \leq i \leq n$, we fix 
  two monotone increasing sequences $\{h^i_l\}_{l=0}^\infty$ and
  $\{k^i_l\}_{l=0}^\infty$ of simple functions that converge uniformly to
  measurable functions
  $h(-)(A_i) \colon I \to [0,1]$ and $k(-)(A_i) \colon I \to [0,1]$
  respectively.
  The above composability \eqref{inequality:composition:f-div:target0}
  is then equivalent to
\begin{equation}\label{inequality:composition:f-div:target}
\begin{split}
&\lim_{l \to \infty} \sum_{i = 0}^n (\int_X k^i_l~d\mu_2 ) f\left( \frac{\int_X h^i_l~d\mu_1}{\int_X k^i_l~d\mu_2}\right)\\
&\leq {{}^{f_1}\mathsf{Div}}_I(\mu_1,\mu_2) + \sup_{x \in I}{{}^{f_2}\mathsf{Div}}_J(h(x),k(x)) 
+ \gamma {{}^{f_1}\mathsf{Div}}_I(\mu_1,\mu_2)\sup_{x \in I}{{}^{f_2}\mathsf{Div}}_J(h(x),k(x)).
\end{split}
\end{equation}
  We fix $l \in \NN$.
  We suppose
  $h^i_l = \sum_{j = 0}^m \alpha^i_{j}\chi_{B_j}$ and
  $k^i_l = \sum_{j = 0}^m \beta^i_{j}\chi_{B_j}$ for
  some
  $\alpha^i_{j},\beta^i_{j} \in [0,1]$ ($0 \leq j \leq m$)
  and a measurable partition $\{B_j\}_{j = 0}^m$ of $I$.
  
  Thanks to the condition {\bf (A')}, we calculate as follows:
  \begin{align*}
    \lefteqn{\sum_{i = 0}^n (\int_X k^i_l~d\mu_2 ) f\left( \frac{\int_X h^i_l~d\mu_1}{\int_X k^i_l~d\mu_2}\right)}\notag\\
    &\leq \sum_{i = 0}^n \sum_{j = 0}^m \beta^i_{j}\mu_2(B_j) f\left(\frac{\alpha^i_{j}\mu_1(B_j)}{\beta^i_{j}\mu_2(B_j)}\right)\notag\\
    &
      \leq \quad \underbrace{\sum_{i = 0}^n \sum_{j = 0}^m (\beta \alpha^i_j + (1-\beta) \beta^i_j)~ \mu_2(B_j) f_1\left(\frac{\mu_1(B_j)}{\mu_2(B_j)}\right)}_{\triangleq V_1} \\
    &
      \quad + \underbrace{\sum_{i = 0}^n \sum_{j = 0}^m \left(\beta' \mu_1(B_j) + (1-\beta') \mu_2(B_j)) + \gamma \mu_2(B_j) f_1 \left(\frac{\mu_1(B_j)}{\mu_2(B_j)}\right)\right)
      \beta^i_j f_2 \left(\frac{\alpha^i_j }{\beta^i_j }\right)}_{\triangleq V_2}\\
    &
      \quad + \underbrace{\sum_{i = 0}^n \sum_{j = 0}^m \alpha (\mu_2(B_j) - \mu_1(B_j))(\alpha^i_j - \beta^i_j)}_{\triangleq V_3}
  \end{align*}
  We evaluate the above three subexpressions
  $V_1,V_2,V_3$ as follows.

  We evaluate $V_1$ as follows:
  \begin{align*}
     V_1
    & \leq \left(\sup_{0 \leq j \leq m} \sum_{i = 0}^n (\beta \alpha^i_j + (1-\beta) \beta^i_j)\right) \cdot \sum_{j = 0}^m \mu_2(B_j) f_1\left(\frac{\mu_1(B_j)}{\mu_2(B_j)}\right)\\
    & = \sup_{x \in I} \left(\beta \sum_{i = 0}^n h^i_l(x) + (1-\beta) \sum_{i = 0}^n k^i_l(x)\right)\cdot \sum_{j = 0}^m \mu_2(B_j) f_1\left(\frac{\mu_1(B_j)}{\mu_2(B_j)}\right)\\
    & \leq \sup_{x \in I} \left(\beta \sum_{i = 0}^n h^i_l(x) + (1-\beta) \sum_{i = 0}^n k^i_l(x)\right) \cdot {{}^{f_1}\mathsf{Div}}_I(\mu_1,\mu_2)\\
    & \xrightarrow[]{~l \to \infty~} \sup_{x \in I} \left(\beta h(x)(J) + (1-\beta) k(x)(J) \right) \cdot{{}^{f_1}\mathsf{Div}}_I(\mu_1,\mu_2)\\
    & \leq {{}^{f_1}\mathsf{Div}}_I(\mu_1,\mu_2)
  \end{align*}
  Here, 
  the first inequality is given from the non-negativity of each $\mu_2(B_j) f_1\left(\frac{\mu_1(B_j)}{\mu_2(B_j)}\right)$;
  the equality is given by definition of $\alpha_{j}^{i}$ and $\beta_{j}^{i}$;
  the second inequality can be given by the continuity of $ {{}^{f_1}\mathsf{Div}}$
  (\cite[Theorem 16]{1705001_2006}; \cite[Theorem 3]{DBLP:conf/lics/SatoBGHK19} for the sub-Giry monad $\sgiry$):
  \[
  {{}^{f_1}\mathsf{Div}}_I(\mu_1,\mu_2)
  =
  \sup \left \{ \sum_{j = 0}^m \mu_2(B_j)  f_1\left(\frac{\mu_1(B_j)}{\mu_2(B_j)}\right)
  \middle | \{B_j\}_{j = 0}^m \colon \text{measurable partition of } I \right\}
  ;
  \]
  the last inequality is derived 
  by $\beta h(x)(J) + (1-\beta) k(x)(J) \in [0,1]$
  from the assumption that either
  $\beta \in [0,1]$ or $h(x)(J) = k(x)(J)$ for all $x \in I$ holds.

  We next evaluate $V_2$ as follows:
  \begin{align*}
      V_2
      & \leq \left(\sup_{0 \leq j \leq m} \sum_{i = 0}^n \beta^i_j f_2 \left(\frac{\alpha^i_j}{\beta^i_j}\right)\right) \sum_{j = 0}^m \left( \beta' \mu_1(B_j) + (1-\beta') \mu_2(B_j)) + \gamma \mu_2(B_j) f_1\left(\frac{\mu_1(B_j)}{\mu_2(B_j)}\right)\right)\\
      & = \left(\sup_{x \in I} \sum_{i = 0}^n k^i_l(x) f_2 \left(\frac{h^i_l(x)}{k^i_l(x)}\right)\right)
        \left(\beta' \mu_1(I) + (1-\beta') \mu_2(I) +\gamma  \sum_{j = 0}^m
         \mu_2(B_j) f_1\left(\frac{\mu_1(B_j)}{\mu_2(B_j)}\right)
        \right)\\
       & \leq \left(\sup_{x \in I} \sum_{i = 0}^n k^i_l(x) f_2 \left(\frac{h^i_l(x)}{k^i_l(x)}\right)\right)
       \left(\beta' \mu_1(I) + (1-\beta') \mu_2(I) + \gamma {{}^{f_1}\mathsf{Div}}_I(\mu_1,\mu_2)\right)\\
      & \xrightarrow[]{~l \to \infty~}
        \left(\sup_{x \in I} \sum_{i = 0}^n k(x)(A_i) f_2 \left(\frac{h(x)(A_i)}{k(x)(A_i)}\right)\right)
        \left(\beta' \mu_1(I) + (1-\beta') \mu_2(I) +\gamma  {{}^{f_1}\mathsf{Div}}_I(\mu_1,\mu_2) \right)\\
      & \leq \sup_{x \in I}{{}^{f_2}\mathsf{Div}}_J(h(x),k(x))
        \left(\beta' \mu_1(I) + (1-\beta') \mu_2(I) + \gamma {{}^{f_1}\mathsf{Div}}_I(\mu_1,\mu_2)\right)\\
      &\leq \sup_{x \in I}{{}^{f_2}\mathsf{Div}}_J(h(x),k(x))
       \cdot \gamma {{}^{f_1}\mathsf{Div}}_I(\mu_1,\mu_2)\\
       &=\gamma {{}^{f_1}\mathsf{Div}}_I(\mu_1,\mu_2) \cdot 
       \sup_{x \in I}{{}^{f_2}\mathsf{Div}}_J(h(x),k(x)).
    \end{align*}
  Here,
  the first inequality is derived from the non-negativity of each 
  \begin{equation}\label{fdiv:comp:nonnegativity}
   (\beta' \mu_1(B_j) + (1-\beta') \mu_2(B_j)) + \gamma \mu_2(B_j) f_1\left(\frac{\mu_1(B_j)}{\mu_2(B_j)}\right);
  \end{equation}
  the first equality is given by definition of $\alpha_{j}^{i}$ and $\beta_{j}^{i}$ and the countable additivity of $\mu_1$ and $\mu_2$;
  the second inequality is given by the continuity of ${{}^{f_1}\mathsf{Div}}$ and $0 \leq \gamma$;
  the last inequality is derived 
  by $\beta' \mu_1(I) + (1-\beta') \mu_2(I) \in [0,1]$
  from the assumption that either $\beta' \in [0,1]$ or $\mu_1(I) = \mu_2(I)$ holds.
  We prove the third inequality.  
  Since $f_2$ is convex function, and sequences  
  $\{h^i_l(x)\}_{l = 0}^\infty$ and $\{k^i_l(x)\}_{l = 0}^\infty$
  are monotone increasing at each $x \in I$,
  By Jensen's inequality,  
  the sequence
  $\left\{\sum_{i = 0}^n k^i_l(x) f_2
    \left({h^i_l(x)}/{k^i_l(x)}\right)\right\}_{l = 0}^\infty$
  is monotone increasing for each $x \in I$.
  Then, the sequence $\left\{ \sup_{x \in I} \sum_{i = 0}^n k^i_l(x) f_2
  \left({h^i_l(x)}/{k^i_l(x)}\right)\right\}_{l = 0}^\infty$ of supremums
  is also monotone increasing, because each $\sum_{i = 0}^n k^i_{l + 1}(x) f_2
  \left({h^i_{l + 1}(x)}/{k^i_{l + 1}(x)}\right)$ is always greater than $\sum_{i = 0}^n k^i_l(x) f_2
  \left({h^i_l(x)}/{k^i_l(x)}\right)$. 
  Hence,
  \begin{align*}
    \lim_{l \to \infty} \sup_{x \in I} \sum_{i = 0}^n k^i_l(x) f_2 \left(\frac{h^i_l(x)}{k^i_l(x)}\right)
    &= \sup_{l \in \mathbb{N}}  \sup_{x \in I} \sum_{i = 0}^n k^i_l(x) f_2 \left(\frac{h^i_l(x)}{k^i_l(x)}\right)\\
    &= \sup_{x \in I} \sup_{l \in \mathbb{N}} \sum_{i = 0}^n k^i_l(x) f_2 \left(\frac{h^i_l(x)}{k^i_l(x)}\right)\\
    &= \sup_{x \in I} \sum_{i = 0}^n k(x)(A_i) f_2 \left(\frac{h(x)(A_i)}{k(x)(A_i)}\right)\\
    &\leq \sup_{x \in I}{{}^{f_2}\mathsf{Div}}(h(x),k(x)).
  \end{align*}

  Finally, we evaluate $V_3$
  as follows:
  \begin{align*}
   V_3
    & = \sum_{j = 0}^m \alpha (\mu_2(B_j) - \mu_1(B_j))(\sum_{i = 0}^n \alpha^i_j - \beta^i_j)\\
    & = \alpha\left(\int_I  h^i_l ~d\mu_2  - \int_I  k^i_l ~d\mu_2 + \int_I  k^i_l ~d\mu_1 - \int_I  h^i_l ~d\mu_1\right)\\
    & \xrightarrow[]{~l \to \infty~}
      \alpha\left(\int_I  h(-)(J) ~d\mu_2  - \int_I  k(-)(J) ~d\mu_2 + \int_I  k(-)(J) ~d\mu_1 - \int_I  h(-)(J) ~d\mu_1\right).
  \end{align*}
  Here, if either $\alpha = 0$ or $h(x)(J) = k(x)(J)$ for any
  $x \in I$ holds then the limit will be $0$.
  To sum up the above evaluations of $V_1,V_2,V_3$, we obtain the inequality
  (\ref{inequality:composition:f-div:target}) if we have either
  \begin{enumerate}
  \item $\mu_1(I) = \mu_2(I) = 1$ and $\forall x \in I.~h(x)(J) = k(x)(J) = 1$, or
  \item $\alpha = 0$ and $\beta,\beta \in [0,1]$.
  \end{enumerate}
  This completes the proof.
\end{appendixproof}
\begin{toappendix}
  Parameters for Proposition \ref{prop:composable:f-divergence:weight} for 
  for weight functions of $\divntv$, $\divnkl$, $\divnhd$ and $\divnchi$
  are shown in Table \ref{tab:example_parameters}.
Below, we check the conditions in Proposition \ref{prop:composable:f-divergence:weight}.
  \begin{itemize}
  \item For the weight function $f(t) = |t - 1| / 2$ of $\divntv$,
  the tuple $(\gamma,\alpha,\beta,\beta') = (0,0,1,0)$ satisfies for all $x,y,z,w \in [0,1]$,  
    we have 
    \begin{align*}
      0 & \leq w + xf({z}/{x}),\\
      xy f(zw/xy) & =| {zw} - xy | / 2 \leq  | {zw} - {wx}| + |{xw} - xy | /2 = wxf({z}/{x}) + xf(|{w}/{y}) /2.
    \end{align*}
  \item 
  For the weight function $f(t) = t\log(t) - t + 1$ of $\divnkl$,
  the tuple $(\gamma,\alpha,\beta,\beta') = (0,-1,1,1)$ satisfies for all $x,y,z,w \in [0,1]$,
  we have 
    \begin{align*}
      0 & \leq  z  + x f(z/x), \\
      \lefteqn{xy((zw/xy)\log(zw/xy) - zw/xy + 1)}\\
      &= zw\log(w/y) + zw\log(z/x) - zw + xy \\
        & = xw((z/x)\log(z/x) - z/x + 1)  + zy((w/y)\log(w/y) - w/y  + 1) - (x-z)(w-y).
    \end{align*}
  \item For the weight function $f(t) = (\sqrt{t} - 1)^2 / 2$ of $\divnhd$,
     the tuple 
    $(\gamma,\alpha,\beta,\beta') = (0,-1/4,1/2,1/2)$ satisfies for all $x,y,z,w \in [0,1]$,
    \begin{align*}
      0 & \leq  (z + x) / 2  + f(z/x), \\
      xyf(zw/xy)
        & = (zw + xy)/2 - ((x + z) - (\sqrt{x} - \sqrt{z})^2)((y + w) - (\sqrt{y} - \sqrt{w})^2)/ 4 \\
        & = (zw + xy)/2 - ((x + z) - x f(z/x))((y + w) - yf(w/y))/ 4\\
        & \leq (y + w)/ 2 \cdot x f(z/x)  +  (x + z)/2 \cdot y f(w/y)  - (x - z)(w - y)/4.
    \end{align*}
  \item For the weight function $f(t) = ({t} - 1)^2 / 2$ of $\divnchi$,
  The tuple 
  $(\gamma,\alpha,\beta,\beta') = (1,-2,2,2)$ satisfies for all $x,y,z,w \in [0,1]$,
    \begin{align*}
      0 & \leq (2z - x) + x f(z/x) = (2z - x) + ((z/x) - 1)(z - x) = z + (z^2/x),  \\
      \lefteqn{xy f(zw/xy)
      = z^2w^2/xy + xy - 2zw}\\
        &=
          (xf(z/x) + 2z - x)(yf(w/y) + 2w - y) - 2zw + xy \\
        &=(2w - y) xf(z/x) + (2z - x)yf(w/y) + xyf(z/x)f(w/y) -2(x-z)(w-y).
    \end{align*}
  \end{itemize}
\end{toappendix}

\subsection{Divergences on the Probability Monad on QBS via Monad
  Opfunctors.}
\label{subsection:divergence:QBS}

We have seen various divergences on the Giry monad $\giry$.  It would
be nice if they are transferred to the probability monad $\probqbs$ on
$\QBS$ (Section \ref{sec:qbs}). For this, we first develop a generic
method for transferring divergences on monads.

Let $(\CC,S)$ and $(\DD,T)$ be two CC-SMs. A \emph{monad opfunctor}
\cite[Section 4]{STREET1972149} is a functor $p \colon \CC \to \DD$
together with a natural transformation
$\lambda\colon p\circ S \to T\circ p$ making the following diagrams
commute:
\begin{displaymath}
  \xymatrix{
    p \rdm{p\circ \eta^S} \rrdh{\eta^T\circ p} & & p\circ S\circ S \rrh{\lambda\circ  S} \rdm{p\circ \mu^S} & T\circ p\circ S \rrh{T\circ \lambda} & T\circ T\circ p \rdh{\mu^T\circ p} \\
    p\circ S \rrm{\lambda} & T\circ p & p\circ S \rrrm{\lambda} & & T\circ p
  }
\end{displaymath}

\begin{proposition}
  \label{prop:divergence:monad_opfunctor_a}
  Let $(\CC,S),(\DD,T)$ be two CC-SMs,
  $(p\colon \CC\to\DD, \lambda\colon p\circ S\to T\circ p)$ be a monad opfunctor,
  and assume that $U^\DD \circ p = U^\CC$ holds, and basic
  endorelations $F \colon \CC \to \BRel \CC$ and
  $E \colon \DD \to \BRel \DD$ satisfy $R_{FpI} = R_{EI}$ for all
  $I \in \CC$ (we here use $U^\DD\circ p = U^\CC$).
  Then for any
  $\asgn\in\mDiv TM\qQ E$, the following doubly-indexed family of
  $\qQ$-divergences
  $\divnOpfun \asgn {p,\lambda} =\{ \divOpfun \asgn {p,\lambda} m {I} \}_{m\in M,I\in\CC}$ on $SI$ is
  an $F$-relative $M$-graded $\qQ$-divergence on $S$:
  \begin{align*}
    \divOpfun \asgn {p,\lambda} m {I} (\nu_1,\nu_2) &\triangleq \asgn^m_{pI} (\lambda_{I} \ap
                                         \nu_1,\lambda_{I} \ap \nu_2)
                                         = \asgn^m_{pI} ((U^\DD
                                         \lambda_{I})(\nu_1),(U^\DD \lambda_{I})(\nu_2)).
  \end{align*}
\end{proposition}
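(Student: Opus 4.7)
The plan is to verify the three axioms of Definition \ref{def:div} for the transported family $\divnOpfun{\asgn}{p,\lambda}$ on $S$, namely monotonicity, $F$-unit reflexivity, and $F$-composability. Throughout I freely use the identification $U^\DD\circ p = U^\CC$, which allows any global element $x \in U^\CC(I)$ to be viewed as $\tilde{x}\in U^\DD(pI)$, and the hypothesis that $R_{FI}$ and $R_{E(pI)}$ agree as subsets of $U^\CC(I)\times U^\CC(I)=U^\DD(pI)\times U^\DD(pI)$. Monotonicity is immediate: the inequality $\asgn^{m}_{pI}(u_1,u_2)\ge \asgn^{m'}_{pI}(u_1,u_2)$ specialises with $u_i=\lambda_I\ap\nu_i$.

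For $F$-unit reflexivity, I would use the monad opfunctor unit law $\lambda_I\circ p(\eta^S_I)=\eta^T_{pI}$ to rewrite $\lambda_I\ap(\eta^S_I\ap x_i)=\eta^T_{pI}\ap \tilde{x}_i$. The hypothesis $R_{FI}=R_{E(pI)}$ converts the supremum $\sup_{(x_1,x_2)\in FI}$ into $\sup_{(\tilde{x}_1,\tilde{x}_2)\in E(pI)}$, and $E$-unit reflexivity of $\asgn$ at grade $1$ then yields the bound $\le 0$.

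The heart of the proof is $F$-composability, which rests on the following Kleisli-extension transport identity: for every $f\colon I\to SJ$ in $\CC$,
\[
  \lambda_J\circ p(f\kl) \;=\; (\lambda_J\circ p(f))\kl\circ \lambda_I.
\]
I would derive this by a short diagram chase: expand $f\kl=\mu^S_J\circ Sf$, push $p$ through, then apply the multiplication square $\mu^T_{pJ}\circ T\lambda_J\circ \lambda_{SJ}=\lambda_J\circ p(\mu^S_J)$ and the naturality of $\lambda$ at $f$ (which gives $\lambda_{SJ}\circ p(Sf)=T(pf)\circ\lambda_I$) to match the two sides. This is the only non-routine computation and is the main obstacle; all other steps are straightforward.

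Once the transport identity is in hand, setting $g_i\triangleq \lambda_J\circ p(f_i)\colon pI\to T(pJ)$ gives $\lambda_J\ap(f_i\kl\ap\nu_i)=g_i\kl\ap(\lambda_I\ap\nu_i)$. Applying $E$-composability of $\asgn$ to $g_1,g_2$ at the pair $(\lambda_I\ap\nu_1,\lambda_I\ap\nu_2)$ bounds the left-hand side of the desired $F$-composability by $\asgn^{m_1}_{pI}(\lambda_I\ap\nu_1,\lambda_I\ap\nu_2)$ plus $\sup_{(\tilde x_1,\tilde x_2)\in E(pI)}\asgn^{m_2}_{pJ}(g_1\ap\tilde x_1,g_2\ap\tilde x_2)$. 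The first summand is $\divOpfun{\asgn}{p,\lambda}{m_1}{I}(\nu_1,\nu_2)$ by definition. For the second, the hypothesis on $F$ and $E$ replaces the $E(pI)$-sup by an $FI$-sup, and the identity $p(f_i)\ap \tilde x_i = \widetilde{f_i\ap x_i}$ (from $U^\DD p=U^\CC$) turns the integrand into $\divOpfun{\asgn}{p,\lambda}{m_2}{J}(f_1\ap x_1,f_2\ap x_2)$, completing the proof.
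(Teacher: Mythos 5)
Your proposal is correct and follows essentially the same route as the paper's proof: the same three-part verification, the same use of the opfunctor unit law for $F$-unit reflexivity, and the same key transport identity $\lambda_J\circ p(f\kl)=(\lambda_J\circ p(f))\kl\circ\lambda_I$, derived exactly as in the paper via the multiplication square and naturality of $\lambda$, followed by $E$-composability of $\asgn$ and the hypotheses $U^\DD\circ p=U^\CC$ and $R_{FpI}=R_{EI}$ to convert the supremum. No substantive differences to report.
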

\begin{appendixproof}
(Proof of Proposition \ref{prop:divergence:monad_opfunctor_a})

We first show the monotonicity of $\divnOpfun \asgn {p,\lambda}$. 
Assume $m \leq m'$. From the monotonicity of the original $\asgn$, 
we obtain for each $\nu_1,\nu_2 \in U^\CC (SI))$,
\begin{align*}
\divOpfun \asgn {p,\lambda}  m {I} (\nu_1,\nu_2)
&=
\asgn^m_{pI} ((U^\DD \lambda_{I})(\nu_1),(U^\DD \lambda_{I})(\nu_2))\\
&\geq 
\asgn^{m'}_{pI} ((U^\DD \lambda_{I})(\nu_1),(U^\DD \lambda_{I})(\nu_2))\\
&= \divOpfun \asgn {p,\lambda}   {m'} {I} (\nu_1,\nu_2).
\end{align*}

Second, we show the $F$-unit-reflexivity of $\divnOpfun \asgn {p,\lambda}$.
    For $FI = (I,I,R_{FI})$, we have $EpI = (p = (pI,pI,R_{FI})$ for all $I \in \CC$.
    We can calculate for all $(x,y) \in R_F$,
    \begin{align*}
      \divOpfun \asgn {p,\lambda}   {1_M} {I} (\eta^S_{I} \ap x,\eta^S_{I} \ap y)
      &= \asg {1_M} {pI} (U^\DD \lambda_{I} \circ U^{\CC} \eta^S_{I} \circ x,U^\DD \lambda_{I} \circ U^{\CC} \eta^S_{I} \circ y)\\
      &= \asg {1_M} {pI} ((\lambda_{I} \circ p \eta^S_{I}) \ap x,(\lambda_{I} \circ p \eta^S_{I}) \ap y)\\
      &= \asg {1_M} {pI} (\eta^T_{pI} \ap x,\eta^T_{pI} \ap y) \leq 0.
    \end{align*}

Finally, we show the $\mF$-composability of $\divnOpfun \asgn {p,\lambda}$.
    For all $J \in \CC$,
    $c_1,c_2 \in U^\CC T I$, and $f_1,f_2 \colon I \to S J$ we can calculate
    \begin{align*}
      \divOpfun \asgn {p,\lambda}   {mn} {J} (f_1\kl \ap c_1, f_2\kl \ap c_2)
      &= 
        \asg  {mn} {pJ} (
        U^\DD \lambda_{J} \circ U^\DD p (f_1\kl) \circ c_1,
        U^\DD \lambda_{J} \circ U^\DD p (f_2\kl) \circ c_2)\\
      &= 
        \asg  {mn} {pJ} (
        U^\DD ((\lambda_J \circ p f_1 )\kl) \circ U^\DD \lambda_I \circ c_1,
        U^\DD ((\lambda_J \circ p f_2 )\kl) \circ U^\DD \lambda_I \circ c_2)\\
      &= 
        \asg {mn} {pJ} (
        (\lambda_J \circ p f_1 )\kl  \ap (\lambda_I \ap c_1),
        (\lambda_J \circ p f_2 )\kl  \ap (\lambda_I \ap c_2))\\
      &\leq
        \asg {m} {pI} (\lambda_I \ap c_1,\lambda_I \ap c_2)+
        \sup_{(x, y) \in EpI}
        \asg {n} {pJ} (
        (\lambda_J \circ p f_1 ) \ap x,
        (\lambda_J \circ p f_2 ) \ap x)
      \\
      &=
        \divOpfun \asgn {p,\lambda}   {m} {I} (c_1,c_2) +
        \sup_{(x, y) \in FI}
        \divOpfun \asgn {p,\lambda}   {n} {J} (f_1 \ap x,f_2 \ap y).
    \end{align*}    
    To prove the second equality, we calculate
    \begin{align*}
      U^\DD \lambda_{J} \circ U^\DD p (f_i\kl)
      &=
        U^\DD (\lambda_{J} \circ p \mu^S_J \circ p S f_i)
        =
        U^\DD (\mu^T_{pJ}\circ T\lambda_{J} \circ \lambda_{SJ} \circ p S f_i) \\
      &=
        U^\DD (\mu^T_{pJ}\circ T\lambda_{J} \circ T p f_i \circ \lambda_{I} )
        =
        U^\DD ((\lambda_{J} \circ p f_i)\kl \circ \lambda_{I} ).
    \end{align*}
This completes the proof.
\end{appendixproof}

The left adjoint $\adjL \colon \QBS \to \Meas$ of the adjunction
$\adjL \dashv \adjR \colon \Meas\arrow\QBS$ and the natural
transformation $l \colon \adjL\probqbs \Rightarrow \giry L$ defined by
$l_X([\alpha,\mu]_{\sim_X})= \mu(\alpha^{-1}(-))$ forms a monad
opfunctor from the probability monad $\probqbs$ on $\QBS$ to the Giry
monad $\giry$ on $\Meas$~\cite[Prop. 22 (3)]{HeunenKSY17}.  Through this monad
opfunctor $(\adjL,l)$, we can convert $\mEQ{}$-divergences on $\giry$
to those on $\probqbs$. This conversion can be applied to all the
statistical divergences in Table \ref{tab:divdp} and \ref{tab:divstat}.

In addition, for any standard Borel space, we can
view such converted divergences $\divnOpfun \asgn {L,l} $ as the same thing as
the original $\asgn$.  When $\Omega \in \Meas$ is standard Borel, we
have an equality $\adjL \adjR \Omega=\Omega$, and $l_{\adjR\Omega}$ is
an isomorphism. Therefore we obtain an isomorphism
$l_{\adjR\Omega}\colon \adjL \probqbs \adjR \Omega \cong \giry \adjL \adjR \Omega=
\giry\Omega$~\cite[Prop. 22 (4)]{HeunenKSY17}. A concrete description
of its inverse is
$l_{\adjR\Omega}^{-1}\ap\mu=
[\gamma',\mu(\gamma^{-1}(-))]_{\sim_{\adjR
    \Omega}}$, where $\gamma' \colon \RR \to \Omega$ and
$\gamma \colon \Omega \to \RR$ are a section-retraction pair
(i.e. $\gamma' \circ \gamma = \mathrm{id}_\Omega$)
that exists for any standard Borel $\Omega$.

\begin{therm}
\label{thm:divergence_QBS:stdBorel}
  For any $\asgn\in\mDiv\giry M\qQ\mEQ$ and standard Borel
  $\Omega \in \Meas$,
  \[ \divOpfun \asgn {L,l} m {\adjR \Omega}
    (l_{\adjR\Omega}^{-1}\ap\mu_1,l_{\adjR\Omega}^{-1}\ap\mu_2)
    =
    \asg m\Omega(\mu_1,\mu_2) \quad (\mu_1,\mu_2 \in U(\giry \Omega)).
  \]
\end{therm}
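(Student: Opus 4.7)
The plan is to simply unfold the defining formula for $\divnOpfun{\asgn}{L,l}$ from Proposition \ref{prop:divergence:monad_opfunctor_a} and then use the fact, cited from \cite{HeunenKSY17}, that $l_{\adjR\Omega}$ is a genuine iso (with inverse $l_{\adjR\Omega}^{-1}$) whenever $\Omega$ is standard Borel. No detailed measure-theoretic computation is needed.

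First, I would unfold: by Proposition \ref{prop:divergence:monad_opfunctor_a}, for any $\nu_1,\nu_2 \in U(\adjL\probqbs\adjR\Omega)$ we have
\[
  \divOpfun{\asgn}{\adjL,l}{m}{\adjR\Omega}(\nu_1,\nu_2)
  = \asg{m}{\adjL\adjR\Omega}(l_{\adjR\Omega}\ap\nu_1,\, l_{\adjR\Omega}\ap\nu_2).
\]
Specialising to $\nu_i = l_{\adjR\Omega}^{-1}\ap\mu_i$, the right-hand side becomes
\[
  \asg{m}{\adjL\adjR\Omega}\bigl(l_{\adjR\Omega}\ap l_{\adjR\Omega}^{-1}\ap\mu_1,\, l_{\adjR\Omega}\ap l_{\adjR\Omega}^{-1}\ap\mu_2\bigr).
\]

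Second, I would cite \cite[Prop. 22 (4)]{HeunenKSY17}: for standard Borel $\Omega$, the unit-counit situation gives $\adjL\adjR\Omega = \Omega$ in $\Meas$, and the component $l_{\adjR\Omega}\colon \adjL\probqbs\adjR\Omega \to \giry\Omega$ is an isomorphism whose set-theoretic inverse is exactly the map $l_{\adjR\Omega}^{-1}$ described in the excerpt using a section-retraction pair. Consequently, $l_{\adjR\Omega}\circ l_{\adjR\Omega}^{-1} = \id_{\giry\Omega}$, so $l_{\adjR\Omega}\ap l_{\adjR\Omega}^{-1}\ap\mu_i = \mu_i$ for $i=1,2$.

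Putting the two steps together yields $\divOpfun{\asgn}{\adjL,l}{m}{\adjR\Omega}(l_{\adjR\Omega}^{-1}\ap\mu_1,l_{\adjR\Omega}^{-1}\ap\mu_2) = \asg{m}{\Omega}(\mu_1,\mu_2)$, which is the desired equality. The only subtlety — and the step where one must be careful rather than where real work is done — is making sure the identification $\adjL\adjR\Omega = \Omega$ is used consistently so that the index $\adjL\adjR\Omega$ of $\asgn$ can be replaced by $\Omega$; this is immediate from the cited proposition of \cite{HeunenKSY17}.
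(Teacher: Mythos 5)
Your proposal is correct and matches the paper's own (implicit) justification: the paper states the theorem immediately after noting that $\adjL\adjR\Omega=\Omega$ and that $l_{\adjR\Omega}$ is an isomorphism for standard Borel $\Omega$, so the equality follows exactly as you argue by unfolding the definition of $\divnOpfun{\asgn}{L,l}$ from Proposition \ref{prop:divergence:monad_opfunctor_a} and cancelling $l_{\adjR\Omega}\circ l_{\adjR\Omega}^{-1}=\id_{\giry\Omega}$. No gap here.
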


\subsection{Divergences on State Monads}

The state monad $T_S\triangleq S \Rightarrow (- \times S)$ with a
state space $S$ is used to represent programs that update the state.  We
construct divergences on $T_S$ using divergences $d_S$ on the state
space $S$ in several ways.

\subsubsection{Lipschitz Constant on States}

We first consider the state monad $T_S$ on $\Set$. We also consider a
function $d_S \colon S^2 \to [0,\infty]$ satisfying $d_S(s,s) = 0$.
The following $\qRm$-divergence $\asgn^{\mathsf{lip},d_S}_I(f_1,f_2)$
on $T_SI$ measures how much the function pair
$(\pi_2\circ f_1, \pi_2\circ f_2)$ extends the distance between two
states before updated. In short, $\asgn^{\mathsf{lip},d_S}$
measures the Lipschitz constant on state transformers.
\begin{proposition}
  \label{prop:divergence:statemonad:lipscitz}
  The family $\asgn^{\mathsf{lip},d_S}=\{\asgn^{\mathsf{lip},d_S}_I\}_{I\in\Set}$ of
  $\qRm$-divergences on $T_SI$ defined by
  \[
    \asgn^{\mathsf{lip},d_S}_I (f_1, f_2) \triangleq \sup_{s_1, s_2 \in S}
    \frac{d_S(\pi_2(f_1(s_1)), \pi_2(f_2(s_2)))}{d_S(s_1,s_2)}
    \quad 
    (f_1,f_2 \in T_S I, \text{ we suppose } 0/0 = 1)
  \]
  is a $\mT$-relative $\qRm$-divergence on $T_S$.
\end{proposition}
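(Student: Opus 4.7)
The verification consists of checking the three axioms from Definition~\ref{def:div}. Monotonicity is vacuous because the grading monoid is trivial ($M = 1$). For $\mT$-unit reflexivity, the state-monad unit satisfies $\pi_2(\eta_I(x)(s)) = s$ for any $x \in I$ and $s \in S$, so for arbitrary $x_1, x_2 \in I$,
\[
  \asgn^{\mathsf{lip},d_S}_I(\eta_I(x_1), \eta_I(x_2)) = \sup_{s_1, s_2 \in S} \frac{d_S(s_1, s_2)}{d_S(s_1, s_2)} = 1,
\]
using the convention $0/0 = 1$. Since $1$ is the monoid unit of $\qRm$, this is exactly the required bound.

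The substance is in $\mT$-composability. The plan is to chain two Lipschitz estimates along the state thread. Fix $c_1, c_2 \in T_S I$, $f_1, f_2 \colon I \to T_S J$, and set $L_1 \triangleq \asgn^{\mathsf{lip},d_S}_I(c_1, c_2)$ and $L_2 \triangleq \sup_{x_1, x_2 \in I} \asgn^{\mathsf{lip},d_S}_J(f_1(x_1), f_2(x_2))$. For arbitrary $s_1, s_2 \in S$, write $(x_i, s_i') = c_i(s_i)$; by the Kleisli extension of $T_S$, $\pi_2(f_i\kl(c_i)(s_i)) = \pi_2(f_i(x_i)(s_i'))$. The definition of $L_1$ gives $d_S(s_1', s_2') \leq L_1 \cdot d_S(s_1, s_2)$, and the definition of $L_2$ instantiated at $(x_1, x_2)$ and evaluated on $(s_1', s_2')$ gives
\[
  d_S(\pi_2(f_1(x_1)(s_1')), \pi_2(f_2(x_2)(s_2'))) \leq L_2 \cdot d_S(s_1', s_2') \leq L_1 L_2 \cdot d_S(s_1, s_2).
\]
Dividing by $d_S(s_1, s_2)$ and taking the supremum over $s_1, s_2$ yields $\asgn^{\mathsf{lip},d_S}_J(f_1\kl c_1, f_2\kl c_2) \leq L_1 \cdot L_2$, which is exactly $\mT$-composability in $\qRm$.

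The main obstacle is the careful treatment of the convention $0/0 = 1$ and the interaction between $0$ and $\infty$ in the monoid $([0,\infty], \times)$. The clean chain above is only directly valid when every denominator is positive; when $d_S(s_1, s_2) = 0$ but $d_S(s_1', s_2') > 0$, or when $d_S(s_1', s_2') = 0$ but the final numerator is positive, the division step is not legitimate. In each such degenerate branch, however, one of the Lipschitz constants $L_1$ or $L_2$ is forced to be $\infty$, since the sup defining it contains a ratio of the form $(\text{positive})/0$. Together with the observation that both $L_1$ and $L_2$ are always $\geq 1$ (a reflexive instance of the unit-reflexivity computation), the product $L_1 L_2$ is $\infty$ in these branches and trivially dominates the left-hand side. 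A short case split on whether $d_S(s_1, s_2)$ and $d_S(s_1', s_2')$ vanish then closes the remaining cases and completes the proof.
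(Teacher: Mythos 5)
Your proof is correct and follows essentially the same route as the paper's: unit reflexivity via the direct computation $\pi_2(\eta_I(x)(s))=s$, and composability by chaining the two Lipschitz estimates through the intermediate states $\pi_2(c_i(s_i))$, which is exactly the paper's factorization of the ratio into two factors bounded by the respective suprema. Your explicit case split for the degenerate branches ($0/0$ and positive$/0$, using $L_1,L_2\ge 1$) is an extra precision that the paper's factorization passes over silently.
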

\begin{appendixproof}
  (Proof of Proposition \ref{prop:divergence:statemonad:lipscitz})
  
  It suffices to show $\mT$-unit reflexivity and $\mT$-composability:
  \begin{align*}
    &\asgn^{\mathsf{lip},d_S}_I (\eta_I(x), \eta_I(y))
      =
      \sup_{s', s \in S}\frac{d_S(\pi_2 (s,x), \pi_2 (s',y))}{d_S(s,s')} = \frac{d_S(s,s')}{d_S(s,s')} = 1,\\
    &\asgn^{\mathsf{lip},d_S}_J (F_1\kl (f_1),F_1\kl (f_2))\\
    &=
      \sup_{s', s \in S} \frac{d_S(\pi_2(F_1(\pi_1 f_1(s))(\pi_2 f_1(s))),\pi_2(F_2(\pi_1 f_2(s'))(\pi_2 f_2(s'))) ) }{d_S(s,s')}\\
    &= \sup_{s', s \in S}
      \frac{d_S(\pi_2 f_1(s) ,\pi_2 f_2(s') )}{d_S(s,s')}
      \cdot
      \frac{d_S(\pi_2(F_1(\pi_1 f_1(s))(\pi_2 f_1(s))),\pi_2(F_2(\pi_1 f_2(s'))(\pi_2 f_2(s'))) )  }{d_S(\pi_2 f_1(s),\pi_2 f_2(s'))}\\
    & \leq
      \sup_{s', s \in S}
       \frac{d_S(\pi_2 f_1(s),\pi_2 f_2(s'))}{d_S(s,s')}
      \cdot
      \sup_{t', t \in S}
      \frac{d_S(\pi_2 (F_1(\pi_1 f_1(s))(t)),\pi_2 (F_2(\pi_1 f_2(s'))(t')) )  }{d_S(t,t')}\\
    &\leq 
      \asgn^{\mathsf{lip},d_S}_I(f_1,f_2) \cdot
      \sup_{x, y \in I} \asgn^{\mathsf{lip},d_S}_J(F_1(x),F_2(y))
  \end{align*}
  Here $F_1,F_2 \colon I \to T_S J$ and
  $f_1,f_2 \in T_S I$.
\end{appendixproof}
For state transformers $f_1,f_2 \in T_S I$, their state-updating part
is given as functions
$\pi_2\circ f_1, \pi_2\circ f_2 \in S \Rightarrow S$.
When $f_1 = f_2 = g$, $\asgn^{\mathsf{lip},d_S}_I(g,g)$ is exactly the
Lipschitz constant of $\pi_2\circ g$.
\subsubsection{Distance between State Transformers with the Same Inputs}
Suppose that the function $d_S$ also satisfies the triangle
inequality. The following $\qRp$-divergence
$\asg {\mathsf{met},d_S} I(f_1,f_2)$ on $T_SI$ estimates the distance
between updated states after the state transformers $f_1$ and $f_2$ are
applied to the same input.
\begin{proposition}
  \label{prop:divergence:statemonad:metric}
  Suppose that the function $d_S$ also satisfy the
  triangle-inequality.  The family
  $\asgn^{\mathsf{met},d_S}=\{\asgn^{\mathsf{met},d_S}_I\}_{I\in\Set}$ of $\qRp$-divergences on
  $T_SI$ defined by:
  \begin{align*}
    \asgn^{\mathsf{met},d_S}_I (f_1, f_2)
    &\triangleq 
    \begin{cases}
      \sup_{s \in S}\! d_S(\pi_2(f_1(s)), \pi_2(f_2(s))) &
      \pi_1\circ f_1 = \pi_1\circ f_2~\text{and}\\
      &\pi_2\circ f_1,\pi_2\circ f_2 \colon \text{nonexpansive} \\
      \infty & \text{otherwise}      
    \end{cases}
  \end{align*}
  is an $\mEQ {}$-relative $\qRp$-divergence on $T_S$.
\end{proposition}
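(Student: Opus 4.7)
The plan is to verify the three conditions of Definition~\ref{def:div} for $\asgn^{\mathsf{met},d_S}$, following the template of the appendix proof for $\asgn^{\mathsf{lip},d_S}$. Monotonicity is vacuous since the grading monoid is trivial ($M=1$). For $\mEQ$-unit reflexivity, note that $\eta_I(x)=\lam{s}(x,s)$ has constant first projection and second projection equal to $\id_S$; the identity is obviously nonexpansive, so the first branch of the definition applies and gives $\asgn^{\mathsf{met},d_S}_I(\eta_I(x),\eta_I(x))=\sup_{s\in S}d_S(s,s)=0$.

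For $\mEQ$-composability, given $f_1,f_2\in T_S I$ and $g_1,g_2\colon I\to T_S J$, I would proceed by case analysis on whether the right-hand side is finite. If either $\asgn^{\mathsf{met},d_S}_I(f_1,f_2)=\infty$ or $\sup_{x\in I}\asgn^{\mathsf{met},d_S}_J(g_1(x),g_2(x))=\infty$, the required inequality is immediate. Otherwise all of $f_1,f_2$ and $g_1(x),g_2(x)$ (for every $x\in I$) satisfy the first-projection-agreement and second-projection-nonexpansiveness conditions, and I would then check that these conditions propagate to the Kleisli compositions $g_i^\sharp f_i(s)=g_i(\pi_1 f_i(s))(\pi_2 f_i(s))$: the equality $\pi_1\circ g_1^\sharp f_1=\pi_1\circ g_2^\sharp f_2$ is obtained from $\pi_1 f_1=\pi_1 f_2$ together with $\pi_1 g_1(x)=\pi_1 g_2(x)$ for every $x$, and $\pi_2(g_i^\sharp f_i)$ is a composition of nonexpansive maps and hence nonexpansive.

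The quantitative bound then reduces to a single triangle-inequality estimate combined with the nonexpansiveness of $\pi_2 g_1(x_s)$: writing $x_s=\pi_1 f_1(s)=\pi_1 f_2(s)$,
\begin{align*}
&d_S(\pi_2 g_1(x_s)(\pi_2 f_1(s)),\, \pi_2 g_2(x_s)(\pi_2 f_2(s)))\\
&\leq d_S(\pi_2 g_1(x_s)(\pi_2 f_1(s)),\, \pi_2 g_1(x_s)(\pi_2 f_2(s)))\\
&\quad{}+ d_S(\pi_2 g_1(x_s)(\pi_2 f_2(s)),\, \pi_2 g_2(x_s)(\pi_2 f_2(s)))\\
&\leq d_S(\pi_2 f_1(s),\, \pi_2 f_2(s)) + \asgn^{\mathsf{met},d_S}_J(g_1(x_s),g_2(x_s)).
\end{align*}
Taking $\sup_s$ on both sides, and bounding the two summands by $\asgn^{\mathsf{met},d_S}_I(f_1,f_2)$ and $\sup_{x\in I}\asgn^{\mathsf{met},d_S}_J(g_1(x),g_2(x))$ respectively, yields the composability bound. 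The main obstacle is the bookkeeping: carefully propagating the structural (equality and nonexpansiveness) conditions through Kleisli composition so that the ``finite'' branch of the definition activates on both sides; once that is in place, the numerical part is a routine triangle-inequality calculation.
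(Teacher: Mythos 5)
Your quantitative estimate is exactly the paper's: one triangle-inequality split at an intermediate point in which only one ingredient has changed, nonexpansiveness of a state-update map to absorb $d_S(\pi_2 f_1(s),\pi_2 f_2(s))$, and evaluation of the remaining term at a common input. The paper splits at $F_2(\pi_1 f_1(s))(\pi_2 f_1(s))$ where you split at the mirror point $g_1(x_s)(\pi_2 f_2(s))$; this is immaterial. Unit reflexivity and monotonicity are handled as in the paper.

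The genuine gap is precisely the step you call routine bookkeeping, and your justification of it is false. From $\pi_1\circ f_1=\pi_1\circ f_2$ and $\pi_1\circ g_1(x)=\pi_1\circ g_2(x)$ for all $x$ you cannot conclude $\pi_1\circ(g_1\kl f_1)=\pi_1\circ(g_2\kl f_2)$: writing $x_s=\pi_1 f_1(s)=\pi_1 f_2(s)$, the two sides are $\pi_1\bigl(g_1(x_s)(\pi_2 f_1(s))\bigr)$ and $\pi_1\bigl(g_2(x_s)(\pi_2 f_2(s))\bigr)$, the state arguments generally differ, and the finite branch puts no constraint on how $\pi_1 g_i(x_s)$ depends on the state. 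Concretely, take $S=\{0,1\}$ with $d_S(0,1)=1$, $I=\{\ast\}$, $J=\{a,b\}$, $f_1(s)=(\ast,0)$, $f_2(s)=(\ast,1)$, and $g_1=g_2=g$ with $g(\ast)(0)=(a,0)$, $g(\ast)(1)=(b,1)$. All finite-branch conditions hold on the right-hand side, which evaluates to $1+0=1$, yet $g\kl f_1\equiv(a,0)$ and $g\kl f_2\equiv(b,1)$ have unequal first projections, so the left-hand side is $\infty$. The same mechanism defeats your claim that $\pi_2(g_i\kl f_i)$ is ``a composition of nonexpansive maps'': it is $s\mapsto\pi_2 g_i(\pi_1 f_i(s))(\pi_2 f_i(s))$, and different update maps are applied at $s,s'$ whenever $\pi_1 f_i(s)\neq\pi_1 f_i(s')$, so nonexpansiveness need not survive. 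Be aware that the paper's own proof elides exactly this point: its first line evaluates $\asgn^{\mathsf{met},d_S}_J$ on the composite pair by the sup formula, i.e.\ it already assumes that pair lands in the finite branch. The example above shows that, with the finite-branch condition as literally stated, $\mEQ$-composability fails; some strengthening (e.g.\ additionally requiring $\pi_1\circ f_i$ to identify states at finite $d_S$-distance, as happens automatically in the later ultrametric-space version where the $F_i$ are nonexpansive morphisms) is needed before either your argument or the paper's can be completed.
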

\begin{appendixproof}
  (Proof of Proposition \ref{prop:divergence:statemonad:metric})
  
  It suffices to show $\mEQ$-unit reflexivity and $\mEQ$-composability:
  \begin{align*}
    \asgn^{\mathsf{met},d_S}_I (\eta_I(x), \eta_I(x))
    &=
      \sup_{s \in S} d_S(\pi_2 (x,s), \pi_2 (x,s)) = \sup_{s \in S} d_S(s,s) = 0.\\
    \asgn^{\mathsf{met},d_S}_J (F_1\kl (f_1),F_1\kl (f_2))
    &= \sup_{s \in S} d_S(\pi_2 (F_1(\pi_1 f_1(s))(\pi_2 f_1(s))),\pi_1 (F_2(\pi_1 f_2(s))(\pi_2 f_2(s))) ) \\
    &\leq \sup_{s \in S} d_S(\pi_2 (F_1(\pi_1 f_1(s))(\pi_2 f_1(s))),\pi_2 (F_2(\pi_1 f_1(s))(\pi_2 f_1(s))) ) \\
    & \quad + \sup_{s \in S} d_S(\pi_2 (F_2(\pi_1 f_1(s))(\pi_2 f_1(s))),\pi_2 (F_2(\pi_1 f_1(s))(\pi_2 f_2(s))) )\\
    & \leq \sup_{x \in I} \asgn^{\mathsf{met},d_S}_J(F_1(x),F_2(x)) + \asgn^{\mathsf{met},d_S}_I (f_1,f_2)
  \end{align*}
  Here $F_1,F_2 \colon I \to T_S J$ and
  $f_1,f_2 \in T_S I$.  Without loss of
  generality, we may assume $\pi_1 f_1 = \pi_1 f_2$ holds and
  $\pi_2 f_1$ and $\pi_2 f_2$ are nonexpansive, and for every
  $x \in I$, $\pi_1 F_1(x) = \pi_1 F_2(x)$ holds and $\pi_2 F_1(x)$
  and $\pi_2 F_2(x)$ are nonexpansive.
\end{appendixproof}

\newcommand{\Gum}{\mathbf{Gum}}

\subsubsection{Sup-Metric on the State Monad on the Category of Generalized Ultrametric Spaces}
The category $\Gum$ of generalized ($[0,1]$-valued) ultrametric spaces\footnote{Recall that an ultrametric space $(I,d_I)$ is a set $I$
  together with a function $d_I \colon I^2 \to [0,1]$ such that
  $d_I(x,x) = 0$ and $d_I(x,z) \leq \max(d_I(x,y),d_I(y,z))$.}  and
nonexpansive functions is Cartesian closed~\cite[Section 2.2]{Rutten1996}.
We consider the state monad $T_S = S \Rightarrow (- \times S)$ on $\mathbf{Gum}$
for a fixed space $(S,d_S)\in\Gum$.
From the definition of exponential objects in $\Gum$,
$T_S(I,d_I)$ consists of the set of nonexpansive state transformers with the sup metric between them. 
In fact, the metric part of all $T_S(I,d_I)$ forms a divergence on $T_S$.

\begin{proposition}\label{prop:divergence:statemonad:metric:Gum}
  The family $\{ d_{T_S I} \colon (T_S(I,d_I))^2 \to [0,1] \}_{(I,d_I) \in \Gum} $ consisting of 
  the metric part of the spaces $T_S(I,d_I)$, given by
  \begin{align*}
    d_{T_S I} (f_1, f_2) \triangleq  \sup_{s \in S}
    \max\left(
    d_I(\pi_1(f_1 (s)),\pi_1(f_2 (s))),
    d_S(\pi_2(f_1 (s)),\pi_2 (f_2 (s)))
    \right)
  \end{align*}
  forms an $\mEQ {}$-relative $([0,1],\leq,\max,0)$-divergence on $T_S$.
\end{proposition}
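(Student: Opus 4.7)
The plan is to verify the three conditions of Definition~\ref{def:div} directly, exploiting the fact that $d_{T_S I}$ is nothing but the sup-metric on the exponential $S \Rightarrow (I \times S)$ in $\Gum$ (with the product ultrametric $\max(d_I, d_S)$ on $I \times S$), and that the divergence domain operation ``$+$'' is $\max$ with unit $0$. Since $M=1$, monotonicity is vacuous. For $\mEQ$-unit reflexivity, $\eta_I(x) = \lambda s.(x,s)$ gives $d_{T_S I}(\eta_I\ap x, \eta_I\ap x) = \sup_{s\in S} \max(d_I(x,x), d_S(s,s)) = 0$.

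The only nontrivial step is $\mEQ$-composability: for $c_1, c_2 \in T_S I$ and $f_1, f_2 \colon I \to T_S J$, we must show
\[
d_{T_S J}(f_1\kl\ap c_1, f_2\kl\ap c_2) \;\leq\; \max\!\Bigl(d_{T_S I}(c_1, c_2),\; \sup_{x \in I} d_{T_S J}(f_1(x), f_2(x))\Bigr).
\]
Unfolding the Kleisli extension, $(f_i\kl\ap c_i)(s) = f_i(x_i^s)(t_i^s)$ where $(x_i^s, t_i^s) = c_i(s)$. The strategy is to bound the pointwise distance $d_{J\times S}(f_1(x_1^s)(t_1^s), f_2(x_2^s)(t_2^s))$ using the ultrametric triangle inequality through the two intermediate points $f_1(x_1^s)(t_2^s)$ and $f_2(x_1^s)(t_2^s)$, yielding the maximum of three distances.

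Each of the three resulting distances is controlled by known facts. First, $d_{J\times S}(f_1(x_1^s)(t_1^s), f_1(x_1^s)(t_2^s)) \leq d_S(t_1^s, t_2^s)$ because $f_1(x_1^s) \in T_S J$ is nonexpansive, and $d_S(t_1^s, t_2^s) \leq d_{T_S I}(c_1, c_2)$ by definition. Second, $d_{J\times S}(f_1(x_1^s)(t_2^s), f_2(x_1^s)(t_2^s)) \leq d_{T_S J}(f_1(x_1^s), f_2(x_1^s)) \leq \sup_{x \in I} d_{T_S J}(f_1(x), f_2(x))$. Third, nonexpansiveness of $f_2 \colon I \to T_S J$ gives $d_{T_S J}(f_2(x_1^s), f_2(x_2^s)) \leq d_I(x_1^s, x_2^s) \leq d_{T_S I}(c_1, c_2)$, whence the corresponding pointwise distance is bounded likewise. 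Taking $\max$ of these three bounds, then $\sup$ over $s \in S$, yields the composability inequality.

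The main obstacle is the appearance of two different input states $t_1^s, t_2^s$ and two different intermediate values $x_1^s, x_2^s$ in the two computations, which prevents a direct single-step triangle estimate; the bridging trick through $f_1(x_1^s)(t_2^s)$ and $f_2(x_1^s)(t_2^s)$ is what reduces it to three estimates, each covered by either (i) nonexpansiveness of a state transformer in $T_S J$, (ii) the sup over equal-input pairs in the definition of $\mEQ$-composability, or (iii) nonexpansiveness of $f_2$ as a morphism in $\Gum$. The ultrametric triangle inequality is crucial here: ordinary additive triangle inequality would force a sum of three terms, which would not match the $\max$-operation of the divergence domain.
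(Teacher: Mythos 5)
Your proof is correct and takes essentially the same route as the paper's: insert intermediate points in the pointwise comparison, apply the ultrametric (strong) triangle inequality, and bound each leg by nonexpansivity together with the definition of the sup-metric, before taking the sup over $s$. The only difference is cosmetic: you bridge through two intermediate points, treating the change of state and the change of value argument separately, whereas the paper bridges through the single point $f_2(x_1^s)(t_1^s)$ and absorbs the simultaneous change of value and state using the nonexpansivity of the uncurried $f_2 \colon (S,d_S)\times(I,d_I)\to(J,d_J)\times(S,d_S)$; both decompositions give the same $\max$ bound.
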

\begin{appendixproof}
  (Proof of Proposition \ref{prop:divergence:statemonad:metric:Gum})
  We first show the $\mEQ {}$-unit reflexivity of $d^{T_S (-)}$. For any $s \in S$, we calculate
  \begin{align*}
    d_{T_S I} (\eta_I(x), \eta_I(x))
    &=\sup_{s \in S} \max\left( d_I(\pi_1 (x,s), \pi_1 (x,s)),  d_S(\pi_2 (x,s), \pi_2 (x,s)\right) \\
	& = \sup_{s \in S} \max(d_I(x,x),d_S(s,s)) = 0.
  \end{align*}
 We next show the $\mEQ {}$-composability of $d^{T_S (-)}$.
 For any $f_1,f_2 \in T_S (I,d_I)$ and nonexpansive functions $F_1,F_2 \colon (I,d_I) \to T_S (J,d_J)$, we compute
  \begin{align*}
  d^{T_S J} (F_1\kl (f_1),F_1\kl (f_2))
    &= \sup_{s \in S} \max\left(
    	\begin{aligned}
		 d_J(\pi_1 (F_1(\pi_1 f_1(s))(\pi_2 f_1(s))),\pi_1 (F_2(\pi_1 f_2(s))(\pi_2 f_2(s))),\\
 		 d_S(\pi_2 (F_1(\pi_1 f_1(s))(\pi_2 f_1(s))),\pi_2 (F_2(\pi_1 f_2(s))(\pi_2 f_2(s)))
 		 \end{aligned}
       \right)\allowdisplaybreaks[0]\\
  &\leq
  \sup_{s \in S}\max\left(
    	\begin{aligned}
		 d_J(\pi_1 (F_1(\pi_1 f_1(s))(\pi_2 f_1(s))),\pi_1 (F_2(\pi_1 f_1(s))(\pi_2 f_1(s))),\\
		 d_J(\pi_1 (F_2(\pi_1 f_1(s))(\pi_2 f_1(s))),\pi_1 (F_2(\pi_1 f_2(s))(\pi_2 f_2(s))),\\
 		 d_S(\pi_2 (F_1(\pi_1 f_1(s))(\pi_2 f_1(s))),\pi_2 (F_2(\pi_1 f_1(s))(\pi_2 f_1(s))),\\
 		 d_S(\pi_2 (F_2(\pi_1 f_1(s))(\pi_2 f_1(s))),\pi_2 (F_2(\pi_1 f_2(s))(\pi_2 f_2(s)))
	 	\end{aligned}
       \right)\allowdisplaybreaks[1]\\
  &=
  \sup_{s \in S}\max\left(
    	\begin{aligned}
    	 d_J(\pi_1 (F_2(\pi_1 f_1(s))(\pi_2 f_1(s))),\pi_1 (F_2(\pi_1 f_2(s))(\pi_2 f_2(s))),\\
 		 d_S(\pi_2 (F_2(\pi_1 f_1(s))(\pi_2 f_1(s))),\pi_2 (F_2(\pi_1 f_2(s))(\pi_2 f_2(s))),\\
		 d_J(\pi_1 (F_1(\pi_1 f_1(s))(\pi_2 f_1(s))),\pi_1 (F_2(\pi_1 f_1(s))(\pi_2 f_1(s))),\\
 		 d_S(\pi_2 (F_1(\pi_1 f_1(s))(\pi_2 f_1(s))),\pi_2 (F_2(\pi_1 f_1(s))(\pi_2 f_1(s)))
	 	\end{aligned}
       \right)\allowdisplaybreaks[0]\\
  &\leq
  \sup_{s \in S}\max\left(
    	\begin{aligned}
    	 		 d_I(\pi_1 (f_1(s)),\pi_1 (f_2(s))),\\
 		 d_S(\pi_2 (f_1(s)),\pi_2 (f_2(s))),\\\
		 \sup_{x \in I}\sup_{s' \in S}\max\left(
		 \begin{aligned}
		 d_J(\pi_1 (F_1(x)(s')),\pi_1 (F_2(x)(s')),\\
 		 d_S(\pi_2 (F_1(x)(s')),\pi_2 (F_2(x)(s'))
 		 \end{aligned}\right)
 		 	 	\end{aligned}
       \right)\allowdisplaybreaks[1]\\
  &=
  \max\left(
    	\begin{aligned}
    	 		\sup_{s \in S}\max( d_I(\pi_1 (f_1(s)),\pi_1 (f_2(s))),d_S(\pi_2 (f_1(s)),\pi_2 (f_2(s))) ),\\
		 \sup_{x \in I}\sup_{s' \in S}\max\left(
		 \begin{aligned}
		 d_J(\pi_1 (F_1(x)(s')),\pi_1 (F_2(x)(s')),\\
 		 d_S(\pi_2 (F_1(x)(s')),\pi_2 (F_2(x)(s'))
 		 \end{aligned}
 		 \right)
 		 	 	\end{aligned}
       \right)\allowdisplaybreaks[0]\\
 &=
  \max( d_{T_S I}(f_1,f_2),\sup_{x \in I} d^{T_S J}(F_1(x),F_2(x)).
  \end{align*}
  We note here that the nonexpansivity of  $F_2 \colon (I,d_I) \to (S,d_S) \Rightarrow (S,d_S) \times (J,d_J)$
  is equivalent to the one of its uncurrying $\overline{F_2} \colon (S,d_S) \times (I,d_I) \to (S,d_S) \times (J,d_J)$.
\end{appendixproof}
In the category $\Gum$, instead of $\mEQ$, there is another basic
endorelation $\mathsf{Dist}_0$:
\[
\mathsf{Dist}_0 (I,d_I) \triangleq  \{ (x_1,x_2)~|~d_I(x_1,x_2) =0 \}.
\]
By modifying the divergence $d_{T_S (-)}$, we obtain a
$\mathsf{Dist}_0$-relative $([0,1],\leq,\max,0)$-divergence as below:
\begin{proposition}\label{prop:divergence:statemonad:metric:Gum:2}
  The following
  forms a $\mathsf{Dist}_0 {}$-relative $([0,1],\leq,\max,0)$-divergence on $T_S$.
  \[
    \asgn^{\mathsf{Dist}_0}_{(I,d_I)} (f_1, f_2) \triangleq  \sup_{d_S(s_1,s_2) = 0} \max( d_S(\pi_1
    (f_1 (s_1)),\pi_1(f_2 (s_2))), d_I(\pi_2(f_1 (s_1)),\pi_2(f_2 (s_2)))).
  \]
\end{proposition}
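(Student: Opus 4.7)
The plan is to verify the three clauses of Definition~\ref{def:div} for the family $\asgn^{\mathsf{Dist}_0}$. Monotonicity is vacuous since the grading monoid is trivial. The proof will closely mirror that of Proposition~\ref{prop:divergence:statemonad:metric:Gum}, with two adaptations: the basic endorelation $\mEQ$ is relaxed to $\mathsf{Dist}_0$, and the ordinary triangle inequality is replaced by the ultrametric one, which is precisely what matches the divergence-domain operation $\max$ used in the composability bound.

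For $\mathsf{Dist}_0$-unit reflexivity I would unfold $\eta_I(x)(s) = (x, s)$: for any $(x_1, x_2)$ with $d_I(x_1, x_2) = 0$ and any $(s_1, s_2)$ with $d_S(s_1, s_2) = 0$, the inner $\max$ evaluates to $\max(d_I(x_1, x_2), d_S(s_1, s_2)) = 0$, so the outer sup vanishes.

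For $\mathsf{Dist}_0$-composability, given $f_1, f_2 \in T_S(I, d_I)$ and $F_1, F_2 \colon (I, d_I) \to T_S(J, d_J)$, and a witness pair $(s_1, s_2)$ with $d_S(s_1, s_2) = 0$, I would abbreviate $a_i = \pi_1 f_i(s_i)$ and $b_i = \pi_2 f_i(s_i)$, so that $F_i\kl f_i(s_i) = F_i(a_i)(b_i)$, and then interpolate via the intermediate point $F_2(a_1)(b_1)$. The ultrametric triangle inequality splits each output coordinate into the $\max$ of two contributions. The first contribution compares $F_1(a_1)$ with $F_2(a_1)$ applied to the common state $b_1$; since the reflexive pair $(a_1, a_1)$ lies in $\mathsf{Dist}_0(I, d_I)$ and $d_S(b_1, b_1) = 0$, this is bounded by $\sup_{(x_1, x_2) \in \mathsf{Dist}_0(I, d_I)} \asgn^{\mathsf{Dist}_0}_{(J, d_J)}(F_1(x_1), F_2(x_2))$. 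The second contribution compares $F_2(a_1)(b_1)$ with $F_2(a_2)(b_2)$; the nonexpansivity of the uncurried map $\overline{F_2} \colon (I, d_I) \times (S, d_S) \to (J, d_J) \times (S, d_S)$ bounds it by $\max(d_I(a_1, a_2), d_S(b_1, b_2)) \le \asgn^{\mathsf{Dist}_0}_{(I, d_I)}(f_1, f_2)$, where the last inequality follows directly from the definition of $\asgn^{\mathsf{Dist}_0}_{(I, d_I)}(f_1, f_2)$ at the witness $(s_1, s_2)$.

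The main obstacle is bookkeeping around which pairs lie in $\mathsf{Dist}_0$: the sup invocation works because the interpolation point is chosen so that both sides share the first coordinate $a_1$ (yielding a trivially reflexive pair) and share the state $b_1$ (yielding $d_S(b_1, b_1) = 0$), so the sup-defining condition is met. Once these pairings are set up, taking $\max$ and then $\sup$ over $\{(s_1,s_2) : d_S(s_1, s_2) = 0\}$ delivers the composability bound with $\max$ on the right-hand side, in accordance with the divergence domain $([0,1], \le, \max, 0)$.
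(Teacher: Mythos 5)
Your proposal is correct and follows essentially the same route as the paper's proof: unit reflexivity by unfolding $\eta$, and composability by interpolating through $F_2(\pi_1 f_1(s_1))(\pi_2 f_1(s_1))$, bounding one term by the sup over $\mathsf{Dist}_0$-related argument pairs (using the reflexive pair and $d_S(b_1,b_1)=0$) and the other by nonexpansivity of the uncurried $F_2$ together with the ultrametric $\max$-triangle inequality. The only caveat is a purely notational one: the proposition's displayed formula swaps which of $d_S,d_I$ is applied to $\pi_1,\pi_2$, and your reading (value in the first component, state in the second) agrees with the convention actually used in the paper's proof.
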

\begin{appendixproof}
  (Proof of Proposition \ref{prop:divergence:statemonad:metric:Gum:2})
  We first show the $\mathsf{Dist}_0 {}$-unit reflexivity of $\asgn^{\mathsf{Dist}_0}$. 
  For $(x_1,x_2) \in \mathsf{Dist}_0 (I,d_I)$ (i.e. $d_I (x_1,x_2) = 0$), we calculate
  \begin{align*}
    \asgn^{\mathsf{Dist}_0}_{(I,d_I)} (\eta_I(x_1), \eta_I(x_2))
    &=
      \sup_{d_S(s_1,s_2) = 0} \max\left( d_I(\pi_1 (x_1,s_2), \pi_1 (x_2,s_2)),  d_S(\pi_2 (x_1,s_1), \pi_2 (x_2,s_2)\right) \\
	& = \sup_{d_S(s_1,s_2) = 0} \max(d_I(x_1,x_2),d_S(s_1,s_2)) = 0.
   \end{align*}
   
   Next, we show the $\mathsf{Dist}_0 {}$-composability of $\asgn^{\mathsf{Dist}_0}$.
   For any $f_1,f_2 \in T_S (I,d_I)$ and nonexpansive functions $F_1,F_2 \colon (I,d_I) \to T_S (J,d_J)$, we compute
 \begin{align*}
  \lefteqn{ \asgn^{\mathsf{Dist}_0}_J (F_1\kl (f_1),F_1\kl (f_2))}\\
    &= \sup_{d_S(s_1,s_2) = 0} \max\left(
    	\begin{aligned}
		 d_J(\pi_1 (F_1(\pi_1 f_1(s_1))(\pi_2 f_1(s_1))),\pi_1 (F_2(\pi_1 f_2(s_2))(\pi_2 f_2(s_2))),\\
 		 d_S(\pi_2 (F_1(\pi_1 f_1(s_1))(\pi_2 f_1(s_1))),\pi_2 (F_2(\pi_1 f_2(s_2))(\pi_2 f_2(s_2)))
 		 \end{aligned}
       \right)\allowdisplaybreaks[0]\\
  &\leq
  \sup_{d_S(s_1,s_2) = 0}\max\left(
    	\begin{aligned}
		 d_J(\pi_1 (F_1(\pi_1 f_1(s_1))(\pi_2 f_1(s_1))),\pi_1 (F_2(\pi_1 f_1(s_1))(\pi_2 f_1(s_1))),\\
		 d_J(\pi_1 (F_2(\pi_1 f_1(s_1))(\pi_2 f_1(s_1))),\pi_1 (F_2(\pi_1 f_2(s_2))(\pi_2 f_2(s_2))),\\
 		 d_S(\pi_2 (F_1(\pi_1 f_1(s_1))(\pi_2 f_1(s_1))),\pi_2 (F_2(\pi_1 f_1(s_1))(\pi_2 f_1(s_1))),\\
 		 d_S(\pi_2 (F_2(\pi_1 f_1(s_1))(\pi_2 f_1(s_1))),\pi_2 (F_2(\pi_1 f_2(s_2))(\pi_2 f_2(s_2)))
	 	\end{aligned}
       \right)\allowdisplaybreaks[1]\\
  &=
  \sup_{d_S(s_1,s_2) = 0}\max\left(
    	\begin{aligned}
 		 d_J(\pi_1 (F_2(\pi_1 f_1(s_1))(\pi_2 f_1(s_1))),\pi_1 (F_2(\pi_1 f_2(s_2))(\pi_2 f_2(s_2))),\\
 		 d_S(\pi_2 (F_2(\pi_1 f_1(s_1))(\pi_2 f_1(s_1))),\pi_2 (F_2(\pi_1 f_2(s_2))(\pi_2 f_2(s_2))),\\
		 d_J(\pi_1 (F_1(\pi_1 f_1(s_1))(\pi_2 f_1(s_1))),\pi_1 (F_2(\pi_1 f_1(s_1))(\pi_2 f_1(s_1))),\\
 		 d_S(\pi_2 (F_1(\pi_1 f_1(s_1))(\pi_2 f_1(s_1))),\pi_2 (F_2(\pi_1 f_1(s_1))(\pi_2 f_1(s_1)))
	 	\end{aligned}
       \right)\allowdisplaybreaks[0]\\
  &\leq
  \sup_{d_S(s_1,s_2) = 0}\max\left(
    	\begin{aligned}
    	 		 d_I(\pi_1 (f_1(s_1)),\pi_1 (f_2(s_2))),\\
 		 d_S(\pi_2 (f_1(s_1)),\pi_2 (f_2(s_2))),\\\
		 \sup_{(x_1,x_2) \in \mathsf{Dist}_0 (I,d_I)}\sup_{d_S(s'_1,s'_2) = 0}\max\left(
		 \begin{aligned}
		 d_J(\pi_1 (F_1(x)(s'_1)),\pi_1 (F_2(x)(s_2')),\\
 		 d_S(\pi_2 (F_1(x)(s'_1)),\pi_2 (F_2(x)(s'_2))
 		 \end{aligned}\right)
 		 	 	\end{aligned}
       \right)\allowdisplaybreaks[1]\\
  &=
  \max\left(
    	\begin{aligned}
    	 		\sup_{d_S(s'_1,s'_2) = 0}\max( d_I(\pi_1 (f_1(s_1)),\pi_1 (f_2(s_2))),d_S(\pi_2 (f_1(s_1)),\pi_2 (f_2(s_2))) ),\\
		 \sup_{(x_1,x_2) \in \mathsf{Dist}_0 (I,d_I)}\sup_{d_S(s'_1,s'_2) = 0}\max\left(
		 \begin{aligned}
		 d_J(\pi_1 (F_1(x_1)(s'_1)),\pi_1 (F_2(x_2)(s'_2)),\\
 		 d_S(\pi_2 (F_1(x_1)(s'_1)),\pi_2 (F_2(x_2)(s'_2))
 		 \end{aligned}
 		 \right)
 		 	 	\end{aligned}
       \right)\allowdisplaybreaks[0]\\
 &=
  \max( \asgn^{\mathsf{Dist}_0}_I (f_1,f_2),\sup_{(x_1,x_2) \in \mathsf{Dist}_0 (I,d_I)} \asgn^{\mathsf{Dist}_0}_J (F_1(x),F_2(x)).
  \end{align*}
This completes the proof.
\end{appendixproof}
  
\subsection{Combining Divergence with Cost}\label{sec:ccount}
In Section \ref{sec:relcost}, we have introduced a divergence on the
monad $P(\NN\times -)$ modeling nondeterministic choice and cost
counting. In this section we construct a divergence on the combination
of a general computational effect and cost counting.

Let $(\CC,T)$ be a CC-SM and $\asgn\in\mDiv T1\qQ \mEQ$ be a
divergence and $(N,1_N\colon 1\to N,(\star)\colon N\times N\to N)$ be
a monoid object in $\CC$ (for cost counting). Then the composite
$T(N \times - )$ of the monad $T$ and the monoid action monad
$N \times (-)$ again carries a monad structure.  We now define a
family
$\divnCostDiv \asgn N = \{ \divCostDiv \asgn N I\colon (U(T(N\times
I)))^2 \to \qQ \}_{I \in \CC} $ of $\qQ$-divergences by
\begin{align*}
  \divCostDiv \asgn N I (c_1, c_2)\triangleq 
  \begin{cases}
    \asg {}N (T \pi_1 \ap c_1, T \pi_1 \ap c_2) &
    \asg {}{N \times I} (c_1, c_2) \leq \asg{}{N} (T \pi_1 \ap c_1, T \pi_1 \ap c_2)
    \\
    \top_{\qQ}  & \text{otherwise}
  \end{cases}.
\end{align*}
\begin{proposition}\label{prop:divergence:cost:projection}
The family $\divnCostDiv \asgn N $ is an $\mEQ$-relative $\qQ $-divergence on $T(N\times -)$.
\end{proposition}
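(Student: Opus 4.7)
The plan is to verify in turn the three conditions of Definition \ref{def:div} (monotonicity, $\mEQ$-unit reflexivity, $\mEQ$-composability) for $\divnCostDiv\asgn N$ viewed as a $\qQ$-divergence on the composite monad $T(N\times-)$. Monotonicity is trivial since the family is $1$-graded. For $\mEQ$-unit reflexivity, note that the unit of $T(N\times-)$ at $x$ is $\eta^T_{N\times I}\ap(1_N, x)$, and its $T\pi_1$-image is $\eta^T_N\ap 1_N$. The $\mEQ$-unit reflexivity of the underlying $\asgn$ then forces both $\asg{}{N\times I}(\eta(1_N,x),\eta(1_N,x))$ and $\asg{}{N}(\eta\,1_N,\eta\,1_N)$ to be $\le 0$, so the first case of the definition of $\divCostDiv$ applies and gives a value $\le 0$.

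For $\mEQ$-composability I first unfold the Kleisli extension of $T(N\times-)$, using that $N\times-$ is the monoid-action monad of $(N,1_N,\star)$ distributing over $T$ via strength, together with equation (\ref{eq:stun}). For $f : I \to T(N\times J)$ the composite Kleisli extension factors through $T$'s own extension as $\hat f^\sharp$, where $\hat f(n, x) = T(\ell_n)\ap f(x)$ and $\ell_n : (n', y) \mapsto (n\star n', y)$. A key auxiliary step is a \emph{postprocessing lemma}: for every morphism $h : J \to K$ in $\CC$ and all $a_1, a_2 \in U(TJ)$,
\[
  \asg{}{K}(Th\ap a_1, Th\ap a_2) \le \asg{}{J}(a_1, a_2).
\]
This is obtained by rewriting $Th = (\eta_K\circ h)^\sharp$, applying the $\mEQ$-composability of $\asgn$, and killing the resulting supremum by the $\mEQ$-unit reflexivity of $\asgn$ at the pairs $(hx, hx)\in R_{\mEQ K}$.

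For the target composability inequality itself, if the right-hand side is $\top_\qQ$ then it is trivially an upper bound. Otherwise both $c_i$ and every $f_i\ap x$ realise the first case of $\divCostDiv$, giving $\asg{}{N\times I}(c_1, c_2) \le \asg{}{N}(T\pi_1\ap c_1, T\pi_1\ap c_2)$ and $\asg{}{N\times J}(f_1\ap x, f_2\ap x) \le \asg{}{N}(T\pi_1\ap f_1\ap x, T\pi_1\ap f_2\ap x)$. Applying $\mEQ$-composability of $\asgn$ to $\hat{f_i}$ at $N\times J$ and eliminating the $T(\ell_n)$ in each summand of the supremum via the postprocessing lemma bounds $\asg{}{N\times J}(\hat{f_1}^\sharp\ap c_1, \hat{f_2}^\sharp\ap c_2)$ by $\asg{}{N\times I}(c_1, c_2) + \sup_{x\in I} \asg{}{N\times J}(f_1\ap x, f_2\ap x)$. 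Noting that $T\pi_1\circ\hat{f_i}^\sharp = g_i^\sharp$ with $g_i(n, x) = Tm_n\ap T\pi_1\ap f_i(x)$ (where $m_n : N \to N$ is left-multiplication by $n$), the same pattern bounds $\asg{}{N}(T\pi_1\ap\hat{f_1}^\sharp\ap c_1, T\pi_1\ap\hat{f_2}^\sharp\ap c_2)$ by $\asg{}{N\times I}(c_1, c_2) + \sup_{x\in I} \asg{}{N}(T\pi_1\ap f_1\ap x, T\pi_1\ap f_2\ap x)$.

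The hard part will be verifying that the case condition of $\divCostDiv$ is preserved by the Kleisli composite, so that $\divCostDiv\asgn N J(\hat{f_1}^\sharp\ap c_1, \hat{f_2}^\sharp\ap c_2)$ realises its first case (i.e.\ equals the cost-projection value) and not $\top_\qQ$: this is where the two bounds above must be compared carefully using the assumed inequalities for $c_i$ and the $f_i\ap x$. Once that is established, combining those two bounds with the said assumed inequalities yields the required inequality $\divCostDiv\asgn N J(\hat{f_1}^\sharp\ap c_1, \hat{f_2}^\sharp\ap c_2) \le \divCostDiv\asgn N I(c_1, c_2) + \sup_{x\in I} \divCostDiv\asgn N J(f_1\ap x, f_2\ap x)$.
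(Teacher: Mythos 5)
Your factorisation of the composite Kleisli extension, your postprocessing lemma, and your two bounds all coincide with the paper's computation (the paper's $h_f\colon N\times I\to TN$, $h_f=T(\star)\circ\theta_{N,N}\circ(N\times(T\pi_1\circ f))$, is exactly your $g_i$, and its inline use of $(\eta_N\circ\star_n)\kl$ together with $\mEQ$-unit reflexivity is your postprocessing lemma), so up to packaging you are on the paper's route. The genuine gap is precisely the step you defer: showing that the pair $(f_1\kl\ap c_1,f_2\kl\ap c_2)$ falls into the first branch of the definition of $\divCostDiv \asgn N J$. Your plan of ``comparing the two bounds carefully'' cannot succeed: both of your inequalities bound $\asg{}{N\times J}(f_1\kl\ap c_1,f_2\kl\ap c_2)$ and $\asg{}{N}(T\pi_1\ap f_1\kl\ap c_1,T\pi_1\ap f_2\kl\ap c_2)$ \emph{from above}, whereas the branch condition asks that the first quantity be below the second, i.e.\ it needs a \emph{lower} bound on the cost-marginal divergence of the composite, and no such bound follows from monotonicity, $\mEQ$-unit reflexivity and $\mEQ$-composability. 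Worse, the branch condition is genuinely not preserved by Kleisli composition: take $T$ the (discrete) probability monad, $N=(\NN,+,0)$, $\asgn=\divnkl$, $I=\{a,b\}$, $J=\{A,B\}$, $c_1=c_2=\tfrac12\delta_{(0,a)}+\tfrac12\delta_{(0,b)}$, $f_1(a)=\delta_{(0,A)}$, $f_1(b)=\delta_{(1,B)}$, $f_2(a)=\tfrac12\delta_{(0,A)}+\tfrac12\delta_{(1,A)}$, $f_2(b)=\tfrac12\delta_{(0,B)}+\tfrac12\delta_{(1,B)}$. Then $(c_1,c_2)$ and each $(f_1(x),f_2(x))$ satisfy the branch condition (with values $0$ and $\log 2$), but the two composites have identical cost marginals while their joint Kullback-Leibler divergence is $\log 2>0$, so $\divCostDiv \asgn N J(f_1\kl\ap c_1,f_2\kl\ap c_2)=\top_{\qQ}$ whereas the right-hand side of the composability inequality is $\log 2$. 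So no comparison of your two upper bounds can close the step, and any completion must either add hypotheses or change the definition.

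You should also know that the paper's own proof does not perform this step: its chain opens with the equality $\divCostDiv \asgn N J(f_1\kl\ap c_1,f_2\kl\ap c_2)=\asg{}{N}(T\pi_1\ap f_1\kl\ap c_1,T\pi_1\ap f_2\kl\ap c_2)$, tacitly assuming the first branch for the composite pair, and the remainder is your bound (ii) followed by the unconditional inequalities $\asg{}{N\times I}(c_1,c_2)\le\divCostDiv \asgn N I(c_1,c_2)$ and $\asg{}{N}(T\pi_1\ap f_1\ap x,T\pi_1\ap f_2\ap x)\le\divCostDiv \asgn N J(f_1\ap x,f_2\ap x)$. So you have correctly isolated the one non-routine point, but your proposal does not prove it, and as the example shows it is not provable in this generality. (A smaller instance of the same looseness occurs in your unit-reflexivity step: ``both values are $\le 0$'' does not by itself give $\asg{}{N\times I}(\eta\ap x,\eta\ap x)\le\asg{}{N}(\eta\ap 1_N,\eta\ap 1_N)$; this one, however, is repairable, e.g.\ by applying $\mEQ$-composability to $k(n)=\eta_{N\times I}\ap\langle n,x\rangle$ and using $v+0=v$.)
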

\begin{appendixproof}
    
  (Proof of Proposition \ref{prop:divergence:cost:projection})
  The monotonicity of $\divnCostDiv \asgn N $ is obvious since $M = 1$.
  
  We show the $\mEQ$-unit reflexivity of $\divnCostDiv \asgn N $. 
  For all $x \in UI$, we have
  \begin{alignat*}{2}
  T \pi_1 \ap (\eta^{T(N \times - )}_I \ap x)
  &=(T \pi_1 \circ \eta^{T(N \times - )}_I  )\ap x
  &&=(T \pi_1 \circ T\eta^{(N \times -)}_I  \circ \eta^{T}_I )\ap x\\
  &=(T (1_N \circ !_I ) \circ \eta^{T}_I )\ap x
  &&=(\eta^{T} \circ 1_N \circ !_I ) \ap x\\
  &= \eta^{T} \ap ((1_N \circ !_I ) \ap x)
  &&= \eta^{T} \ap (1_N \ap (!_I \ap x))\\
  &= \eta^{T} \ap 1_N
  \end{alignat*}
  Hence,
    \begin{alignat*}{2}
      \divCostDiv \asgn N I (\eta^{T(N \times - )}\ap x,\eta^{T(N \times - )}\ap x)
      &=\divCostDiv \asgn N I (\eta^{T(N \times - )}\ap x,\eta^{T(N \times - )}\ap x) \\ 
      &=\asg {}N (T \pi_1 \ap (\eta^{T(N \times - )}\ap x),T \pi_1 \ap (\eta^{T(N \times - )}\ap x)\\
      &=\asg {}N (\eta^{T} \ap 1_N,\eta^{T} \ap 1_N) \\
      &\leq 0_\qQ.
    \end{alignat*}
    
    We next show the $\mEQ$-composability of $\divnCostDiv \asgn N $.
    For any
    $f \colon I \to T(N \times I)$, we define
    $h_f \colon N \times I \to T(N)$ by
    $h_f = T(\star) \circ \theta_{N,N} \circ (N \times (T \pi_1 \circ
    f))$.  Then, we have
    $T \pi_1 \ap f^{\sharp(T(N \times I))}\ap \nu = h_f^{\sharp T}\ap \nu$
    for any $\nu \in U(T(N \times I))$.
    First, for all $m,n \in UN$, we have 
   \begin{alignat*}{2}
     (T(\star) \circ (\eta_{N \times N})_n) \ap m 
     &= T(\star) \ap (\eta_{N \times N} \ap \langle n,m\rangle)
     &&= (T(\star) \circ \eta_{N \times N}) \ap \langle n,m\rangle)\\
      &= (\eta_{N}\circ \star) \ap \langle n,m\rangle)
      &&= \eta_{N} \ap (\star \ap \langle n,m\rangle)\\
      &= \eta_{N} \ap (\star_n \ap m)
     &&= (\eta_{N} \circ \star_n ) \ap m.
     \end{alignat*}  
    From this and the equality (\ref{eq:stun}), we can calculate as follows:
    \begin{alignat*}{2}
    \lefteqn{h_{f} \ap \langle n,i\rangle}\\
    &= (T(\star) \circ \theta_{N,N} \circ (N \times (T \pi_1 \circ
    f))) \ap \langle n,i\rangle
    &&= (T(\star) \circ \theta_{N,N}) \ap ((N \times (T \pi_1 \circ
    f)) \ap \langle n,i\rangle )\\
    &= (T(\star) \circ \theta_{N,N}) \ap (U(N \times (T \pi_1 \circ
    f)) (n,i))
    &&= (T(\star) \circ \theta_{N,N}) \ap ((U(N) \times U(T \pi_1 \circ
    f)) (n,i)) \\
    &= (T(\star) \circ \theta_{N,N}) \ap \langle U(N)(n),U(T \pi_1 \circ
    f)(i) \rangle
    &&= (T(\star) \circ \theta_{N,N}) \ap \langle n,(T \pi_1 \circ
    f)\ap i \rangle\\
    &= T(\star) \ap (\theta_{N,N} \ap \langle n,(T \pi_1 \circ
    f)\ap i \rangle)
    &&= T(\star) \ap ((\theta_{N,N})_n \ap ((T \pi_1 \circ
    f)\ap i))\\
    &= T(\star) \ap (((\eta_{N \times N})_n) \kl \ap ((T \pi_1 \circ
    f)\ap i))
    &&= (T(\star) \circ ((\eta_{N \times N})_n)\kl) \ap ((T \pi_1 \circ
    f)\ap i)\\
    &=
    (T(\star) \circ (\eta_{N \times N})_n)\kl \ap ((T \pi_1 \circ
    f)\ap i)
    &&= (\eta_N \circ (\star_n) )\kl \ap ((T \pi_1 \circ
    f)\ap i).
    \end{alignat*}
    From the assumption $\asg {}{N \times I} (c_1, c_2) \leq \divCostDiv \asgn N I (c_1, c_2)$, the $\mEQ$-unit-reflexivity and $\mEQ$-composability of the original divergence $\asgn$, we obtain the $\mEQ$-composability of $\divnCostDiv \asgn N $ as follows:
    \begin{align*}
      \lefteqn{\divCostDiv \asgn N J (f_1^{\sharp(T(N \times I))} \ap c_1,f_2^{\sharp(T(N \times I))} \ap c_2)}\\
      &= 
        \asg {}N(T \pi_1 \ap f_1^{\sharp(T(N \times I))} \ap c_1,T \pi_1 \ap f_2^{\sharp(T(N \times I))} \ap c_2) \\
      &= 
        \asg {}N( h_{f_1}^{\sharp T} \ap c_1, h_{f_2}^{\sharp T} \ap c_2)\\
      &\leq 
        \asg {}{N \times I} (c_1,c_2) +
        \sup_{\langle n,i\rangle \in U(N \times I)}
        \asg {}{N} (h_{f_1} \ap \langle n,i\rangle, h_{f_2} \ap \langle n,i\rangle)\\
      &=
        \asg {}{N \times I} (c_1,c_2)\\
      &\qquad +
        \sup_{\langle n,i\rangle \in U(N \times I)}
        \asg {}{N}((\eta_N \circ (\star)_n)^{\sharp T} \ap  ((T \pi_1 \circ f) \ap i),(\eta_N \circ (\star)_n)^{\sharp T} \ap  ((T \pi_1 \circ f) \ap i) )\\
      &
\leq
        \asg {}{N \times I} (c_1,c_2)\\
      &\qquad +
        \sup_{\langle n,i\rangle \in U(N \times I)}
        \left(
        \begin{aligned}
        &\asg {}{N} ((T \pi_1 \circ f) \ap i,(T \pi_1 \circ f) \ap i)\\
        &\qquad + \sup_{m \in UN} 
        \asg{}{N}(\eta_N \circ (\star)_n) \ap m,(\eta_N \circ (\star)_n) \ap m)
        \end{aligned}
        \right)
        \\
      &\leq \asg {}{N \times I} (c_1,c_2)
      +
        \sup_{\langle n,i\rangle \in U(N \times I)}
        \asg {}{N} ((T \pi_1 \circ f) \ap i,(T \pi_1 \circ f) \ap i)\\
      &=
        \asg {}{N \times I} (c_1,c_2) +
        \sup_{i \in UI}
        \asg {}{N} ((T \pi_1 \circ f_1) \ap i, (T \pi_1 \circ f_2) \ap i)\\
      &\leq 
        \divCostDiv \asgn N I (c_1,c_2)
        +
        \sup_{i \in UI}
        \divCostDiv \asgn N J (f_1 \ap i,f_2 \ap i).
    \end{align*}
This completes the proof.
\end{appendixproof}
For example, the divergence $\divnCostDiv {\divnkl} {\RR}$ on the composite monad
$\giry(\RR \times -)$ on $\Meas$ describes Kullback-Leibler divergence
between distributions of costs in the probabilistic computations with
real-valued costs.  Intuitively, the side condition
$\divnkl_{\RR \times I} (\mu_1, \mu_2) \leq \divnkl_{\RR} (\giry \pi_1
\ap \mu_1, \giry \pi_1 \ap \mu_2)$ in the definition of $\divnCostDiv {\divnkl} {\RR}$
means that the difference between $\mu_1$ and $\mu_2$ lies only in the
costs.


\subsection{Preorders on Monads}
To explore the generality of our framework, we look at the case where
the divergence domain is $\qB=(\{0\ge 1\},1,\times)$; here $\times$ is
the numerical multiplication.
We identify an indexed family
$\asgn = \{\asg {} I \colon (U(TI))^2 \to \qB \}_{I \in \CC}$ of
$\qB$-divergences and a family of adjacency relations
$\tilde \asgn (1) I \triangleq \{ (c_1, c_2)~|~ \asg {}I (c_1, c_2)
\leq 1 \}$.

We point out a connection between $\mEQ$-relative $\qB$-divergences and
{\em preorders on monads} studied in
\mcite{preordermonad,DBLP:journals/entcs/Sato14}. A preorder on a
monad $T$ on $\Set$ assigns a preorder $\sqsubseteq_I$ on $TI$ for
each $I\in\Set$, and this assignment satisfies:
\begin{description}
\item[Substitutivity] For any function $f\colon I\to TJ$ and $c_1,c_2\in TI$,
  $c_1\sqsubseteq _Ic_2$ implies $f\kl (c_1)\sqsubseteq_J f\kl (c_2)$.
\item[Congruence] For any function $f_1,f_2\colon I\to TJ$, if
  $f_1(x)\sqsubseteq_Jf_2(x)$ holds for any $x\in I$, then
  $f_1\kl (c)\sqsubseteq_Jf_2\kl (c)$ holds for any $c\in TI$.
\end{description}
\begin{proposition}\label{pp:preorder_on_monad_is_divergence}
  A preorder on a monad $T$ on $\Set$ bijectively corresponds to an
  $\mEQ$-relative $\qB$-divergence $\asgn$ on $T$ such that each
  $\tilde\asgn(1)I$ is a preorder.
\end{proposition}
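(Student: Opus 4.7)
The plan is to unpack what each of the three axioms of Definition \ref{def:div} degenerates to when the grading monoid is trivial ($M = 1$) and the divergence domain is $\qB = (\{0 \ge 1\}, 1, \times)$, and then to identify the resulting condition with the substitutivity and congruence laws for a preorder on a monad. Observe first that in $\qB$ the bottom element is $1$ and the top element is $0$, so the condition ``$\asg{}I(c_1,c_2) \le 1$'' cutting out $\tilde\asgn(1)I$ is equivalent to the equation $\asg{}I(c_1,c_2)=1$. Hence a $\qB$-divergence $\asgn$ on $T$ is nothing but a family of relations $\sqsubseteq_I = \tilde\asgn(1)I \subseteq TI \times TI$, and the assignment $\asgn \mapsto \{\sqsubseteq_I\}_I$ is bijective.

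Next I would check that the first two axioms become vacuous under our setting. Monotonicity is automatic because $M = 1$. The $\mEQ$-unit reflexivity requires $\asg{}I(\eta_I \ap x,\eta_I \ap x) \le 0$, but $0$ is the top of $\qB$, so this holds for any $\asgn$. Thus only $\mEQ$-composability carries information. Unpacking it, the product in $\qB$ is multiplication, so the value $\asg{}I(c_1,c_2) \cdot \sup_{x\in I} \asg{}J(f_1 \ap x, f_2 \ap x)$ equals $1$ exactly when both factors equal $1$ (here I use that the join in $\qB$ is $1$ iff all members equal $1$). Reading $\le 1$ as $=1$ again, $\mEQ$-composability becomes the implication: if $c_1 \sqsubseteq_I c_2$ and $f_1(x) \sqsubseteq_J f_2(x)$ for every $x \in I$, then $f_1\kl (c_1) \sqsubseteq_J f_2\kl (c_2)$.

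For the forward direction, assume $\asgn$ is an $\mEQ$-relative $\qB$-divergence whose $\sqsubseteq_I := \tilde\asgn(1)I$ is a preorder. Substitutivity follows by instantiating $\mEQ$-composability with $f_1 = f_2 = f$ and invoking the reflexivity of $\sqsubseteq_J$ on each $f(x)$. Congruence follows by instantiating $\mEQ$-composability with $c_1 = c_2 = c$ and invoking the reflexivity of $\sqsubseteq_I$. For the backward direction, start from a preorder $\{\sqsubseteq_I\}_I$ on $T$ and define $\asg{}I(c_1,c_2) = 1$ if $c_1 \sqsubseteq_I c_2$ and $= 0$ otherwise. Suppose the hypotheses of composability hold, i.e.\ $c_1 \sqsubseteq_I c_2$ and $f_1(x) \sqsubseteq_J f_2(x)$ for all $x$. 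By substitutivity applied to $f_2$, $f_2\kl (c_1) \sqsubseteq_J f_2\kl (c_2)$; by congruence, $f_1\kl (c_1) \sqsubseteq_J f_2\kl (c_1)$; chaining these with transitivity gives $f_1\kl (c_1) \sqsubseteq_J f_2\kl (c_2)$, which is the required composability. The preorder condition on $\tilde\asgn(1)I$ is tautological. Finally, the two constructions are evidently mutually inverse.

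The only mildly delicate step is the reduction of the inequality in $\mEQ$-composability to an implication of relational membership, which hinges on correctly identifying the meaning of $\le$, $\sup$ and $\cdot$ in the reversed order of $\qB$; the rest amounts to rearranging substitutivity, congruence and transitivity.
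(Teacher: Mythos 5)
Your overall route is the same as the paper's: read a $\qB$-divergence as the family of relations $\tilde\asgn(1)I$, observe that $\mEQ$-composability unpacks to ``$c_1\sqsubseteq_I c_2$ and $f_1(x)\sqsubseteq_J f_2(x)$ for all $x$ imply $f_1\kl\ap c_1\sqsubseteq_J f_2\kl\ap c_2$'', get substitutivity and congruence by specializing $f_1=f_2$ resp.\ $c_1=c_2$ and using reflexivity, and get composability back from substitutivity, congruence and transitivity; the two constructions are visibly mutually inverse. All of that is fine and matches the paper's proof.

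There is, however, one step that fails as written: your claim that $\mEQ$-unit reflexivity ``holds for any $\asgn$'' because ``$0$ is the top of $\qB$''. In Definition \ref{def:div} the bound in the unit-reflexivity condition is the \emph{monoid unit} of the divergence domain (written $0$ generically), not the numerical $0$. For $\qB=(\{0\ge 1\},1,\times)$ the monoid unit is the numerical $1$, which is the \emph{bottom} of the order, so the condition reads $\asg{}I(\eta_I\ap x,\eta_I\ap x)\le 1$, i.e.\ $(\eta_I\ap x,\eta_I\ap x)\in\tilde\asgn(1)I$ for every $x$ --- the strongest possible constraint rather than a vacuous one, and it does carry information. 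Consequently, in your backward direction (preorder $\Rightarrow$ divergence) you still owe a verification of unit reflexivity for the constructed $\asgn^\sqsubseteq$; the repair is one line, exactly as in the paper: reflexivity of the preorder $\sqsubseteq_I$ gives $\eta_I\ap x\sqsubseteq_I\eta_I\ap x$, hence $\asg{\sqsubseteq}I(\eta_I\ap x,\eta_I\ap x)=1\le 1$. (Your forward direction is unaffected, since there you only use composability together with the standing hypothesis that each $\tilde\asgn(1)I$ is a preorder.) With that correction the argument is complete and coincides with the paper's.
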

\begin{appendixproof}
  (Proof of Proposition \ref{pp:preorder_on_monad_is_divergence})

  We consider a preorder $\sqsubseteq$ on a monad $T$.  We define the
  $\qB$-divergence $\asg \sqsubseteq {}$ on $TI$ by
  \[
    \asg \sqsubseteq I (c_1,c_2) \triangleq
    \begin{cases}
      0 & c_1 \not\sqsubseteq_I c_2 \\
      1 & c_1 \sqsubseteq_I c_2
    \end{cases}
  \]
  Each $\tilde\asgn(1)I$ is a preorder because 
  $\tilde\asgn(1)I = {\sqsubseteq_I}$ holds for each $I$.

  The $\mEQ$-unit reflexivity of $\asg \sqsubseteq {}$ is derived from the reflexivity of $\sqsubseteq$. For all set $I$ and $c \in TI$,
  \[
  (\asg \sqsubseteq I (c,c) \leq 1) \iff (c \sqsubseteq_I c).
  \]
  Since $\sqsubseteq$ is a preorder on $T$,
  for all set $I,J$, $c_1,c_2 \in TI$ and $f,g \colon I \to TJ$,
  \begin{align*}
  &(\asg \sqsubseteq I (c_1, c_2) \times \sup_{x \in I}\asg \sqsubseteq J (f(x), g(x)) ) = 1\\
  &\iff (\asg \sqsubseteq I (c_1, c_2) = 1) \land (\sup_{x \in I}\asg \sqsubseteq J (f(x), g(x)) = 1)\\
  &\iff (c_1 \sqsubseteq_I c_2) \land (\forall{x \in I}.~ f(x) \sqsubseteq_J g(x))\\
  &\implies (f\kl (c_1) \sqsubseteq_J f\kl (c_2)) \land (f\kl (c_2) \sqsubseteq_J g\kl (c_2))\\
  &\implies (f\kl (c_1) \sqsubseteq_J g\kl (c_2))\\
  &\iff(\asg \sqsubseteq J (f\kl (c_1), g\kl (c_2)) = 1)
  \end{align*}
  Hence, we have the $\mEQ$-composability
  \begin{align*}
    \asg \sqsubseteq J (f\kl (c_1), g\kl (c_2))
    \leq \asg \sqsubseteq I (c_1, c_2) \times \sup_{x \in I} \asg \sqsubseteq J (f(x), g(x)).
  \end{align*}

  Conversely, we consider an $\mEQ$-relative $\qB$-divergence $\asgn$ on
  $T$ such that each $\tilde \asgn  (1) I$ is a preorder.
  We show that the family $\sqsubseteq^\asgn = \{ \sqsubseteq^\asgn_I \}_{I \in \Set}$
  defined by ${\sqsubseteq^\asgn_I} \triangleq \tilde \asgn (1) I$ forms a preorder on monad $T$.
  
  Each component $\sqsubseteq^\asgn_I$ of $\sqsubseteq^\asgn$ at set $I$ is a preorder on the set $TI$.
  We here note that the divergence $\asgn$ must be reflexive (i.e. $\asg {} I(c,c) \leq 1$ for all $I \in \Set,c \in TI$)
  because of the reflexivity of $\sqsubseteq^\asgn_I$:
  \[
    (\asg {} I(c,c) \leq 1) \iff (c \sqsubseteq^\asgn_I c), \quad \text{ for all } I\in \Set, c \in TI.
  \]
  From the reflexivity and $\mEQ$-composability of $\asgn$, we have for all
  $c_1,c_2,c \in TI$ and $f,g \colon I \to TJ$, 
  \begin{align}
  \label{divergence_premonad_subst}
    \fa{c_1,c_2 \in TI,f\colon I \to TJ} &\asg {} J (f\kl (c_1), f\kl (c_2)) \leq \asg {} I (c_1, c_2),\\
\label{divergence_premonad_congr}
    \fa{c \in TI, f,g \colon I \to TJ} &\asg {} J (f\kl (c), g\kl (c)) \leq \sup_{x \in I} \asg {} J (f(x),
    g(x)).
  \end{align}
  They are equivalent to 
  the substitutivity and congruence of $\sqsubseteq^\asgn$
  respectively: 
  \begin{align*}
    (\ref{divergence_premonad_subst})  &\iff \fa{c_1,c_2 \in TI,f\colon I \to TJ} (c_1 \sqsubseteq^\asgn_I c_2 \implies f\kl (c_1) \sqsubseteq^\asgn_J f\kl (c_2)),\\
    (\ref{divergence_premonad_congr}) &\iff \fa{c \in TI, f,g \colon I \to TJ} (\forall{x \in I}.~ f(x) \sqsubseteq^\asgn_J g(x) \implies f\kl
    (c) \sqsubseteq^\asgn_J g\kl (c)).
  \end{align*}
  Finally, the above conversions $\asgn^{(-)}$ and
  $\sqsubseteq^{(-)}$ are mutually inverse:
  \begin{align*}
  \asg {\sqsubseteq^{\asgn'}} I (c_1,c_2) \leq 1
    \iff c_1 \sqsubseteq^{\asgn'}_I c_2
   \iff \asgn'_I(c_1,c_2) \leq 1,\\
  c_1 \sqsubseteq^{\asgn^{\sqsubseteq'}}_I c_2
    \iff \asgn^{\sqsubseteq'}_I(c_1,c_2) \leq 1
   \iff c_1 \sqsubseteq'_I c_2.
  \end{align*} 
  This completes the proof.
\end{appendixproof}
For a preorder $\sqsubseteq$ on a monad $T$ on $\Set$, by
$\asg {\sqsubseteq} {}$ we mean the divergence corresponding to
$\sqsubseteq$ by Proposition \ref{pp:preorder_on_monad_is_divergence}
(in fact, we have $\widetilde{\asg {\sqsubseteq} {}}(1)I = {\sqsubseteq_I}$ for all set $I$).


\section{Properties of Divergences on Monads}

\subsection{Divergences on Monads as Structures in $\Div\qQ\CC$}

In this section we examine divergences on monads from the view point
of monoidal structure of $\Div\qQ\CC$. For any CC $\CC$, the category
$\Div\qQ\CC$ has a symmetric monoidal structure, whose unit and tensor
product are given by
\begin{align*}
  \unit&\triangleq(1,\lam{(x_1,x_2)}0),\\
  (I,d)\otimes(J,e)&\triangleq(I\times J,\lam{(\langle x_1,y_1\rangle,\langle x_2,y_2\rangle)}d(x_1,x_2)+e(y_1,y_2)).
\end{align*}
The coherence isomorphisms of this symmetric monoidal structure are
inherited from the Cartesian monoidal structure on $\CC$. Moreover,
$\Divf\qQ\CC$ becomes a {\em symmetric strict monoidal functor} of
type $(\Div\qQ\CC,\unit,\ox)\to (\CC,1,(\times))$.

\subsubsection{Enrichments of Kleisli Categories Induced by Divergences}

Let $(\CC,T)$ be a CC-SM. We first show that a (non-graded) divergence
on a monad $T$ {\em attaches} a $\Div\qQ\Set$-{\em enrichment} on the
Kleisli category $\CC_T$ of $T$. What we mean by attaching an
enrichment to an ordinary category is formulated as follows.
\begin{definition}
  A {\tmem{$\Div\qQ\Set$-enrichment}} of a category $\DD$ is a family
  $\{ d_{I, J}:\DD(I,J)^2\to\qQ \}_{I, J \in \DD}$ of
  $\qQ$-divergences on the homset $\DD (I, J)$ such that the following
  inequalities hold:
  \begin{align}
    & d_{I, I} (\id_I, \id_I) \leq 0, \label{eq:unit} \\
    & d_{I, K} (g_1 \circ f_1, g_2 \circ f_2) \leq d_{J, K} (g_1, g_2) + d_{I, J} (f_1, f_2). \label{eq:comp}
  \end{align}
\end{definition}
Such an enrichment determines a $\Div\qQ\Set$-enriched category
$\DD^d$, whose object collection and homobjects are given by
\begin{displaymath}
  \Obj{\DD^d}\triangleq \Obj{\DD},\quad
  \DD^d(I,J)\triangleq(\DD (I, J),d_{I,J}).
\end{displaymath}
The identity and composition morphisms of $\DD^d$:
\begin{displaymath}
  j_I:\unit\to\DD^d(I,I),\quad
  m_{I,J,K}:\DD^d(J,K)\ox\DD^d(I,J)\to\DD^d(I,K)
\end{displaymath}
are inherited from $\DD$; they are guaranteed to be nonexpansive by
the conditions \eqref{eq:unit} and \eqref{eq:comp}. The
change of base of enrichment of $\DD^d$ by the  symmetric strict
monoidal functor $\Divf\qQ\DD$ coincides with $\DD$. \footnote{The
  underlying category of $\DD^d$ \cite[Section 1.3]{kelly} does {\em
    not} coincide with $\DD$.  }

We relate conditions \eqref{eq:unit} and \eqref{eq:comp} with the unit
reflexivity and composability conditions in the definition of divergence on monad
(Definition \ref{def:div}).

\begin{therm}\label{thm:divergence_is_enrichment}
  Let $(\CC, T)$ be a CC-SM, $E:\CC\to\brelc$ be a basic endorelation
  such that $R_{E 1} \neq \emptyset$ \footnote{
    $R_{E 1} = \emptyset$ happens if and only if $R_{E I} = \emptyset$ for
    any $I \in \CC$. Therefore nontrivial basic endorelations always
    satisfy $R_{E 1} \neq \emptyset$.  } , $\qQ$ be a divergence
  domain and $\asgn=\{\asg{}I:(U(TI))^2\to\qQ\}_{I\in\CC}$ be a family
  of $\qQ$-divergences on $TI$. Define a family
  $d=\{ d_{I, J} \}_{I, J \in \CC}$ of $\qQ$-divergences on the homset
  $\CC_T (I, J)$ of the Kleisli category $\CC_T$ by
  \begin{equation}
    \label{eq:enrich}
    d_{I, J} (f_1, f_2) \triangleq \sup_{(x_1, x_2) \in E I} \asg{}J (f_1
    \ap x_1, f_2 \ap x_2) .
  \end{equation}
  Then $d$ is a $\Div\qQ\Set$-enrichment of $\CC_T$ if and only if
  $\asgn$ is an $E$-relative $\qQ$-divergence on $T$.
\end{therm}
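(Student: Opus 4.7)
The plan is to unfold the Kleisli-categorical data explicitly and match the two sets of axioms term by term. Recall that in $\CC_T$ the identity at $I$ is $\eta_I$, the composition of $f\colon I\to J$ with $g\colon J\to K$ is $g\kl\circ f$, and the homset $\CC_T(1,I)$ is exactly $U(TI)$. The pivotal observation linking the two sides is that $U(1)=\{\id_1\}$ together with the hypothesis $R_{E1}\neq\emptyset$ forces $R_{E1}=\{(\id_1,\id_1)\}$, whence for any $c_1,c_2\in U(TI)$,
\[
d_{1,I}(c_1,c_2)=\asg{}I(c_1,c_2).
\]

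First I would prove that if $\asgn$ is an $E$-relative $\qQ$-divergence on $T$, then $d$ is an enrichment. Unit reflexivity \eqref{eq:unit} reads $d_{I,I}(\eta_I,\eta_I)=\sup_{(x_1,x_2)\in EI}\asg{}I(\eta_I\ap x_1,\eta_I\ap x_2)$, which is bounded by $0$ directly by $E$-unit reflexivity. For \eqref{eq:comp}, I unfold the Kleisli composition: for each $(x_1,x_2)\in EI$, applying $E$-composability with $c_i\triangleq f_i\ap x_i \in U(TJ)$ bounds $\asg{}K(g_1\kl\ap c_1,g_2\kl\ap c_2)$ by $\asg{}J(f_1\ap x_1,f_2\ap x_2)+\sup_{(y_1,y_2)\in EJ}\asg{}K(g_1\ap y_1,g_2\ap y_2)$. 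Taking the sup over $(x_1,x_2)\in EI$ on the left, and using monotonicity of $(+)$ in its first argument on the right (the second summand is constant in $(x_1,x_2)$), yields $d_{I,K}(g_1\kl\circ f_1, g_2\kl\circ f_2)\leq d_{I,J}(f_1,f_2)+d_{J,K}(g_1,g_2)$.

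Conversely, assuming $d$ is an enrichment of $\CC_T$, I specialize the enrichment inequalities to Kleisli morphisms out of $1$. Axiom \eqref{eq:unit} at $I$ gives $E$-unit reflexivity for $\asgn$ exactly as above. For $E$-composability, given $c_1,c_2\in U(TI)$ and $f_1,f_2\colon I\to TJ$, I regard each $c_i$ as a Kleisli arrow $1\to I$ and each $f_i$ as a Kleisli arrow $I\to J$, and apply \eqref{eq:comp} to the composites $f_i\kl\circ c_i$. The identification $d_{1,-}=\asg{}{-}$ from the opening paragraph collapses the left-hand side to $\asg{}J(f_1\kl\ap c_1,f_2\kl\ap c_2)$ and the $d_{1,I}$ term on the right to $\asg{}I(c_1,c_2)$, while $d_{I,J}(f_1,f_2)$ equals $\sup_{(x_1,x_2)\in EI}\asg{}J(f_1\ap x_1,f_2\ap x_2)$ by definition; the resulting inequality is precisely $E$-composability.

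The main technical point, which justifies the side hypothesis $R_{E1}\neq\emptyset$, is the identification $d_{1,I}=\asg{}I$: without it, the enrichment information extracted from morphisms out of the terminal object would degenerate to an empty supremum (the bottom of $\qQ$), and no useful bound on $\asg{}I$ could be recovered from the enrichment axioms. Once this identification is in place, both directions become routine passages between pointwise and sup formulations that commute with $(+)$ via its monotonicity.
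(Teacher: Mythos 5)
Your proof is correct and follows essentially the same route as the paper: the unit axiom is matched directly with $E$-unit reflexivity, the forward direction applies $E$-composability at $c_i=f_i\ap x_i$ and takes suprema using monotonicity of $(+)$, and the converse specializes the composition inequality to $I=1$, where $R_{E1}=\{(\id_1,\id_1)\}$ gives the identification $d_{1,J}=\asg{}J$ — which is exactly how the paper uses the hypothesis $R_{E1}\neq\emptyset$. No gaps.
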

\begin{appendixproof}
  (Proof of Theorem \ref{thm:divergence_is_enrichment})
  First, it is easy to see that the inequality (\ref{eq:unit}) is equivalent to
  $\asgn$ satisfying $E$-unit reflexivity.
  
  We next show that the inequality (\ref{eq:comp}) is equivalent to $\asgn$
  satisfying $E$-composability.
  
  (only if) Since $U 1 = \{ \id_1 \}$, we have
  $R_{E 1} = \{ (\id_1, \id_1) \}$. Therefore it holds
  $d_{1, J} (c_1, c_2) = \asg{}J (c_1, c_2)$. By letting $I = 1$ in
  the inequality (\ref{eq:comp}), we obtain the $E$-composability:
  \begin{eqnarray*}
    &  & d_{1, K} (f_1 \circ_{\CC_T} c_1, f_2 \circ_{\CC_T}
         c_2) \leq d_{J, K} (f_1, f_2) + d_{1, J} (c_1, c_2)\\
    & \iff & \asg{}K (f_1\kl \circ c_1, f_2\kl \circ c_2)
             \leq \sup_{(x_1, x_2) \in E I} \asg{}K (f_1 \ap x_1, f_2 \ap
             x_2) + \asg{}J (c_1, c_2).
  \end{eqnarray*}
  (if) From the $E$-composability, for any $f_1, f_2 : I \to T J$ and
  $g_1, g_2 : J \to T K$ and $(x_1, x_2) \in E I$, we have
  \begin{displaymath}
    \asg{}K (g_1\kl \ap (f_1 \ap x_1), g_2\kl \ap (f_2
    \ap x_2)) \leq d_{J, K} (g_1, g_2) + \asg{}J (f_1 \ap x_1,
    f_2 \ap x_2) .
  \end{displaymath}
  Next, for any $(x_1, x_2) \in E I$, we have $\asg{}J (f_1 \ap x_1, f_2
  \ap x_2) \leq d_{I, J} (f_1, f_2)$. Thus by monotonicity of $(+)$
  we have
  \begin{displaymath}
    \asg{}K (g_1\kl \ap f_1 \ap x_1, g_2\kl \ap f_2 \ap
    x_2) \leq d_{J, K} (g_1, g_2) + d_{I, J} (f_1, f_2) .
  \end{displaymath}
  By discharging $(x_1, x_2) \in E I$, we conclude
  \begin{displaymath}
    d_{I, K} (g_1\kl \circ f_1, g_2\kl \circ f_2) \leq d_{J, K} (g_1, g_2) + d_{I, J} (f_1, f_2) .
  \end{displaymath}
\end{appendixproof}

\subsubsection{Internalizing Divergences as Structures in $\Div\qQ\CC$}
\label{sec:int}
One might wonder how the $\qQ$-divergence \eqref{eq:enrich} given to
each homset of $\CC_T$ arises. Under a strengthened assumption, we
derive it from the {\em closed structure} with respect to the monoidal
product of $\Div\qQ\CC$. This allows us to {\em internalize}
divergences on monads as structures in $\Div\qQ\CC$.

Let $(\CC,T)$ be a CCC-SM and $\qQ$ be a divergence domain whose
monoid operation $(+)$ preserves the largest element $\top\in\qQ$, that is,
$x+\top=\top$. A consequence of this strengthened assumption is the
following:
\begin{lemma}\label{lem:closed}
  Let $(I,d)\in\Div\qQ\CC$ be an object such that $d(x_1,x_2)$ takes
  only values in $\{0,\top\}\subseteq\qQ$. Then the functor
  $(-)\ox(I,d):\Div\qQ\CC\to\Div\qQ\CC$ has a right adjoint, which we
  denote by $(I,d)\multimap(-)$. Moreover, $\Divf\qQ\CC$ is a map
  of adjunction of type:
  \begin{displaymath}
    \Divfn\qQ\CC:((-)\ox(I,d)\dashv(I,d)\multimap(-))\to
    ((-)\times I\dashv I\Arrow(-)).
  \end{displaymath}
\end{lemma}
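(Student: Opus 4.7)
The plan is to construct the right adjoint explicitly by equipping the ambient exponential $I \Arrow J$ in $\CC$ with a suitable $\qQ$-divergence, using the $\{0,\top\}$-valuedness of $d$ in an essential way.

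For each $(J,e) \in \Div\qQ\CC$, define
\[
  (I,d)\multimap(J,e) \triangleq (I\Arrow J,\widehat e),\qquad
  \widehat e(\phi_1,\phi_2) \triangleq \sup\bigl\{e(\ev\ap\langle\phi_1,x_1\rangle,\ev\ap\langle\phi_2,x_2\rangle) \sep (x_1,x_2)\in(UI)^2,\ d(x_1,x_2)=0\bigr\}.
\]
Since $\qQ$ is a complete lattice, $\widehat e$ is well-defined as a $\qQ$-divergence on $U(I\Arrow J)$. I would then check that the Cartesian-closed bijection $\unm{-}\colon \CC(K\times I,J)\cong\CC(K,I\Arrow J)$ restricts to a natural bijection
\[
  \Div\qQ\CC((K,c)\otimes(I,d),(J,e))\;\cong\;\Div\qQ\CC((K,c),(I,d)\multimap(J,e)).
\]

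The heart of the argument is that the nonexpansivity conditions on the two sides are equivalent, and this is precisely where the $\{0,\top\}$ hypothesis on $d$ and the absorption law $x+\top=\top$ come in. On the left, a $\CC$-morphism $f\colon K\times I\to J$ is nonexpansive iff for all $(k_1,k_2)\in(UK)^2$ and $(x_1,x_2)\in(UI)^2$,
\[
  e(f\ap\langle k_1,x_1\rangle,f\ap\langle k_2,x_2\rangle)\le c(k_1,k_2)+d(x_1,x_2).
\]
Splitting by $d(x_1,x_2)\in\{0,\top\}$: the $d=\top$ case is automatic by $c(k_1,k_2)+\top=\top$, and the $d=0$ case reduces to $e(\cdots)\le c(k_1,k_2)$. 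Taking the sup over the $d=0$ pairs, this is exactly $\widehat e(\unm f\ap k_1,\unm f\ap k_2)\le c(k_1,k_2)$, i.e. nonexpansivity of the curried morphism. The reverse direction is the same calculation run backwards. Naturality in $(K,c)$ and $(J,e)$ is inherited from the $\CC$-level adjunction, giving the adjunction $(-)\otimes(I,d)\dashv(I,d)\multimap(-)$.

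For the map-of-adjunctions claim, since $\Divfn\qQ\CC$ strictly preserves the tensor (by the symmetric strict monoidal structure of $\Divfn\qQ\CC$ discussed just before the lemma) and since $\Divfn\qQ\CC\bigl((I,d)\multimap(J,e)\bigr)=I\Arrow J$ by construction, the bijection we built is literally the restriction along $\Divfn\qQ\CC$ of the currying bijection in $\CC$. This immediately yields commutativity of $\Divfn\qQ\CC$ with the units and counits of the two adjunctions, which is the definition of a map of adjunctions.

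The main obstacle is conceptual rather than computational: recognizing that a closed structure exists at all requires the $\{0,\top\}$-valuedness of $d$, because only then can one separate the nonexpansivity constraint on $f$ into ``binding'' constraints (coming from pairs with $d=0$) and ``vacuous'' constraints (absorbed by $+\top=\top$). Without this, $\widehat e$ would have to encode a kind of pointwise subtraction $e(\cdots)\ominus d(x_1,x_2)$, which is not available in a general divergence domain. Once the definition of $\widehat e$ is in place, the two-case verification above is routine.
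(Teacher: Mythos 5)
Your proposal is correct and matches the paper's intended argument: the paper's proof sketch defines the internal hom exactly as your $(I\Arrow J,\widehat e)$ with $\widehat e$ the sup of $e$ over pairs with $d(x_1,x_2)=0$, and uses the same case split on $d\in\{0,\top\}$ together with the absorption law $x+\top=\top$ to show currying restricts to the hom-set bijection in $\Div\qQ\CC$, with the map-of-adjunctions claim following because $\Divfn\qQ\CC$ strictly preserves the tensor and the internal hom.
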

The proof of this lemma exhibits that the $\qQ$-divergence $h$
associated to the internal hom object $(I,d)\multimap (J,e)$ measures
the divergence between $f_1,f_2\in U(I\Arrow J)$ by
\begin{displaymath}
  h(f_1,f_2)=
  \sup_{x_1,x_2\in UI,d(x_1,x_2)=0}e(\unm{f_1}\bul x_1,\unm{f_2}\bul x_2),
\end{displaymath}
which almost coincides with the sup part of \eqref{eq:enrich}; here
$\unm{-}:U(I\Arrow J)\to\CC(I,J)$ is the bijection given in Section
\ref{sec:setting}. We use this coincidence to characterize the
unit-reflexivity and composability conditions in the definition of
divergence on monad (Definition \ref{def:div}). First, we define the {\em
  internal Kleisli extension morphism}
$kl_{I,J}:TI\times(I\Arrow TJ)\arrow TJ$ by
\begin{equation}
  \label{eq:intkl}
  kl_{I,J}\triangleq
  \xymatrix{
    TI\times(I\Arrow TJ)
    \rrh{\langle\pi_2,\pi_1\rangle}
    &
    (I\Arrow TJ)\times TI
    \rrh{\theta_{I\Arrow TJ,I}}
    &
    T((I\Arrow TJ)\times I)
    \rrh{\ev^\#}
    &
    TJ
  }.
\end{equation}
Next, for a basic endorelation $E:\CC\to\brelc$, we define the functor
$E':\CC\to\Div\qQ\CC$ by
\begin{displaymath}
  E'I\triangleq(I,d_{E'I}),\quad
  E'f\triangleq f,\quad\text{where}\quad
  d_{E'I}(x_1,x_2)\triangleq
  \begin{choice}
    0 & (x_1,x_2)\in E \\
    \infty & (x_1,x_2)\not\in E.
  \end{choice}
\end{displaymath}
\begin{therm}\label{th:nechar}
  Let $(\CC,T)$ be a CCC-SM, $(M,\le,1,(\cdot))$ be a grading 
  monoid, $\qQ$ be a divergence domain whose monoid operation
  $(+)$ satisfies $x+\top=\top$, and $E:\CC\to\brelc$ be a basic
  endorelation.  Let $\asgn=\{\asg m I\}_{m\in M,I\in\CC}$ be a
  doubly-indexed family of $\qQ$-divergences on $TI$, regarded as
  $\Div\qQ\CC$-objects.  Then
  \begin{enumerate}
  \item $\asgn$ satisfies the $E$-unit reflexivity condition if and
    only if for any $I\in\CC$, the following nonexpansivity holds on
    the global element $\nm{\eta_I}:1\to I\Arrow TI$ corresponding to
    the monad unit:
    \begin{displaymath}
      \nm{\eta_I}\in\Div\qQ\CC (\unit, E' I\multimap \asg 1 I).
    \end{displaymath}
  \item \label{comp} $\asgn$ satisfies the $E$-composablity condition
    if and only if for any $I,J\in\CC$ and $m,n\in M$, the following
    nonexpansivity holds on the internal Kleisli extension morphism
    $kl_{I,J}:TI\times(I\Arrow TJ)\arrow TJ$:
    \begin{displaymath}
      kl_{I,J}\in\Div\qQ\CC(\asg m I\ox (E' I\multimap\asg n J), \asg {m\cdot n} J).
    \end{displaymath}
  \end{enumerate}
\end{therm}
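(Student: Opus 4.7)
The plan is to unfold each nonexpansivity condition using the explicit description of the divergence on an internal hom $E'I \multimap X$ provided by the proof of Lemma \ref{lem:closed}, and then translate the resulting inequality into the corresponding clause of Definition \ref{def:div} via the bijection $\unm{-} \colon U(I \Arrow J) \cong \CC(I, J)$. Concretely, because $d_{E'I}(x_1,x_2) = 0$ iff $(x_1,x_2) \in R_{EI}$, the divergence on $E'I \multimap \asg m J$ at a pair $(g_1, g_2) \in U(I \Arrow TJ)^2$ is
\[
  \sup_{(x_1,x_2) \in R_{EI}} \asg m J(\unm{g_1} \ap x_1,\ \unm{g_2} \ap x_2).
\]

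For part (1), the unique pair of global elements of $\unit = (1, \lam{(x_1,x_2)}0)$ has divergence $0$, so nonexpansivity of $\nm{\eta_I} \colon \unit \to E'I \multimap \asg 1 I$ amounts to requiring that the pair $(\nm{\eta_I}, \nm{\eta_I})$ has divergence $\le 0$ in the internal hom. Instantiating the formula above and using $\unm{\nm{\eta_I}} = \eta_I$, this becomes $\sup_{(x_1,x_2) \in R_{EI}} \asg 1 I(\eta_I \ap x_1, \eta_I \ap x_2) \le 0$, which is exactly the $E$-unit reflexivity clause.

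For part (2), the essential preliminary identity is
\[
  kl_{I,J} \ap \langle c, f\rangle = (\unm f)\kl \ap c
  \qquad (c \in U(TI),\ f \in U(I \Arrow TJ)).
\]
To prove it I would unfold the definition \eqref{eq:intkl}, apply \eqref{eq:stun} to rewrite $\theta_{I \Arrow TJ, I} \ap \langle f, c\rangle$ as $((\eta_{(I\Arrow TJ)\times I})_f)\kl \ap c$, and then use $\ev\kl \circ (\eta_{(I \Arrow TJ)\times I})_f = \ev \circ \langle f \circ {!_I}, \id_I\rangle = \unm f$ together with the Kleisli composition identity $g\kl \circ h\kl = (g\kl \circ h)\kl$ to collapse the two Kleisli extensions into $(\unm f)\kl$. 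Since a global element of $\asg m I \otimes (E'I \multimap \asg n J)$ is a pair $\langle c, f\rangle$ whose tensor divergence is $\asg m I(c_1, c_2) + \sup_{(x_1,x_2)\in R_{EI}} \asg n J(\unm{f_1}\ap x_1, \unm{f_2}\ap x_2)$, the nonexpansivity of $kl_{I,J}$ becomes
\[
  \asg{m\cdot n} J((\unm{f_1})\kl \ap c_1,\ (\unm{f_2})\kl \ap c_2) \le \asg m I(c_1, c_2) + \sup_{(x_1,x_2) \in R_{EI}} \asg n J(\unm{f_1}\ap x_1,\ \unm{f_2}\ap x_2).
\]
Since $\unm{-} \colon U(I \Arrow TJ) \cong \CC(I, TJ)$ is a bijection, letting $g_i \triangleq \unm{f_i}$ range over all of $\CC(I, TJ)$ turns this into the $E$-composability clause, and conversely $E$-composability implies nonexpansivity for every choice of $f_1, f_2$.

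The proof is thus a direct unpacking of definitions, and the main (and only nontrivial) step is the identity $kl_{I,J} \ap \langle c, f\rangle = (\unm f)\kl \ap c$, which relies on the strength-unit equation \eqref{eq:stun} and the monad laws. Once this is in hand, both biconditionals follow by substituting the internal-hom formula of Lemma \ref{lem:closed} into the nonexpansivity conditions on $\nm{\eta_I}$ and $kl_{I,J}$.
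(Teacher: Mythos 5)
Your proposal is correct and takes exactly the route the paper sets up in Section \ref{sec:int}: unfold the internal-hom divergence formula supplied by Lemma \ref{lem:closed} (with $d_{E'I}=0$ precisely on $R_{EI}$) and use \eqref{eq:stun} together with the monad laws to get $kl_{I,J}\ap\langle c,f\rangle=(\unm f)\kl\ap c$, after which both equivalences are a transcription along the bijection $\unm{-}$. The paper leaves the proof of Theorem \ref{th:nechar} implicit, but the same key computation appears in its proof of Theorem \ref{thm:liftdivadj}, so your argument matches the intended one.
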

\cite{DBLP:conf/lics/AmorimGHK19} formalized families of composable
divergences as {\em parameterized assignment} in {\em weakly closed
  monoidal refinement}. Roughly speaking, they adopted the equivalence
\eqref{comp} of Theorem \ref{th:nechar} as the definition of
parameterized assignment. However, divergence on monads and
parameterized assignments are built on slightly different categorical
foundations, and their generalities are incomparable. Notable
differences from parameterized assignment are: 1) divergences on
monads are defined in relative to basic endorelations, and 2) the
underlying category of divergences on monads is any CCs, while
parameterized assignments requires closed structure on their
underlying category. In this sense divergences on monads are a mild
generalization of parameterized assignments.

\subsubsection{Divergences on Monads and Divergence Liftings of Monads}

\newcommand{\DtoL}[1]{[#1]}
\newcommand{\LtoD}[1]{\langle #1\rangle}

\newcommand{\SGDLift}[4]{\Bf{SGDLift}(#1,#4,#2,#3)}

We next relate graded divergences on monads and monad-like structures
on the category $\Div\qQ\CC$ of $\qQ$-divergences on
$\CC$-objects. What we mean by monad-like structures is {\tmem{graded
    divergence liftings}} of monads on $\CC$, which we introduce
below. It is a graded monad on $\Div\qQ\CC$
\mcite{DBLP:conf/popl/Katsumata14} whose unit and multiplication are
inherited from a monad on $\CC$.

\begin{definition}
  Let $(\CC, T)$ be a CC-SM, $M$ be a grading monoid and $\qQ$ be a
  divergence domain. An $M$-graded $\qQ$-divergence lifting of $T$ is
  an mapping
  $\dot{T} : M \times\Obj{\Div\qQ\CC }\rightarrow\Obj{\Div\qQ\CC}$
  such that (below $V$ stands for the forgetful functor $\Divf\qQ\CC$)
  \begin{enumerate}
  \item $V (\dot{T} m X) = T (V X)$
    
  \item $m \leq n$ implies $\dot{T} m X \leq \dot{T} n X$
    
  \item $\eta_{V X} \in \Div\qQ\CC (X, \dot{T} 1 X)$
    
  \item
    $\mu_{V X} : \Div\qQ\CC (\dot{T} m (\dot{T} n X), \dot{T} (m \cdot
    n) X)$.
  \end{enumerate}
  Let $E : \CC \rightarrow \brelc$ be a basic endorelation. We say
  that an $M$-graded $\qQ$-divergence lifting $\dot{T}$ of $T$ is
  {\tmem{$E$-strong}} if the strength $\theta$ of $T$ satisfies
  \[ \theta_{V X, J} \in \Div\qQ\CC (X \otimes \dot{T} m (E' J),
    \dot{T} m (X \otimes E' J)) . \] We write $\SGDLift TM\qQ E$ for
  the collection of $E$-strong $M$-graded $\qQ$-divergence liftings of
  $T$. We introduce a partial order $\preceq$ on $\SGDLift TM\qQ E$ by
  \[ \dot{T} \preceq \dot{S} \iff
    \fa{m \in M, X \in \Div\qQ\CC, c_1,c_2\in U(T(V X))}
    d_{\dot{T} m X}(c_1,c_2) \geq d_{\dot{S} m X}(c_1,c_2). \]
\end{definition}

We will later see a similar concept of {\tmem{strong graded relational
    lifting}} of monad in Definition
\ref{def:graded_strong_lifting}. Divergence lifting and relational
lifting are actually instances of a common general definition of
{\tmem{strong graded lifting of monad}} \mcite{DBLP:conf/popl/Katsumata14},
but in this paper we omit this general definition.

The following theorem relates that every divergence can be expressed as the
composite of a graded divergence lifting and the divergence corresponding to a
basic endorelation.

\begin{therm}\label{thm:relative}
  Let $(\CC, T)$ be a CC-SM, $M$ be a grading monoid, $\qQ$
  be a divergence domain and $E : \CC \rightarrow \brelc$ be a basic
  endorelation. For any $\asgn \in \mDiv TM\qQ E$, define a mapping
  $\DtoL{\asgn}:M\times\Obj{\Div\qQ\CC}\to\Obj{\Div\qQ\CC}$ by, for $X=(I, d)$,
  $ \DtoL{\asgn} m X \triangleq (T I, d_{\DtoL{\asgn} m X}) $ where
  \[
    d_{\DtoL{\asgn} m X} (c_1, c_2) \triangleq \sup_{J \in \CC, n \in M, f
      \in \Div\qQ\CC (X, \asgn^n_J)} \asgn^{m \cdot n}_J (f\kl \ap
    c_1, f\kl \ap c_2) .
  \]
  Then $\DtoL{\asgn}$ is an $M$-graded $\qQ$-divergence lifting $\dot{T}$
  such that $\asgn^m_I = \DtoL{\asgn} m (E' I)$.
\end{therm}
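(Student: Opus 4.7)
I propose to verify the four axioms of an $M$-graded $\qQ$-divergence lifting and then the equality $\asgn^m_I = \DtoL{\asgn} m (E' I)$. The formula for $d_{\DtoL{\asgn} m X}$ is a codensity-style supremum over all nonexpansive ``test arrows'' $f \colon X \to \asgn^n_J$, regarding each $\asgn^n_J$ as the $\Div\qQ\CC$-object $(TJ, \asgn^n_J)$. Axioms (1) and (2) are routine: axiom (1), $V(\DtoL{\asgn} m X) = TI$, is immediate by construction, and axiom (2) uses monotonicity of the grading monoid ($m \le m' \Rightarrow m \cdot n \le m' \cdot n$) combined with monotonicity of $\asgn$ to bound every witness term of $d_{\DtoL{\asgn} m' X}$ by the corresponding term of $d_{\DtoL{\asgn} m X}$. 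Axiom (3), nonexpansivity of $\eta_I$, uses $f\kl \circ \eta_I = f$ to reduce $d_{\DtoL{\asgn} 1 X}(\eta_I \ap x_1, \eta_I \ap x_2)$ to $\sup_{J, n, f} \asgn^n_J(f \ap x_1, f \ap x_2)$, each term of which is $\le d_X(x_1, x_2)$ by nonexpansivity of $f$.

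The main work is axiom (4), the nonexpansivity of $\mu_I \colon \DtoL{\asgn} m (\DtoL{\asgn} n X) \to \DtoL{\asgn}(m \cdot n) X$. Given $c_1, c_2 \in U(T(TI))$ and any witness $(J, k, f \colon X \to \asgn^k_J)$ in the sup defining $d_{\DtoL{\asgn}(m \cdot n) X}(\mu_I \ap c_1, \mu_I \ap c_2)$, the decisive observation is that the Kleisli extension $f\kl$ is itself a valid witness $\DtoL{\asgn} n X \to \asgn^{n \cdot k}_J$: for $c, c' \in U(TI)$, the value $\asgn^{n \cdot k}_J(f\kl \ap c, f\kl \ap c')$ is precisely the contribution of the very same $(J, k, f)$ to the sup defining $d_{\DtoL{\asgn} n X}(c, c')$. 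Setting $g = f\kl$ and using the monad law $(f\kl)\kl = \mu_J \circ T(f\kl) = f\kl \circ \mu_I$ yields $\asgn^{m \cdot n \cdot k}_J(f\kl \ap \mu_I \ap c_1, f\kl \ap \mu_I \ap c_2) = \asgn^{m \cdot (n \cdot k)}_J(g\kl \ap c_1, g\kl \ap c_2)$, which is a witness term for $d_{\DtoL{\asgn} m (\DtoL{\asgn} n X)}(c_1, c_2)$. Taking sup over $(J, k, f)$ gives the required nonexpansivity.

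For the equality, note that $d_{E' I}(x_1, x_2)$ equals $0$ on $EI$ and $\infty$ elsewhere. The inequality $d_{\DtoL{\asgn} m (E' I)}(c_1, c_2) \ge \asgn^m_I(c_1, c_2)$ is witnessed by $(I, 1, \eta_I)$: by $E$-unit reflexivity, $\eta_I \colon E' I \to \asgn^1_I$ is nonexpansive, and since $\eta_I\kl = \id_{TI}$ this witness contributes exactly $\asgn^{m \cdot 1}_I(c_1, c_2) = \asgn^m_I(c_1, c_2)$ to the sup. Conversely, for any nonexpansive $f \colon E' I \to \asgn^n_J$, nonexpansivity forces $\asgn^n_J(f \ap x_1, f \ap x_2) \le d_{E' I}(x_1, x_2) = 0$ on $EI$, so $\sup_{(x_1, x_2) \in EI} \asgn^n_J(f \ap x_1, f \ap x_2) \le 0$; then $E$-composability gives $\asgn^{m \cdot n}_J(f\kl \ap c_1, f\kl \ap c_2) \le \asgn^m_I(c_1, c_2) + 0 = \asgn^m_I(c_1, c_2)$, and the sup preserves this bound. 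I expect the main obstacle to be the self-referential choice of witness in axiom (4): the class of test arrows must be stable under Kleisli extension in precisely the sense that the codensity-style sup encodes, which is why the non-graded pieces slot together into a graded lifting.
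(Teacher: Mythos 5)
Your proof is correct and follows essentially the same route as the paper: the paper dismisses the four graded-lifting axioms as a routine codensity-lifting verification (your observation that $f\kl$ is itself a nonexpansive witness $\DtoL{\asgn} n X \to \asgn^{n\cdot k}_J$, combined with $(f\kl)\kl = f\kl \circ \mu_I$, is exactly that routine part spelled out), and your two-sided argument for $\asgn^m_I = \DtoL{\asgn} m (E' I)$ — the witness $(I,1,\eta_I)$ via $E$-unit reflexivity for one inequality, and $E$-composability together with nonexpansivity of $f$ forcing $\sup_{(x_1,x_2)\in EI}\asgn^n_J(f\ap x_1,f\ap x_2)\le 0$ for the other — is in substance the paper's own proof.
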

\begin{appendixproof}
  (Proof of Theorem \ref{thm:relative})
  $\DtoL{\asgn}$ is a graded variant of {\tmem{codensity lifting}}
  performed along the fibration $\Divf\qQ\CC$
  (\cite{DBLP:journals/lmcs/KatsumataSU18}; see also Definition
  \ref{def:graded_strong_lifting}). Proving that it is a graded
  lifting of $T$ is routine. We show $\asgn^m_I = \DtoL{\asgn} m (E'
  I)$. The direction $\DtoL{\asgn} m (E' I) \leq \asgn^m_I$ is easy. We
  show the converse. From the composability of $\asgn$, for any
  $c_1, c_2 \in U (T I)$, $J \in \CC$, $n \in M$ and
  $f \in \Div\qQ\CC (E' I, \asgn^n_J)$, we have
  \[ \asgn^{m \cdot n}_J (f\kl \ap c_1, f\kl \ap c_2) \leq
     \asgn^m_I (c_1, c_2) + \sup_{(x_1, x_2) \in E I} \asgn^n_J (f \ap
     x_1, f \ap x_2) . \]
  Next, the nonexpansivity of $f$ is equivalent to
  \[ \sup_{(x_1, x_2) \in E I} \asgn^n_J (f \ap x_1, f \ap x_2) \le 0.
  \]
  Therefore we conclude $\asgn^{m \cdot n}_J (f\kl \ap c_1, f\kl
  \ap c_2) \leq \asgn^m_I (c_1, c_2)$. By discharging $J, n, f$, we
  conclude the inequality $\DtoL{\asgn} m (E I) \leq \asgn^m_I$.
\end{appendixproof}

When $M = 1$, Theorem \ref{thm:relative} implies that the assignment
$I\mapsto \asgn_I$ extends to the \tmem{$E'$-relative monad}
$\DtoL{\asgn}\circ E':\CC\to\Div\qQ\CC$ in the sense of \cite{DBLP:journals/corr/AltenkirchCU14}.

When we strengthen the assumptions on $(\CC, T)$ and $\qQ$ as done in Section
\ref{sec:int}, we obtain a sharper correspondence between divergences on monads and
strong graded divergence liftings of monads.

\begin{therm}\label{thm:liftdivadj}
  Let $(\CC, T)$ be a CCC-SM, $M$ be a grading monoid, $\qQ$ be a
  divergence domain such that $(+)$ satisfies $x+\top=\top$ and
  $E : \CC \rightarrow \brelc$ be a basic endorelation. Then there
  exists an adjunction between partial orders:
  \[
    \xymatrix@C=2cm{
      (\SGDLift TM\qQ E, \preceq)
       \adjunction{r}{\LtoD-}{\DtoL-}
      & (\mDiv TM\qQ E, \preceq)
    }
  \] where $\LtoD{\dot{T}} m I \triangleq \dot{T} m (E' I)$
\end{therm}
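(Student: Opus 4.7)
The plan is to prove the adjunction by verifying the four ingredients of a Galois connection on posets: well-definedness of both mappings, the counit $\LtoD{\DtoL{\asgn}}=\asgn$, the unit $\dot T\preceq\DtoL{\LtoD{\dot T}}$, and monotonicity of each map. Throughout I use functoriality of $\dot T$ on $\Div\qQ\CC$-morphisms, which extends the object-level definition of a graded lifting via the fibration $\Divf\qQ\CC$.

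That $\LtoD{\dot T}$ lies in $\mDiv TM\qQ E$: the monotonicity, $E$-unit reflexivity, and $E$-composability of Definition \ref{def:div} are obtained by specialising the graded-lifting conditions on $\dot T$ to objects of the form $E'I$. Specifically, $E$-unit reflexivity amounts to nonexpansivity of $\eta_{V(E'I)}\colon E'I\to\dot T 1(E'I)$, and $E$-composability follows by composing $Tf$ (nonexpansive when $f\colon E'I\to\dot T n(E'J)$ is) with the nonexpansive $\mu_J$. That $\DtoL{\asgn}$ lies in $\SGDLift TM\qQ E$: Theorem \ref{thm:relative} supplies the graded lifting structure, and $E$-strongness requires a direct verification. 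Unfolding $d_{\DtoL{\asgn}m(X\ox E'J)}$ at the image $\theta_{VX,J}\ap\langle x_i,c_i\rangle$ of the strength and using equation (\ref{eq:stun}) together with Kleisli associativity to rewrite each observation $g\kl\circ\theta_{VX,J}\circ\langle x\circ!,\id\rangle$ as $(g_x)\kl$ where $g_x\triangleq g\circ\langle x\circ!,\id\rangle$, the $E$-composability of $\asgn$ applied to the curried $g_x$ yields the required bound $d_X(x_1,x_2)+\asg m J(c_1,c_2)$.

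The counit $\LtoD{\DtoL{\asgn}}=\asgn$ is the second half of Theorem \ref{thm:relative}, and monotonicity of $\LtoD{-}$ is immediate by restriction to $E'I$-objects. For the unit $\dot T\preceq\DtoL{\LtoD{\dot T}}$, any $f\in\Div\qQ\CC(X,\LtoD{\dot T}^n_J)$ is by definition a nonexpansive map $X\to\dot T n(E'J)$, so functoriality of $\dot T$ renders $Tf\colon\dot T m X\to\dot T m(\dot T n(E'J))$ nonexpansive; composing with the nonexpansive $\mu_J\colon\dot T m(\dot T n(E'J))\to\dot T(m\cdot n)(E'J)$ gives $f\kl\colon\dot T m X\to\dot T(m\cdot n)(E'J)$ nonexpansive, which bounds each term of the sup defining $d_{\DtoL{\LtoD{\dot T}}mX}(c_1,c_2)$ by $d_{\dot T m X}(c_1,c_2)$.

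The main obstacle is monotonicity of $\DtoL{-}$. Given $\asgn\preceq\asgn'$ (pointwise $\asgn\geq\asgn'$ as functions), the observation sets $\Div\qQ\CC(X,\asgn^n_J)$ and $\Div\qQ\CC(X,(\asgn')^n_J)$ do not nest in the convenient direction and the values move oppositely, so the supremums defining $d_{\DtoL{\asgn}mX}$ and $d_{\DtoL{\asgn'}mX}$ cannot be compared term by term. My plan is to factor each observation $f\kl$ appearing in the sup for $d_{\DtoL{\asgn'}mX}$ through the internal Kleisli extension morphism $kl_{VX,J}$ of equation (\ref{eq:intkl}) and apply the internal Kleisli characterisation of Theorem \ref{th:nechar}(\ref{comp}) to $\asgn$. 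Lemma \ref{lem:closed}, applicable because $\qQ$ satisfies $x+\top=\top$, equips $VX\Arrow TJ$ with the divergence of $E'(VX)\multimap\asg n J$, controlling $\nm{f}$ as a global element; combining this with the pointwise inequality $\asgn\geq\asgn'$ and the nonexpansivity of $f$ into $(\asgn')^n_J$ should transfer the bound into a term of the supremum for $d_{\DtoL{\asgn}mX}$. Reconciling the $E(VX)$-restricted supremum in the internal hom divergence with the unrestricted pointwise nonexpansivity of $f$ will be the most delicate computation; once monotonicity is in hand, the Galois biconditional $\LtoD{\dot T}\preceq\asgn\iff\dot T\preceq\DtoL{\asgn}$ follows by the standard two-line unit/counit argument.
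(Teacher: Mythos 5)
Your overall architecture is the paper's: well-definedness of both maps, the counit equality $\LtoD{\DtoL{\asgn}}=\asgn$ imported from Theorem \ref{thm:relative}, the unit inequality $\dot T\preceq\DtoL{\LtoD{\dot T}}$ via nonexpansivity of $f\kl=\mu\circ Tf$, and the $E$-strength computation for $\DtoL{\asgn}$ via \eqref{eq:stun} and the $E$-composability of $\asgn$. However, two of your steps do not go through as written. The first is your argument that $\LtoD{\dot T}\in\mDiv TM\qQ E$. The $E$-composability condition quantifies over \emph{two arbitrary} morphisms $f_1,f_2\colon I\to TJ$, with the possibly nonzero, finite error term $\sup_{(x_1,x_2)\in EI}d_{\dot T n(E'J)}(f_1\ap x_1,f_2\ap x_2)$ on the right-hand side; your ``compose the nonexpansive $Tf$ with the nonexpansive $\mu_J$'' argument only treats the degenerate case $f_1=f_2=f$ with $f$ itself nonexpansive as a map $E'I\to\dot T n(E'J)$, i.e.\ the case where that error term is $\le 0$. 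The paper closes exactly this gap with Lemma \ref{lem:closed} (this is where $x+\top=\top$ is used in this direction), the \emph{$E$-strength} of $\dot T$, and condition 4 of the lifting, showing that the internal Kleisli morphism $kl_{I,J}$ is nonexpansive from $\dot T m(E'I)\ox\bigl(E'I\multimap\dot T n(E'J)\bigr)$ to $\dot T(m\cdot n)(E'J)$, and then converting this into $E$-composability by item \ref{comp} of Theorem \ref{th:nechar}. It is telling that your proposal never invokes the $E$-strength of $\dot T$, although it is part of the definition of $\SGDLift TM\qQ E$ and is precisely what this step needs.

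The second problem is the monotonicity of $\DtoL{-}$, which you rightly isolate but do not resolve, and your sketched plan inherits the very direction mismatch you identify. For $f\in\Div\qQ\CC(X,(\asgn')^n_J)$, item \ref{comp} of Theorem \ref{th:nechar} applied to $\asgn$ bounds $\asg{m\cdot n}J(f\kl\ap c_1,f\kl\ap c_2)$ by $\asg m{VX}(c_1,c_2)+\sup_{(x_1,x_2)\in E(VX)}\asg nJ(f\ap x_1,f\ap x_2)$; but your hypothesis on $f$ only controls the \emph{smaller} quantity $(\asgn')^n_J(f\ap x_1,f\ap x_2)$, not $\asg nJ(f\ap x_1,f\ap x_2)$, and the summand $\asg m{VX}(c_1,c_2)$ is not a term of the supremum defining $d_{\DtoL{\asgn} m X}(c_1,c_2)$ unless $\eta_{VX}$ happens to be nonexpansive from $X$ into $\asgn^1_{VX}$, which unit reflexivity only guarantees on $E$-related global elements. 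So the ``delicate computation'' you defer is not a routine reconciliation: no argument in your proposal (nor, for that matter, an explicit one in the paper, whose proof consists of well-definedness, the equality from Theorem \ref{thm:relative}, and the inequality $\dot T\preceq\DtoL{\LtoD{\dot T}}$, leaving the implication $\LtoD{\dot T}\preceq\asgn\Rightarrow\dot T\preceq\DtoL{\asgn}$, equivalently the monotonicity of $\DtoL{-}$, implicit) actually supplies it. You have put your finger on a genuine subtlety, but as it stands your proof of the adjunction is incomplete at exactly this point, in addition to the composability gap above.
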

\begin{appendixproof}
  (Proof of Theorem \ref{thm:liftdivadj})
  Let $\asgn \in \mDiv TM\qQ E$. We have already shown that
  $\DtoL{\asgn}$ is an $M$-graded $\qQ$-divergence lifting of $T$. We show
  that $\DtoL{\asgn}$ is $E$-strong (this proof does not need the closedness of
  $\CC$). Let $X \triangleq (I, d) \in \Div\qQ\CC$ and
  $J \in \CC$ be objects. We first rewrite the goal:
  \begin{align*}
    \lefteqn{\theta \in \Div\qQ\CC (X \otimes \DtoL{\asgn} m (E' J), \DtoL{\asgn} m (X \otimes E' J))}\\
    & \iff
    \left(
      \begin{aligned}
      	&\fa{ x_1, x_2 \in U I, c_1, c_2 \in U (T J)}\\
       	&\qquad d_{\DtoL{\asgn} m (X \otimes E' J)} (\theta \ap \langle x_1, c_1 \rangle, \theta \ap
           \langle x_2, c_2 \rangle) \leq d (x_1, x_2) + d_{\DtoL{\asgn} m (E' J)} (c_1,
           c_2)
		\end{aligned}
		\right)\\
    &\iff 
    \left(
      \begin{aligned}
      &\fa{x_1, x_2 \in U I, c_1, c_2 \in U (T J), K \in \CC, n \in M, f \in \Div\qQ\CC (X \otimes E' J, \asgn^n_K) }\\
      & \qquad \asgn^{m \cdot n}_K (f\kl \ap \theta \ap \langle
      x_1, c_1 \rangle, f\kl \ap \theta \ap \langle x_2, c_2 \rangle)
      \leq d (x_1, x_2) + d_{\DtoL{\asgn} m (E' J)} (c_1, c_2)
      \end{aligned}
		\right)\\
    & \mathbin{\stackrel{\dag}{\iff}} \left(
      \begin{aligned}
      &\fa{ x_1, x_2 \in U I, c_1, c_2 \in U (T
                   J), K \in \CC, n \in M, f \in \Div\qQ\CC (X
                   \otimes E' J, \asgn^n_K)}\\
    & \qquad\asgn^{m \cdot n}_K ((f_{x_1})\kl \ap c_1,
      (f_{x_2})\kl \ap c_2) \leq d (x_1, x_2) + d_{\DtoL{\asgn}}^m (c_1,
      c_2)
      \end{aligned}
		\right).
  \end{align*}
  In the step $\mathbin{\stackrel{\dag}{\iff}}$, we used the equality (1). To show this goal,
  we proceed as follows. Let
  $x_1, x_2 \in U I, c_1, c_2 \in U (T J), K \in \CC, n \in M$ and
  $f \in \Div\qQ\CC (X \otimes E' J, \asgn^n_K)$. First, from the
  composability of $\asgn$, we obtain
  \[ \asgn^{m \cdot n}_K ((f_{x_1})\kl \ap c_1, (f_{x_2})\kl \ap c_2)
    \leq \asgn_J^m (c_1, c_2) + \sup_{(y_1, y_2) \in E J} \asgn^n_K
    (f_{x_1} \ap y_1, f_{x_2} \ap y_2) . \] We look at summands of the
  right hand side. First, we easily obtain
  $\asgn^m_J (c_1, c_2) \leq d_{\DtoL{\asgn} m (E' J)} (c_1, c_2)$. Next, from the
  nonexpansivity of $f$, for any $x_1,x_2\in UI,y_1, y_2 \in U J$, we have
  \[
    \asgn^n_K (f_{x_1} \ap y_1, f_{x_2} \ap y_2)=
    \asgn^n_K (f \ap \langle x_1, y_1 \rangle, f \ap \langle x_2, y_2 \rangle)  \leq d
    (x_1, x_2) + E' J (y_1, y_2) . \] Because $x+\top=\top$,
  we obtain
  \[ \fa{x_1, x_2 \in U I} \sup_{(y_1, y_2) \in E J} \asgn^n_K
    (f_{x_1} \ap y_1, f_{x_2} \ap y_2) \leq d (x_1, x_2) . \]
  Therefore we obtain the goal:
  \[ \asgn^{m \cdot n}_K ((f_{x_1})\kl \ap c_1, (f_{x_2})\kl \ap c_2)
    \leq d_{\DtoL{\asgn} m (E' J)} (c_1, c_2) + d (x_1, x_2) = d (x_1, x_2) +
    d_{\DtoL{\asgn} m (E' J)} (c_1, c_2) . \]
  
  Next, let $\dot{T} \in \SGDLift TM\qQ E$. We show that
  $\LtoD{\dot{T}} \in \mDiv TM\qQ E$.
  
  The unit law of $\dot{T}$ immediately entails
  \[ \eta_I \in \Div\qQ\CC (E' I, \dot{T} 1(E' I)) . \] Next, under
  the assumption on $(\CC, T)$ and $\qQ$, in $\Div\qQ\CC$ the functor
  $(-) \otimes E' I$ has a right adjoint $E' I \multimap (-)$ above
  the adjunction $(-) \times I \dashv I \Rightarrow (-)$ (Lemma
  \ref{lem:closed}).  Therefore each component of the internal Kleisli
  extension morphism $kl$ given in \eqref{eq:intkl} are nonexpansive
  morphisms in $\Div\qQ\CC$:
  \begin{displaymath}
    \xymatrix{
      \LtoD{\dot{T}} m (E' I) \otimes (E' I \multimap \LtoD{\dot{T}} n (E' J)) \rdh{\langle \pi_2, \pi_1 \rangle}\\
      (E' I \multimap \LtoD{\dot{T}} n (E' J)) \otimes \LtoD{\dot{T}} m (E' I) \rdh{\theta} \\
      \LtoD{\dot{T}} m ((E' I \multimap \LtoD{\dot{T}} n (E' J)) \otimes E' I) \rdh{\tmop{ev}\kl} \\
      \LtoD{\dot{T}} (m \cdot n) (E' J)
    }
  \end{displaymath}
  Therefore we conclude
  \[ \tmop{kl} \in \Div\qQ\CC (\LtoD{\dot{T}} m (E' I) \otimes (E' I
    \multimap \LtoD{\dot{T}} n (E' J)), \LtoD{\dot{T}} (m \cdot n) (E' J)) . \]
  We also easily have monotonicity:
  $\LtoD{\dot{T}} m (E' I) \leq \LtoD{\dot{T}} n (E' I)$ for $m \leq n$ by
  condition 1 of graded divergence lifting. We thus conclude that
  $\LtoD{\dot{T}} m E' I \in \mDiv TM\qQ E$.
  
  We finally show $\dot{T} \preceq \DtoL{\LtoD{\dot{T}}}$. Let
  $c_1, c_2 \in U (T I)$.  We show
  \begin{equation}
    \label{eq:goal}
    \sup_{n \in M, J \in \CC, f \in \Div\qQ\CC (X,
      \dot{T} n (E' J))} d_{\dot{T} (m \cdot n) (E' J)} (f\kl (c_1), f\kl
    (c_2)) \leq d_{\dot{T} m X} (c_1, c_2) .
  \end{equation}
  Let $n \in M, J \in \CC, f \in \Div\qQ\CC (X, \dot{T} n (E'
  J))$. Since $\dot{T}$ is an $M$-graded $\qQ$-divergence lifting of
  $T$, we obtain
  \[ f\kl \in \Div\qQ\CC (\dot{T} m X, \dot{T} (m \cdot n) (E' J))
    . \] This implies the inequality
  $d_{\dot{T} (m \cdot n) (E' J)} (f\kl (c_1), f\kl (c_2)) \leq
  d_{\dot{T} m X} (c_1, c_2)$ in $\qQ$. By taking the sup for
  $n, J, f$, we obtain the inequality \eqref{eq:goal}.
\end{appendixproof}

\subsection{Generation of Divergences}
\label{sec:generatedness_of_divergence}

It has been shown that DP can be interpreted as hypothesis testing
(\cite{WassermanZ10,KairouzOV15}).  Given a query $c\colon I\to \giry J$ and
adjacent datasets $(d_1,d_2) \in R_{\mathrm{adj}}\subseteq I^2$, we
consider the following hypothesis testing with the null and
alternative hypotheses:
\begin{align*}
  &H_0 \colon \text{The output $y$ comes from the dataset $d_1$},\\
  &H_1 \colon \text{The output $y$ comes from the dataset $d_2$}.
\end{align*}
For any rejection region $S \in \Sigma_J$, the Type I and Type II
errors are then represented by $\Pr[c(d_1) \in S]$ and
$\Pr[c(d_2) \notin S]$, respectively.  \cite{KairouzOV15} showed that
$c$ is $(\ve,\delta)$-DP \emph{if and only if} for any
adjacent datasets $(d_1,d_2) \in R_{\mathrm{adj}}\subseteq I^2$, the
pair of Type I error and Type II error lands in the \emph{privacy
  region} $R(\ve,\delta)$:
\begin{displaymath}
  \fa{S \in \Sigma_J} (\Pr[c(d_1) \in S],\Pr[c(d_2) \notin S]) \in
  \underbrace{\{ (x,y) \in [0,1]^2 | (1-x) \leq \exp(\ve) y +
    \delta \}}_{\triangleq R(\ve,\delta)} .
\end{displaymath}
They also showed that this is equivalent to
the testing using probabilistic decision rules~\cite[Corollary 2.3]{KairouzOV15}:
\[
  \fa{k \colon J \to \giry \{\mathsf{Acc},\mathsf{Rej}\}} (\Pr[ k\kl c(d_1)  = \mathsf{Acc}],\Pr[k\kl c(d_2) = \mathsf{Rej}]) \in R(\ve,\delta).
\]
Later \cite{DBLP:journals/corr/abs-1905-09982} generalized this
probabilistic variant of hypothesis testing to general statistical
divergences, and arrived at a notion of {\em $k$-generatedness} of
statistical divergences ($k \in \NN \cup \{\infty\}$). Following their
generalization, we introduce the concept of {\em
  $\Omega$-generatedness} of divergences on monads.
\begin{definition}\label{def:gen}
  Let $\Omega \in \CC$.  A divergence $\asgn\in\mDiv TM\qQ E$ is {\em
    $\Omega$-generated} if for any $m \in M$, $I\in\CC$ and
  $c_1,c_2 \in U(TI)$,
  \begin{displaymath}
    \asg m I (c_1,c_2) = \sup_{k \colon I \to T\Omega} \asg m \Omega
    (k\kl \ap c_1,k\kl \ap c_2).
  \end{displaymath}
\end{definition}
An equivalent definition of $\asgn \in \mDiv TM\qQ E$ being
$\Omega$-generated is: the following holds for any
$m\in M,I\in\CC,c_1,c_2\in U(TI),v\in\qQ$:
\[
  \asg m I (c_1,c_2) \leq v \iff \fa{k \colon I \to T\Omega} (k\kl\ap
  c_1,k\kl\ap c_2) \in \tilde\asgn (m,v) \Omega.
\]
Here $\tilde\asgn(m,v)\Omega$ is the binary relation
$\{ (c_1, c_2)~|~ \asg m\Omega (c_1, c_2) \leq v \}$; see also
\eqref{eq:divergence:adjacency}.  For an $\Omega$-generated divergence
$\asgn$, its component $\asg {m} \Omega$ at $\Omega$ is an essential
part that determines all components $\asg {m} I$ of $\asgn$.  When a
divergence is shown to be $\Omega$-generated, the calculation of the
codensity lifting $\coden T {\asgn}$ given in Section
\ref{sec:codensity_lifting} will be simplified (Section
\ref{sec:simplification}).

We illustrate $\Omega$-generatedness of various divergences.  First,
we show the $\Omega$-generatedness of divergences on the Giry monad
$\giry$ in Tables \ref{tab:divdp} and \ref{tab:divstat}.
\begin{itemize}
\item Divergence $\divndp$ is generated over the two-point discrete 
  space $2$~\cite[Section
  B.7]{DBLP:journals/corr/abs-1905-09982}. The binary relation
  $(\widetilde\divndp(\ve,\delta)2)$ coincides with the privacy
  region $R(\ve,\delta)$.
  
\item Divergence $\divntv$ is also generated over $2$~\cite[Section
  C.1]{DBLP:journals/corr/abs-1905-09982}.
  
\item Divergences $\divnrn^\alpha$, $\divnchi$, $\divnhd$ and
  $\divnkl$ are generated over the countably infinite discrete space $\mathbb{N}$. In
  contrast, they are not $N$-generated for every finite discrete space
  $N$~\cite[Sections B.5 and B.9]{DBLP:journals/corr/abs-1905-09982}.
\end{itemize}
On the sub-Giry monad $\sgiry$, the divergence $\divndp$ is $1$-generated,
and the total variation distance $\divntv$ is $2$-generated.
\begin{proposition}\label{pp:1-generatedness_DP}
  The divergence $\divndp \in \mDiv \sgiry  \qRp \qRp \mEQ$ is $1$-generated.
\end{proposition}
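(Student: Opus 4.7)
The plan is to identify $\sgiry 1$ explicitly and then reduce $1$-generatedness to a Hahn-decomposition argument. The terminal object $1 \in \Meas$ is a singleton $\{*\}$, so the map $\sgiry 1 \to [0,1]$ sending $\nu \mapsto \nu(\{*\})$ is a measurable bijection. Under this identification, a measurable map $k \colon I \to \sgiry 1$ is the same as a measurable function $k \colon I \to [0,1]$, and the Kleisli extension evaluates to $k\kl \ap \mu = \int_I k(x) \, d\mu(x)$. Moreover $\Sigma_1 = \{\emptyset,\{*\}\}$, so $\divdp{\varepsilon}{1}(a,b) = \max(0, a - \exp(\varepsilon) b)$ for $a,b \in [0,1]$.

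With these preparations, $1$-generatedness reduces to the identity
\[
  \sup_{S \in \Sigma_I} (\mu_1(S) - \exp(\varepsilon)\mu_2(S))
  = \sup_{k} \max\!\left(0, \int_I k \, d\mu_1 - \exp(\varepsilon) \int_I k \, d\mu_2\right),
\]
where $k$ ranges over measurable functions $I \to [0,1]$ and $\mu_1,\mu_2 \in U(\sgiry I)$. Both sides are manifestly nonnegative (use $S = \emptyset$, resp.\ $k = 0$), and the inequality $\leq$ is immediate by taking $k = \chi_S$ as a witness for each $S$.

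The converse $\geq$ is the substantive step. Let $\nu = \mu_1 - \exp(\varepsilon)\mu_2$, a finite signed measure on $I$. By the Hahn decomposition theorem there is $P \in \Sigma_I$ with $\nu(A) \geq 0$ for measurable $A \subseteq P$ and $\nu(A) \leq 0$ for measurable $A \subseteq P^c$. A short computation yields $\sup_{S \in \Sigma_I}\nu(S) = \nu(P)$. For any measurable $k \colon I \to [0,1]$, splitting the integral at $P$ gives $\int_{P^c} k \, d\nu \leq 0$ (since $k \geq 0$ and $\nu|_{P^c} \leq 0$) and $\int_P k \, d\nu \leq \int_P 1 \, d\nu = \nu(P)$ (since $k \leq 1$ and $\nu|_P \geq 0$); summing and taking the $\max$ with $0$ yields the desired inequality. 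The main obstacle is simply lining up the identification of $\sgiry 1$ with $[0,1]$ together with the formula for $k\kl$; after that, Hahn decomposition supplies everything, so no finer analytic machinery is required.
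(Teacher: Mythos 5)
Your proof is correct in substance, but it takes a genuinely different route from the paper for the substantive half. Both arguments begin identically: identify $\sgiry 1$ with $[0,1]$, so that Kleisli arrows $k\colon I\to\sgiry 1$ become measurable functions into $[0,1]$ with $k\kl\ap\mu$ computing $\int_I k\,d\mu$, and obtain $\divdp\ve I(\mu_1,\mu_2)\le\sup_{k}\divdp\ve 1(k\kl\ap\mu_1,k\kl\ap\mu_2)$ by plugging in indicator functions $\chi_S$. For the converse, the paper does no new measure theory: it invokes the data-processing inequality, i.e.\ the already-established reflexivity and $\mEQ$-composability of $\divndp$ on $\sgiry$ (composability with grades $\ve$ and $0$), to bound $\divdp\ve 1(k\kl\ap\mu_1,k\kl\ap\mu_2)\le\divdp\ve I(\mu_1,\mu_2)$ for every $k$. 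You instead prove this bound from scratch via a Hahn decomposition of $\nu=\mu_1-\exp(\ve)\mu_2$, showing $\int_I k\,d\nu\le\nu(P)=\sup_{S}\nu(S)$ for $0\le k\le 1$. Your argument is self-contained and does not lean on the cited composability result (in effect it re-proves the relevant instance of data processing), whereas the paper's is softer and transfers unchanged to any reflexive composable divergence. Two small points: the grading ranges over $\ve\in[0,\infty]$, and at $\ve=\infty$ the difference $\mu_1-\exp(\ve)\mu_2$ is not a finite signed measure, so that grade needs a separate one-line treatment (if $\int_I k\,d\mu_2=0$ then $k=0$ holds $\mu_2$-a.e.\ and $S=\{k>0\}$ witnesses the bound), which the data-processing route covers automatically; and the identification of measurable maps $I\to\sgiry 1$ with measurable maps $I\to[0,1]$ strictly requires the inverse $r\mapsto r\cdot\mathbf{d}_{\ast}$ to be measurable, which the paper checks explicitly — though your argument really only needs measurability of $\ev_{\{\ast\}}$ and of the two-valued maps $x\mapsto\chi_S(x)\cdot\mathbf{d}_{\ast}$, so this is a gloss rather than a gap.
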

\begin{appendixproof}
  (Proof of Proposition \ref{pp:1-generatedness_DP})
  We write $|1| = \{ \ast \}$.
  We first check the measurable isomorphism $\sgiry 1 \cong [0,1]$.
  The measurable functions $\ev_{\{\ast\}} \colon \sgiry 1 \to [0,1]$ 
  ($\nu \mapsto \nu (\ast)$) and the function $H \colon | [0,1]| \to |\sgiry 1|$
  ($r \mapsto r \cdot \mathbf{d}_{\ast}$) are mutually inverse. 
  For any (Borel-)measurable $U \in \Sigma_{[0,1]}$, we have 
  $ H^{-1}(\ev_{\{\ast\}}^{-1} (U)) = U$ and
  $H^{-1}(\ev_{\emptyset}^{-1} (U)) = [0,1]$ if $0 \in U$ and $H^{-1}(\ev_{\emptyset}^{-1} (U)) = \emptyset$ otherwise.
  Since all generators of $\Sigma_{\sgiry 1}$ are $\ev_{\{\ast\}}^{-1}(U)$ and $\ev_{\emptyset}^{-1} (U)$ where $U \in \Sigma_{[0,1]}$, we conclude the measurability of $H$.
  Thus, $f \colon I \to [0,1]$ corresponds bijectively to $H \circ f \colon I \to \sgiry 1$, and 
  \[
    \int_I f d\nu_1 = \int_I \ev_{\{\ast\}} \circ H \circ f d\nu_1
    = ((H \circ f)\kl \nu_1) (\{\ast \}).
  \]
  We then obtain, for all $I \in \Meas$, $\nu_1,\nu_2 \in \sgiry I$ 
  \begin{align*}
    \divdp \ve I (\nu_1,\nu_2)
    &= \sup_{S \in \Sigma_I}(\nu_1(S) - \exp(\ve)\nu_2(S))\\
    &\leq \sup_{f_S \colon I \to [0,1] }(\int_I f_S d\nu_1 -  \exp(\ve)\int_I f_S d\nu_2)\\
    &= \sup_{f_S \colon I \to \sgiry  [0,1] }(((H \circ f_S)\kl\nu_1)(\ast) -  \exp(\ve)((H \circ f_S)\kl\nu_2)(\ast))\\
    &\leq \sup_{f_S \colon I \to \sgiry  [0,1] }\sup_{S'\in \Sigma_1 (\iff S' = \{\ast\},\emptyset)}((H \circ f_S)\kl\nu_1)(S') -  \exp(\ve)((H \circ f_S)\kl\nu_2)(S'))\\
    &= \sup_{f_S \colon I \to \sgiry  [0,1] } \divdp \ve 1 ((H \circ f_S)\kl\nu_1,(H \circ f_S)\kl\nu_2)\\
    &= \sup_{g \colon I \to \sgiry 1 } \divdp \ve 1 (g\kl\nu_1,g\kl\nu_2)\\
    &\leq \divdp \ve I (\nu_1,\nu_2).
  \end{align*}
  The first inequality is given by $\nu(S) = \int_I \chi_S d\nu $ where $\chi_S \colon I \to [0,1]$ is the indicator function of $S$
  defined by $\chi_S(x) = 1$ when $x \in S$ and  $\chi_S(x) = 0$ otherwise. 
  The last inequality is given by the data-processing inequality which is given by the reflexivity and $\mEQ{}$-composability of $\divndp$.
\end{appendixproof}
\begin{proposition}\label{pp:2-generatedness_TV}
  The divergence $\divntv \in \mDiv \sgiry  1 \qRp \mEQ$ is not $1$-generated but $2$-generated.
\end{proposition}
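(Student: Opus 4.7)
The plan is to prove both parts by direct computation using the integral form $\divntv_I(\mu_1,\mu_2)=\frac{1}{2}\int_I|d\mu_1-d\mu_2|$ combined with the Hahn decomposition of $\mu_1-\mu_2$.

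\textbf{Non-$1$-generatedness.} I would exhibit a counterexample. Take $I=2=\{0,1\}$ (discrete) and $\mu_1=\mathbf{d}_0$, $\mu_2=\mathbf{d}_1$, so $\divntv_2(\mu_1,\mu_2)=1$. Using the measurable isomorphism $\sgiry 1 \cong [0,1]$ from the proof of Proposition \ref{pp:1-generatedness_DP}, any morphism $k\colon 2\to \sgiry 1$ is determined by two values $p=k(0)(\{\ast\}),q=k(1)(\{\ast\})\in[0,1]$. Then $k\kl\mu_1=p\cdot\mathbf{d}_\ast$ and $k\kl\mu_2=q\cdot\mathbf{d}_\ast$, so
\[
  \divntv_1(k\kl\mu_1,k\kl\mu_2)=\tfrac{1}{2}|p-q|\leq \tfrac{1}{2}<1=\divntv_2(\mu_1,\mu_2),
\]
which shows $\sup_{k\colon 2\to\sgiry 1}\divntv_1(k\kl\mu_1,k\kl\mu_2)\le \tfrac{1}{2}$, hence $\divntv$ is not $1$-generated.

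\textbf{$2$-generatedness, upper bound.} For any $k\colon I\to \sgiry 2$, the inequality $\divntv_2(k\kl\mu_1,k\kl\mu_2)\le \divntv_I(\mu_1,\mu_2)$ is the data-processing inequality, which follows from reflexivity of $\divntv$ together with $\mEQ$-composability, exactly as remarked at the end of the proof of Proposition \ref{pp:1-generatedness_DP}. Hence $\sup_{k\colon I\to \sgiry 2}\divntv_2(k\kl\mu_1,k\kl\mu_2)\le \divntv_I(\mu_1,\mu_2)$.

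\textbf{$2$-generatedness, lower bound.} Here I would construct an optimal $k$ via the Hahn decomposition. Pick a dominating $\sigma$-finite measure $\nu$ with $\mu_1,\mu_2\ll\nu$ (for instance $\nu=\mu_1+\mu_2$) and densities $f_i=d\mu_i/d\nu$. Let $A\triangleq\{x\in I\mid f_1(x)>f_2(x)\}\in\Sigma_I$ and define $k\colon I\to\sgiry 2$ by $k(x)\triangleq \chi_A(x)\mathbf{d}_0+\chi_{A^c}(x)\mathbf{d}_1$, which is a measurable morphism into $\sgiry 2$ since $\chi_A(x)+\chi_{A^c}(x)=1$. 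A direct computation yields
\begin{align*}
  \divntv_2(k\kl\mu_1,k\kl\mu_2)
  &=\tfrac{1}{2}\bigl(|\mu_1(A)-\mu_2(A)|+|\mu_1(A^c)-\mu_2(A^c)|\bigr)\\
  &=\tfrac{1}{2}\Bigl(\textstyle\int (d\mu_1-d\mu_2)^{+}+\int(d\mu_1-d\mu_2)^{-}\Bigr)
   =\tfrac{1}{2}\textstyle\int|d\mu_1-d\mu_2|=\divntv_I(\mu_1,\mu_2),
\end{align*}
which attains the bound and completes the proof.

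The only real subtlety is handling sub-probability (rather than probability) measures, where the quantities $\int(d\mu_1-d\mu_2)^{+}$ and $\int(d\mu_1-d\mu_2)^{-}$ need not coincide; the choice of $k$ landing in $\sgiry 2$ (not $\sgiry 1$) is precisely what allows both of these to contribute additively, which is why $2$ points (and not $1$) suffice.
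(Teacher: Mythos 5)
Your proof is correct and follows essentially the same route as the paper: a counterexample over the two-point space for non-$1$-generatedness (you use the Dirac pair $\mathbf{d}_0,\mathbf{d}_1$ where the paper uses a nondegenerate pair, but the bounding of all postprocessings into $\sgiry 1$ is the same argument), and for $2$-generatedness the data-processing inequality from reflexivity and $\mEQ$-composability plus the two-valued map $k$ built from the Hahn set $\{d\mu_1/d\nu > d\mu_2/d\nu\}$, which is exactly the paper's optimal $g$. No gaps.
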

\begin{appendixproof}
  (Proof of Proposition \ref{pp:2-generatedness_TV})
  We first prove that $\divntv$ is not $1$-generated.
  We write $|2| = \{0,1\}$.
  We define $\nu_1,\nu_2 \in \sgiry 2$ by 
  \[
    \nu_1 = \frac{1}{2}\cdot\mathbf{d}_0 + \frac{1}{2}\cdot\mathbf{d}_1, \qquad \nu_2 = \frac{1}{3}\cdot\mathbf{d}_0 + \frac{2}{3}\cdot\mathbf{d}_1.
  \]
  Then the total variation distance between them is calculated by
  \[
    \divtv 2(\nu_1,\nu_2) = \frac{1}{2}\left(\left|\frac{1}{2}-\frac{1}{3} \right| + \left|\frac{1}{2} - \frac{2}{3} \right|\right) = \frac{1}{6}.
  \]
  On the other hand, for any $f \colon 2 \to \sgiry1$, we have
  \begin{align*}
    \divtv 1(f\kl(\nu_1),f\kl(\nu_2))
    &= \frac{1}{2} \left|\frac{1}{2}f(0) + \frac{1}{2}f(1) - \frac{1}{3}f(0) - \frac{2}{3}f(1)\right|\\
    &= \frac{1}{2} \left|\frac{1}{6}f(0) - \frac{1}{6}f(1)\right|\\
    &= \frac{1}{12} \left| f(0) - f(1) \right| \\
    & \leq \frac{1}{12}.
  \end{align*}
  This implies that $\divntv$ is not $1$-generated.

  Next, we prove that $\divntv$ is $2$-generated.
  From the data-processing inequality $\divntv$  which is given by the reflexivity and $\mEQ{}$-composability of $\divntv$,
  we obtain for any $\nu_1,\nu_2 \in \sgiry I$,
  \[
    \divtv I (\nu_1,\nu_2) \geq \sup_{g \colon I \to \sgiry 2}\divtv 2 (g\kl \nu_1, g\kl \nu_2).
  \]
  We show that the above inequality becomes the equality for some $g$.

  We fix $\nu_1,\nu_2 \in \sgiry I$, a base measure $\mu$ over $I$ satisfying the absolute continuity $\nu_1,\nu_2 \ll \mu$ and
  the Radon-Nikodym derivatives (density functions) $\frac{d\nu_1}{d\mu}, \frac{d\nu_2}{d\mu}$ of $\nu_1,\nu_2$ with respect to $\mu$ respectively.

  Let $A = (\frac{d\nu_1}{d\mu} - \frac{d\nu_2}{d\mu})^{-1}([0,\infty))$ and $B = I \setminus A$.
  We define $g \colon I \to \sgiry2$ by
  $g(x) = \mathbf{d}_0$ if $x \in B$ and $g(x) = \mathbf{d}_1$ otherwise.
  Then for any $\nu \in \sgiry I$ we have
  \[
    (g\kl \nu)(\{0\})
    = \int_I g(-)(\{0\}) d\nu
    = \int_A g(-)(\{0\}) d\nu + \int_B g(-)(\{0\}) d\nu
    = \int_A 1 d\nu + \int_B 0 d\nu
    = \nu(A).
  \]
  Similarly we have $(g\kl \nu)(\{1\}) = \nu(B)$.
  Therefore, we obtain
  \begin{align*}
    \frac{1}{2}\divtv I(\mu_1,\mu_2)
    &=\frac{1}{2}\int_I \left| \frac{d\nu_1}{d\mu}(x) - \frac{d\nu_2}{d\mu}(x) \right|~d\mu(x)\\
    &=\frac{1}{2}\int_{A} \frac{d\nu_1}{d\mu}(x) - \frac{d\nu_2}{d\mu}(x)~d\mu(x) + \frac{1}{2}\int_{B} \frac{d\nu_2}{d\mu}(x) - \frac{d\nu_1}{d\mu}(x) ~ d\mu(x)\\
    &=\frac{1}{2}(\nu_1(A) - \nu_2(A) + \nu_2(B) - \nu_1(B))\\
    &=\frac{1}{2}((g\kl\nu_1)(\{ 0 \}) - (g\kl\nu_2)(\{ 0 \}) + (g\kl\nu_2)(\{ 1 \}) - (g\kl\nu_2)(\{ 1 \}))\\
    &=\frac{1}{2}(| (g\kl\nu_1)(\{ 0 \}) - (g\kl\nu_2)(\{ 0 \})| + |(g\kl\nu_2)(\{ 1 \}) - (g\kl\nu_2)(\{ 1 \}) |)\\
    &= \divtv 2(g\kl(\mu_1),g\kl(\mu_2))
  \end{align*}
  We then conclude that $\Delta^{\mathrm{TV}}$ is $2$-generated.
\end{appendixproof}

\paragraph*{$\Omega$-Generatedness of Preorders on Monads}

We relate $\Omega$-generatedness of divergences and preorders on
monads studied in \mcite{preordermonad}.  Let $T$ be a monad on $\Set$
and $\Omega$ be a set. \cite{preordermonad} introduced the concept of
{\em congruent} and {\em substitutive} preorders on $T\Omega$ as
those satisfying:
\begin{description}
\item[Substitutivity] For any function $f\colon \Omega\to T\Omega$ and
  $c_1,c_2\in T\Omega$, $c_1 \leq c_2$ implies
  $f\kl(c_1) \leq f\kl(c_2)$.
\item[Congruence] For any function $f_1,f_2\colon J\to T\Omega$, if
  $f_1(x) \leq f_2(x)$ holds for any $x\in J$, then
  $f_1\kl (c) \leq f_2\kl (c)$ holds for any $c\in T\Omega$.
\end{description}
For instance, any component of a preorder on $T$ at $\Omega$ forms a
congruent and substitutive preorder on $T\Omega$.  We write
$\mathbf{CSPre}(T,\Omega)$ for the set of all congruent and
substitutive preorders on $T\Omega$, and $\mathbf{Pre}(T)$ for the
collection of all preorders on $T$.  \cite{preordermonad} gave a
construction $[-]^\Omega\colon \mathbf{CSPre}(T,\Omega)\to\mathbf{Pre}(T)$
of preorders on $T$ from congruent and substitutive preorders on
$T\Omega$:
\[
  c_1 [\leq]^\Omega_J c_2 \iff \fa{g \colon J \to T\Omega}g\kl
  (c_1) \leq g\kl (c_2)
\]
The constructed preorders on $T$ are $\Omega$-generated in the
following sense:
\begin{proposition}\label{pp:K-generatedness:premonad}
  For any ${\leq} \in \mathbf{CSPre}(T,\Omega)$, the
  $\qB$-divergence $\asg {[\leq]^\Omega} {}$ corresponding to the preorder ${[\leq]}^\Omega$ on
  $T$ is $\Omega$-generated (see Proposition
  \ref{pp:preorder_on_monad_is_divergence} for the correspondence).
\end{proposition}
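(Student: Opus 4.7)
The plan is to unfold both sides of the target equality using the bijection of Proposition \ref{pp:preorder_on_monad_is_divergence} and then reduce the $\Omega$-generatedness condition to a purely order-theoretic statement about $[\leq]^\Omega$, which will follow from Kleisli-associativity plus the unit law.

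First I would note that both sides of the desired equality take values in $\qB=\{0\ge 1\}$, so it suffices to show that $\asg{[\leq]^\Omega}{I}(c_1,c_2)=1$ iff $\sup_{k\colon I\to T\Omega}\asg{[\leq]^\Omega}{\Omega}(k\kl\ap c_1,k\kl\ap c_2)=1$. By definition of $\asg{[\leq]^\Omega}{}$, the left-hand side equals $1$ precisely when $c_1\, [\leq]^\Omega_I\, c_2$, i.e.\ when $g\kl(c_1)\leq g\kl(c_2)$ for every $g\colon I\to T\Omega$. Since the sup in $\qB$ of a family of $0$/$1$-valued entries equals $1$ exactly when every entry is $1$, the right-hand side equals $1$ precisely when $k\kl(c_1)\,[\leq]^\Omega_\Omega\, k\kl(c_2)$ for every $k\colon I\to T\Omega$, which unfolds further to: for every $k\colon I\to T\Omega$ and every $g'\colon\Omega\to T\Omega$, $g'\kl(k\kl(c_1))\leq g'\kl(k\kl(c_2))$.

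Thus the proposition reduces to the equivalence
\[
\bigl(\forall g\colon I\to T\Omega.\ g\kl(c_1)\leq g\kl(c_2)\bigr)
\iff
\bigl(\forall k\colon I\to T\Omega,\ \forall g'\colon\Omega\to T\Omega.\ g'\kl(k\kl(c_1))\leq g'\kl(k\kl(c_2))\bigr).
\]
For $(\Longrightarrow)$, given $k$ and $g'$, set $h\triangleq g'\kl\circ k\colon I\to T\Omega$; Kleisli associativity gives $h\kl=g'\kl\circ k\kl$, so applying the hypothesis to $g=h$ yields exactly $g'\kl(k\kl(c_1))\leq g'\kl(k\kl(c_2))$. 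For $(\Longleftarrow)$, given any $g\colon I\to T\Omega$, instantiate the hypothesis with $k=g$ and $g'=\eta_\Omega$; then $g'\kl=\id_{T\Omega}$ by the monad unit law, so the conclusion collapses to $g\kl(c_1)\leq g\kl(c_2)$.

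The only subtle point is remembering that the order on $\qB$ is reversed from the naive reading (the unit $1$ is the bottom and $0$ the top), so the sup over $k$ equals $1$ iff every summand equals $1$; with that clarified, the argument is entirely a consequence of the monad laws and the definitions, and no use of congruence or substitutivity of $\leq$ is needed beyond what is already packaged into $[\leq]^\Omega$.
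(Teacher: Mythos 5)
Your proposal is correct and follows essentially the same route as the paper's proof: unfold the $\qB$-valued divergence and the definition of $[\leq]^\Omega$, note that the sup in $\qB$ equals $1$ iff every entry does, and reduce to the equivalence between $\forall g.\,g\kl(c_1)\leq g\kl(c_2)$ and $\forall k,g'.\,g'\kl(k\kl(c_1))\leq g'\kl(k\kl(c_2))$. The only difference is that you make the justification of that middle equivalence explicit (Kleisli associativity for one direction, $g'=\eta_\Omega$ and the unit law for the other), which the paper leaves implicit in its chain of iffs.
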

\begin{appendixproof}
  (Proof of Proposition \ref{pp:K-generatedness:premonad}) For all set
  $J$ and $c_1,c_2 \in TJ$, we have
  \begin{align*}
    \asg {[\leq]^\Omega} J(c_1,c_2) = 1
    &\iff c_1 [\leq]^\Omega_J c_2\\
    &\iff \bigwedge_{g \colon J \to T\Omega} g\kl (c_1) \leq g\kl (c_2)\\
    &\iff \bigwedge_{g \colon J \to T\Omega} g\kl (c_1)  \mathbin{[\leq]^\Omega_\Omega} g\kl (c_2)\\
    &\iff \sup_{g \colon J \to T\Omega} \asgn([\leq]^\Omega)_\Omega (g\kl (c_1),g\kl (c_2)) = 1.
  \end{align*}
  This implies that $\asg {[\leq]^\Omega} {}$ is $\Omega$-generated.
\end{appendixproof}
Applying this proposition, we can determine $\Omega$-generatedness of
preorders on monads:
\begin{itemize}
\item If the monad $T$ has a rank $\alpha$, the construction
  $[-]^\alpha$ is bijective~\cite[Theorem 7]{preordermonad}. Hence for
  such a monad, each preorder on $T$ corresponds to an
  $\alpha$-generated $\qB$-divergence.
\item For the subprobability distribution monad $D_s$ on $\Set$,
  \cite{DBLP:journals/entcs/Sato14} identified all preorders on $D_s$:
  there are 41 preorders on $D_s$. Among them, 25 preorders are
  1-generated, while 16 preorders are 2-generated \cite[Proposition
  6.3]{DBLP:journals/entcs/Sato14}.
\end{itemize}

\subsection{An Adjunction between Quantitative Equational Theories and Divergences}
\label{sec:adj}

\newcommand{\QETC}[3]{\ol{#1}^{\mathrm{QET}({#2},{#3})}}
\newcommand{\Genn}{\mathrm{Gen}}
\newcommand{\Gen}[1]{\Genn(#1)}
\newcommand{\condeq}[3]{{#1}=_{#3}{#2}}

\cite{10.1145/2933575.2934518} introduced a concept of {\em
  quantitative equational theory} as an algebraic presentation of
monads on the category of (pseudo-)metric spaces.  A quantitative
equational theory is an equational theory with indexed equations
$\condeq {t} {u} {\ve}$ having the axioms of pseudometric spaces, plus
suitable axioms reflecting properties of quantitative algebras. A
quantitative equational theory determines a pseudometric on the set of
$\Omega$-terms.

Consider a set $\Omega$ of function symbols of finite arity.  If $n$
is the arity of a function $f \in \Omega$, we write
$f \colon n \in \Omega$.  Let $X$ be a set of variables, and let
$T_\Omega X$ be the $\Omega$-term algebra over $X$.  For
$f \colon n \in \Omega$ and $t_1,\ldots,t_n \in T_\Omega X$, we write
$f(t_1,\ldots,t_n)$ for the term obtained by applying $f$ to
$t_1,\ldots,t_n$.  The construction $X\mapsto T_\Omega X$ forms a (strong) monad
on $\Set$ whose unit sends variables to terms, that is,
$\eta_X(x) = x$, and Kleisli extension
$h\kl \colon T_\Omega I \to T_\Omega X$ of function
$h \colon I \to T_\Omega X$ is defined inductively by
\[
  h\kl (x) \triangleq h(x), \quad
  h\kl (f(t_1,\ldots,t_n)) \triangleq f(h\kl(t_1),\ldots,h\kl(t_n)).
\]
A substitution of $\Omega$-terms over $X$ is a function $\sigma \colon X \to T_\Omega X$. For
$t \in T_\Omega X$, we call $\sigma\kl(t)$ the substitution of $\sigma$
to $t$.  We define the set of 
{\em indexed equations} of terms by
\[
  \VV (T_\Omega X) \triangleq \{  \condeq t u \ve ~|~ t,u \in T_\Omega X, \ve \in \QQ^+ \}.
\]
Here the index $\ve$ runs over non-negative rational numbers.  A {\em
  conditional quantitative equation} is a judgment of the following
form
\[
  \{\condeq {t_i} {u_i} {\ve_i} ~|~ i \in I\} \vdash \condeq t u
  \ve \qquad ( I \colon \text{countable},
  \condeq {t_i} {u_i} {\ve_i},  \condeq t u
  \ve \in \VV (T_\Omega X));
\]
the left hand side of turnstile ($\vdash$) is called hypothesis and the right
hand side conclusion.  We denote by $\EE(T_\Omega X)$ the set of
conditional quantitative equations.
For any countable subset $\Gamma$ of $\VV (T_\Omega X)$ and any substitution
$\sigma \colon X \to T_\Omega X$,
we define
$\sigma(\Gamma) \triangleq \{ \condeq {\sigma\kl(t_i)} {\sigma\kl(u_i)}
{\ve_i} ~|~ \condeq {t_i} {u_i} {\ve_i} \in \Gamma \}$.
\begin{definition}[Quantitative Equational Theory {\cite[Definition 2.1]{10.1145/2933575.2934518}}]
  A quantitative equational theory (QET for short) of type $\Omega$
  over $X$ is a set $U \subseteq \EE(T_\Omega X)$ closed under the
  rules summarized as Figure \ref{fig:qet}.
  \begin{figure}[t]
    \begin{align*}
      \emptyset & \vdash \condeq t t 0  \in U \label{QET:axiom:Ref} \tag{Ref}\\
      \{\condeq t u  \ve \} &\vdash  \condeq u t  \ve  \in U \label{QET:axiom:Sym} \tag{Sym}\\
      \{\condeq t u  \ve, \condeq u v  {\ve'} \} &\vdash  \condeq t v {\ve + \ve'}  \in U \label{QET:axiom:Tri} \tag{Tri}\\
      \fa{\ve' \in \QQ^+ } \{\condeq t u  \ve \} &\vdash  \condeq t u  {\ve + \ve'}  \in U \label{QET:axiom:Max} \tag{Max}\\
      \fa{\ve \in \QQ^+}  \{\condeq t u {\ve'} | \ve < \ve' \} &\vdash  \condeq t u \ve  \in U  \label{QET:axiom:Arch} \tag{Arch}\\
      \fa{f \colon n \in \Omega}  \{\condeq {t_i} {u_i}  {\ve} | 1 \leq i \leq n \} &\vdash  \condeq {f(t_1,\ldots,t_n)} {f(u_1,\ldots,u_n)} {\ve}  \label{QET:axiom:Nonexp} \tag{Nonexp}\\
      \fa{\sigma \colon X \to T_\Omega X}
      {\Gamma \vdash \condeq t u  \ve \in U} &\implies {\sigma(\Gamma)  \vdash  \condeq  {\sigma\kl(t)} {\sigma\kl(u)}  \ve \in U}
                                               \label{QET:axiom:Subst} \tag{Subst}\\
      \Gamma' \vdash \condeq t u  \ve \in U\land \fa{\psi \in \Gamma'}\Gamma \vdash \psi\in U & \implies {\Gamma \vdash \condeq t u  \ve \in U} \label{QET:axiom:Cut}\tag{Cut}\\
      {\condeq t u  \ve \in \Gamma} &\implies {\Gamma \vdash \condeq t u  \ve \in U} \label{QET:axiom:Assumpt}\tag{Assumpt}
    \end{align*}
    \caption{Quantitative Equational Theory Rules}
    \label{fig:qet}
  \end{figure}
  We write $\QET(\Omega,X)$ for the set of QETs of type $\Omega$ over
  $X$. We regard it as a poset $(\QET(\Omega,X),\subseteq)$ by the set
  inclusion order. Given a set $U_0$ of conditional quantitative
  equations of type $\Omega$ over $X$, by $\QETC{U_0}\Omega X$ we mean the
  least QET containing $U_0$.
\end{definition}

We state an adjunction between quantitative equational theories and
divergences on free-algebra monads on $\Set$. More specifically, we
construct the following adjunction and isomorphism between posets:
\begin{equation}
  \label{eq:qetdiag}
  \xymatrix@C=1.5cm{
    (\QET(\Omega,X),\subseteq)
    &
    (\CSEPMet(T_\Omega, X),\preceq)
    \adjunction{l}{U[-]}{d[-]}
    \ar@<.3pc>[r]^-{\Genn} \ar@{}[r]|-{\cong}
    &
    (\mathbf{DivEPMet}(T_\Omega,X),\preceq)
    \ar@<.3pc>[l]^-{(-)_X}
  }.
\end{equation}
By combining these, a QET of type $\Omega$ over $X$ determines an
$X$-generated $\mEQ$-relative $\qRp$-divergence on
$T_\Omega$ and vice versa.  The poset in the middle is that of {\em
  congruent} and {\em substitutive pseudometrics}, which are a quantitative
analogue of congruent and substitutive preorders.
\begin{definition}\label{def:csepmet}
  Let $T$ be a monad on $\Set$ and $X \in \Set$.
  A congruent and substitutive pseudometric (CS-EPMet for short) on $TX$ 
  is an extended pseudometric\footnote{A function $d \colon A^2 \to \qRp$
  is called an extended pseudometric on $A$ if
  $d(a,a) = 0$ (reflexivity),
  $d(b,a) = d(a,b)$ (symmetry) and
  $d(a,c) \leq d(a,b) + d(b,c)$ (triangle-inequality) hold for all $a,b,c \in A$.
  }
  $d \colon (TX)^2 \to \qRp$ on $TX$
  satisfying
  \begin{description}
  \item[Substitutivity] For all function $fX \to TX$ and
    $c_1,c_2\in TX$, $d(f\kl(c_1),f\kl(c_2)) \leq d(c_1,c_2)$.
  \item[Congruence] For all set $I$, function $f_1,f_2 \colon  I\to TX$ and $c \in TI$, 
    $d(f_1\kl(c),f_2\kl(c)) \leq \sup_{i \in I} d(f_1(i),f_2(i))$. 
  \end{description}
  We denote by $\CSEPMet(T, X)$ the set of CS-EPMets on $TX$.
  We then make it into a poset $(\CSEPMet(T, X),\preceq)$
  by the following pointwise opposite order:
  \[
    d\preceq d' \iff \fa{c_1,c_2 \in TX}d(c_1,c_2) \geq d'(c_1,c_2).
  \]
\end{definition}
\begin{definition}
  Let $T$ be a monad on $\Set$ and $X \in \Set$.  We denote by
  $\mathbf{DivEPMet}(T,X)$ the collection of $X$-generated
  $\mEQ$-relative $\qRp$-divergences $\asgn$ on $T$ such that each
  component $\asg{}I$ is an extended pseudometric. We restrict the
  partial order $\preceq$ on $\mDiv T1\qRp\mEQ$ to
  $\mathbf{DivEPMet}(T,X)$.
\end{definition}
We next introduce various monotone functions appearing in
\eqref{eq:qetdiag}.
\begin{align*}
  &d[U] (t,u)
  \triangleq \inf \left\{ \ve \in \QQ^+ \sep \emptyset
    \vdash \condeq {t}{u}{\ve} \in U \right\}
  & &
    \Gen d_I(c_1,c_2)
  \triangleq
    \sup_{k \colon I \to TX} d(k\kl (c_1),k\kl (c_2)) \\
  & U[d]
  \triangleq
    \QETC{\left\{\emptyset \vdash \condeq t u  \ve \sep \ve \in \QQ^+, d (t,u) \leq \ve \right\}}\Omega X
  & &
    (\asgn)_X
  \triangleq \asgn_X
\end{align*}
\begin{proposition}
  The functions $d[-],U[-],\Genn,(-)_X$ defined above
  are all well-defined monotone functions having types
  given in \eqref{eq:qetdiag}.
\end{proposition}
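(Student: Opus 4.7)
The plan is to verify, for each of the four functions separately, that the output lives in the claimed codomain and that monotonicity holds; the bulk of the work splits cleanly along these four cases.

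\textbf{The maps $d[-]$ and $U[-]$.} For $d[U]$, the extended-pseudometric axioms on $T_\Omega X$ follow directly from the rules \eqref{QET:axiom:Ref}, \eqref{QET:axiom:Sym} and \eqref{QET:axiom:Tri}: reflexivity gives $d[U](t,t)\le 0$, symmetry gives $d[U](t,u)=d[U](u,t)$, and for the triangle inequality one takes infima using \eqref{QET:axiom:Max} and \eqref{QET:axiom:Arch} (which is exactly what allows the infimum definition to be attained in the limit). Substitutivity of $d[U]$ follows from \eqref{QET:axiom:Subst} applied to the constant hypothesis set $\emptyset$; congruence requires an induction on term structure combined with \eqref{QET:axiom:Nonexp}, noting that Kleisli extension for $T_\Omega$ is defined inductively so congruence for Kleisli extensions reduces to $n$-ary congruence at each function symbol. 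Monotonicity is immediate: $U\subseteq U'$ enlarges the set of provable equations, so the infimum drops, i.e.\ $d[U]\preceq d[U']$. For $U[d]$ well-definedness is automatic since we take the QET-closure, and monotonicity $d\preceq d'\Rightarrow U[d]\subseteq U[d']$ is immediate from inspecting the generating set.

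\textbf{The map $\mathrm{Gen}$.} The key observation is that substitutivity of $d$ combined with the choice $k=\eta_X$ forces $\mathrm{Gen}(d)_X = d$, from which $X$-generatedness is a direct unfolding of the definition. Each component $\mathrm{Gen}(d)_I$ is an extended pseudometric since suprema preserve reflexivity, symmetry and the triangle inequality. For $\mEQ$-unit reflexivity, $k^\sharp(\eta_I x)=k(x)$ yields $d(k(x),k(x))=0$. The main step is $\mEQ$-composability: given $c_1,c_2\in T_\Omega I$ and $f_1,f_2\colon I\to T_\Omega J$, fix $k\colon J\to T_\Omega X$ and set $g_i\triangleq k^\sharp\circ f_i\colon I\to T_\Omega X$. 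Then
\[
  d(g_1^\sharp c_1, g_2^\sharp c_2)
  \le d(g_1^\sharp c_1, g_1^\sharp c_2) + d(g_1^\sharp c_2, g_2^\sharp c_2)
\]
by the triangle inequality; the first summand is $\le \mathrm{Gen}(d)_I(c_1,c_2)$ by definition of $\mathrm{Gen}$ (with $g_1$ serving as the witness $k\colon I\to T_\Omega X$), and the second summand is bounded by $\sup_i d(g_1(i),g_2(i))\le \sup_i \mathrm{Gen}(d)_J(f_1(i),f_2(i))$ by congruence of $d$. Taking $\sup_k$ yields composability. Monotonicity is preserved by suprema.

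\textbf{The map $(-)_X$.} That $\asgn_X$ is an extended pseudometric holds by assumption. Substitutivity follows from $\mEQ$-composability with $c_1,c_2\in T_\Omega X$ and $f\colon X\to T_\Omega X$, using reflexivity $\asgn_X(f(x),f(x))=0$ to annihilate the supremum term. Congruence is the symmetric case: from composability $\asgn_X(f_1^\sharp c,f_2^\sharp c)\le \asgn_I(c,c)+\sup_i\asgn_X(f_1(i),f_2(i))$ and reflexivity of $\asgn_I$. Monotonicity is immediate from the pointwise definition of $\preceq$.

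\textbf{Main obstacle.} The delicate point is the composability verification for $\mathrm{Gen}(d)$: substitutivity of $d$ is only stated for functions $X\to T_\Omega X$, but in the composability computation one needs to control $d(g_1^\sharp c_1, g_1^\sharp c_2)$ for $g_1\colon I\to T_\Omega X$. The trick is not to invoke substitutivity at all here, but to recognise $g_1$ itself as a legal witness in the defining supremum of $\mathrm{Gen}(d)_I$; once this is noticed, the whole argument collapses into a triangle-inequality split plus one use of congruence. This is the only step where the structure of $\CSEPMet(T_\Omega,X)$ interacts non-trivially with the definition of generation.
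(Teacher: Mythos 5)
Your proposal is correct and takes essentially the same route as the paper: it checks the CS-EPMet axioms for $d[U]$ from the QET rules, establishes the pseudometric axioms, unit-reflexivity, composability and $X$-generatedness of $\Gen d$ via a triangle-inequality split, congruence, and the observation that $k\kl\circ f_i\colon I\to T_\Omega X$ is itself a legal witness in the supremum defining $\Gen d_I$ (exactly the device the paper's proof relies on, together with $\Gen d_X=d$ via substitutivity and $k=\eta_X$), and it reads off substitutivity and congruence of $\asgn_X$ from $\mEQ$-composability plus reflexivity. The only cosmetic slip is crediting the triangle inequality for $d[U]$ to \eqref{QET:axiom:Max}/\eqref{QET:axiom:Arch} — it needs only \eqref{QET:axiom:Tri}, \eqref{QET:axiom:Cut} and the usual infimum manipulation, with \eqref{QET:axiom:Max}/\eqref{QET:axiom:Arch} instead entering when converting bounds on $d[U]$ into provable judgments (e.g.\ in the congruence step) — which does not affect the argument.
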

That $d[U]$ is an extended pseudometric is shown in the beginning of
\cite[Section 5]{10.1145/2933575.2934518}. Here we additionally show
that it enjoys congruence and substitutivity of Definition
\ref{def:csepmet}.  The function $\Genn$ is taken from the right hand
side of the definition of $\Omega$-generatedness (Definition
\ref{def:gen}).  The function $(-)_X$ simply extracts the $X$-th
component of a given divergence.
\begin{toappendix}
\begin{lemma}\label{lem:QET=>CSPEMet}
  For any $U \in \QET(\Omega,X)$,
  the function $d[U] \colon (T_\Omega X)^2 \to \qRp$ defined by
  \[
  d[U] (t,u) \triangleq \inf \left\{ \ve \in \QQ^+ \sep \emptyset
    \vdash \condeq {t}{u}{\ve} \in U \right\}
  \]
  is a CS-EPMet on $T_\Omega X$ such that
  $d[U] (t,u) \in \QQ^+ \implies \emptyset \vdash \condeq {t}{u}{d[U] (t,u)}$.
\end{lemma}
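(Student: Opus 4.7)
The plan is to verify the four separate claims in turn: first the archimedean achievement property, then the pseudometric axioms, then substitutivity, and finally (as the main obstacle) congruence, since these rely on increasingly substantive applications of the QET rules.

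The key preparatory observation is that rule Max implies that for each pair $t,u \in T_\Omega X$ the set $S_{t,u} \triangleq \{\varepsilon' \in \QQ^+ \mid \emptyset \vdash \condeq t u {\varepsilon'} \in U\}$ is upward closed in $\QQ^+$, so $d[U](t,u) = \inf S_{t,u}$. In particular, $d[U](t,u) \leq S$ (for $S \in \qRp$) holds if and only if every rational $\varepsilon' > S$ lies in $S_{t,u}$. This immediately yields the archimedean achievement claim: if $\varepsilon \triangleq d[U](t,u)$ lies in $\QQ^+$, then every rational $\varepsilon' > \varepsilon$ satisfies $\emptyset \vdash \condeq t u {\varepsilon'} \in U$, and applying Arch with these derivations as premises followed by Cut yields $\emptyset \vdash \condeq t u \varepsilon \in U$. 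The pseudometric axioms for $d[U]$ now drop out of the QET rules: reflexivity from Ref (which places $0$ in $S_{t,t}$), symmetry from Sym combined with Cut (showing $S_{t,u} = S_{u,t}$), and the triangle inequality from Tri together with Cut (given rational approximants $\varepsilon > d[U](t,u)$ and $\varepsilon' > d[U](u,v)$ in their respective $S$-sets, Tri produces $\varepsilon + \varepsilon' \in S_{t,v}$, whence $d[U](t,v) \leq d[U](t,u) + d[U](u,v)$ by infimum).

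Substitutivity of $d[U]$ follows immediately from rule Subst. For any $f \colon X \to T_\Omega X$ and $c_1, c_2 \in T_\Omega X$, and any rational $\varepsilon' > d[U](c_1, c_2)$, we have $\varepsilon' \in S_{c_1,c_2}$ by upward closure, so $\emptyset \vdash \condeq {c_1} {c_2} {\varepsilon'} \in U$; Subst with $\sigma = f$ then gives $\emptyset \vdash \condeq {f\kl(c_1)} {f\kl(c_2)} {\varepsilon'} \in U$, and taking the infimum over such $\varepsilon'$ yields $d[U](f\kl(c_1), f\kl(c_2)) \leq d[U](c_1, c_2)$.

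The main obstacle is congruence, because Subst is formulated only for substitutions with domain $X$, whereas congruence quantifies over $c \in T_\Omega I$ for arbitrary $I$ and $f_1, f_2 \colon I \to T_\Omega X$. My approach is structural induction on the term $c$, using Nonexp in place of Subst. Set $S \triangleq \sup_{i \in I} d[U](f_1(i), f_2(i))$; if $S = \infty$ the claim is trivial, so assume $S < \infty$ and fix any rational $\varepsilon' > S$. By upward closure of each $S_{f_1(i),f_2(i)}$, we have $\emptyset \vdash \condeq {f_1(i)} {f_2(i)} {\varepsilon'} \in U$ for every $i \in I$ simultaneously. The base case $c = i$ is then immediate. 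For the step $c = g(c_1, \ldots, c_n)$ with $g \colon n \in \Omega$, the induction hypothesis supplies $\emptyset \vdash \condeq {f_1\kl(c_k)} {f_2\kl(c_k)} {\varepsilon'} \in U$ for each $k$; rule Nonexp together with Cut discharges these to produce $\emptyset \vdash \condeq {f_1\kl(c)} {f_2\kl(c)} {\varepsilon'} \in U$, since $f_j\kl(c) = g(f_j\kl(c_1), \ldots, f_j\kl(c_n))$. Hence every rational $\varepsilon' > S$ lies in $S_{f_1\kl(c), f_2\kl(c)}$, giving $d[U](f_1\kl(c), f_2\kl(c)) \leq S$ as required.
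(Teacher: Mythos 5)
Your proof is correct and follows essentially the same route as the paper's: Ref/Sym/Tri with Cut for the pseudometric axioms, Max and Arch with Cut for the achievement claim, Subst for substitutivity, and an induction on term structure using Nonexp and Cut for congruence (the paper phrases this as "unfolding the structure of $t$"). Your use of strict rational approximants $\ve' > S$ together with the upward-closure observation is a slightly tidier packaging of the same argument, but not a different approach.
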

\begin{proof}
  We first check the axioms of extended pseudometric.

  By \eqref{QET:axiom:Ref}, $U$ contains
  $ \emptyset\vdash \condeq {t}{t}{0}$ for each $t \in T_\Omega X$.
  Hence $d[U] (t,t) = 0$ holds for all $t \in T_\Omega X$.

  By \eqref{QET:axiom:Sym} and \eqref{QET:axiom:Cut},
  $ \emptyset\vdash \condeq {t}{u}{\ve}$ if and only if
  $ \emptyset\vdash \condeq {u}{t}{\ve}$.  Hence, for all
  $t,u \in T_\Omega X$,
  \[
    d[U](t,u) = \inf \left\{ \ve \in \QQ^+ \sep \emptyset
      \vdash \condeq {t}{u}{\ve} \right\} = \inf \left\{
      \ve \in \QQ^+ \sep \emptyset\vdash \condeq
      {u}{t}{\ve} \right\} =d[U] (u,t)
  \]

  By \eqref{QET:axiom:Tri} and \eqref{QET:axiom:Cut}, if
  $\emptyset\vdash \condeq {t}{u}{\ve}$ and
  $\emptyset\vdash \condeq {u}{v}{\ve'}$ then
  $\emptyset\vdash \condeq {t}{v}{\ve + \ve'}$.
  Hence, for all $t,u,v \in T_\Omega X$,
  \begin{align*}
    d[U](t,v)
    & = \inf \left\{ \ve^\ast  \in \QQ^+  \sep \emptyset\vdash \condeq {t}{v}{\ve^\ast}  \right\}\\
    & \leq \inf \left\{ \ve+\ve'  \sep  \emptyset\vdash \condeq {t}{u}{\ve}  \land \emptyset\vdash \condeq {u}{v}{\ve'}\right\}\\
    & \leq \inf \left\{ \ve  \in \QQ^+ \sep  \emptyset\vdash \condeq {t}{u}{\ve}   \right\}
      +\inf \left\{ \ve'  \in \QQ^+ \sep \emptyset\vdash \condeq {u}{v}{\ve'} \right\}\\
    &= d[U](t,u) + d[U](u,v).
  \end{align*}

  We next check the substitutivity.  Let $t,u \in T_\Omega X$ and
  $h \colon X \to T_\Omega X$.
  By \eqref{QET:axiom:Subst}, we have
  \[ \emptyset\vdash \condeq{t}{u}{\ve} \in U \implies
    \emptyset \vdash\condeq{h\kl(t)}{h\kl(u)}{\ve} \in U.
  \]
  Since $\ve$ is arbitrary, we conclude the substitutivity as
  follows:
  \begin{align*}
    d[U](h\kl(t),h\kl(u))
    &= \inf \left\{ \ve \in \QQ^+ \sep
      \emptyset\vdash \condeq{h\kl(t)}{h\kl(u)}{\ve} \in U
    \right\} \\
    &\leq \inf \left\{ \ve \in \QQ^+ \sep \emptyset
      \vdash \condeq{t}{u}{\ve} \in U \right\}\\
      &= d[U](t,u).
  \end{align*}

  Next, we check the congruence, Let $t \in T_\Omega I$ and
  $h_1,h_2 \colon I \to T_\Omega X$ By applying
  \eqref{QET:axiom:Nonexp} and \eqref{QET:axiom:Cut} inductively by unfolding the structure of
  $t$,
  \begin{equation}\label{eq:QET=>CSPEMet:congr:1}
    \fa{i \in I} \emptyset\vdash \condeq{h_1(i)}{h_2(i)}{\ve}
    \in U \implies
    \emptyset\vdash\condeq{h_1\kl(t)}{h_2\kl(t)}{\ve} \in U.
  \end{equation}
  If
  $\sup_{i \in I} d[U](h_1(i),h_2(i)) \leq \ve'$ for some
  $\ve' \in \QQ^+ $, then we have  
  $ d[U](h_1(i),h_2(i)) \leq \ve'$ for all $i \in I$.  By
  \eqref{QET:axiom:Max},\eqref{QET:axiom:Cut} and definition of $d_X^U$, we have
  $ \vdash \condeq{h_1(i)}{h_2(i)}{\ve'} \in U$ for all
  $i \in I$.  Hence,
   \begin{equation}\label{eq:QET=>CSPEMet:congr:2}
    \sup_{i \in I} d[U](h_1(i),h_2(i)) \leq \ve' \implies
    \fa{i \in I} \emptyset \vdash
    \condeq{h_1(i)}{h_2(i)}{\ve'} \in U.
   \end{equation}
  From the above two implications \eqref{eq:QET=>CSPEMet:congr:1} and \eqref{eq:QET=>CSPEMet:congr:2}, 
  We conclude the congruence as follows:
  \begin{align*}
    d[U](h_1\kl(t),h_2\kl(t))
    & = \inf \left\{ \ve' \in \QQ^+  \sep \emptyset\vdash\condeq{h_1\kl(t)}{h_2\kl(t)}{\ve'} \in U \right\}\\
    & \leq \inf \left\{ \ve' \in \QQ^+  \sep \fa{i \in I} \emptyset \vdash \condeq{h_1(i)}{h_2(i)}{\ve'} \in U \right\}\\
    & \leq \inf \left\{ \ve' \in \QQ^+  \sep \sup_{i \in I} d_X^U(h_1(i),h_2(i)) \leq \ve' \right\}\\
    & = \sup_{i \in I} d[U](h_1(i),h_2(i)).
  \end{align*}
   
  Finally, we assume $d[U] (t,u) \in \QQ^+$.
  By definition of $d[U](t,u)$,
  for any $\ve \in \QQ^+$ such that $d[U](t,u) < \ve$, there is
  $\ve' \in \QQ^+$ satisfying $d[U](t,u) \leq \ve' < \ve$ and $\condeq {t}{u}{\ve'} \in U$.
  Since $\ve \in \QQ^+$ is arbitrary, by \eqref{QET:axiom:Max} and \eqref{QET:axiom:Cut}, we conclude
  \[
  \fa{\ve \in \QQ^+}(d[U](t,u) < \ve \implies \condeq {t}{u}{\ve} \in U).
  \]
  Since $d[U] (t,u) \in \QQ^+$, by \eqref{QET:axiom:Arch} and \eqref{QET:axiom:Cut}, we have $\condeq {t}{u}{d[U] (t,u)} \in U$.
\end{proof}
The monotonicity of $d[-] \colon (\QET(\Omega,X),\subseteq) \to (\CSEPMet(T_\Omega, X),\preceq)$ is easy to prove: 
\begin{align*}
  U \subseteq V
  &\implies 
  \fa{t,u \in T_\Omega X}
  \inf \left\{ \ve \in \QQ^+ \sep \emptyset
    \vdash \condeq {t}{u}{\ve} \in U \right\}
    \geq
    \inf \left\{ \ve \in \QQ^+ \sep \emptyset
    \vdash \condeq {t}{u}{\ve} \in V \right\}\\
   &
   \iff \fa{t,u \in T_\Omega X} d[U](t,u) \geq d[V](t,u)\\
   &\iff
  d[U] \preceq d[V].
\end{align*}
\begin{lemma}\label{lem:CSPEMet=>divergence}
  Let $T$ be a monad on $\Set$, and let $X \in \Set$.  For any
  $d \in \CSEPMet(T, X)$, the family
  $\Gen d = \{\Gen d_I \colon (TX)^2 \to \qRp \}$ defined by
  \[
    \Gen d_I(c_1,c_2) = \sup_{k \colon I \to TX} d(k\kl (c_1),k\kl (c_2))  
  \]
  is an $X$-generated $\mEQ$-relative $\qRp$-divergence on $T$ where
  each $\Gen d_I$ is a pseudometric.
\end{lemma}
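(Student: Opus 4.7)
The plan is to verify three things separately: (i) each $\Gen d_I$ is an extended pseudometric on $TI$; (ii) the family $\Gen d$ satisfies $\mEQ$-unit reflexivity and $\mEQ$-composability (monotonicity is vacuous since $M = 1$); and (iii) $\Gen d$ is $X$-generated. I expect (i) and (iii) to be short, with the only real technical step being composability in (ii). For (i), reflexivity, symmetry, and the triangle inequality of $\Gen d_I$ transport pointwise from those of $d$ through the supremum over $k \colon I \to TX$. For unit reflexivity I would use the Kleisli unit law $k\kl \circ \eta_I = k$, so that $\Gen d_I(\eta_I \ap x, \eta_I \ap x) = \sup_k d(k(x), k(x)) = 0$ by reflexivity of $d$.

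For composability, fix $c_1, c_2 \in TI$, $f_1, f_2 \colon I \to TJ$, and an arbitrary test map $k \colon J \to TX$; set $g_i \triangleq k\kl \circ f_i \colon I \to TX$. The Kleisli associativity law yields $k\kl \circ f_i\kl = g_i\kl$. Inserting the midpoint $g_2\kl c_1$ and applying the triangle inequality of $d$ gives
\[
  d(g_1\kl c_1, g_2\kl c_2) \le d(g_1\kl c_1, g_2\kl c_1) + d(g_2\kl c_1, g_2\kl c_2).
\]
The first summand is at most $\sup_{x \in I} d(g_1(x), g_2(x))$ by congruence of $d$, and each $d(g_1(x), g_2(x)) = d(k\kl f_1(x), k\kl f_2(x)) \le \Gen d_J(f_1(x), f_2(x))$ by the definition of $\Gen d_J$ (take the same $k$ as witness in its defining sup). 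The second summand is at most $\Gen d_I(c_1, c_2)$ directly from the definition of $\Gen d_I$, taking $g_2$ itself as the witness. Summing these bounds and then taking the sup over $k$ on the left yields the composability inequality.

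The subtle point here is that substitutivity of $d$ as stated in Definition \ref{def:csepmet} only applies to maps $X \to TX$, whereas $g_2$ has domain $I$, which need not equal $X$; so one cannot directly invoke substitutivity to bound $d(g_2\kl c_1, g_2\kl c_2)$. The workaround is to bound it by $\Gen d_I(c_1, c_2)$, which is defined precisely as the supremum over such maps. Finally, for $X$-generatedness (iii), I would first observe that $\Gen d_X = d$ on $(TX)^2$: taking $h = \eta_X$ together with $\eta_X\kl = \id_{TX}$ gives $\Gen d_X(c, c') \ge d(c, c')$, and substitutivity of $d$ yields the reverse inequality. Substituting $\Gen d_X = d$ into the outer supremum then gives $\sup_{k \colon I \to TX} \Gen d_X(k\kl c_1, k\kl c_2) = \sup_k d(k\kl c_1, k\kl c_2) = \Gen d_I(c_1, c_2)$, completing the proof.
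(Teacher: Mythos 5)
Your proof is correct and follows essentially the same route as the paper's: pseudometric axioms transfer through the supremum, composability comes from the triangle inequality plus congruence of $d$ together with the observation that $k\kl\circ f_i\colon I\to TX$ is itself a witness in the defining sup of $\Gen d_I$ (sidestepping substitutivity, exactly as the paper does), and $X$-generatedness follows from substitutivity and the unit law via $\Gen d_X=d$. The only differences are cosmetic — you insert the midpoint $k\kl(f_2\kl(c_1))$ with $k$ fixed, whereas the paper uses the midpoint $f_1\kl(c_2)$ at the level of $\Gen d_J$, and your $X$-generatedness argument states $\Gen d_X=d$ explicitly rather than exchanging suprema.
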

\begin{proof}
   From the reflexivity
  of $d$, we have the reflexivity of $\Gen d_I$: for each $c \in TI$,
  \[
    \Gen d_I (c,c) = \sup_{k \colon I \to TX} d(k\kl (c),k\kl (c)) =
    0.
  \]
  Hence, the $\mEQ$-unit-reflexivity of $\asgn^d$ is already proved
  from the (proper) reflexivity.  From the symmetry of $d$, we have
  the symmetry of $\asgn^d_I$: for each $c_1,c_2 \in TI$,
  \begin{align*}
    \Gen d_I (c_1,c_2) &= \sup_{k \colon I \to TX} d(k\kl (c_1),k\kl
    (c_2))\\
    & = \sup_{k \colon I \to TX} d(k\kl (c_2),k\kl (c_1))\\
    & =
    \Gen d_I (c_2,c_1).
  \end{align*}
  From the triangle-inequality of $d$, we have the triangle-inequality
  of $\Gen d_I$: for all $c_1,c_2,c_3 \in TI$,
  \begin{align*}
    \Gen d_I (c_1,c_3) 
    &= \sup_{k \colon I \to TX} d(k\kl (c_1),k\kl (c_3))\\
    &\leq \sup_{k \colon I \to TX} d(k\kl (c_1),k\kl (c_2)) + d(k\kl (c_2),k\kl (c_3))\\
    &\leq \sup_{k \colon I \to TX} d(k\kl (c_1),k\kl (c_2)) + \sup_{k \colon I \to TX} d(k\kl (c_2),k\kl (c_3))\\
    &= \Gen d_I (c_1,c_2) +\Gen d_I (c_2,c_3) .
  \end{align*}
  
  From the reflexivity, congruence and substitutivity of $d$ and the
  triangle-inequality of $\asgn^d_I$, we next show the composability.
  Let $c_1,c_2 \in TI$ and $f_1,f_2 \colon I \to TJ$. We obtain,
  \begin{align*}
    \lefteqn{\Gen d_J (f_1\kl(c_1),f_2\kl(c_2))}\\
    &\leq \Gen d_J (f_1\kl(c_1),f_1\kl(c_2)) + \Gen d_J (f_1\kl(c_2),f_2\kl(c_2))\\
    &= \sup_{k \colon J \to TX} d((k\kl \circ f_1)\kl(c_1),(k\kl \circ f_1)\kl(c_2)) + \sup_{k \colon J \to TX}d ((k\kl \circ f_1)\kl(c_2),(k\kl \circ f_2)\kl(c_2))\\
    &\leq
      \sup_{k \colon J \to TX} d(f_1\kl(c_1),f_1\kl(c_2))
      +  \sup_{k \colon J \to TX} \sup_{i \in I} d(k\kl \circ f_1(i),k\kl \circ f_2(i))\\
    &=
      d(f_1\kl(c_1),f_1\kl(c_2))
      +  \sup_{k \colon J \to TX} \sup_{i \in I} d(k\kl \circ f_1(i),k\kl \circ f_2(i))\\
    &\leq \sup_{f_1 \colon I \to TX}d(f_1\kl(c_1),f_1\kl(c_2)) +
      \sup_{i \in I} \sup_{k \colon J \to TX} d(k\kl \circ f_1(i),k\kl \circ f_2(i))\\
    &= \Gen d_I (c_1,c_2) + \sup_{i \in I}  \Gen d_J(f_1(i),f_2(i)).
  \end{align*}

  Finally we show the $X$-generatedness of $\Gen d$ by definition
  \begin{align*}
    \Gen dI(c_1,c_2)
    &= \sup_{k \colon I \to TX} d (k\kl (c_1), k\kl (c_2))\\
    &= \sup_{h \colon X \to TX} \sup_{k \colon I \to TX} d (h\kl (k\kl (c_1)), h\kl (k\kl (c_2)))\\
    &= \sup_{k \colon I \to TX} \sup_{h \colon X \to TX} d (h\kl (k\kl (c_1)), h\kl (k\kl (c_2)))\\
    &= \sup_{k \colon I \to TX}  \Gen d_X (k\kl(c_1),k\kl(c_2))
  \end{align*}
  
  This completes the proof.
\end{proof}
The monotonicity of $\Genn \colon  (\CSEPMet(T_\Omega, X),\preceq) \to (\mathbf{DivEPMet}(T_\Omega, X),\preceq)$ is easy to prove:
\begin{align*}
d \preceq d' &
\implies 
\fa{c_1,c_2 \in TI}
\sup_{k \colon I \to TX} d(k\kl (c_1),k\kl (c_2)) \geq \sup_{k \colon I \to TX} d'(k\kl (c_1),k\kl (c_2))\\
&\iff \fa{c_1,c_2 \in TI} \Gen d_I(c_1,c_2) \geq  \Gen {d'}_I(c_1,c_2)\\
&\iff \Gen d \preceq \Gen {d'}.
\end{align*}
\end{toappendix}
\begin{therm}\label{thm:adjunction:QETandCSEPMet}
  For any set $\Omega$ of function symbols with finite arity and set
  $X$, the following holds for the monotone functions in \eqref{eq:qetdiag}:
  \begin{enumerate}
  \item $\Genn$ is the inverse of $(-)_X$.
  \item We have an adjunction satisfying
    $d[U[-]] = \id$:
    \begin{equation}
      \label{eq:qetadj}
      \xymatrix@C=1.5cm{ (\QET(\Omega,X),\subseteq) &
        (\CSEPMet(T_\Omega, X),\preceq) \adjunction{l}{U[-]}{d[-]}}
    \end{equation}
  \end{enumerate}
\end{therm}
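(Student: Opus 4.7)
The proof plan splits naturally into part (1), which establishes a bijective correspondence, and part (2), which establishes the Galois connection and the counit identity.

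For part (1), I would verify that $\Genn$ and $(-)_X$ are mutually inverse. Given $d \in \CSEPMet(T_\Omega, X)$, unfolding $\Genn$ at $I = X$ gives $\Gen{d}_X(c_1, c_2) = \sup_{k : X \to T_\Omega X} d(k\kl c_1, k\kl c_2)$. The substitutivity axiom for $d$ supplies the bound $d(k\kl c_1, k\kl c_2) \leq d(c_1, c_2)$, and instantiating $k = \eta_X$ (so that $k\kl = \id$) attains equality. Hence $(\Gen{d})_X = d$. Conversely, for any $X$-generated $\asgn \in \mathbf{DivEPMet}(T_\Omega,X)$, the defining equation of $X$-generatedness in Definition \ref{def:gen} says precisely $\asgn_I(c_1, c_2) = \sup_{k : I \to T_\Omega X} \asgn_X(k\kl c_1, k\kl c_2) = \Gen{\asgn_X}_I(c_1, c_2)$. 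Monotonicity on both sides is given, so this is an isomorphism of posets.

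For part (2), the heart of the work is the Galois correspondence $U[d] \subseteq V \iff d \preceq d[V]$. The forward direction is immediate: if $U[d] \subseteq V$, then for any rational $\ve \geq d(t, u)$, the judgment $\emptyset \vdash t =_\ve u$ lies in the generating set of $U[d]$, hence in $V$, so $d[V](t, u) \leq \ve$; taking the infimum yields $d[V](t, u) \leq d(t, u)$, that is, $d \preceq d[V]$.

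The reverse direction is the main obstacle, as it requires translating an asymptotic inequality on $d[V]$ into concrete derivability in $V$. Assume $d \preceq d[V]$; I need to show the generating set of $U[d]$ lies in $V$. Fix $t, u$ and rational $\ve \geq d(t, u)$, so $d[V](t, u) \leq \ve$. For every rational $\ve'' > \ve$, the infimum definition produces some rational $\ve' < \ve''$ with $\emptyset \vdash t =_{\ve'} u \in V$, and rule \eqref{QET:axiom:Max} lifts this to $\emptyset \vdash t =_{\ve''} u \in V$. Quantifying over all such $\ve''$ and applying \eqref{QET:axiom:Arch} together with \eqref{QET:axiom:Cut} produces $\emptyset \vdash t =_\ve u \in V$, as required. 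Since $V$ is a QET, $U[d] \subseteq V$ follows.

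With the adjunction in hand, the counit identity $d[U[d]] = d$ falls out quickly. The inequality $d[U[d]](t, u) \leq d(t, u)$ is immediate from the construction of $U[d]$, since any rational $\ve \geq d(t, u)$ gives $\emptyset \vdash t =_\ve u \in U[d]$ and hence $d[U[d]](t, u) \leq \ve$. The reverse inequality $d \preceq d[U[d]]$ is obtained by instantiating the Galois correspondence at $V = U[d]$ using the trivial inclusion $U[d] \subseteq U[d]$. Combining both directions gives $d[U[d]] = d$ pointwise.
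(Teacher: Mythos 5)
Your part (1) and your proof of the adjointness $U[d]\subseteq V \iff d\preceq d[V]$ are correct and essentially the paper's own argument (substitutivity plus $k=\eta_X$ for the isomorphism; the (Max)/(Arch)/(Cut) manipulation for the nontrivial direction of the Galois correspondence). The gap is in the final identity $d[U[-]]=\id$, and it comes from conflating $\preceq$ with the pointwise order: by definition, $d\preceq d'$ means $d(t,u)\ge d'(t,u)$ for all $t,u$. Consequently the two inequalities you offer are the \emph{same} inequality. Instantiating the Galois correspondence at $V=U[d]$ yields $d\preceq d[U[d]]$, which unfolds to $d[U[d]](t,u)\le d(t,u)$ pointwise --- exactly what you already obtained ``immediately from the construction'' of $U[d]$. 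What is still missing is the opposite pointwise bound $d(t,u)\le d[U[d]](t,u)$, i.e.\ the conservativity statement that closing the generating set $\{\emptyset\vdash t=_{\ve}u \mid d(t,u)\le\ve\}$ under the QET rules never produces an unconditional equation $\emptyset\vdash t=_{\ve}u$ with $\ve<d(t,u)$. This cannot be extracted from the adjunction alone: in any Galois connection the composite of the two maps is comparable to the identity only in one direction, and that direction is the one you already have.

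The paper closes this hole with a soundness/model construction. It defines an explicit $V\in\QET(\Omega,X)$ by declaring $\Gamma\vdash t=_{\ve}u\in V$ iff for every substitution $\sigma\colon X\to T_\Omega X$ satisfying the hypotheses of $\Gamma$ with respect to $d$ one has $d(\sigma\kl(t),\sigma\kl(u))\le\ve$; it then verifies that $V$ is closed under \emph{all} the rules of Figure~\ref{fig:qet} --- reflexivity, symmetry and the triangle inequality of $d$ give (Ref), (Sym), (Tri), congruence gives (Nonexp), substitutivity gives (Subst), and monotonicity/completeness give (Max) and (Arch) --- and that the unconditional equations in $V$ are precisely those with $d(t,u)\le\ve$. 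Minimality of $U[d]$ then forces $U[d]\subseteq V$, whence $\emptyset\vdash t=_{\ve}u\in U[d]$ implies $d(t,u)\le\ve$ and so $d(t,u)\le d[U[d]](t,u)$. You need this construction (or an equivalent induction over derivations in $U[d]$), using the full CS-EPMet structure of $d$; as written, your argument establishes only $d[U[d]]\le d$ pointwise, twice.
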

\begin{appendixproof}
  (Proof of Theorem \ref{thm:adjunction:QETandCSEPMet})
  We first show $(\Gen {-})_X = \mathrm{id}$.
  Let $d \in \CSEPMet(T_\Omega, X)$.
  We fix arbitrary $t,u \in T_\Omega X$.
  From the substitutivity of $d$, we have $ d (k\kl (t), k\kl (u)) \leq d (t,u)$, but
  we can take $k = \eta_X$, we obtain
  \[
  \Gen dXC(t,u) = \sup_{k \colon X \to TX} d (k\kl (t), k\kl (u)) = d (t,u).
  \]
  Since $d,t,u$ are arbitrary, we conclude $(\Gen {-})_X = \mathrm{id}$.
  
  We show $\Gen {(-)_X} = \mathrm{id}$.
  Let $\asgn \in \mathbf{DivEPMet}(T_\Omega, X)$.
  By the $X$-generatedness of $\asgn$, 
  we have for all set $I$ and $t,u \in T_\Omega I$,
  \[
  \Gen {(\asgn)_X}_I (t,u) = \sup_{k \colon I \to TX} \asgn_X (k\kl (t),k\kl (u)) = \asg {} I (t,u).  
  \]
  Since $\asgn,I,t,u$ are arbitrary, we conclude $(\Gen {-})_X = \mathrm{id}$.
  
  We show the adjointness:
  $U[d] \subseteq V \iff d \geq d[V]$ for any $V \in \QET(\Omega,X)$ and $d \in \CSEPMet(T_\Omega, X)$.
  \begin{align*}
    U[d] \subseteq V
    &\iff 
      \QETC{\left\{\emptyset \vdash \condeq t u  \ve \sep \ve \in \QQ^+, d (t,u) \leq \ve \right\}} \Omega X \subseteq V\\
    &\iff \fa{t,u \in T_\Omega, \ve \in \QQ^+} d (t,u) \leq \ve \implies \emptyset \vdash \condeq t u  \ve \in V\\
    &\iff \fa{t,u \in T_\Omega, \ve \in \QQ^+} d (t,u) \leq \ve \implies \inf\left\{\ve' \in \QQ^+ \sep \emptyset \vdash \condeq t u  {\ve'}  \in V\right\} \leq \ve\\
    &\iff \fa{t,u \in T_\Omega} \inf\left\{\ve' \in \QQ^+ \sep \emptyset \vdash \condeq t u  {\ve'}  \in V\right\} \leq d (t,u)\\
    &\iff d \geq d[V]
  \end{align*}
  We notice that since $V$ is closed under \eqref{QET:axiom:Max}, \eqref{QET:axiom:Arch} and \eqref{QET:axiom:Cut}, we have the equivalence
  \begin{align*}
    \lefteqn{\inf\left\{\ve' \in \QQ^+ \sep \emptyset \vdash \condeq t u  {\ve'} \in V\right\} \leq \ve}\\
    &\implies
      (\fa{\ve' \in \QQ^+} \ve' >  \ve \implies \emptyset \vdash \condeq t u  {\ve'} \in V)\\
    &\implies
      \emptyset \vdash \condeq t u  \ve  \in V\\
    &\implies
    \inf\left\{\ve' \in \QQ^+ \sep \emptyset \vdash \condeq t u  {\ve'} \in V\right\} \leq \ve.
  \end{align*}

  We finally show $d[U[-]] = \mathrm{id}_{\CSEPMet(T_\Omega, X)}$.
  From the adjointness, $d[U[d]] \leq d$ holds for each $d \in \CSEPMet(T_\Omega, X)$.
  We can rewrite $d \leq d[U[d]]$ as follows:
  \begin{align*}
    d \leq d[U[d]]
    &\iff
      \fa{t,u \in T_\Omega} d(t,u) \leq d[U[d]](t,u)\\
    &\iff 
      \fa{t,u \in T_\Omega} d(t,u) \leq \inf \left\{ \ve \in \QQ^+ \sep \emptyset
      \vdash \condeq {t}{u}{\ve} \in U[d] \right\}\\
    &\iff 
      \fa{t,u \in T_\Omega,\ve \in \QQ^+} 
      \implies d(t,u) \leq \ve\\
    &\iff
      \{\emptyset \vdash \condeq {t}{u}{\ve} \in u[d] \}
      \subseteq \{ \emptyset \vdash \condeq {t}{u}{\ve} |  d(t,u) \leq \ve \}.
  \end{align*}
  Thanks to the minimality of $U[d]$, it suffices to have a QET $V \in \QET(\Omega,X)$ such that 
  \[
    \{\emptyset \vdash \condeq {t}{u}{\ve} \in V \} = \{ \emptyset \vdash \condeq {t}{u}{\ve} |  d(t,u) \leq \ve \}.
  \]
  Inspired from the definition of models of QET \mcite{bacci_et_al:LIPIcs.CALCO.2021.7},
  we define $V$ as follows: 
  \begin{align*}
    &{\Gamma \vdash \condeq {t}{u}{\ve}} \in V\\
    &\iff \fa{\sigma \colon X \to T_\Omega X}\left(\left(\fa{\condeq {t'}{u'}{\ve'} \in\Gamma}d(\sigma\kl(t'),\sigma\kl(u')) \leq \ve' \right) \implies d(\sigma\kl(t),\sigma\kl(u)) \leq \ve\right).
  \end{align*}
  By the substitutivity of $d$ and the definition of $V$, we obtain for all $t,u \in T_\Omega X$ and $\ve \in \QQ^+$,
  \[
  {\emptyset \vdash \condeq {t}{u}{\ve}} \in V
  \iff 
  (\fa{\sigma \colon X \to T_\Omega X} d(\sigma\kl(t),\sigma\kl(u)) \leq \ve)
  \iff 
   d(t,u) \leq \ve.
  \]
  We check that $V$ satisfies all rules of QET:
  
  \eqref{QET:axiom:Ref} Immediate from the reflexivity of $d$.
  
  \eqref{QET:axiom:Sym} Immediate from the symmetry of $d$.
  
  \eqref{QET:axiom:Tri} Immediate from the triangle-inequality of $d$.
  
  \eqref{QET:axiom:Max} Immediate from the transitivity of ordering $\leq$ and the monotonicity of $+$.
  
  \eqref{QET:axiom:Arch} Immediate from the Archimedean property and the completeness of $[0,\infty]$.
  
  \eqref{QET:axiom:Nonexp}
  Let $f \colon |I| \in \Omega$.
  We then take a term $t_f \in T_{\Omega} I$ corresponding to $f$.
  Let $t, s : I \rightarrow T_{\Omega} X$ be functions.
  We fix an arbitrary $\sigma \colon X \to T_{\Omega} X$.  
  Assume $d (\sigma\kl(t (i)), \sigma\kl(s (i))) \leq \ve$ for each $i \in I$.
  Then this asserts $\sup_{i \in I} d (\sigma\kl(t (i)), \sigma\kl(s (i))) \leq \ve$.
  From the congruence of $d$, we conclude 
  \[
    d (\sigma\kl (f (t(i)|i \in I)), \sigma\kl(f (s(i)|i \in I) ) )
    =
    d (\sigma\kl(t\kl (t_f)), \sigma\kl(s\kl (t_f)))
    \leq
    \sup_{i \in I} d (\sigma\kl(t (i)), \sigma\kl(s (i))) \leq \ve.
  \]
  
  \eqref{QET:axiom:Subst} Immediate by definition of $V$:
  \begin{align*}
    &{\Gamma \vdash \condeq {t}{u}{\ve}} \in V\\
    &\iff \fa{\sigma \colon X \to T_\Omega X}\left(\left(\fa{\condeq {t'}{u'}{\ve'} \in\Gamma}d(\sigma\kl(t'),\sigma\kl(u')) \leq \ve' \right) \implies d(\sigma\kl(t),\sigma\kl(u)) \leq \ve\right)\\
    &\implies 
\fa{\sigma' \colon X \to T_\Omega X}\fa{\sigma \colon X \to T_\Omega X}\left(
\begin{aligned}
&\left(\fa{\condeq {t'}{u'}{\ve'} \in\Gamma}d(\sigma\kl({\sigma'}\kl(t')),\sigma\kl({\sigma'}\kl(u'))) \leq \ve' \right) \\
&\qquad \implies d(\sigma\kl({\sigma'}\kl(t)),\sigma\kl({\sigma'}\kl(u))) \leq \ve
\end{aligned}
\right)\\
    &\implies 
\fa{\sigma' \colon X \to T_\Omega X}\fa{\sigma \colon X \to T_\Omega X}\left(
\begin{aligned}
&\left(\fa{\condeq {t''}{u''}{\ve'} \in\sigma'(\Gamma)}d(\sigma\kl(t''),\sigma\kl(u'')) \leq \ve' \right) \\
&\qquad \implies d(\sigma\kl({\sigma'}\kl(t)),\sigma\kl({\sigma'}\kl(u))) \leq \ve
\end{aligned}
\right)\\
	&\iff
	\fa{\sigma'\colon X \to T_\Omega X} {\sigma'}(\Gamma) \vdash \condeq {{\sigma'}\kl(t)}{{\sigma'}\kl(u)}{\ve} \in V.
  \end{align*}
  
  \eqref{QET:axiom:Cut} Immediate.
  
  \eqref{QET:axiom:Assumpt} Immediate.
\end{appendixproof}

In the proof of this theorem, we used the definition of models of QET
\mcite{bacci_et_al:LIPIcs.CALCO.2021.7}.  Intuitively, the right
adjoint $d[-]$ extracts the pseudometric on $T_\Omega X$ from a given
QET.  The left adjoint $U[-]$ constructs the least QET containing all
information of a given pseudometric on $T_\Omega X$.  The adjunction
\eqref{eq:qetadj} also implies that we can construct monads on the
category of extended metric spaces from CS-EPMets by Mardare et al.'s
metric term monad construction
\mcite{10.1145/2933575.2934518}. Overall adjunction \eqref{eq:qetdiag}
says that $X$-generated divergences can be axiomatized with QETs whose
variable set is $X$.

The range of $U[-]$ is a subset of $\mathbf{UQET}(\Omega,X)$ of \emph{unconditional QETs}
defined below (See also \cite[Section 3]{MardarePanangadenPlotkinLICS17}):
\[
\mathbf{UQET}(\Omega,X) 
\triangleq \left\{
V \in \QET (\Omega,X)
\sep
\exists S \subseteq \{ \emptyset \vdash\condeq {t}{u}{\ve} ~|~ t,u \in T_\Omega X, \ve \in \QQ^+ \}.~ V = \QETC S \Omega X
\right\}. 
\]
Unconditional QETs of type $\Omega$ over $X$ are \emph{equivalent to} $X$-generated divergence on $T_\Omega$: 
restricting QETs to unconditional QETs, the adjunction \eqref{eq:qetadj} becomes a pair of isomorphisms.
\begin{therm}\label{thm:adjunction:UQETandCSEPMet}
  $
  (\mathbf{UQET}(\Omega,X),\subseteq)
  \cong(\CSEPMet(T_\Omega, X),\preceq)
  \cong(\mathbf{DivEPMet}(T_\Omega,X),\preceq)
  $.
\end{therm}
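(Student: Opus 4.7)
The plan is to bootstrap the triple isomorphism entirely from the machinery already assembled in Theorem \ref{thm:adjunction:QETandCSEPMet}. The second isomorphism $\CSEPMet(T_\Omega,X) \cong \mathbf{DivEPMet}(T_\Omega,X)$ follows directly from part (1) of that theorem, which exhibits $\Genn$ and $(-)_X$ as mutually inverse monotone bijections; nothing further is required on this side.

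For the first isomorphism I will restrict the Galois connection $U[-] \dashv d[-]$ from part (2). Since $d[U[-]] = \id$ is already proved there, the left adjoint $U[-]$ is a full order embedding, so its image is automatically order-isomorphic to $(\CSEPMet(T_\Omega,X),\preceq)$. The task therefore reduces to identifying this image as exactly $\mathbf{UQET}(\Omega,X)$.

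One inclusion is immediate by construction: $U[d]$ is defined as the QET closure of the unconditional set $\{\emptyset \vdash \condeq t u \ve \mid \ve \in \QQ^+,\ d(t,u) \leq \ve\}$, hence lies in $\mathbf{UQET}$. For the converse, I will take any $V \in \mathbf{UQET}(\Omega,X)$ with a generating set $S$ of unconditional equations, so that $V = \QETC{S}\Omega X$, and establish $V = U[d[V]]$. The inclusion $U[d[V]] \subseteq V$ is the counit of the adjunction. For the reverse inclusion, minimality of $V$ as the least QET containing $S$ reduces the goal to showing $S \subseteq U[d[V]]$. Given any $(\emptyset \vdash \condeq t u \ve) \in S$, it lies in $V$, so by the definition of $d[V]$ as an infimum we obtain $d[V](t,u) \leq \ve$; this in turn places $(\emptyset \vdash \condeq t u \ve)$ into the generating set of $U[d[V]]$, as required.

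I do not anticipate any genuine obstacle: the substantive content is packaged in Theorem \ref{thm:adjunction:QETandCSEPMet}, and ``being unconditional'' is visibly the fixed-point condition for the $U[-] \circ d[-]$ monad on $\QET$. The only subtle step is the careful unpacking showing that each generating equation of an unconditional $V$ is recovered by $U[d[V]]$, and that step is a short calculation with infima, so the overall argument amounts to bookkeeping once the adjunction is in hand.
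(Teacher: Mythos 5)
Your proposal is correct and follows essentially the same route as the paper: the second isomorphism is read off from part (1) of Theorem \ref{thm:adjunction:QETandCSEPMet}, and the first comes from the adjunction $U[-]\dashv d[-]$ with $d[U[-]]=\id$, plus the fixed-point argument showing $V=U[d[V]]$ for unconditional $V$ (counit for one inclusion; for the other, each generating equation $\emptyset\vdash t=_{\ve}u$ of $V$ gives $d[V](t,u)\le\ve$, so minimality of the closure yields $V\subseteq U[d[V]]$). The paper phrases this as the restricted maps being mutually inverse rather than as identifying the image of $U[-]$ with $\mathbf{UQET}(\Omega,X)$, but the content is the same.
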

\begin{appendixproof}
(Proof of Theorem \ref{thm:adjunction:UQETandCSEPMet})
Since the range of $U[-]$ is a subset of $\mathbf{UQET}(\Omega,X)$, 
we may define the following monotone restrictions of $U[-]$ and $d[-]$:
\begin{align*}
U'[-] &\colon (\CSEPMet(T_\Omega, X),\preceq) \to (\mathbf{UQET}(\Omega,X),\subseteq) && U'[d] \triangleq U[d] \quad (d \in  \CSEPMet(T_\Omega, X)),\\
d'[-] &\colon (\mathbf{UQET}(\Omega,X),\subseteq) \to (\CSEPMet(T_\Omega, X),\preceq) && d'[V] \triangleq d[V] \quad (V \in \mathbf{UQET}(\Omega,X)).
\end{align*}
By Theorem \ref{thm:adjunction:QETandCSEPMet}, we have $U'[-] \vdash d'[-]$ and $d'[U'[-]] = \id$.
We show $U'[d'[-]] = \mathrm{id}$.
Let $V \in \mathbf{UQET}(\Omega,X)$. There exists $S \subseteq \{ \emptyset \vdash\condeq {t}{u}{\ve} ~|~ t,u \in T_\Omega X, \ve \in \QQ^+ \}$ such that $V = \QETC S \Omega X$.
We check $U'[d'[V]] = V$.
By the adjunction $U'[-] \dashv d'[-]$, we have $U'[d'[V]] \subseteq V$ which is equivalent to $d'[V] \preceq d'[V]$.
It suffices to check $V \subseteq U'[d'[V]]$.
We have 
\begin{align*}
\lefteqn{\emptyset \vdash\condeq {t}{u}{\ve} \in S}\\
&\implies
\emptyset \vdash\condeq {t}{u}{\ve} \in V\\
&\implies d'[V](t,u)  = \inf\{ \ve' \in \QQ^+~|~ \emptyset  \vdash\condeq {t}{u}{\ve'} \in V \} \leq \ve
\end{align*}
From the monotonicity of the closure $\QETC{(-)}\Omega X$, we conclude
\[
V = \QETC S \Omega X \subseteq \QETC{\{ \emptyset \vdash\condeq {t}{u}{\ve} ~|~ d'[V](t,u) \leq \ve \}}  \Omega X = U'[d'[V]].
\]
Since $V \in \mathbf{UQET}(\Omega,X)$ is arbitrary, we have $U'[d'[-]] = \mathrm{id}$.
\end{appendixproof}
\section{Graded Strong Relational Liftings for Divergences}
\label{sec:codensity_lifting}

We have introduced the concept of divergence on monad for measuring
quantitative difference between two computational effects. To
integrate this concept with relational program logic, we employ a
semantic structure called {\em graded strong relational lifting} of
monad. It is introduced for the semantics of approximate probabilistic
relational Hoare logic for the verification of differential privacy
\mcite{DBLP:conf/popl/BartheKOB12}, then later used in various program
logics
\mcite{DBLP:conf/icalp/BartheO13,DBLP:conf/csfw/BartheGAHKS14,DBLP:conf/popl/BartheGAHRS15,DBLP:journals/entcs/Sato16,DBLP:conf/lics/SatoBGHK19}.
Independently, it is also introduced as a semantic structure for effect
system \mcite{DBLP:conf/popl/Katsumata14}. Liftings introduced in the
study of differential privacy are designed to satisfy a special
property called {\em fundamental property} \cite[Theorem
1]{DBLP:conf/popl/BartheKOB12}: when we supply the equivalence
relation to the lifting, it returns the adjacency relation of the
divergence. This special property is the key to express the
differential privacy of probabilistic programs in relational program
logics.

In this paper, we present a {\em general construction} of graded
strong relational liftings from divergences on monads.  First, we
recall its definition
\mcite{DBLP:conf/popl/Katsumata14,DBLP:conf/esop/GaboardiKOS21}.
\begin{definition}\label{def:graded_strong_lifting}
  Let $(\CC,T)$ be a CC-SM and $(M, \leq, 1, (\cdot))$ be a
  grading monoid. An {\em $M$-graded strong relational
    lifting} $\dot{T}$ of $T$ is a mapping
  $\dot{T} : M \times \Obj{\BRel\CC} \rightarrow \Obj{\BRel\CC}$
  satisfying the following conditions:
  \begin{enumerate}
  \item \label{con:inclusion} $\brelfn\CC(\dot{T} m X)=(T X_1, T X_2)$,
    and $m \leq m'$ implies $\dot{T} m X \leq \dot{T} m' X$.
  
  \item \label{con:unit}
    $(\eta_{X_1}, \eta_{X_2}) : X \darrow \dot{T} 1 (X)$.
  
  \item \label{con:kl} $(f_1, f_2) : X \darrow \dot{T} m (Y)$ implies
    $(f\kl_1, f\kl_2) : \dot{T} m' X \darrow \dot{T} (m \cdot m') Y$.
  
  \item \label{con:str}
    $(\theta_{X_1, Y_1}, \theta_{X_2, Y_2}) : X \dtimes \dot{T} m Y
    \darrow \dot{T} m (X \dtimes Y)$.
  \end{enumerate}
\end{definition}
\renewcommand{\TT}{\top\!\top}

Our interest is in the graded strong relational lifting that carries
the information of a given divergence $\asgn\in\mDiv TM\qQ E$. We
identify such liftings by the following {\em fundamental
  property}. First define the {\tmem{adjacency relation}} of
$\asgn$ by
\begin{equation}\label{eq:divergence:adjacency}
 \tilde{\asgn} (m, v) I \triangleq (T I, T I, \{ (c_1, c_2)~|~ \asg mI (c_1, c_2) \leq v \})
 \quad
 (m \in M, v \in \qQ ,I \in \CC).
\end{equation}
Note that $\tilde\asgn$ is monotone on $m$ and $v$.
\begin{definition}
  We say that an $M\times \qQ $-graded strong relational lifting
  $\dot T$ of $T$ satisfies the {\em fundamental property} with
  respect to $\asgn\in\mDiv TM\qQ E$ if the following holds:
  \begin{displaymath}
    \dot T (m, v) (E I) = \tilde\asgn (m, v) I\quad
    (m\in M,v\in \qQ,I\in\CC).
  \end{displaymath}
\end{definition}
\begin{therm}\label{th:fund}
  Let $(\CC,T)$ be a CC-SM, $(M, \leq, 1, (\cdot))$ be a grading
  monoid, $\qQ$ be a divergence domain and
  $\asgn = \{\asg m I \colon (U(TI))^2\to \qQ \}_{m\in M,I\in\CC}$ be
  a doubly-indexed family of $\qQ$-divergences satisfying monotonicity
  on $m$ (Definition \ref{def:div}).  Define the following mapping
  $\coden T\asgn:(M\times\qQ)\times\Obj{\brelc}\to\Obj{\brelc}$:
  \begin{align*}
    \coden T\asgn (m,v) X\triangleq
    (TX_1,TX_2,\{(c_1,c_2)~|~&\fa{I\in\CC, n\in M, w\in\qQ, (k_1, k_2) : X\dto \tilde\asgn (n, w) I} \\
    &\quad (k_1\kl \ap c_1, k_2\kl \ap c_2) \in \tilde\asgn (m \cdot n, v + w) I\})
  \end{align*}
  \begin{enumerate}
  \item \label{pp:grastrrellift} The mapping $\coden T\asgn$ is an
    $M \times \qQ$-graded strong relational lifting of $T$.
  \item \label{lem:eq} Let $E:\CC\to\brelc $ be a basic
    endorelation. Then
    \begin{align}
      \text{$\asgn$ is $E$-unit-reflexive}
      &\iff
        \fa{I \in \CC,(m,v)\in M\times \qQ }\coden T{\asgn}(m,v)(EI) \le
        \tilde\asgn (m,v) I
        \tag{S} \label{property1:unitref=>soundness}
      \\
      \text{$\asgn$ is $E$-composable}
      &\iff
        \fa{I \in \CC,(m,v)\in M\times \qQ }\coden T{\asgn}(m,v)(EI) \ge
        \tilde\asgn (m,v) I.
        \tag{C} \label{property2:comp=>completeness}
    \end{align}
  \end{enumerate}
\end{therm}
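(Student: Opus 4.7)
The plan is to verify part 1 directly from the definition of graded strong relational lifting, and to derive both equivalences in part 2 by strategically plugging canonical probes into the codensity definition.

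For part 1, I will verify the four conditions of Definition \ref{def:graded_strong_lifting} in turn. The projection $\brelfn\CC(\coden T\asgn(m,v)X)=(TX_1,TX_2)$ holds by construction. Monotonicity in $(m,v)$ reduces, via \eqref{eq:divergence:adjacency}, to the assumed monotonicity of $\asgn$ in $m$ and the monotonicity of $(\cdot)$ and $(+)$ in the divergence domain. For the unit condition, given $(x_1,x_2)\in X$ and any probe $(k_1,k_2):X\dto\tilde\asgn(n,w)I$, the Kleisli-triple identity $k_j\kl\circ\eta_{X_j}=k_j$ collapses $(k_1\kl\ap(\eta_{X_1}\ap x_1),k_2\kl\ap(\eta_{X_2}\ap x_2))$ to $(k_1\ap x_1,k_2\ap x_2)\in\tilde\asgn(n,w)I=\tilde\asgn(1\cdot n,0+w)I$. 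For the Kleisli extension condition, given $(f_1,f_2):X\dto\coden T\asgn(m,v)Y$ and $(c_1,c_2)\in\coden T\asgn(m',v')X$, I probe $(f_1\kl\ap c_1,f_2\kl\ap c_2)$ with $(k_1,k_2):Y\dto\tilde\asgn(n,w)I$: nonexpansivity of $(f_1,f_2)$ promotes this to $(k_1\kl\circ f_1,k_2\kl\circ f_2):X\dto\tilde\asgn(m\cdot n,v+w)I$, and then the defining property of $(c_1,c_2)$ combined with the Kleisli-triple identity $(g\kl\circ h)\kl=g\kl\circ h\kl$ delivers the required membership in $\tilde\asgn((m'\cdot m)\cdot n,v'+v+w)I$. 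For strength, I invoke \eqref{eq:stun} to rewrite $\theta_{X_j,Y_j}\ap\langle x_j,c_j\rangle=((\eta_{X_j\times Y_j})_{x_j})\kl\ap c_j$, reducing the strength verification to the Kleisli-extension case probed by the partial applications $((k_1)_{x_1},(k_2)_{x_2}):Y\dto\tilde\asgn(n,w)I$.

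For part 2, I begin by restating both sides in terms of test maps. $E$-unit-reflexivity is equivalent to $(\eta_I,\eta_I):EI\dto\tilde\asgn(1,0)I$ by inspection of \eqref{eq:divergence:adjacency}, while $E$-composability rephrases as: for all $(c_1,c_2)\in\tilde\asgn(m,v)I$ and $(k_1,k_2):EI\dto\tilde\asgn(n,w)I'$, the pair $(k_1\kl\ap c_1,k_2\kl\ap c_2)$ lies in $\tilde\asgn(m\cdot n,v+w)I'$. With these reformulations in hand, the forward direction of \eqref{property1:unitref=>soundness} (from $E$-unit-reflexivity to the inequality) follows by plugging the probe $(\eta_I,\eta_I)$ into the definition of $\coden T\asgn(m,v)(EI)$ and using $\eta_I\kl=\id$, and the forward direction of \eqref{property2:comp=>completeness} (from $E$-composability to the inequality) is an immediate rereading of the rewritten composability.

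The converse directions use the properties established in part 1. For \eqref{property1:unitref=>soundness} (from the inequality to $E$-unit-reflexivity), the unit condition of $\coden T\asgn$ yields $(\eta_I,\eta_I):EI\dto\coden T\asgn(1,0)(EI)$ unconditionally; composing with the hypothesis $\coden T\asgn(1,0)(EI)\le\tilde\asgn(1,0)I$ recovers $E$-unit-reflexivity. For \eqref{property2:comp=>completeness} (from the inequality to $E$-composability), given $c_1,c_2\in U(TI)$, $f_1,f_2:I\to TJ$, and grades $m_1,m_2$, I set $v=\asg{m_1}I(c_1,c_2)$ and $w=\sup_{(x_1,x_2)\in EI}\asg{m_2}J(f_1\ap x_1,f_2\ap x_2)$, so that $(c_1,c_2)\in\tilde\asgn(m_1,v)I\subseteq\coden T\asgn(m_1,v)(EI)$ by hypothesis and $(f_1,f_2):EI\dto\tilde\asgn(m_2,w)J$; instantiating the defining property of the codensity lifting extracts the composability bound with right-hand side $v+w$. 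The only subtle point in the whole proof is the bookkeeping in the Kleisli-extension case of part 1, where the monoid product in the grade component and the pointwise sum in the divergence-domain component must be coordinated through the Kleisli-triple identity; everything else is a direct substitution of canonical probes into the codensity definition.
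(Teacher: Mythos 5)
Your proposal is correct and takes essentially the same route as the paper's own proof: conditions 1--3 are the same routine probe manipulations, the strength condition is handled exactly as in the paper via the identity \eqref{eq:stun} together with the reduction of a probe on $X\dtimes Y$ to its partial applications $((k_1)_{x_1},(k_2)_{x_2})$, and part 2 is obtained by probing with $(\eta_I,\eta_I)$ (using $k\kl\circ\eta=k$ and $\eta\kl=\id$) and, for composability, by instantiating the codensity definition at $v=\asg{m_1}I(c_1,c_2)$ and $w=\sup_{(x_1,x_2)\in EI}\asg{m_2}J(f_1\ap x_1,f_2\ap x_2)$, which is precisely the paper's chain of equivalences read in both directions. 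Your grade bookkeeping $(m'\cdot m)\cdot n$ in the Kleisli step agrees with the order actually used in the paper's appendix computation, so no gap there.
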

The construction of $\coden T\asgn$ is a graded extension of the
{\tmem{codensity lifting}}
\mcite{DBLP:journals/entcs/Sato16,DBLP:journals/lmcs/KatsumataSU18}.
The remainder of this section is the proof of Theorem \ref{th:fund}.
\begin{toappendix}
  \begin{lemma}\label{lem:grastrrellift-1}
    Let $(\CC,T)$ be a CC-SM and
    $ \asgn = \{\asg m I \colon (U(TI))^2\to \qQ \}_{m\in M,I\in\CC}$
    be a doubly-indexed family of $\qQ$-divergences satisfying
    monotonicity on $m$ (Definition \ref{def:div}).  Then
    $\coden T\asgn$ is an $M \times \qQ $-graded relational lifting of
    $T$ (satisfies conditions \ref{con:inclusion}--\ref{con:kl} of
    Definition \ref{def:graded_strong_lifting}).
  \end{lemma}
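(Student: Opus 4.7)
The plan is to verify the three conditions of Definition~\ref{def:graded_strong_lifting} one by one, unfolding the universally quantified definition of $\coden{T}{\asgn}$ and exploiting the monotonicity of $\asgn$, the monad laws, and the monotonicity of the monoid operations on $M$ and $\qQ$. Throughout, the ordering on $M \times \qQ$ is the product order, and the monoid unit is $(1,0)$.

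For Condition~\ref{con:inclusion}, the equality $\brelfn{\CC}(\coden{T}{\asgn}(m,v)X) = (TX_1, TX_2)$ is immediate from the definition. For monotonicity, I would take $(m,v) \le (m',v')$ and observe that $m \cdot n \le m' \cdot n$ in $M$ and $v + w \le v' + w$ in $\qQ$ for every $n,w$, whence the monotonicity of $\asgn$ (hence of the adjacency relations $\tilde\asgn$) yields $\tilde\asgn(m\cdot n, v+w)I \subseteq \tilde\asgn(m'\cdot n, v'+w)I$. Thus the defining implication for membership in $\coden{T}{\asgn}(m,v)X$ implies that for $\coden{T}{\asgn}(m',v')X$, giving the desired inclusion of relations.

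For Condition~\ref{con:unit}, I would fix $(x_1, x_2) \in X$ and an arbitrary test $(k_1, k_2) : X \dto \tilde\asgn(n, w) I$, and aim for $(k_1\kl \ap (\eta_{X_1} \ap x_1), k_2\kl \ap (\eta_{X_2} \ap x_2)) \in \tilde\asgn(1 \cdot n, 0 + w) I$. By the left unit law of the Kleisli extension, $k_i\kl \ap \eta_{X_i} \ap x_i = k_i \ap x_i$, and the hypothesis on $(k_1,k_2)$ yields $(k_1 \ap x_1, k_2 \ap x_2) \in \tilde\asgn(n, w) I$, matching the goal after applying the unit laws of $M$ and $\qQ$.

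Condition~\ref{con:kl} is the substantive step. Suppose $(f_1, f_2) : X \darrow \coden{T}{\asgn}(m,v)Y$ and take $(c_1, c_2) \in \coden{T}{\asgn}(m',v')X$; I want to show $(f_1\kl \ap c_1, f_2\kl \ap c_2) \in \coden{T}{\asgn}(m \cdot m', v + v')Y$. Fix an arbitrary test $(k_1, k_2) : Y \dto \tilde\asgn(n, w) I$. The key move is to observe that the composite pair $(k_1\kl \circ f_1, k_2\kl \circ f_2)$ is itself a valid test for $X$: unpacking $(f_1, f_2) : X \darrow \coden{T}{\asgn}(m,v)Y$ against the test $(k_1, k_2)$ gives $(k_1\kl \ap f_1 \ap x_1, k_2\kl \ap f_2 \ap x_2) \in \tilde\asgn(m \cdot n, v + w) I$ for every $(x_1, x_2) \in X$, i.e.\ $(k_1\kl \circ f_1, k_2\kl \circ f_2) : X \dto \tilde\asgn(m \cdot n, v + w) I$. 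Instantiating the definition of $\coden{T}{\asgn}(m',v')X$-membership of $(c_1, c_2)$ with this composite test yields $((k_1\kl \circ f_1)\kl \ap c_1, (k_2\kl \circ f_2)\kl \ap c_2) \in \tilde\asgn(m' \cdot (m \cdot n), v' + (v + w)) I$; associativity of Kleisli extension rewrites the left-hand side as $(k_1\kl \ap (f_1\kl \ap c_1), k_2\kl \ap (f_2\kl \ap c_2))$, and reassociating the grades gives the required membership in $\tilde\asgn((m \cdot m') \cdot n, (v + v') + w)I$. The main obstacle will simply be the careful bookkeeping of the nested grade and divergence budgets produced by this two-layer unfolding of the codensity definition, together with matching the convention for the multiplication order in $M$.
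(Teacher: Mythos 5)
Your proof is correct and follows essentially the same route as the paper's: unfold the codensity definition, use monotonicity of $\asgn$ and of the monoid operations for condition 1, the unit law $k\kl\circ\eta=k$ for condition 2, and the key observation that $(k_1\kl\circ f_1,k_2\kl\circ f_2)$ is again a test on $X$ for condition 3. One caveat: the step you call ``reassociating'' is really a commutation, since the two-layer unfolding yields grade $m'\cdot(m\cdot n)$, i.e.\ membership in $\coden T\asgn(m'\cdot m,\,v'+v)Y$ rather than $\coden T\asgn(m\cdot m',\,v+v')Y$, and these differ for a noncommutative grading monoid; however, the paper's own appendix proof states and proves the Kleisli condition in exactly this argument-first order ($n\cdot m$, $w+v$), so your argument matches the paper modulo that orientation convention, which you already flagged.
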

  \begin{proof}
    (Condition \ref{con:inclusion}) We first show that
    $(\mathrm{id}_{TX_1},\mathrm{id}_{TX_2}) \in \BRel \CC (\coden
    T\asgn (m,v) X,\coden T\asgn (n,w) X)$ for all $X$ whenever
    $m \leq n$ and $v \leq w$.  From the monotonicity of $\asgn$, for
    all $I \in \CC$, $c'_1, c'_2 \in U(TI)$, $n' \in M$,$w' \in \qQ$,
    we have
    \begin{align*}
      \lefteqn{(c'_1,c'_2) \in  \tilde\asgn(m\cdot n',v + w')I}\\
      &\iff \asg {m \cdot n'}I  (c'_1,c'_2) \leq v + w'\implies \asg {n \cdot n'} I  (c'_1,c'_2) \leq v + w'\implies \asg {n \cdot n'} I  (c'_1,c'_2) \leq w + w'\\
      &\iff (c'_1,c'_2) \in \tilde\asgn(n \cdot n',w + w')I.
    \end{align*}
    Therefore, for any $(c_1,c_2) \in \coden T\asgn (m,v) X$, we
    obtain $(c_1,c_2)\in\coden T\asgn (n,w) X$ as follows:
    \begin{align*}
      \lefteqn{(c_1,c_2)\in\coden T\asgn (m,v) X}\\
      &\iff
        \forall I\in\CC, n'\in M, w'\in\qQ, (k_1, k_2) :X\dto \tilde\asgn (n', w') I~.~(k_1\kl \ap c_1, k_2\kl \ap c_2) \in \tilde\asgn (m \cdot n', v + w') I\\
      &\implies
        \forall I\in\CC, n'\in M, w'\in\qQ, (k_1, k_2) : X\dto \tilde\asgn (n', w') I~.~(k_1\kl \ap c_1, k_2\kl \ap c_2) \in \tilde\asgn (n \cdot n', w + w') I\\
      &\iff (c_1,c_2)\in\coden T\asgn (n,w) X.
    \end{align*}

    (Condition \ref{con:unit}) We next show
    $(\eta_{X_1},\eta_{X_2}) :X\dto\coden T\asgn (1,0) X$.  From the
    definition of morphisms in $\BRel \CC$, for all $(x_1,x_2) \in X$,
    we have
    $(\eta_{X_1} \ap x_1, \eta_{X_2} \ap x_2) \in \coden T\asgn (1,0)
    X$ as follows:
    \begin{align*}
      \lefteqn{(x_1,x_2) \in X}\\
      &\implies \fa{ I\in\CC, n\in M, w\in\qQ, (k_1, k_2) :X\dto \tilde\asgn (n, w) I}
    	(k_1 \ap x_1, k_2 \ap x_2) \in \tilde\asgn (n, w) I
      \\
      & \iff \fa{I\in\CC, n\in M, w\in\qQ, (k_1, k_2) : X\dto \tilde\asgn (n, w) I}
        ((k_1\kl \circ \eta_{X_1}) \ap x_1), (k_2\kl \circ \eta_{X_2}) \ap x_2) \in \tilde\asgn (n, w) I
      \\
      & \iff \fa{I\in\CC, n\in M, w\in\qQ, (k_1, k_2) : X\dto \tilde\asgn (n, w) I}
    	(k_1\kl \ap (\eta_{X_1} \ap x_1), k_2\kl \ap (\eta_{X_2} \ap x_2)) \in \tilde\asgn (n, w) I
      \\
      &\iff(\eta_{X_1} \ap x_1, \eta_{X_2} \ap x_2) \in \coden T\asgn (1,0) X.
    \end{align*}

    (Condition \ref{con:kl}) Finally, we show that
    $(f_1\kl,f_2\kl) : \coden T\asgn (n,w) X\dto\coden T\asgn (n\cdot
    m,w +v) Y$ holds for any $(f_1,f_2) : X\dto\coden T\asgn (m,v) Y$
    and $(n,w)\in M \times \qQ $.  For all
    $(f_1,f_2) : X\dto\coden T\asgn (m,v) Y$, we have
    \begin{align*}
      \lefteqn{(f_1,f_2) :X\dto\coden T\asgn (m,v) Y}\\
      &\iff \fa{(x_1,x_2) \in X}(f_1 \ap x_1,f_2 \ap x_2) \in \coden T\asgn (m,v) Y\\
      &\iff\left(
        \begin{aligned}
          &\fa{(x_1,x_2) \in X, I\in\CC, n'\in M, w'\in\qQ, (k_1, k_2) :Y\dto \tilde\asgn (n', w') I}\\
          & \qquad (k_1\kl \ap (f_1 \ap x_1), k_2\kl \ap (f_2 \ap
          x_2)) \in \tilde\asgn (m \cdot n', v + w') I
        \end{aligned}
            \right)\\
      &\iff
        \left(
        \begin{aligned}
          &\fa{(x_1,x_2) \in X, I\in\CC, n'\in M, w'\in\qQ, (k_1, k_2) :Y\dto \tilde\asgn (n', w') I}\\
          &\qquad((k_1\kl \circ f_1) \ap x_1), (k_2\kl \circ f_2) \ap
          x_2) \in \tilde\asgn (m \cdot n', v + w') I
        \end{aligned}
            \right)\\
      &\iff
 	\left(
        \begin{aligned}
          &\fa{I\in\CC, n'\in M, w'\in\qQ, (k_1, k_2) :Y\dto \tilde\asgn (n', w') I}\\
          &\qquad (k_1\kl \circ f_1,k_2\kl \circ f_2) :
          X\dto\tilde\asgn (m \cdot n', v + w') I
        \end{aligned}
            \right).
            \tag{a}\label{codensity:lifting:conclusion1}
    \end{align*}
    For all $(c_1,c_2) \in \coden T\asgn (n,w) X$, we have
    \begin{align*}
      \lefteqn{(c_1,c_2)\in\coden T\asgn (n,w) X}\\
      &\iff
        \left(
        \begin{aligned}
          &\fa{I\in\CC, n'\in M, w'\in\qQ, (l_1, l_2) :X\dto \tilde\asgn (n', w') I}\\
          & \qquad (l_1\kl \ap c_1, l_2\kl \ap c_2) \in \tilde\asgn (n
          \cdot n', w + w') I
        \end{aligned}
            \right).
            \tag{b}\label{codensity:lifting:conclusion2}
    \end{align*}
  
    We here fix $(f_1,f_2) : X\dto\coden T\asgn (m,v) Y$.  We show
    $(f_1\kl,f_2\kl) \colon \coden T\asgn (n,w) X \dto T\asgn (n \cdot
    m,w + v) Y$.  We also fix $I\in\CC$, $n''\in M$, $w''\in\qQ$ and
    $(k_1, k_2) :Y\dto \tilde\asgn (n'', w'') I$.  From
    \eqref{codensity:lifting:conclusion1}, we obtain
    \[
      (k_1\kl \circ f_1,k_2\kl \circ f_2) : X\dto\tilde\asgn (m \cdot
      n'', v + w'') I.
    \]
    Therefore, by instantiating \eqref{codensity:lifting:conclusion2}
    with $(n',w') = (m \cdot n'',v + w'')$ and
    $(l_1,l_2) = (k_1\kl \circ f_1,k_2\kl \circ f_2)$, for all
    $(c_1,c_2) \in \coden T\asgn (n,w) X$, we have
    \[
      ((k_1\kl \circ f_1)\kl \ap c_1, (k_2\kl \circ f_2)\kl \ap c_2)
      \in \tilde\asgn (n \cdot m \cdot n'', w + v + w'') I.
    \]
    Since $(c_1,c_2) \in \coden T\asgn (n,w) X$, $I\in\CC$,
    $n''\in M$, $w''\in\qQ$ and
    $(k_1, k_2) :Y\dto \tilde\asgn (n'', w'') I$ are arbitrary, we
    conclude
    $(f_1\kl,f_2\kl) \colon \coden T\asgn (n,w) X \dto T\asgn (n \cdot
    m,w + v)$ as follows:
    \begin{align*}
      &\left(\begin{aligned}
          &\fa{(c_1,c_2) \in \coden T\asgn (n,w) X, I\in\CC, m''\in M, v''\in\qQ, (k_1, k_2) : Y \dto\tilde\asgn (m'', v'') I}\\
          &\qquad ((k_1\kl \circ f_1)\kl \ap c_1,(k_2\kl \circ f_2)\kl \ap c_2):X\dto \tilde\asgn (n \cdot m \cdot m'', w + v + v'') I
        \end{aligned}\right)\\
      &\iff
        \left(\begin{aligned}
            &\fa{(c_1,c_2) \in \coden T\asgn (n,w) X, I\in\CC, m''\in M, v''\in\qQ, (k_1, k_2) : Y \dto\tilde\asgn (m'', v'') I}\\
            &\qquad (k_1\kl \ap (f_1\kl \ap c_1),k_2\kl \ap (f_2\kl \ap c_2)):X\dto \tilde\asgn (n \cdot m \cdot m'', w + v + v'') I
          \end{aligned}\right)\\
      &\iff
        \fa{(c_1,c_2) \in \coden T\asgn (n,w) X}(f_1\kl \ap c_1,f_2\kl \ap c_2) \in \coden T\asgn (n \cdot m,w + v) Y\\
      &\iff (f_1\kl,f_2\kl) \colon  \coden T\asgn (n,w) X \dto  T\asgn (n \cdot m,w + v).
    \end{align*}
    This completes the proof.
  \end{proof}
\end{toappendix}
\begin{proof}
  (Proof of (1)) Proving conditions \ref{con:inclusion}-\ref{con:kl}
  of graded strong relational lifting (Definition
  \ref{def:graded_strong_lifting}) are routine generalization of
  \cite{DBLP:journals/lmcs/KatsumataSU18} and \cite[Section
  5]{DBLP:conf/popl/Katsumata14}; thus omitted here (see Lemma
  \ref{lem:grastrrellift-1} in appendix).

  However, condition \ref{con:str} of Definition
  \ref{def:graded_strong_lifting} needs a special attention because in
  general codensity lifting does not automatically lift strength. The
  current setting works because of our particular choice of the
  category of binary relations over $\CC$. We prove condition \ref{con:str} as follows.
  Since $f_i \ap j = f \ap \langle i, j \rangle$ for any $j\in UJ$ holds,
  we have the equivalence
  \begin{align*}
    (f, g) : X \dtimes Y\dto Z
    & \iff  \forall (x, x') \in X, (y, y') \in Y .
      (f\ap{\langle x, y\rangle}, g\ap{\langle x', y'\rangle}) \in Z\\
    & \iff  \forall (x, x') \in X, (y, y') \in Y .
      \left( \left( f_{x} \right)\ap{y}, \left( g_{x'} \right)\ap{y'} \right) \in Z\\
    & \iff  \forall (x, x') \in X.
      (f_{x}, g_{x'}) \colon  Y\dto Z.
  \end{align*}
  From this, condition \ref{con:kl} (law of graded Kleisli extension), and the equation 
  \eqref{eq:stun} on the strength of a CC-SM,
  we prove condition \ref{con:str} from condition \ref{con:unit} (unit law): 
  for all $m \in M$ and $v \in \qQ$, we have 
  \begin{align*}
    &  (\eta_{X_1\times Y_1}, \eta_{X_2\times Y_2}) \colon X\dtimes Y\dto \coden T\asgn (1,0) (X \dtimes Y)\\
    & \iff \fa{(x, x') \in X}
      ((\eta_{X_1\times Y_1})_x, (\eta_{X_2\times Y_2})_{x'}) \colon  Y\dto \coden T\asgn (1,0)  (X \dtimes Y)\\
    & \implies \fa{(x, x') \in X} (((\eta_{X_1\times Y_1})_x)\kl, ((\eta_{X_2\times Y_2})_{x'})\kl) :
      \coden T\asgn (m,v) Y\dto \coden T\asgn (m,v) (X \dtimes Y)\\
    &\iff
    \left(\begin{aligned}
    &\fa{(x, x') \in X, (c_1,c_2) \in \coden T\asgn (m,v) Y}\\
    &\qquad (((\eta_{X_1\times Y_1})_x)\kl\ap c_1, ((\eta_{X_2\times Y_2})_{x'})\kl \ap c_2) \in
     \coden T\asgn (m,v) (X \dtimes Y)
    \end{aligned}\right)\\
    &\iff
    \left(\begin{aligned}
    &\fa{(x, x') \in X, (c_1,c_2) \in \coden T\asgn (m,v) Y} \\
    &\qquad (\theta_{X_1, Y_1}\ap \langle x, c_1\rangle, \theta_{X_2, Y_2}\ap\langle x', c_2\rangle)\in  \coden T\asgn (m,v) (X \dtimes Y)
    \end{aligned}\right)\\
    & \iff  \fa{(x,x') \in X} ((\theta_{X_1, Y_1})_x, (\theta_{X_2, Y_2})_{x'})\colon \coden T\asgn (m,v) Y \dto \coden T\asgn (m,v) (X \dtimes Y)\\
    & \iff  (\theta_{X_1, Y_1}, \theta_{X_2, Y_2}) \colon  X \dtimes \coden T\asgn (m,v) Y\dto \coden T\asgn (m,v) (X\dtimes Y).
  \end{align*}

  (Proof of (2)-\eqref{property1:unitref=>soundness})
  We show the equivalence of $\asgn$ being $E$-unit-reflexive and the
  implication
  \begin{align}
    & \fa{I \in \CC,m\in M,v\in \qQ ,c,c'\in U(TI)} \nonumber \\
    &\qquad( \fa{J \in \CC,m'\in M,v'\in \qQ ,(k,l) : EI\dto\tilde\asgn (m',v') J}
      \asg{m\cdot m'}J(k\kl\ap c,l\kl\ap{c'})\le v+v') \label{eq:middle} \\
    &\qquad\qquad \implies \asg m I(c,c')\le v. \nonumber 
  \end{align}
  We suppose that the above implication holds.  We fix $I \in \CC$.
  Let $(i,j) \in EI$.  By instantiating the whole implication with
  $m=1,v=0,c=\eta_I\ap i,c'=\eta_I\ap j$, the middle part of
  \eqref{eq:middle} becomes
  \begin{displaymath}
    \fa{J \in \CC,m'\in M,v'\in \qQ ,(k,l) : EI\dto\tilde\asgn (m',v')J}
    \asg{m'}J(k\ap i,l\ap j)\le v',
  \end{displaymath}
  which is trivially true. Therefore we conclude
  $\asg m I(\eta_I\ap i,\eta_I\ap j)\le 0$ for any $(i,j) \in EI$,
  that is, $E$-unit reflexivity holds.
  
  Conversely, we suppose that $\asgn$ satisfies the unit-reflexivity.
  We take $I,m,v,c,c'$ of appropriate type and assume the middle part
  of \eqref{eq:middle}.  By instantiating it with
  $J=I,m'=1,v'=0,k=l=\eta_I$, we conclude $\asg m I(c,c')\le v$.
  
  (Proof of (2)-\eqref{property2:comp=>completeness}) We show the
  equivalence of $\asgn$ being $E$-composable and the implication
  $\fa{I \in \CC,m\in M,v\in \qQ}\tilde\asgn I(m,v) \le \coden T \asgn
  I(m,v)(EI)$ as follows:
  \begin{align*}
    \lefteqn{\fa{I \in \CC,m\in M,v\in \qQ}\tilde\asgn I(m,v)\le \coden T \asgn I(m,v)(EI)} \\
    &\iff
      \left(
      \begin{aligned}
        & \fa{I \in \CC,m\in M,v\in \qQ,c,c'\in U(TI)}\\
        & \qquad\asg m I(c,c')\le v \implies\\
        & \qquad\qquad \fa{J \in \CC,m'\in M,v'\in \qQ,(k,l) \colon  EI\dto\tilde\asgn(m',v') J}\\
        & \qquad\qquad \qquad (k\kl\ap c,l\kl\ap c')\in
        \tilde\asgn(m\cdot m',v+v') J
      \end{aligned}\right)\\
    &\iff
      \left(
      \begin{aligned}
        & \fa{I,J \in \CC,m\in M,v\in \qQ,c,c'\in U(TI),m'\in M,v'\in \qQ,k,l\in\CC(I,TJ)}\\
        & \qquad\asg m I(c,c')\le v\implies \\
        & \qquad\qquad(\fa{(i,j) \in EI}(k\ap i,l\ap j)\in \tilde\asgn
        (m',v')I)\implies \asg{m\cdot m'}I(k\kl\ap c,l\kl\ap c')\le
        v+v'
      \end{aligned}\right)\\
    &\iff
      \left(
      \begin{aligned}
        & \fa{I,J \in \CC,m\in M,v\in \qQ,c,c'\in U(TI),m'\in M,v'\in \qQ,k,l\in\CC(I,TJ)}\\
        & \qquad\asg m I(c,c')\le v\implies \\
        & \qquad\qquad\textstyle\sup_{(i,j) \in EI}\asg{m'} J(k\ap
        i,l\ap j)\le v'\implies \asg{m\cdot m'}J(k\kl\ap c,l\kl\ap
        {c'})\le v+v'
      \end{aligned}\right)\\
    &\iff
      \left(
      \begin{aligned}
        & \fa{I,J \in \CC,m\in M,c,c'\in U(TI),m'\in M,k,l\in\CC(I,TJ)}\\
        & \qquad\asg{m\cdot m'}I(k\kl\ap c,l\kl\ap{c'})\le \asg m
        I(c,c')+\textstyle\sup_{(i,j) \in EI}\asg{m'}I(k\ap i,l\ap j).
      \end{aligned}\right).
  \end{align*}
  The first two equivalences are obtained by expanding the definitions
  of $\brelc$, $\coden T \asgn {}$ and $\tilde\asgn$, the last two
  equivalences hold because $\qQ$ is a divergence domain.
\end{proof}

Combining the fundamental property and the strength of $\coden T \asgn$, we 
recover a strength law of divergences. 
\begin{proposition}\label{divergence_strength}
  Let $(\CC,T)$ be a CC-SM, $E:\CC\to\brelc$ be a basic endorelation,
  $(M, \leq, 1, (\cdot))$ be a grading monoid and $\qQ$ be a
  divergence domain.  Suppose also that
  $EI \dtimes EJ \subseteq E(I \times J)$ holds for all $I,J \in \CC$.
  Then each divergence $\asgn\in\mDiv TM\qQ E$ satisfies: for all
  $(x_1,x_2) \in EI$ and $c_1,c_2 \in U(TI)$,
\[
\asg m {I \times J} (\theta_{I,J} \ap \langle x_1,c_1 \rangle ,\theta_{I,J} \ap \langle x_2,c_2 \rangle)
\leq
\asg m J(c_1,c_2).
\]
\end{proposition}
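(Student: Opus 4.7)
The plan is to give a direct calculation that uses $E$-composability and $E$-unit reflexivity of $\asgn$ after rewriting the strength as a Kleisli extension via equation~\eqref{eq:stun}. First I would apply \eqref{eq:stun} to each side, so that the goal becomes
\[
  \asg m {I\times J}\bigl(((\eta_{I\times J})_{x_1})^\sharp \ap c_1,\ ((\eta_{I\times J})_{x_2})^\sharp \ap c_2\bigr) \leq \asg m J(c_1,c_2).
\]
Now the left-hand side is in the shape to which $E$-composability directly applies.

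Next I would invoke $E$-composability with the grades $m_1=m$ and $m_2=1$ (using $m\cdot 1 = m$), taking $f_i \triangleq (\eta_{I\times J})_{x_i}:J\to T(I\times J)$. This yields
\[
  \asg m {I\times J}\bigl(f_1^\sharp \ap c_1, f_2^\sharp \ap c_2\bigr)
  \leq \asg m J(c_1,c_2) + \sup_{(y_1,y_2)\in EJ}\asg 1 {I\times J}\bigl(f_1\ap y_1, f_2\ap y_2\bigr).
\]
By definition of partial application one has $f_i\ap y = \eta_{I\times J}\ap\langle x_i,y\rangle$, so it remains to prove that the supremum is $\le 0$.

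Finally I would use the hypothesis $EI\dtimes EJ\subseteq E(I\times J)$: for the fixed pair $(x_1,x_2)\in EI$ and any $(y_1,y_2)\in EJ$ we have $(\langle x_1,y_1\rangle,\langle x_2,y_2\rangle)\in E(I\times J)$, so the supremum is dominated by $\sup_{(p_1,p_2)\in E(I\times J)} \asg 1 {I\times J}(\eta_{I\times J}\ap p_1, \eta_{I\times J}\ap p_2)$, which is $\le 0$ by $E$-unit reflexivity. Combining with the previous inequality and the monotonicity of $(+)$ in $\qQ$ gives the desired bound. The only delicate point is the bookkeeping in the first step (the reduction via \eqref{eq:stun} and the correct instantiation of the composability inequality); everything after that is a direct application of the two axioms of $\asgn$ and the hypothesis relating $E$ with products.
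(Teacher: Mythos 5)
Your proof is correct, but it takes a genuinely different route from the paper. The paper derives this proposition as a corollary of Theorem \ref{th:fund}: it uses the fundamental property $\coden T\asgn(m,v)(EJ)=\tilde\asgn(m,v)J$ (which needs both $E$-unit reflexivity and $E$-composability) together with the fact that $\coden T\asgn$ lifts the strength, then chases the pair $(\langle x_1,c_1\rangle,\langle x_2,c_2\rangle)$ through $EI\dtimes\coden T\asgn(m,v)(EJ)\dto\coden T\asgn(m,v)(EI\dtimes EJ)\le\coden T\asgn(m,v)(E(I\times J))$ and reads the result back as a divergence bound. You instead inline the underlying computation: rewrite $\theta$ via \eqref{eq:stun}, apply $E$-composability with grades $(m,1)$ to $f_i=(\eta_{I\times J})_{x_i}$, and kill the supremum term using $EI\dtimes EJ\subseteq E(I\times J)$ and $E$-unit reflexivity, finishing with monotonicity of $(+)$ and $m\cdot 1=m$. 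This is essentially the same mechanism the paper uses inside the proof of condition \ref{con:str} of Theorem \ref{th:fund}, but your version is self-contained and does not require the codensity lifting at all, which is a legitimate trade-off: the paper's proof is shorter given the theorem already in hand, while yours is more elementary and makes explicit exactly which axioms of $\asgn$ are used. One bookkeeping remark: as stated the proposition writes $c_1,c_2\in U(TI)$, but for $\theta_{I,J}$ and the right-hand side $\asg m J(c_1,c_2)$ to typecheck they must lie in $U(TJ)$; your instantiation silently (and correctly) works with $U(TJ)$, matching the evident intent of the statement.
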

\begin{appendixproof}
(Proof of Proposition \ref{divergence_strength})
By Theorem \ref{th:fund} and the assumption $\fa{I,J \in \CC}EI \dtimes EJ \subseteq E(I \times J)$,
we obtain for all $(x_1,x_2) \in EI$ and $c_1,c_2 \in U(TI)$,
\begin{align*}
&(\langle x_1,c_1 \rangle,\langle x_2,c_2 \rangle) \in EI \dtimes \tilde\asgn (m,v)J\\
&\iff (\langle x_1,c_1 \rangle,\langle x_2,c_2 \rangle) \in EI \dtimes \coden T \asgn (m,v)(EJ)\\
&\implies (\theta_{I,J} \ap \langle x_1,c_1 \rangle ,\theta_{I,J} \ap \langle x_2,c_2 \rangle) \in 
\coden T \asgn (m,v) (EI \dtimes EJ)\\
&\implies (\theta_{I,J} \ap \langle x_1,c_1 \rangle ,\theta_{I,J} \ap \langle x_2,c_2 \rangle) \in 
\coden T \asgn (m,v) E(I \times J)\\
&\iff (\theta_{I,J} \ap \langle x_1,c_1 \rangle ,\theta_{I,J} \ap \langle x_2,c_2 \rangle) \in 
\tilde\asgn (m,v)(I \times J).
\end{align*}
This completes the proof.
\end{appendixproof}

\subsection{Simplifying Codensity Liftings by $\Omega$-Generatedness of Divergences}
\label{sec:simplification}
We here show that for an $\Omega$-generated divergence $\asgn$, the
calculation of the codensity lifting $\coden T \asgn {} $ can be
simplified.  For an object $I\in\CC$, we define $\codeng T \asgn I$ by
\begin{align*}
&(c_1,c_2) \in \codeng T \asgn I (m,v) X\\
&\iff
\fa{n, w, (k_1, k_2) \colon X \dto \tilde\asgn (n, w) I} (k_1\kl \ap c_1, k_2\kl \ap c_2) \in \tilde\asgn (m \cdot n, v + w) I.
\end{align*}
The original calculation of $\coden T \asgn {} $ is a large
intersection $\coden T \asgn = \bigwedge_{I\in \CC} \codeng T \asgn I$
where $I$ runs over all $\CC$-objects, but if $\asgn$ is
$\Omega$-generated, the parameter $I$ can be fixed at $\Omega$.
\begin{proposition}
  For any $\Omega$-generated divergence $\asgn\in\mDiv TM\qQ E$, we
  have $\coden T \asgn {} = \codeng T \asgn \Omega$.
\end{proposition}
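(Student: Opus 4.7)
The inclusion $\coden T \asgn (m,v) X \leq \codeng T \asgn \Omega (m,v) X$ is immediate, because the right-hand side is obtained from the left-hand side by restricting the universal quantifier over $I \in \CC$ to the single instance $I = \Omega$.

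The plan for the reverse inclusion is to use $\Omega$-generatedness twice: once to unfold the target $\qQ$-divergence at a general object $I$ as a supremum over $h \colon I \to T\Omega$, and once to verify that the resulting morphisms still form a valid test against $\codeng T \asgn \Omega$. Concretely, suppose $(c_1,c_2) \in \codeng T \asgn \Omega (m,v) X$, and fix arbitrary $I \in \CC$, $n \in M$, $w \in \qQ$, and $(k_1,k_2) \colon X \dto \tilde\asgn(n,w) I$. The goal is to prove $\asg{m \cdot n}{I}(k_1\kl \ap c_1, k_2\kl \ap c_2) \le v+w$. By $\Omega$-generatedness applied at $I$ and the Kleisli associativity $h\kl \circ k_i\kl = (h\kl \circ k_i)\kl$, this is equivalent to
\[
\sup_{h \colon I \to T\Omega} \asg{m \cdot n}{\Omega}\bigl((h\kl \circ k_1)\kl \ap c_1,\, (h\kl \circ k_2)\kl \ap c_2\bigr) \le v+w.
\]

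Now fix $h \colon I \to T\Omega$. I will show that the pair $(h\kl \circ k_1, h\kl \circ k_2)$ is a morphism $X \dto \tilde\asgn(n,w) \Omega$; once this is verified, the hypothesis $(c_1,c_2) \in \codeng T \asgn \Omega (m,v) X$ applied to this pair yields exactly $\asg{m \cdot n}{\Omega}((h\kl \circ k_1)\kl \ap c_1, (h\kl \circ k_2)\kl \ap c_2) \le v+w$, and discharging $h$ closes the argument. For the nonexpansivity, take $(x_1,x_2) \in X$. Since $(k_1,k_2) \colon X \dto \tilde\asgn(n,w) I$, we have $\asg n I(k_1 \ap x_1, k_2 \ap x_2) \le w$. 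Applying $\Omega$-generatedness at $I$ to this pair gives
\[
\asg n \Omega\bigl(h\kl \ap (k_1 \ap x_1),\, h\kl \ap (k_2 \ap x_2)\bigr) \le \sup_{h' \colon I \to T\Omega} \asg n \Omega\bigl((h')\kl \ap k_1 \ap x_1, (h')\kl \ap k_2 \ap x_2\bigr) = \asg n I(k_1 \ap x_1, k_2 \ap x_2) \le w,
\]
which is precisely the required bound for the pair $((h\kl \circ k_1)\ap x_1, (h\kl \circ k_2)\ap x_2)$ to lie in $\tilde\asgn(n,w)\Omega$.

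I do not foresee a significant obstacle: the argument is essentially a double application of Definition~\ref{def:gen} combined with the associativity of Kleisli extension. The only subtlety is remembering to insert the precomposition with $h\kl$ so that the intermediate object $I$ gets eliminated in favour of $\Omega$ on both sides (the inner nonexpansivity condition and the outer divergence bound), and to check that the $\Omega$-generatedness hypothesis is genuinely usable at both grades $n$ and $m \cdot n$, which it is, since $\Omega$-generatedness is asserted uniformly in the grade parameter $m \in M$.
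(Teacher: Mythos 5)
Your proof is correct and follows essentially the same route as the paper: the easy inclusion by restricting the quantifier to $I=\Omega$, and the converse by precomposing each test pair $(k_1,k_2)\colon X\dto\tilde\asgn(n,w)I$ with $h\kl$ for $h\colon I\to T\Omega$, using $\Omega$-generatedness once to see that $(h\kl\circ k_1,h\kl\circ k_2)$ is still a valid test at $\Omega$ and once (together with Kleisli associativity) to recover the bound at $I$ from the supremum over $h$. No gaps.
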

\begin{proof}
We show the equivalence $\coden T \asgn X = \codeng T \asgn \Omega X$ for each $X \in \brelc$.

($\supseteq$) Immediate from $\coden T \asgn = \bigwedge_{I\in \CC} \codeng T \asgn I$.

($\subseteq$) By the $\Omega$-generatedness of $\asgn$, we have for all $I \in \CC$ and $c'_1,c'_2 \in U(TI)$,
\[
(c'_1,c'_2) \in \tilde\asgn (m', v') I
\iff 
\fa{k \colon I \to T\Omega}(k\kl \ap c'_1,k\kl \ap c'_2) \in \tilde\asgn (m', v') \Omega
\]
Therefore, for any $(c_2,c_2) \in U(TX_1) \times U(TX_2)$, we have 
\begin{align*}
\lefteqn{(c_1,c_2) \in \codeng T \asgn \Omega X}\\
&\iff \fa{n\in M, w\in\qQ, (k_1, k_2) \colon  X\dto \tilde\asgn (n, w) \Omega} (k_1\kl \ap c_1, k_2\kl \ap c_2) \in \tilde\asgn (m \cdot n, v + w) \Omega\\
&\implies
\left(
\begin{aligned}
\fa{I\in\CC, n\in M, w\in\qQ, (l_1, l_2) \colon  X\dto \tilde\asgn (n, w) I,k \colon I \to T\Omega}\\
\qquad(k\kl\circ l_1\kl \ap c_1, k\kl\circ l_2\kl \ap c_2) \in \tilde\asgn (m \cdot n, v + w) \Omega
\end{aligned}\right)\\
&\iff 
\fa{I\in\CC, n\in M, w\in\qQ, (l_1, l_2) \colon  X\dto \tilde\asgn (n, w) I}
(l_1\kl \ap c_1,l_2\kl \ap c_2) \in \tilde\asgn (m \cdot n, v + w) I\\
& \iff (c_1,c_2) \in \coden T \asgn X.
\end{align*}
This completes the proof.
\end{proof}
For example, the generatedness of $\divndp$ shown in Section
\ref{sec:generatedness_of_divergence} implies that
$\coden {\giry} {\divndp} = \codeng {\giry} {\divndp} 2$ and
$\coden {\sgiry} {\divndp} = \codeng {\sgiry} {\divndp} 1$.  In fact,
the simplification $\codeng {\sgiry} {\divndp} 1$ is equal to the
$(\qRp)^2$-graded relational lifting $\sgiry^{\top\top}$ for DP given
in \cite[Section 2.2]{DBLP:journals/entcs/Sato16}, which is defined by,
for each $(X_1,X_2,R_X) \in \BRel\Meas$,
\begin{align*}
&\sgiry^{{\top\top}}(\varepsilon,\delta)(X_1,X_2,R_X)\\
&\triangleq
(\sgiry(X_1),\sgiry(X_2),\{(\nu_1,\nu_2) \mid \fa{A \in \Sigma_{X_1}, B \in \Sigma_{X_2}} R_X(A) \subseteq B \implies 
\nu_1(A) \leq \exp(\varepsilon) \nu_2(B) +\delta\}).
\end{align*}
For detail, see the proof of equalities (\dag) and (\ddag)
in the proof of \cite[Theorem 2.2(iv)]{DBLP:journals/entcs/Sato16}.

\subsection{Two Lifting Approaches: Codensity and Coupling}

\label{sec:comp}

We briefly compare two lifting approaches: graded codensity lifting
and coupling-based lifting employed in
\mcite{DBLP:conf/popl/BartheKOB12,DBLP:conf/icalp/BartheO13,DBLP:conf/csfw/BartheGAHKS14,DBLP:conf/popl/BartheGAHRS15,DBLP:conf/lics/SatoBGHK19}.
  
We compare the role of the unit-reflexivity and composability in the
codensity graded lifting and the coupling-based graded
lifting. Consider the CCC-SM $(\Set,\dist)$, where $\dist$ is the
probability distribution monad. Given an $\mEQ$-relative $M$-graded
$\qQ$-divergence $\asgn$ on $\dist$, the coupling-based graded lifting
is defined by
\begin{equation}
  \label{eq:coupling}
  \dot{\dist}^{\asgn} (m,v) X \triangleq \{ (\dist p_1 \ap \mu_1,\dist p_2 \ap \mu_2)~|~(\mu_1,\mu_2) \in (\dist R_X)^2, \asg m{R_X} (\mu_1,\mu_2) \leq v  \}  
\end{equation}
where $p_i\colon R_X\arrow X_i$ is the projection ($i=1,2$) from the binary
relation. The pair $(\mu_1,\mu_2)$ of probability distributions
collected in the right hand side of \eqref{eq:coupling} is called a
{\em coupling}.

The fundamental property $\dot\dist^\asgn(\mEQ I)=\tilde\asgn(m,v)I$
immediately follows from the definition of $\dot\dist^\asgn$, while
the composability and unit-reflexivity of $\asgn$ are used to make
$\dot\dist^{\asgn}$ a strong $M\times \qQ$-graded lifting
\cite[Proposition 9]{DBLP:conf/icalp/BartheO13}. On the other hand,
the codensity graded lifting $\coden D {\asgn}$ is always an
$M\times \qQ$-graded lifting; this does not rely on the
unit-reflexivity and composability of $\asgn$ (Proposition
\ref{pp:grastrrellift}). These properties are used to show that
$\coden D {\asgn}$ satisfies the fundamental property (Proposition \ref{lem:eq}).

The coupling-based lifting \eqref{eq:coupling} can be naturally
generalized to any $\Set$-monad $T$. However, at this moment we do not
know how to generalize the coupling technique to any CC-SM
$(\CC,T)$.  As the prior study by \cite{DBLP:conf/lics/SatoBGHK19}
pointed out, there is already a difficulty in extending it to the
CC-SM $(\Meas,\giry)$.

We illustrate how the problem arises. Let $X \in \BRel\Meas$. We would
like to pick two probability measures over $R_X$ as couplings, but
$R_X$ is merely a set. We therefore equip it with the subspace
$\sigma$-algebra of $X_1\times X_2$, and let $H_X$ be the derived
measurable space (hence $|H_X|=R_X$). We write
$p_i:H_X \rightarrow X_i$ for measurable projections ($i=1,2$).  We
then define a candidate $M\times \qQ$-graded lifting of $\giry$ by
\[
  \dot{\giry} (m,v) X = \{ (\giry p_1 \ap \mu_1,\giry p_2
  \ap \mu_2)~|~(\mu_1,\mu_2) \in (U\giry H_X)^2,
  \asg m{H_X} (\mu_1,\mu_2) \leq v \}.
\]
We now verify that $\dot{\giry}$ also lifts the Kleisli extension of
$\giry$, that is,
\begin{displaymath}
  (f, g)\colon Y\dto \dot{\giry} (m',v') X
  \implies
  (f\kl, g\kl) \colon  \dot{\giry} (m,v) Y\to \dot{\giry} (mm', v+v') X.
\end{displaymath}
Let $(f, g)\colon Y\dto \dot{\giry} (m',v') X$ be pair of measurable
functions. Then for each $(x,y) \in R_Y$, we have
$(f \ap  x, g \ap  y) \in R_{\dot \giry(m,v) X}$. Therefore there
exists $(\mu^{(x,y)}_1,\mu^{(x,y)}_2) \in (U\giry H_X)^2$ such that
$\giry\pi_1 \ap  \mu^{(x,y)}_1 = f \ap  x$ and
$\giry\pi_2 \ap  \mu^{(x,y)}_2 = g \ap  y$. Using \emph{the
  axiom of choice}, we turn this relationship into \emph{functions}
$\mu_1,\mu_2\colon R_Y \to U\giry H_X$.  If they {\em were} measurable
functions of type $H_Y \rightarrow \giry H_X$, then from the
composability of $\asgn$, we would have
$\asg{mm'}{H_X}(\mu\kl_1 \ap  w_1, \mu\kl_2 \ap  w_2) \leq v +
v'$ for $w_1, w_2 \in U \giry H_Y$ such that
$\asg{m'}{H_Y}(w_1, w_2) \leq v'$. This gives
$ (f\kl, g\kl) \colon \dot{\giry} (m,v) Y\dto \dot{\giry} (mm', v+v') X $.
However, in general, ensuring the measurability of $\mu_1,\mu_2$ is
not possible, especially because they are picked up by the axiom of
choice.  A solution given in \cite{DBLP:conf/lics/SatoBGHK19} is to
use the category $\Span\Meas$ of spans, that guarantees the existence
of good measurable functions $h_1, h_2 \colon H_Y \rightarrow \giry H_X$.


\newcommand{\D}{\Delta}
\newcommand{\G}{\Gamma}

\section{Approximate Computational Relational Logic}
\label{sec:acRL}


We introduce a program logic called {\em approximate computational
  relational logic} (acRL for short). It is a combination of Moggi's
computational metalanguage and a relational refinement type system
\mcite{DBLP:conf/popl/BartheGAHRS15}. The strong graded relational
lifting of a monad constructed from a divergence will be used to
relationally interpret monadic types, and gradings give upper
bounds of divergences between computational effects caused by two
programs.  acRL is similar to the relational refinement type system
HOARe2 \mcite{DBLP:conf/popl/BartheGAHRS15}, which is designed for
verifying differential privacy of probabilistic programs.  Compared to
HOARe2, acRL supports general monads and divergences, while it does
not support dependent products nor non-termination.

The relational logic acRL adopts the {\em extensional approach}
(cf. \cite[Chapter 9.2]{NielsonNielsonSemantics}):

\begin{itemize}
\item Relational assertions between contexts $\G$ and $\D$ are defined
  as binary relations
  between $U\sem\G$ and $U\sem\D$, or equivalently $\brelc$-objects $\phi$ such that
  $\brelfn\CC(\phi)=(\sem\G,\sem\D)$. Logical connectives and
  quantifications are defined as operations on such $\brelc$-objects.
  This is in contrast to the standard design of logic where assertions
  are defined by a BNF.
  
\item Let $\Gamma \vdash M : \tau$ and $\D \vdash N : \sigma$ be
  well-typed terms, $\phi$ be a relational assertion between
  $\G,\D$, and $\psi$ be an assertion between $\tau, \sigma$. The main
  concern of acRL is the statement
  ``$\forall (\gamma, \delta) \in \phi . (\sem M \ap \gamma, \sem N
  \ap \delta) \in \psi$'' (equivalently
  $(\sem M,\sem N):\phi\darrow\psi$). In this section we denote
  this statement by $\phi \vdash (M, M') : \psi$.
  
\item Inference rules of the logic consists of the {\em facts} about the
  statement $\phi \vdash (M, M') : \psi$. We remark that in the
  standard logic, proving these facts corresponds to the soundness of
  inference rules.
\end{itemize}

\subsection{Moggi's Computational Metalanguage}

\newcommand{\match}[5]{#1 \mathbin{\mathtt{with}}\iota_1(#2).#3\mathbin{\rule{.4pt}{1ex}}\iota_2(#4).#5}
\newcommand{\mret}[1]{\mathop{\mathtt{ret}}(#1)}
\newcommand{\mlet}[3]{\mathop{\mathtt{let}}{#1=#2}\mathbin{\mathtt{in}}#3}
\newcommand{\mop}[3]{\mathop{\mathtt{let}}{#1=#2}\mathbin{\mathtt{in}}#3}
\newcommand{\tT}{\attention{\mathtt{T}}}

\begin{figure}[t]
\caption{Syntax of Types and Raw Terms of the Computational Metalanguage}
  \begin{align*}
    \Typ(B)\ni\tau\mathbin{::=}& b\mid 1\mid \tau\times\tau\mid 0\mid \tau+\tau\mid \tau\Arrow\tau\mid \tT \tau
                       \quad(b\in B)\\ \\
    M\mathbin{::=} & x\mid o(M)\mid c(M)\mid ()\mid (M,M)\mid \pi_1(M)\mid \pi_2(M)\quad (o\in O_v, c\in O_e)\\
    \mid &\iota_1(M)\mid \iota_2(M)\mid \match M{x:\tau}M{x:\tau}M\\
    \mid &(\lam{x:\tau}M)\mid (MM)\mid \mret M\mid \mlet {x:\tau}MM
  \end{align*}
  \label{fig:syn}
\end{figure}

\subsubsection{Syntax of the Computational Metalanguage}
For the higher-order programming language, we adopt Moggi's {\em
  computational metalanguage} \mcite{moggicomputational}. It is an
extension of the simply typed lambda calculus with monadic types.  For
a set $B$, we define the set $\Typ(B)$ of types over $B$ by the first
BNF in Figure \ref{fig:syn}.  We then define the set $\Typ_1(B)$ of
first-order types to be the subset of $\Typ(B)$ consisting only of
$b,1,\times,+$.

We next introduce {\em computational signatures} for specifying
constants in the computational metalanguage. A computational signature
is a tuple $(B,\Sigma_v,\Sigma_e)$ where $B$ is a set of base types,
and $\Sigma_v$ and $\Sigma_e$ are functions whose range is
$\Typ_1(B)^2$.  The domains of $\Sigma_v,\Sigma_e$ are sets of {\em
  value operation} symbols and {\em effectful operation} symbols, and
are denoted by $O_v,O_e$, respectively. These functions assign input
and output types to these operations.

Fix a countably infinite set $V$ of variables. A context is a function
from a finite subset of $V$ to $\Typ(B)$; contexts are often denoted
by capital Greek letters $\G,\D$. For contexts $\G,\D$ such that
$\dom(\G)\cap\dom(\D)=\emptyset$, by $\G,\D$ we mean the join of $\G$ and
$\D$.

The set of raw terms is defined by the second BNF in Figure
\ref{fig:syn}. The type system of the computational metalanguage has
judgments of the form $\G\vdash M:\tau$ where $\G$ is a context, $M$
a raw term and $\tau$ a type. It adopts the standard rules for
products, coproducts, implications and monadic types; see
e.g. \cite{moggicomputational}. The typing rules for value operations
and effectful operations are given by
\begin{displaymath}
  \infer{\G\vdash o(M):b'}{
    o\in O_v &
    \Sigma_v(o)=(b,b') & \G\vdash M:b}
  \qquad
  \infer{\G\vdash c(M):\tT b'}{
    o\in O_e &
    \Sigma_e(c)=(b,b') & \G\vdash M:b}
\end{displaymath}

\newcommand{\subfst}[2]{\attention{\pi_1^{#1,#2}}}
\newcommand{\subsnd}[2]{\attention{\pi_2^{#1,#2}}}
\newcommand{\subproj}[3]{\attention{\pi_{#1}^{#2,#3}}}

A {\em simultaneous substitution} from $\G$ to $\G'$ is a function
$\theta$ from the set $\dom(\G')$ of variables to raw terms such that
the well-typedness $\G\vdash\theta(x):\G'(x)$ holds for each
$x\in\dom(\G')$. The application of $\theta$ to a term
$\G'\vdash M:\tau$ is denoted by $M\theta$, which has a typing
$\G\vdash M\theta:\tau$.  For disjoint contexts $\G_i$ ($i=1,2$), we
define the projection substitutions
$\G_1,\G_2\vdash\subproj i{\G_1}{\G_2}:\G_i$ by
$\subproj i{\G_1}{\G_2}(x)=x$.

\subsubsection{Categorical Semantics of the Computational Metalanguage}
\begin{figure}[t]
  \caption{Data for the Categorical Semantics of Metalanguage}
  \begin{enumerate}
  \item $(\CC,T)$ is a CCC-SM and $\CC$ has finite coproducts.
  \item $\sem b\in\CC$ for each $b\in B$
  \item $\sem o:\sem{b}\to\sem{b'}$ for each $o\in O_v$ such that
    $\Sigma_v(o)=(b,b')$
  \item $\sem c:\sem{b}\to T\sem{b'}$ for each $c\in O_e$ such that
    $\Sigma_e(c)=(b,b')$
  \end{enumerate}
  \label{fig:data}
\end{figure}
The interpretation of the computational metalanguage over a
computational signature $(B,\Sigma_v,\Sigma_e)$ is given by the data
specified by Figure \ref{fig:data}.

We first inductively extend the interpretation of base types to all
types using the bi-Cartesian closed structure and the monad.  Next,
for each context $\G$, we fix a product diagram
$(\sem\G,\{\pi_x:\sem\G\to\sem{\G(x)}\}_{x\in\dom(\G)})$; when
$\dom(\G)=\{x\}$, we assume that $\sem\G=\sem{\G(x)}$ with
$\pi_x=\id$. Lastly we interpret a typing derivation of
$\G\vdash M:\tau$ as a morphism $\sem M:\sem\G\to\sem\tau$ in the
standard way, using the interpretations of operations given in Figure
\ref{fig:data}.  We further extend this to the interpretation of each
simultaneous substitution $\G\vdash\theta:\G'$ as a morphisms
$\sem\theta:\sem\G\to\sem{\G'}$.

\subsection{Approximate Relational Computational Logic}

\newcommand{\asrt}[3]{{}^{#1}_{#2}\vdash #3}
\renewcommand{\u}{\attention{u}}
\renewcommand{\d}{\attention{d}}
\newcommand{\far}[2]{\forall^{#1}_{#2}~.~}
\newcommand{\exr}[2]{\exists^{#1}_{#2}~.~}

\subsubsection{Relational Logic in External Form}

A {\em relational assertion} $\phi$ between disjoint contexts $\G$ and
$\D$ is a binary relation between $U\sem\G$ and $U\sem\D$.  We denote
such a relational assertion by $\asrt\G\D\phi$, and identify it as a
$\brelc$-object $\phi$ such that $\brelfn\CC(\phi)=(\sem\G,\sem\D)$.
Similarly, a relational assertion between types $\tau$ and $\sigma$ is
defined to be a relational assertion $\asrt{\u:\tau}{\d:\sigma}\phi$;
here $\u,\d$ are reserved and fixed variables.

Relational assertions between contexts $\G$ and $\D$ carry a {\em
  boolean algebra structure} $\wedge,\vee,\neg$ given by the
set-intersection, set-union and set-complement (see the boolean
algebra $\brelc_{(\sem\G,\sem\D)}$ in Section \ref{sec:brel}). The
pseudo-complement $\phi\Arrow\psi$ is defined to be
$\neg \phi\vee\psi$. For $\asrt{\G,x:\tau}{\D,y:\sigma}\phi$, by
$\asrt\G\D{\far xy\phi}$ and $\asrt\G\D{\exr xy\phi}$ we mean the
relational assertions defined by the following equivalence:
\begin{align*}
  (\gamma,\delta)\in \far xy\phi
  \iff&
        \fa{\gamma'\in U\sem{\Gamma,x:\tau},\delta'\in U\sem{\Delta,y:\sigma}}
  \\&\quad\quad
  (\sem{\subfst\G{x:\tau}}\ap\gamma'=\gamma)\wedge
  (\sem{\subfst\D{y:\sigma}}\ap\delta'=\delta)
  \Arrow
  (\gamma',\delta')\in\phi\\
  (\gamma,\delta)\in \exr xy\phi
  \iff&
        \ex{\gamma'\in U\sem{\Gamma,x:\tau},\delta'\in U\sem{\Delta,y:\sigma}}
  \\&\quad\quad
  (\sem{\subfst\G{x:\tau}}\ap\gamma'=\gamma)\wedge
  (\sem{\subfst\D{y:\sigma}}\ap\delta'=\delta)\wedge
  (\gamma',\delta')\in\phi
\end{align*}

The boolean algebra structure and the above quantifier operations
allow us to interpret first-order logical formulas as relational
assertions; we omit its detail here.  In addition to these standard
logical connectives, we will use graded strong relational lifting
$\coden T\asgn$ to form relational assertions. That is, for any basic
endorelation $E:\CC\to\brelc$, grading monoid $M$, divergence domain
$\qQ$ and divergence $\asgn\in\mDiv TM\qQ E$, we obtain a relational
assertion
$\asrt{\u:\tT \tau}{\d:\tT \sigma}{\coden T{\asgn}(m,v)\phi}$ from any
$\asrt{\u:\tau}{\d:\sigma}\phi$, $m\in M$ and $v\in \qQ$.

\newcommand{\sub}[3]{#1[{#2};{#3}]}

For substitutions $\G\vdash\theta:\G',\D\vdash{\theta'}:\D'$ and an
assertion $\asrt{\G}{\D}\phi$, by
$\asrt{\G'}{\D'}{\sub\phi\theta{\theta'}}$ we mean the relational
assertion
$\{(\gamma,\delta)~|~(\sem\theta\ap\gamma,\sem{\theta'}\ap\delta)\in\phi\}$.
For disjoint context pairs $\G,\G'$ and $\D,\D'$ and relational
assertions $\asrt\G\D\phi$ and $\asrt{\G'}{\D'}\psi$, by the
juxtaposition $\asrt{\G,\G'}{\D,\D'}{\phi,\psi}$ we mean the
relational assertion
$\asrt{\G,\G'}{\D,\D'}{ \sub\phi{\subfst\G{\G'}}{\subfst\D{\D'}}\wedge
  \sub\psi{\subsnd\G{\G'}}{\subsnd\D{\D'}}}$.

\subsubsection{Inference Rules for acRL}

\newcommand{\judge}[4]{\attention{#1\vdash{}(#2,#3)\colon #4}} For
well-typed computational metalanguage terms $\G\vdash M:\tau$ and
$\D\vdash N:\sigma$, and relational assertions $\asrt\G\D\phi$ and
$\asrt{\u:\tau}{\d:\sigma}\psi$, by the judgment
\begin{displaymath}
  \judge\phi MN\psi 
\end{displaymath}
we mean the inclusion $\phi\subseteq \sub\psi{[M/\u]}{[N/\d]}$ of
binary relations.  This is equivalent to
$(\sem M, \sem N):\phi\dto\psi$.  We show basic facts about judgments
$\judge\phi MN\psi$.
\begin{proposition}\label{theorem:soundness}
  \begin{enumerate}
  \item \label{rule:extensional:semantic_equivalence}
    $\judge{\phi} MN{\psi}$ and $\sem M=\sem{M'}$ and
    $\sem N=\sem{N'}$ implies $\judge{\phi}{M'}{N'}{\psi}$.
  \item \label{rule:consequence}
  	 $\judge \phi MN\psi$ and $\phi'\subseteq \phi$ and
    $\psi\subseteq \psi'$ implies
    $\judge{\phi'}MN{\psi'}$.
  \item \label{rule:weakening_of_grading}
    $\judge \phi MN {\coden T\asgn(m,v)\psi}$ and $m\le n$ and
    $v\le w$ and $\psi\le \psi'$ \\implies
    $\judge \phi MN {\coden T\asgn(n,w){\psi'}}$.
  \item \label{rule:monad:return}
    $\judge \phi{ M}{N}\psi$ implies
    $\judge \phi{\mret M}{\mret{N}}{\coden T\asgn(1,0){\psi}}$.
  \item \label{rule:monad:bind}
    $\judge \phi MN{\coden T\asgn(m,v)\psi}$ and
    $\judge{\phi,\sub{\psi}{[x/\u]}{[x'/\d]}} {M'}{N'}{\coden T\asgn(n,w)\rho}$
    \\implies
    $\judge \phi{\mlet xM{M'}}{\mlet {x'}{N}{N'}}{\coden T\asgn(m\cdot n,v\cdot w)\rho}$.
  \end{enumerate}
\end{proposition}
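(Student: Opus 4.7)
The plan is to unfold the judgment $\judge{\phi}{M}{N}{\psi}$ as the morphism condition $(\sem M,\sem N)\colon \phi \dto \psi$ in $\brelc$, and then discharge each of the five clauses by either a direct set-theoretic argument on the relational inclusion, or by invoking the structural properties of the codensity lifting $\coden{T}{\asgn}$ established in Theorem \ref{th:fund} (namely the four conditions of graded strong relational lifting in Definition \ref{def:graded_strong_lifting}).

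Clauses \ref{rule:extensional:semantic_equivalence} and \ref{rule:consequence} are immediate: the first because the judgment depends on the interpreted terms only through $\sem M,\sem N$, and the second because $(\sem M,\sem N):\phi\dto\psi$ restricts along any $\phi'\le\phi$ and expands along any $\psi\le\psi'$. For clause \ref{rule:weakening_of_grading} I need monotonicity of $\coden{T}{\asgn}(m,v)\psi$ in each of its three arguments. Monotonicity in $(m,v)$ is condition \ref{con:inclusion} of Definition \ref{def:graded_strong_lifting}, which is part of Theorem \ref{th:fund}(\ref{pp:grastrrellift}); monotonicity in $\psi$ is a direct consequence of the defining formula for $\coden{T}{\asgn}$, since enlarging $\psi$ shrinks the class of test morphisms $(k_1,k_2):\psi\dto\tilde\asgn(n,w)I$ over which the universal quantification runs. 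Clause \ref{rule:monad:return} is then the unit condition \ref{con:unit} of Definition \ref{def:graded_strong_lifting} applied to $\psi$: one has $(\eta,\eta):\psi\dto\coden{T}{\asgn}(1,0)\psi$, and composing with the given $(\sem M,\sem N):\phi\dto\psi$ yields the conclusion because $\sem{\mret M}=\eta_{\sem\tau}\circ\sem M$ and similarly for $N$.

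The main work is clause \ref{rule:monad:bind}, which I plan to prove by decomposing the standard interpretation
\begin{equation*}
  \sem{\mlet{x}{M}{M'}} \;=\; (\sem{M'})^{\sharp}\circ \theta_{\sem\G,\sem\tau}\circ \langle\id_{\sem\G},\sem M\rangle
\end{equation*}
into three factors and lifting each factor to $\brelc$. First, from $(\sem M,\sem N):\phi\dto\coden{T}{\asgn}(m,v)\psi$ together with the trivial pairing $(\id,\id):\phi\dto\phi$, the universal property of $\dtimes$ in $\brelc$ gives
$(\langle\id,\sem M\rangle,\langle\id,\sem N\rangle):\phi\dto \phi\dtimes\coden{T}{\asgn}(m,v)\psi$.
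Second, condition \ref{con:str} of Definition \ref{def:graded_strong_lifting} yields
$(\theta,\theta):\phi\dtimes\coden{T}{\asgn}(m,v)\psi \dto \coden{T}{\asgn}(m,v)(\phi\dtimes\psi)$.
Third, I will identify the juxtaposed assertion $\phi,\sub{\psi}{[x/\u]}{[x'/\d]}$ appearing in the second premise with $\phi\dtimes\psi$ under the canonical equality $\sem{\G,x{:}\tau}=\sem\G\times\sem\tau$, so that the second premise reads $(\sem{M'},\sem{N'}):\phi\dtimes\psi\dto\coden{T}{\asgn}(n,w)\rho$; then condition \ref{con:kl} of Definition \ref{def:graded_strong_lifting} gives
$((\sem{M'})^{\sharp},(\sem{N'})^{\sharp}):\coden{T}{\asgn}(m,v)(\phi\dtimes\psi)\dto \coden{T}{\asgn}((m,v)\cdot(n,w))\rho$.
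Composing the three $\brelc$-morphisms produces exactly the goal judgment.

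The anticipated obstacle is purely bookkeeping rather than conceptual: verifying that the syntactically defined juxtaposition $\phi,\sub{\psi}{[x/\u]}{[x'/\d]}$ really coincides with the product $\phi\dtimes\psi$ in $\brelc$. This reduces to unfolding the definitions of juxtaposition and of the projection substitutions $\pi_1^{\G,x:\tau},\pi_2^{\G,x:\tau}$, and checking that the semantic projections they induce are exactly the Cartesian projections out of $\sem\G\times\sem\tau$. Once this identification is in place, everything else is a mechanical composition of morphisms in $\brelc$ using only the four clauses of Definition \ref{def:graded_strong_lifting}, which $\coden{T}{\asgn}$ is guaranteed to satisfy by Theorem \ref{th:fund}.
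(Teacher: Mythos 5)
Your proposal is correct and matches the intended argument: the paper treats these clauses as routine consequences of the judgment being the morphism condition $(\sem M,\sem N):\phi\dto\psi$ in $\brelc$ together with the graded strong relational lifting properties of $\coden{T}{\asgn}$ from Theorem \ref{th:fund}, which is exactly how you discharge them (including the only nontrivial case, \ref{rule:monad:bind}, via the decomposition of $\sem{\mlet{x}{M}{M'}}$ into pairing, strength, and Kleisli extension, and the identification of the juxtaposed assertion with $\phi\dtimes\psi$). Your observation that monotonicity of $\coden{T}{\asgn}(m,v)\psi$ in $\psi$ is not literally one of the four lifting conditions but follows from the codensity formula is also accurate.
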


We  next establish relational judgments on effectful operations. We
present a convenient way to establish such judgments using the
fundamental property of the graded relational lifting $\coden T\asgn$.
\begin{proposition}
  \label{lem:divergence:to:judgment}
  For any $c\in O_e$ such that $\Sigma_e(c)=(b,b')$, relational
  assertion $\asrt{\u:b}{\d:b}{\phi}$ and $m\in M$, putting
  $v= \sup \{\asg m {\sem{b'}} (\sem{c} \ap x, \sem{c} \ap y) ~|~(x,
  y) \in \phi \}$, we have
  $\judge{\phi}{c(\u)}{c(\d)}{\coden T\asgn(m,v)(E\sem{b'})}$.
\end{proposition}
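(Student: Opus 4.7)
The plan is to reduce the judgment to the fundamental property of the graded relational lifting $\coden T\asgn$ and then read off the conclusion directly from the definition of $v$. Since $\sem{c(\u)} = \sem{c}\circ\sem{\pi_{\u}} = \sem{c}$ (viewing $\sem{\u:b}=\sem{b}$) and similarly $\sem{c(\d)}=\sem{c}$, the judgment $\judge{\phi}{c(\u)}{c(\d)}{\coden T\asgn(m,v)(E\sem{b'})}$ unfolds to the nonexpansivity statement
\[
\forall (x,y)\in\phi.\quad (\sem{c}\ap x,\ \sem{c}\ap y)\in \coden T\asgn(m,v)(E\sem{b'}).
\]

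First I would invoke Theorem~\ref{th:fund}(\ref{property2:comp=>completeness}). Since $\asgn\in\mDiv TM\qQ E$, the divergence is in particular $E$-composable, so
\[
\tilde\asgn(m,v)\sem{b'}\ \le\ \coden T\asgn(m,v)(E\sem{b'}).
\]
Therefore it suffices to establish the stronger-looking statement $(\sem{c}\ap x, \sem{c}\ap y)\in \tilde\asgn(m,v)\sem{b'}$ for every $(x,y)\in\phi$. Unfolding the definition of $\tilde\asgn(m,v)\sem{b'}$ from equation~\eqref{eq:divergence:adjacency}, this amounts to $\asg m{\sem{b'}}(\sem{c}\ap x,\sem{c}\ap y)\le v$, which is immediate from the definition of $v$ as a supremum over exactly the set of such pairs.

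Thus the proof is essentially a two-line calculation: unfold the judgment, apply the ``completeness'' half of Theorem~\ref{th:fund}, and invoke the definition of $v$. There is no real obstacle; the only care needed is the identification of $\sem{c(\u)}$ with the morphism $\sem{c}$ (so that evaluating on a global element $x\in U\sem{b}$ in the relational context gives $\sem{c}\ap x$), which is routine from the semantics of the computational metalanguage. Note that the $E$-unit-reflexivity direction of Theorem~\ref{th:fund} is not needed here; only the inequality $\tilde\asgn\le\coden T\asgn\circ E$ coming from $E$-composability is used.
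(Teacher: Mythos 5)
Your proposal is correct and is essentially the paper's own proof: fix $(x,y)\in\phi$, observe $\asg m{\sem{b'}}(\sem{c}\ap x,\sem{c}\ap y)\le v$ by the definition of $v$, and conclude membership in $\coden T\asgn(m,v)(E\sem{b'})$ via Theorem~\ref{th:fund}. The only difference is cosmetic: you note that just the $E$-composability direction $\tilde\asgn(m,v)\sem{b'}\le\coden T\asgn(m,v)(E\sem{b'})$ is needed, whereas the paper simply invokes the fundamental property as an equality.
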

\begin{proof}
  Take an arbitrary pair $(x, y) \in \phi$.
  We have $\asg m {\sem{b'}} (\sem{c} \ap x, \sem{c} \ap y) \leq v$ by definition of $v$.
  Thanks to the fundamental property of $\coden T \asgn$ (Theorem \ref{th:fund}),
  it is equivalent to $(\sem{c} \ap x, \sem{c} \ap y) \in \coden T\asgn (m,v) (E \sem{b'})$.
\end{proof}
\section{Case Study I: Higher-Order Probabilistic Programs}
\label{ex:prob}
\newcommand{\treal}{\mathtt{R}}
\newcommand{\cGauss}{\mathtt{norm}}
\newcommand{\cLap}{\mathtt{lap}}
\newcommand{\ctick}{\mathtt{tick}}
\newcommand{\cntick}{\mathtt{ntick}}
\newcommand{\dGauss}{\attention{\mathcal{N}}}
\newcommand{\dLap}{\attention{\mathrm{Lap}}}

We represent a higher-order probabilistic programming language with
sampling commands from continuous distributions as a computational
metalanguage. For now we assume that the language supports sampling
from Gaussian distribution and Laplace distribution. This
computational metalanguage is specified by the computational
signature:
\[
  \mathcal C= (\{\treal\}, \Sigma_v,
  \{\cGauss:(\treal\times\treal,\treal),~\cLap:(\treal\times\treal,\treal)\}),
\]
where $\Sigma_v$ is some chosen signature for value operations over
reals. We interpret this computational metalanguage by filling Figure
\ref{fig:data} as follows:
\begin{enumerate}
\item for the CCC-SM, we take $(\CC,T)=(\QBS,\probqbs)$ (see Section \ref{sec:qbs}),
\item for the interpretation $\sem\treal$ of $\treal$, we take the quasi-Borel space
  $\adjR\RR$ associated with the standard Borel space $\RR$,
\item the interpretation of value operations is given as expected (we
  omit it here); for example when $\Sigma_v$ contains the real number
  addition operator $+$ as type $(\treal \times \treal,\treal)$, its
  interpretation is the QBS morphism
  $\sem+(x,y)=x + y: \sem{\treal \times \treal}\arrow\sem\treal$,
\item for the interpretation of effectful operations, we put
  \begin{align*}
    \sem\cGauss(x,\sigma)
    &=[\mathrm{id},\dGauss(x,\sigma^2)]_{\sim_{\adjR\RR}},
    & \sem\cLap(x,\lambda)
    &=[\mathrm{id},\dLap(x,\lambda)]_{\sim_{\adjR\RR}}.
  \end{align*}
\end{enumerate}
Here, $\dGauss(x,\sigma^2) \in \giry\RR$ is the Gaussian
distribution with mean $x$ and variance $\sigma^2$.
$\dLap(x,\lambda) \in \giry\RR$ is the Laplacian
distribution with mean $x$ and variance $2\lambda^2$
\footnote{If $\sigma = 0$ (or $\lambda \leq 0$), $\dGauss(x,\sigma^2)$ (resp. $\dLap(x,\lambda)$) is not defined, thus we replace it by the Dirac distribution $\mathbf{d}_x$ at $x$ instead.}.  Every
probability (Borel-)measure $\mu \in \giry \RR$ on $\RR$ can be converted to the
probability measure
$[\mathrm{id},\mu]_{\sim_{\adjR\RR}} \in \probqbs \adjR \RR$ on the
quasi-Borel space $\adjR \RR$ (see Section \ref{subsection:divergence:QBS}).

\begin{toappendix}
\begin{lemma}
The mapping
\[
(x,\sigma) \mapsto 
\begin{cases}
\dGauss(x,\sigma^2) & \sigma^2 > 0\\
\mathbf{d}_{x} & \sigma = 0
\end{cases} 
\]
forms a measurable function of type $\RR \times \RR \to \giry\RR$.
\end{lemma}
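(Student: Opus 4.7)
The plan is to use the standard characterization of measurability into the Giry monad: a function $f \colon \RR^2 \to \giry\RR$ is measurable if and only if, for every Borel set $U \in \Sigma_\RR$, the evaluation $\ev_U \circ f \colon \RR^2 \to [0,1]$, $(x,\sigma) \mapsto f(x,\sigma)(U)$, is Borel measurable. In fact, by a monotone class argument (or $\pi$-$\lambda$ theorem), it suffices to check this for $U$ in a $\pi$-system generating $\Sigma_\RR$, e.g.\ the half-lines $(-\infty, a]$.

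First, I would decompose the domain along the measurable (indeed closed) partition $\RR^2 = A \sqcup B$ where $A = \{(x,\sigma) \mid \sigma \neq 0\}$ and $B = \{(x,\sigma) \mid \sigma = 0\}$. It then suffices to verify that the evaluation $(x,\sigma) \mapsto f(x,\sigma)(U)$ is measurable on each piece, since the indicator functions $\chi_A, \chi_B$ are measurable and the full evaluation can be written as
\[
  f(x,\sigma)(U) \;=\; \chi_A(x,\sigma)\cdot \dGauss(x,\sigma^2)(U) \;+\; \chi_B(x,\sigma)\cdot \chi_U(x).
\]

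On $B$ the second term is $\chi_U(x)$, which is Borel measurable in $x$ because $U$ is Borel. On $A$, I would use the explicit density: the integrand $g(x,\sigma,y) = \frac{1}{\sqrt{2\pi}\,|\sigma|}\exp\bigl(-(y-x)^2/(2\sigma^2)\bigr)$ is jointly continuous on $A \times \RR$, hence Borel measurable. Applying Tonelli's theorem to the nonnegative measurable function $(x,\sigma,y) \mapsto \chi_U(y) g(x,\sigma,y)$ shows that
\[
  (x,\sigma) \;\mapsto\; \dGauss(x,\sigma^2)(U) \;=\; \int_U g(x,\sigma,y)\,dy
\]
is Borel measurable on $A$. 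Combining the two pieces gives measurability of $\ev_U \circ f$ for every Borel $U$, from which measurability of $f$ into $\giry\RR$ follows by the characterization recalled above.

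The only step that requires any care is the seam at $\sigma = 0$: one might worry about weak discontinuities, but since the statement is about measurability rather than continuity, splitting along the measurable set $\{\sigma = 0\}$ and handling the two pieces separately sidesteps this entirely. The argument for $\cLap$ in the paper will be essentially identical, with the Gaussian density replaced by $\frac{1}{2\lambda}\exp(-|y-x|/\lambda)$ on $\{\lambda > 0\}$ and by $\mathbf{d}_x$ on $\{\lambda \le 0\}$.
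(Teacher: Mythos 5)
Your proof is correct, and it reaches the same two structural milestones as the paper's proof --- reducing measurability into $\giry\RR$ to measurability of $\ev_U \circ f$ for Borel $U$, and splitting the parameter space along the Borel set $\{\sigma = 0\}$ --- but the key analytic step on $\{\sigma \neq 0\}$ is done differently. The paper establishes that $(x,\sigma) \mapsto \dGauss(x,\sigma^2)(U)$ is actually \emph{continuous} on $\RR \times \RR_{\neq 0}$: it first decomposes $U$ into the bounded pieces $U \cap [k,k+1]$, uses uniform continuity of the Gaussian density on compact sets $I_1 \times I_2 \times [k,k+1]$ to get continuity of each partial integral, and then recovers $\ev_U$ as a countable sum of measurable functions. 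You instead observe that the density $g(x,\sigma,y)$ is jointly Borel on $\{\sigma\neq 0\}\times\RR$ and invoke Tonelli to conclude that the parametrized integral $(x,\sigma)\mapsto\int_U g(x,\sigma,y)\,dy$ is Borel, with no compactness or uniform-continuity estimates and no $[k,k+1]$ partition needed. Your route is shorter and proves exactly what is claimed (measurability), while the paper's route proves the stronger continuity statement on $\sigma\neq 0$, which is not needed for the lemma; conversely, your version makes the treatment of the seam at $\sigma=0$ fully explicit (the paper dismisses it as routine), and your handling via the identity $f(x,\sigma)(U)=\chi_A\cdot\dGauss(x,\sigma^2)(U)+\chi_B\cdot\chi_U(x)$ is sound since measurability can be checked piecewise on a measurable partition. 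One incidental point in your favour: your density $\frac{1}{\sqrt{2\pi}\,|\sigma|}\exp\bigl(-(y-x)^2/(2\sigma^2)\bigr)$ is the correct one, whereas the paper's displayed density omits the factor $2$ in the exponent (harmless for its argument, but worth noting). The remark that a generating $\pi$-system of sets $U$ would suffice is true (a Dynkin-system argument, using that the measures are probability measures) but is not actually used, since you verify measurability for every Borel $U$ directly.
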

\begin{proof}
We show that for all $A \in \Sigma_\RR$, 
the mapping $f_A(x,\sigma) = \dGauss(x,\sigma^2)(A)$ forms a measurable
function of type $\RR \times \RR_{\neq 0} \to [0,1]$ where $\RR_{\neq 0}$
is the subspace of $\RR$ whose underlying set is $\{r\in\RR | r \neq 0 \}$.
We have 
\[
\dGauss(x,\sigma^2)(A)
= \sum_{k \in \mathbb{Z}}\dGauss(x,\sigma^2)(A \cap [k,k+1])
= \sum_{k \in \mathbb{Z}} \int_{A \cap [k,k+1]} \frac{1}{\sqrt{2\pi\sigma^2}} \exp\left(-\frac{(x - r)^2}{\sigma^2}\right)dr
\]
The mapping $h(x,\sigma,r) = \frac{1}{\sqrt{2\pi\sigma^2}} \exp\left(-\frac{(x - r)^2}{\sigma^2}\right)$ forms a continuous function of type
$\RR \times \RR_{\neq 0} \times \RR \to \RR$, hence
it is uniformly continuous on the compact set $I_1 \times I_2 \times [k,k+1]$ where $I_1$ and $I_2$ are arbitrary closed intervals in $\RR$ and
$\RR_{\neq 0}$ respectively.
Then, for all $0 < \varepsilon $, there exists $0 < \delta$ such that 
$|h(x,\sigma,r)  - h(x',\sigma',r') |<\varepsilon$ holds wherever $|x - x'| + |\sigma - \sigma'| + |r -r '| < \delta $.
Hence, for all $0 < \varepsilon $, there is $0 < \delta$ such that
whenever $|x - x'| + |\sigma - \sigma'|< \delta$, 
\[
\left| \int_{A \cap [k,k+1]} h(x,\sigma,r)dr - \int_{A \cap [k,k+1]} h(x',\sigma',r)dr \right| \leq 
|\int_{[k,k+1]} |h(x,\sigma,r) - h(x',\sigma',r')| dr 
\leq \varepsilon.
\]
Since the closed intervals $I_1$ and $I_2$ are arbitrary, we conclude that 
the function $f_{A \cap [k,k+1]} \colon \RR \times \RR_{\neq 0} \to [0,1]$ is continuous, hence measurable.
Hence, the mapping $f_A = \sum_{k \in \mathbb{Z}} f_{A \cap [k,k+1]}$ is measurable.
Since $A$ is arbitrary and $f_A(x,\sigma^2) = \ev_A \circ \dGauss(x,\sigma^2)$, 
the mapping $g(x,\sigma^2) = \dGauss(x,\sigma^2)$ forms a measurable function of type $\RR \times \RR_{\neq 0} \to \giry\RR$.
The rest of proof is routine.
\end{proof}
\begin{corollary}
$\sem{\cGauss} \in \QBS(\adjR\RR\times \adjR\RR,\probqbs\adjR\RR)$.
\end{corollary}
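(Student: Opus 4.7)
The plan is to verify the corollary by unfolding the definition of a $\QBS$ morphism and reducing it to the measurability statement established in the preceding lemma. Recall that a $\QBS$ morphism $f \colon X \to Y$ is a function $|f| \colon |X| \to |Y|$ such that $f \circ \beta \in M_Y$ for every admissible $\beta \in M_X$ (i.e.\ every admissible $\RR$-indexed family).

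First I would identify the admissible functions at the source. Since $\adjR\RR$ has as admissible functions precisely the Borel-measurable functions $\RR \to \RR$, and products in $\QBS$ are computed componentwise, an admissible $\beta \colon \RR \to \adjR\RR \times \adjR\RR$ is exactly a pair $\beta = \langle \beta_1, \beta_2 \rangle$ of measurable functions $\beta_i \colon \RR \to \RR$.

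Next I would recall the description of admissible maps into $\probqbs\adjR\RR$ from \cite{HeunenKSY17}: a function $\omega \colon \RR \to |\probqbs\adjR\RR|$ lies in $M_{\probqbs\adjR\RR}$ precisely when it admits a representation $\omega(r) = [\alpha, h(r)]_{\sim_{\adjR\RR}}$ for some $\alpha \in M_{\adjR\RR}$ and some measurable $h \colon \RR \to \giry\RR$.

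Then I would compose and produce such a representation. For any admissible $\beta = \langle \beta_1, \beta_2 \rangle$, we have
\[
  (\sem{\cGauss} \circ \beta)(r) = [\mathrm{id}_\RR,\; g(\beta_1(r), \beta_2(r))]_{\sim_{\adjR\RR}},
\]
where $g \colon \RR \times \RR \to \giry\RR$ is the map sending $(x,\sigma)$ to $\dGauss(x,\sigma^2)$ (with the Dirac convention at $\sigma = 0$). Take $\alpha \triangleq \mathrm{id}_\RR$, which lies in $M_{\adjR\RR}$, and $h \triangleq g \circ \langle \beta_1, \beta_2 \rangle$. The preceding lemma gives measurability of $g$, and $\langle \beta_1, \beta_2 \rangle$ is measurable by assumption, so $h$ is measurable as a composite. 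Hence $\sem{\cGauss} \circ \beta \in M_{\probqbs\adjR\RR}$, and $\sem{\cGauss}$ is a $\QBS$ morphism.

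There is no substantive obstacle here since the preceding lemma has already discharged the analytic work of verifying measurability of the Gaussian-kernel assignment. The only care needed is to cite the right characterization of $M_{\probqbs\adjR\RR}$ so that the canonical representation $[\mathrm{id}_\RR, h(r)]$ qualifies as admissible; once that is noted, the argument collapses to composing measurable maps.
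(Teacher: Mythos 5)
Your proof is correct and matches what the paper intends: the corollary is just the reduction of the $\QBS$-morphism condition to the measurability established in the preceding lemma, and you have written out exactly the routine details the paper leaves implicit (admissible curves into $\adjR\RR\times\adjR\RR$ are pairs of Borel functions, and the characterization of $M_{\probqbs\adjR\RR}$ you invoke is literally the paper's definition of the probability monad on $\QBS$). No gaps.
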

\begin{lemma}[Measurability of $\sem{\cLap}$]
The mapping
\[
(x,\lambda) \mapsto 
\begin{cases}
\dLap(x,\lambda) & \lambda > 0\\
\mathbf{d}_{x} & \lambda \leq 0
\end{cases} 
\]
forms a measurable function of type $\RR \times \RR \to \giry\RR$.
\end{lemma}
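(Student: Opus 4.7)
The plan is to mimic the Gaussian case closely, with minor adjustments for the Laplace density. Recall that for $\lambda > 0$ the Laplace distribution has density $p(r;x,\lambda) = \frac{1}{2\lambda}\exp(-|r-x|/\lambda)$ with respect to Lebesgue measure, and this density is jointly continuous on $\RR \times \RR_{>0} \times \RR$ (the kink at $r=x$ breaks smoothness but not continuity). As in the preceding lemma, it suffices to show that for every Borel set $A \in \Sigma_\RR$, the function $f_A \colon \RR \times \RR \to [0,1]$ sending $(x,\lambda)$ to the $A$-mass of the distribution indicated by the case split (Laplace mass for $\lambda > 0$, Dirac mass $\chi_A(x)$ for $\lambda \leq 0$) is Borel measurable; measurability of the full map into $\giry\RR$ then follows because the $\sigma$-algebra of $\giry\RR$ is generated by the evaluation maps $\mathrm{ev}_A$.

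First I would handle the open half-plane $\RR \times \RR_{>0}$, replicating the decomposition from the Gaussian proof: write
\[
  f_A(x,\lambda) = \sum_{k \in \mathbb{Z}} \int_{A \cap [k,k+1]} \frac{1}{2\lambda}\exp(-|r - x|/\lambda)\,dr ,
\]
and argue that each summand is continuous in $(x,\lambda)$ by uniform continuity of the density on compact rectangles $I_1 \times I_2 \times [k,k+1]$ with $I_2 \Subset \RR_{>0}$ (on such a rectangle $1/\lambda$ is bounded and the density is uniformly continuous). A countable sum of Borel measurable functions is Borel measurable, so $f_A$ restricted to $\RR \times \RR_{>0}$ is measurable. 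On the complementary Borel set $\RR \times \RR_{\leq 0}$ we have $f_A(x,\lambda) = \chi_A(x)$, which is trivially Borel measurable in $(x,\lambda)$. Since $\RR \times \RR_{>0}$ and $\RR \times \RR_{\leq 0}$ form a Borel partition of $\RR \times \RR$, gluing gives measurability of $f_A$ globally.

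The main obstacle, such as it is, is bookkeeping: verifying uniform continuity of the Laplace density on compact rectangles avoiding $\lambda = 0$ (routine, since $p(r;x,\lambda)$ is continuous there and compact domains make it uniformly continuous), and confirming that the boundary value $\lambda \leq 0$ case does not introduce pathology (it does not, as the branch is manifestly Borel in $(x,\lambda)$). No genuine analytic difficulty arises beyond what was handled for the Gaussian, so the proof is essentially a transcription of the preceding lemma with $\frac{1}{\sqrt{2\pi\sigma^2}}\exp(-(x-r)^2/\sigma^2)$ replaced by $\frac{1}{2\lambda}\exp(-|r-x|/\lambda)$ and an explicit case analysis for $\lambda \leq 0$.
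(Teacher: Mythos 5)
Your proposal is correct and follows essentially the same route as the paper: reduce to Borel measurability of $(x,\lambda)\mapsto \dLap(x,\lambda)(A)$ for each $A\in\Sigma_\RR$ via the generating evaluation maps of $\giry\RR$, use joint continuity (hence uniform continuity on compact rectangles avoiding $\lambda=0$) of the Laplace density together with the decomposition over $A\cap[k,k+1]$ exactly as in the Gaussian lemma, and glue with the Dirac branch on $\lambda\le 0$, which the paper dismisses as ``routine.'' Your explicit treatment of the $\lambda\le 0$ branch as a Borel partition is just a spelled-out version of that routine step, so there is no substantive difference.
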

\begin{proof}
We have, for all $A \in \Sigma_\RR$,
\[
\dLap(x,\lambda)(A)
=
\int_{A} \frac{1}{2\lambda} \exp\left(-\frac{|x - r|}{\lambda}\right)dr
\]
The density function $h(x,\lambda,r) = \frac{1}{2\lambda} \exp\left(-\frac{|x - r|}{\lambda}\right)$ is continuous 
function of type $\RR \times \RR_{0\leq} \times \RR \to \RR$ where $\RR_{0\leq} $ is the subspace of $\RR$ whose underlying set is 
$\{r\in \RR | 0 \leq r\}$.
The measurability of $\dLap(x,\lambda)$ is proved in the same way as $\dGauss(x,\sigma^2)$.
The rest of proof is routine.
\end{proof}
\begin{corollary}
$\sem{\cLap} \in \QBS(\adjR\RR\times\adjR\RR,\probqbs\adjR\RR)$.
\end{corollary}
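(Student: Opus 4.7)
The plan is to exhibit $\sem{\cLap}$ as a composition of two QBS morphisms: one obtained from the preceding measurability lemma by applying the right adjoint $\adjR$, and one obtained from the standard-Borel structure of $\RR$. Denote by $g \colon \RR \times \RR \to \giry\RR$ the measurable function constructed in the preceding lemma, so that $g(x,\lambda) = \dLap(x,\lambda)$ for $\lambda > 0$ and $g(x,\lambda) = \mathbf{d}_x$ otherwise. Applying $\adjR \colon \Meas \to \QBS$ to $g$, and using that $\adjR$ preserves finite products (as a right adjoint it preserves all limits), we obtain a QBS morphism $\adjR g \colon \adjR\RR \times \adjR\RR \to \adjR\giry\RR$.

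Next I would invoke the fact, recalled in Section \ref{subsection:divergence:QBS}, that since $\RR$ is standard Borel the component $l_{\adjR\RR} \colon \adjL\probqbs\adjR\RR \to \giry\RR$ is a measurable isomorphism. Choosing $\gamma = \gamma' = \mathrm{id}_\RR$ as a trivial section-retraction pair, the explicit formula for its inverse specialises to $l_{\adjR\RR}^{-1}\ap\mu = [\mathrm{id},\mu]_{\sim_{\adjR\RR}}$. Transposing $l_{\adjR\RR}^{-1}$ across the adjunction $\adjL \dashv \adjR$ produces a QBS morphism $\phi \colon \adjR\giry\RR \to \probqbs\adjR\RR$, and unwinding the transpose together with the triangle identities shows that $\phi$ acts on global elements again by $\mu \mapsto [\mathrm{id}, \mu]_{\sim_{\adjR\RR}}$.

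Composing gives $\phi \circ \adjR g \colon \adjR\RR \times \adjR\RR \to \probqbs\adjR\RR$, whose action on a global element $(x,\lambda)$ is $[\mathrm{id}, g(x,\lambda)]_{\sim_{\adjR\RR}}$, which matches the definition of $\sem{\cLap}(x,\lambda)$ under the footnoted convention for $\lambda \leq 0$. The main subtlety is the identification of $\phi$ via its action on global elements; this requires carefully unpacking the adjunction transpose of $l_{\adjR\RR}^{-1}$ and exploiting that $\gamma = \gamma' = \mathrm{id}_\RR$ is a valid section-retraction for the standard Borel space $\RR$. Once this identification is settled, the remainder is a routine composition of QBS morphisms, with all the analytic content of the proof packaged into the preceding lemma.
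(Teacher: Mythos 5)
Your overall strategy is the right one, and it is essentially what the paper intends: the corollary is left as ``routine'' after the measurability lemma, the routine part being exactly your factorization $\sem\cLap = \phi\circ \adjR g$, where $g\colon\RR\times\RR\to\giry\RR$ is the measurable map from the lemma and $\phi$ converts a Borel probability measure $\mu$ into $[\mathrm{id},\mu]_{\sim_{\adjR\RR}}\in\probqbs\adjR\RR$. The use of product preservation by $\adjR$ and the identification of the action on global elements with the definition of $\sem\cLap$ (including the Dirac convention) are fine.

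The one step that does not go through as written is the construction of $\phi$ by ``transposing $l_{\adjR\RR}^{-1}$ across $\adjL\dashv\adjR$.'' The hom-set bijection of that adjunction is $\Meas(\adjL X,Y)\cong\QBS(X,\adjR Y)$, and $l_{\adjR\RR}^{-1}\colon\giry\RR\to\adjL\probqbs\adjR\RR$ carries the $\adjL$ on its \emph{codomain}, so it is not of the form the bijection accepts; nor does the triangle-identity computation you appeal to parse. Applying $\adjR$ instead only gets you to $\adjR\adjL\probqbs\adjR\RR$, and there is no canonical retraction back to $\probqbs\adjR\RR$ from the adjunction alone. The fix is cheap: either verify directly that $\phi(\mu)=[\mathrm{id},\mu]_{\sim_{\adjR\RR}}$ is a $\QBS$-morphism $\adjR\giry\RR\to\probqbs\adjR\RR$ --- for $\beta\in M_{\adjR\giry\RR}=\Meas(\RR,\giry\RR)$ the composite $r\mapsto[\mathrm{id},\beta(r)]_{\sim_{\adjR\RR}}$ is literally an element of $M_{\probqbs\adjR\RR}$ with $\alpha=\mathrm{id}$ and $g=\beta$ --- or invoke \cite[Prop.~22(4)]{HeunenKSY17}, which gives the isomorphism $\probqbs\adjR\RR\cong\adjR\giry\RR$ whose $\QBS$-inverse is exactly your $\phi$ (it is the inverse of the transpose of $l_{\adjR\RR}$, not a transpose of $l_{\adjR\RR}^{-1}$). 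With that replacement the argument is complete and agrees with the paper's intended proof.
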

\end{toappendix}

\subsection{A Relational Logic Verifying Differential Privacy}

To formulate differential privacy and its relaxations in the quasi-Borel setting, we
convert statistical divergences $\asgn$ on the Giry monad $\giry$
in Table \ref{tab:divdp} to $\mEQ{}$-relative divergences $\divnOpfun \asgn {L,l}$ on
the probability monad $\probqbs$ on $\QBS$ by the construction in Section \ref{subsection:divergence:QBS}.
Then, we construct
the graded relational lifting $\coden \probqbs {\divnOpfun \asgn {L,l}}$ by Theorem
\ref{th:fund}.  Using this, as an instantiation of acRL, we build a
relational logic reasoning about differential privacy and its
relaxations, supporting \emph{higher-order programs} and continuous random
samplings.  Basic proof rules can be given by Proposition
\ref{theorem:soundness}.

\newcommand{\adiff}{\mathrm{diff}}
\newcommand{\asucc}{\mathrm{succ}}

For effectful operations, we import basic proof rules on noise-adding
mechanisms given in prior studies
(\cite{DworkMcSherryNissimSmith2006,DworkRothTCS-042,MironovCSF17,BDRSSTOC18})
via Theorem \ref{thm:divergence_QBS:stdBorel} and Proposition
\ref{lem:divergence:to:judgment}. For example, consider the
$\mEQ$-relative $\qRp$-graded $\qRp$-divergence
$\asgn = \divnOpfun \divndp {L,l}$ on $\probqbs$.  Proposition
\ref{lem:divergence:to:judgment} with an effectful operation $c=\cLap$ and
a relational assertion (below we
identify global elements in $K\RR$ and real numbers)
\begin{displaymath}
  \asrt{\u:\treal\times\treal}{\d:\treal\times\treal}\phi = \{(\langle x,1/\varepsilon\rangle,\langle y,1/\varepsilon\rangle) ~|~ |x-y| \leq 1\},
\end{displaymath}
together with Theorem \ref{thm:divergence_QBS:stdBorel} and the prior
result \cite[Example 1]{DworkMcSherryNissimSmith2006} yields the
following judgment:
\begin{displaymath}
  \judge
  {\phi}
  {\cLap(\u)}
  {\cLap(\d)}
  {\coden \probqbs {\divnOpfun \divndp {L,l}} (0,\epsilon) (\mEQ{\adjR \RR})}.
\end{displaymath}
By letting $\adiff_r$ be the relational assertion
$\asrt{\u:\treal}{\d:\treal}{\{(x,y)~|~|x-y|\le r\}}$, the above
judgment is equivalent to:
\begin{equation}
  \judge{\adiff_1}{\cLap(\u,1/\epsilon)}
    {\cLap(\d,1/\epsilon))} {\coden \probqbs { \divnOpfun \divndp {L,l}}
      (0,\epsilon) (\mEQ{\adjR \RR})}.
      \label{eq:judgment:laplace}
\end{equation}
This rule corresponds to the rule [LapGen] of the program logic \apRHL{}+
(\cite{BGBHS2017POPL}) for differential privacy.  For another example,
by the reflexivity of $\divndp$, $\divnOpfun \divndp {L,l}$ is also
reflexive, hence we obtain the following judgments (below $\asucc_r$ is
the relational assertion
$\asrt{\u:\treal}{\d:\treal}{\{(x,y)~|~y=x+r\}}$):
\begin{align}
  & \judge{\asucc_1}{\cLap(\u,\lambda)} {\cLap(\d,\lambda)} {\coden
    \probqbs { \divnOpfun \divndp {L,l}} (0,0) (\asucc_1)}
    \label{eq:judgment:DP:laplace:slide} \\
  & \judge{\asucc_1}{\cGauss(\u,\sigma)} {\cGauss(\d,\sigma)} {\coden
    \probqbs { \divnOpfun \divndp {L,l}} (0,0) (\asucc_1)}
    \label{eq:judgment:DP:gauss:slide}.
\end{align}
The judgment \eqref{eq:judgment:DP:laplace:slide} correspond to
[LapNull] of \apRHL{}+.  Similarly, the following judgments about the
DP, R\'enyi-DP, zero-concentrated DP of the Gaussian mechanism can be
derived as
(\ref{eq:judgment:gauss:DP})--(\ref{eq:judgment:gauss:zCDP}).
\begin{align}
  & \judge{\adiff_1}{\cGauss(\u,\sigma)}{\cGauss(\d,\sigma)}{
    \coden \probqbs { \divnOpfun \divndp {L,l}} {(\epsilon,\delta)}
    {(\mEQ{\adjR \RR})}}
    \label{eq:judgment:gauss:DP} \\
  & \judge{\adiff_r}{\cGauss(\u,\sigma)}{\cGauss(\d,\sigma)}{
    \coden \probqbs {\divnOpfun  {\divrn\alpha{}} {L,l}} {(\alpha r^2 /
    2\sigma^2)} {(\mEQ{\adjR \RR})}}
    \label{eq:judgment:gauss:RDP} \\
  & \judge{\adiff_r}{\cGauss(\u,\sigma)}{\cGauss(\d,\sigma)}{
    \coden \probqbs {\divnOpfun  \divnzcdp {L,l}} {(0,r^2 / 2\sigma^2)}
    {(\mEQ{\adjR \RR})}}
    \label{eq:judgment:gauss:zCDP}
\end{align}
In \eqref{eq:judgment:gauss:DP} we require
$\sigma \geq \max ((1+\sqrt{3})/2,~\sqrt{2 \log
  (0.66/\delta)}/\epsilon)$.  The derivation is done via Proposition
\ref{lem:divergence:to:judgment}, Theorem
\ref{thm:divergence_QBS:stdBorel} and prior studies
\mcite{DBLP:journals/entcs/Sato16,MironovCSF17,BSTCC16}.

\section{Case Study II: Probabilistic Programs with Costs}
\label{sec:probcost}

We further extend the computational signature $\mathcal C$ in the
previous section with an effectful operation $\ctick$ such that
$\Sigma_e(\ctick)=(\treal,1)$.  The intention of $\ctick(r)$ is to
increase cost counter by $r$ during execution\footnote{To make
  examples simpler, we allow negative costs.}.  To interpret this
extended metalanguage, we fill Figure \ref{fig:data} as follows: 
\begin{enumerate}
\item for the CCC-SM, we take $(\CC,T) = (\QBS,\probqbs_{c})$ where
  $\probqbs_{c}\triangleq\probqbs(\adjR\RR\times-)$ is the monad for
  modeling probabilistic choice and cost counting (see Section
  \ref{sec:ccount}).
\item interpretation of $b \in B$ is the same as Section \ref{ex:prob},
\item interpretation of value operations is also the same as Section \ref{ex:prob},
\item for the interpretation of effectful
  operations, put
  \begin{align*}
    \sem\cGauss(x,\sigma) &= [(0,\mathrm{id}),\dGauss(x,\sigma^2)]_{\sim_{\adjR\RR \times\adjR\RR}},\\
    \sem\cLap(x,\lambda) &= [(0,\mathrm{id}),\dLap(x,\lambda)]_{\sim_{\adjR\RR \times\adjR\RR}},\\
    \sem\ctick (r) &= \eta^\probqbs_{\adjR\RR  \times \sem{1}} (r,\ast) = [\mathrm{const}(r,\ast), \mu]_{\sim_{\adjR\RR \times 1}}.
  \end{align*}
\end{enumerate}
We derive a closed term $\cntick \colon \treal \Arrow \treal \Arrow \tT 1$
for ticking with a cost sampled from Gaussian distribution:
\begin{displaymath}
  \cntick \triangleq (\lambda s. \lambda r.~\mlet x {\cGauss(r,s)}{\ctick(x)}).
\end{displaymath}
The term $\cntick~s~r$ adds cost counter by a random value
sampled from the Gaussian distribution $\cGauss(r,s^2)$.

\subsection{Relational Reasoning on Probabilistic Costs}
\label{sec:casestudy:probcost} 
We convert the total valuation distance
$\divntv\in\mDiv\giry1{\qRp}\mEQ$ to the divergence
$\asgn_{c} \triangleq \divnCostDiv {\divnOpfun \divntv  {L,l}} {K\RR} \in
\mDiv{\probqbs_{c}}1{\qRp}\mEQ$ on $\probqbs_c$ by Propositions
\ref{prop:divergence:monad_opfunctor_a} and
\ref{prop:divergence:cost:projection}.
We also prove basic facts
on effectful operations.  First, the following relational judgments on $\ctick$ can
be easily given:
\begin{align}
  \judge
  {\top}{\ctick(\u)}{\ctick(\d)}
  {\coden {T} {\asgnpc} (1) (\top)}
  \label{eq:costs:tick:top}
  \\
  \judge{\u=\d}
  {\ctick(\u)}{\ctick(\d)}
  {\coden {T} {\asgnpc} (0) (\top)}
  \nonumber
\end{align}
Remark that $\mEQ 1 = \top$ and $\sem{ \ctick(0)} = \sem{\mret \ast}$ holds.
Next, in the similar way as \eqref{eq:judgment:DP:laplace:slide}, by
the reflexivity of $\divntv$, we have the reflexivity of
$\divnOpfun \divntv  {L,l}$, and we obtain, for each real number constant $\sigma,\lambda$,
\begin{align}
  &\judge{\asucc_r}
    {\cGauss(\u,\sigma)}
    {\cGauss(\d,\sigma)}
    {\coden {T} {\asgnpc} (0) (\asucc_r)}
    \notag\\
  \label{eq:costs:laplacian_sliding}
  &\judge{\asucc_r}
    {\cLap(\u,\lambda)}
    {\cLap(\d,\lambda)}
    {\coden {T} {\asgnpc} (0) (\asucc_r)}
\end{align}
We also directly verify the following judgment on $\cntick$ using
Theorem \ref{thm:divergence_QBS:stdBorel} and Proposition
\ref{lem:divergence:to:judgment}:
\begin{align}
  \label{eq:costs:let_gaussian}
  \judge
  {\adiff_1}
  {\cntick~\sigma~\u}
  {\cntick~\sigma~\d}
  {\coden {T} {\asgnpc}{({\textstyle \Pr_{r \sim \dGauss(0,\sigma^2)}}[|r| < 0.5] )}{(\top)}}.
\end{align}

\subsubsection{An Example of Relational Reasoning}
We give examples of verification of difference (of distributions) of costs between
two runs of a probabilistic program whose output and cost depend on the input.
We consider the following program:
\begin{displaymath}
  M \triangleq \lambda r \colon \treal.~ \lambda t \colon \treal \to T1.~\mlet x {\cLap(r,5)} {\mlet {\_} {t(r)} {\mret {x - r}}}.
\end{displaymath}
It first samples a real number $x$ from the Laplacian distribution
centered at the input $r$, call the (possibly effectful) closure $t$
with $r$ and return $x-r$. Since the return type of $t$ is $T1$, it
can only probabilistically tick the counter. We show that the
following two judgments in acRL:
\begin{align}
  \label{proof:probcost:example1}
  &\judge {} {M~0~(\lambda x.\ctick(x))}{M~1~(\lambda x.\ctick(x))} {\coden {T} {\asgnpc}{(1)} {(\mEQ{\sem \treal}})}\tag{A},\\
  \label{proof:probcost:example2}
  &\judge {} {M~0~(\cntick(2))}{M~1~(\cntick(2))} {\coden {T} {\asgnpc}{(0.20)} {(\mEQ{\sem \treal}})}\tag{B}
\end{align}
In judgment \eqref{proof:probcost:example1}, we pass the tick
operation $t = \lambda x.\ctick(x)$ itself to $M~0$ and $M~1$.  By the
fundamental property of $\coden {T} {\asgnpc}$, the difference of
costs between two runs of $M~0~t$ and $M~1~t$ is $1$, because each of
these programs reports cost $0$ and $1$ deterministically.  In
contrast, in judgment \eqref{proof:probcost:example2}, we pass to
$M~0$ and $M~1$ the probabilistic tick function $t' = \cntick(2)$ that
ticks a real number sampled from the Gaussian distribution with
variance $2^2 = 4$. Therefore the cost reported by the runs of
programs $M~0~t'$ and $M~1~t'$ follow the Gaussian distributions
$\dGauss(0,4)$ and $\dGauss(1,4)$, whose difference by $\divntv$ is
bounded by $0.20$.

We first show \eqref{proof:probcost:example1}.  By
\eqref{eq:costs:tick:top} and \ref{rule:consequence} of Proposition \ref{theorem:soundness}, we
have,
\begin{equation}
  \label{eq:example:1}
  \judge {\asucc_1} {\ctick (\u)} {\ctick (\d)} {\coden {T} {\asgnpc}{(1)}{(\top)}}. 
\end{equation}
By \eqref{eq:example:1}, and \ref{rule:monad:return}, \ref{rule:monad:bind} of Proposition \ref{theorem:soundness}, we obtain,
\begin{align}
  \asucc_1 \vdash& (\mlet {\_} {\ctick (\u)} {\mret {\u}},\notag\\
  \label{eq:example:3}
                 &\quad\mlet {\_} {\ctick (\d)} {\mret {\d-1}}) \colon \coden {T} {\asgnpc}{(1)}{(\mEQ\sem\treal)}.
\end{align}
By \eqref{eq:costs:laplacian_sliding}, \eqref{eq:example:3}, and \ref{rule:extensional:semantic_equivalence} and \ref{rule:monad:bind} of Proposition \ref{theorem:soundness} again, we conclude
\eqref{proof:probcost:example1}.

To show (\ref{proof:probcost:example2}), it suffices to replace
(\ref{eq:example:1}) by the following judgment proved by
(\ref{eq:costs:let_gaussian}), the inequality $\Pr_{r \sim
  \dGauss(0,4)}[|r| < 0.5] \leq 0.20$ and \ref{rule:consequence} of Proposition
\ref{theorem:soundness}:
\[
  \judge {\asucc_1} {\cntick~2~\u} {\cntick~2~\d} {\coden {T} {\asgnpc}{(0.20)}{(\top)}}. 
\]
The rest of proof is the same as (\ref{proof:probcost:example1}).

\section{Related Work}

This work is based on the frameworks for verifying the differential
privacy of probabilistic programs using relational logic, summarized
in Table \ref{tbl:summary}. Composable divergences employed in these
frameworks include the one for differential privacy, plus its recent
relaxations, such as, R\'enyi DP, zero-concentrated DP, and
truncated-concentrated DP \mcite{BDRSSTOC18,BSTCC16,MironovCSF17}.

\begin{table}[htb]
  \caption{Approximate Probabilistic Relational Logic} {\small
    \begin{tabular}{ccccc}
      Work & Monad & Relation & Lifting Method & Supported divergences \\ \hline
      \cite{DBLP:conf/csfw/BartheGAHKS14,DBLP:conf/lics/BartheGGHS16,DBLP:conf/popl/BartheKOB12}&  Dist & $\BRel \Set$ & coupling & DP\\ \hline
      \cite{DBLP:conf/icalp/BartheO13}&  Dist & $\BRel \Set$ & coupling &$f$-divergences\\ \hline
      \cite{DBLP:conf/lics/SatoBGHK19} & Giry & $\Span \Meas$ & coupling (spans) & composable ones \\ \hline
      \cite{DBLP:journals/entcs/Sato16} & Giry  & $\BRel \Meas$ & codensity & DP \\ \hline
      This work &  Generic & $\brelc$ & codensity & composable ones
    \end{tabular}
  }
  \label{tbl:summary}
\end{table}
The key semantic structure in these frameworks is
\tmem{graded relational liftings} of the probability distribution monad.  
Barthe et al. gave a graded relational
lifting of the distribution monad based on the existence of two witnessing probability
distributions (called {\em coupling}) \mcite{DBLP:conf/popl/BartheKOB12}. Since then, {\em
  coupling-based} liftings have been refined and used in several works
\mcite{DBLP:conf/csfw/BartheGAHKS14,DBLP:conf/lics/BartheGGHS16,DBLP:conf/icalp/BartheO13,DBLP:conf/lics/SatoBGHK19}.
They can be systematically constructed from {\em composable}
divergences on the probability distribution monad
\mcite{DBLP:conf/icalp/BartheO13}. One advantage of coupling-based
liftings is that, to relate two probability distributions, it suffices
to exhibit a coupling; this is exploited in the mechanized
verification of differential privacy of programs
\mcite{DBLP:conf/cav/AlbarghouthiH18,DBLP:journals/pacmpl/AlbarghouthiH18}.
These coupling-based liftings, however, are developed upon discrete
probability distributions, and measure-theoretic probability
distributions, such as Gaussian or Cauchy distributions, were not
supported until the work \mcite{DBLP:conf/lics/SatoBGHK19}.

The relational Hoare logic supporting sampling from continuous
probability measures is given in the study by
\cite{DBLP:journals/entcs/Sato16}. In his work, the graded relational
lifting for $(\epsilon,\delta)$-DP is given in the style of {\em
  codensity lifting} \mcite{DBLP:journals/lmcs/KatsumataSU18}, which
does not rely on the existence of coupling. Yet, it has been an open
question \cite[Section VIII]{DBLP:conf/lics/SatoBGHK19} how to extend
his graded relational lifting to support various relaxations of
differential privacy. This paper answers to this question as Theorem
\ref{th:fund}.  Later, coupling-based liftings has also been extended
to support samplings from continuous probability measures
\mcite{DBLP:conf/lics/SatoBGHK19}. This extension is achieved by
redefining the concept of binary relations as {\em spans} of measurable
functions.  Comparison of these approaches is in the next section.

The verification of differential privacy in functional programming
languages has also been pursued
\mcite{DBLP:conf/icfp/ReedP10,DBLP:conf/popl/GaboardiHHNP13,
  DBLP:conf/popl/BartheGAHRS15,DBLP:conf/lics/AmorimGHK19}.
\cite{DBLP:conf/icfp/ReedP10} introduced a linear functional
programming language with a graded monadic type that supports
reasoning about $\epsilon$-differential privacy. Later, Gaboardi et
al. strengthen Reed-Pierce type system with dependent types
\mcite{DBLP:conf/popl/GaboardiHHNP13}. A category-theoretic account of
Reed and Pierce type system is given in
\cite{DBLP:conf/lics/AmorimGHK19}, where general
$(\epsilon,\delta)$-differential privacy is also supported. These
works basically regard types as metric spaces, allowing us to reason
about {\em sensitivity} of programs with respect to inputs. The
coupling-based lifting techniques are also employed in the relational
models of higher-order probabilistic programming language
\mcite{DBLP:conf/popl/BartheGAHRS15}.

The study \cite{DBLP:conf/lics/AmorimGHK19} gives a categorical
definition of composable divergences in a general framework called
{\em weakly closed refinements of symmetric monoidal closed
  categories} \cite[Definition 1]{DBLP:conf/lics/AmorimGHK19}. A
comparison is given in Section \ref{sec:int}.

\cite{10.1145/2933575.2934518} introduced a quantitative refinement of
algebraic theory called {\em quantitative equational theory}, and
studied variety theorem for quantitative algebras.
\cite{bacci_et_al:LIPIcs.CALCO.2021.7} discussed tensor products of
quantitative equational theories. QETs and divergences on monads share
the common interest of measuring quantitative differences between
computational effects. Divergences on monads are derived as a
generalization of the composability condition of statistical
divergences studied by \cite{DBLP:conf/icalp/BartheO13}. To make a
precise connection between these two concepts, in Section
\ref{sec:adj}, we have given an adjunction between QETs of type
$\Omega$ over $X$ and $X$-generated divergences on the free monad
$T_\Omega$. The adjunction cuts down to the isomorphism between
{\em unconditional} QETs of type $\Omega$ over $X$ and $X$-generated
divergences on $T_\Omega$.

The use of metric-like spaces in the semantics is seen in several
recent work. \cite{10.1145/3209108.3209149} studies quantitative
refinements of Abramsky's applicative bisimilarity for Reed-Pierce
type system. He introduces a monadic operational semantics of the
language and formalized quantitative applicative bisimilarity using
monad liftings to the category of quantale-valued relations.
\cite{bonchi_et_al:LIPIcs:2018:9555} also used metric-like spaces to
study bisimulations and up-to techniques in the category of
quantale-valued relations.
In this work our interest is relational program verification of
effectful programs, and it is carried out in the relational category
$\brelc$, rather than $\Div\qQ\CC$. The quantitative difference of
computational effects measured by a divergence $\asgn$ is represented
by the binary relation $\tilde\asgn$ graded by upper bounds
of distance.

\section{Future Work}

The framework for relational cost analysis given in \mcite{Radicek:2017:MRR:3177123.3158124}(extension of $\mathsf{RelCost}$ \mcite{Cicek:2017:RCA:3093333.3009858}) consists of the relational logic verifying the difference of costs between two programs
and the unary logic verifying the lower and upper bound of costs (i.e. cost intervals) in one program.
We expect that the relational logic can be reformulated by an instantiation of acRL with the divergence $\mathsf{NCI}$ on $\pwr(\NN \times -)$ (or its variant).
However to reformulate the unary logic, we want a \emph{unary version of divergence} on 
$\pwr(\NN \times -)$ for cost intervals.
To establish the connection between the unary logic and relational logic, we want a conversion from the unary version of divergence (for cost intervals) to $\mathsf{NCI}$
(for cost difference).

There might be many other examples and applications of divergences on monads.
In this paper, we mainly discussed examples of divergences with basic endorelations $\mT$ and $\mEQ$, but various other basic endorelations can be considered.
\section{Measurable Spaces and Quasi-Borel Spaces}
\label{sec:meas}
\label{sec:qbs}

\paragraph*{Measurable Spaces.}
For the treatment of continuous probability distributions, we employ
the category $\Meas$ of measurable spaces and measurable functions.
For a measurable space $I$ we write $|I|$ and $\Sigma_I$ for the
underlying set and $\sigma$-algebra of $I$ respectively.  The category
$\Meas$ is a (well-pointed) CC, and it has all small limits and small colimits
that are strictly preserved by the forgetful functor
$|{-}| \colon \Meas \to \Set$. It is naturally
isomorphic to the global element functor $\Meas(1,-)$.

\paragraph*{Standard Borel Spaces.}
A standard Borel space is a special measurable space
$(|\Omega|, \Sigma_\Omega)$ whose $\sigma$-algebra $\Sigma_\Omega$ is
the coarsest one containing the topology $\sigma_\Omega$ of a Polish
space $(|\Omega|, \sigma_\Omega)$.  In particular, the real line $\RR$
forms a standard Borel space.  In fact, a measurable space $\Omega$ is
standard Borel if and only if there are $\gamma \colon \Omega \to \RR$
and $\gamma' \colon \RR \to \Omega$ in $\Meas$
forming a section-retraction pair, that is,
$\gamma' \circ \gamma = \id_\Omega$.
For example, $[0,1]$, $[0,\infty]$, $\NN$, $\RR^k$ ($k \in \NN$) are
standard Borel.

\paragraph*{The Giry Monad.}
We recall the Giry monad $\giry$~(\cite{Giry1982}).  For every
measurable space $I$, $\giry I$ is the set $|\giry I|$ of all
probability measures over $I$ with the coarsest $\sigma$-algebra
induced by functions $\ev_A \colon |\giry I | \to [0,1]$  ($A \in \Sigma_X$)
defined by $\ev_A(\mu) = \mu(A)$.  The unit $\eta_I \colon I \to \giry I$ assigns to
each $x \in I$ the Dirac distribution $\mathbf{d}_x$ centered at $x$.
For every $f \colon I \to \giry J$, the Kleisli extension
$f\kl \colon \giry I \to \giry J$ is given by
$(f\kl(\mu))(A) = \int_x f(x)(A)~d\mu(x)$ for each $\mu \in \giry I$.  We also denote by
$\sgiry$ the subprobabilistic variant of $\giry$ (called sub-Giry
monad), where the underlying set $|\sgiry I|$ of $\sgiry I$ is relaxed
to the set of subprobaility measures over $I$.

The Giry monad $\giry$ (resp. the sub-Giry monad $\sgiry$) carries a
(commutative) strength
$\theta_{I,J} \colon I \times \giry J \to \giry (I \times J)$
over the CC $(\Meas,1,(\times))$. It computes the product of
measures ($(x, \mu) \mapsto \mathbf{d}_x \otimes \mu$).  Therefore
$(\Meas,\giry)$ and $(\Meas,\sgiry)$ are (well-pointed) CC-SMs.

\paragraph*{Quasi-Borel Spaces.}
The category $\Meas$ is not suitable for the semantics of
\emph{higher-order} programming languages since it is not Cartesian
closed~(\cite{aumann1961}).  For the treatment of higher-order
probabilistic programs with continuous distributions, we employ the
Cartesian closed category $\QBS$ of quasi-Borel spaces and morphisms
between them, together with the probability monad $\probqbs$ on
$\QBS$~(\cite{HeunenKSY17}).  A quasi-Borel space is a pair
$I = (|I|,M_I)$ of a set $|I|$ and a subset $M_I$ of the function
space $\RR \Rightarrow |I|$ satisfying
\begin{enumerate}
\item for $\alpha \in M_I$ and a measurable function $ f \colon \RR \to \RR$,
$\alpha \circ f \in M_I$.
\item  for any $x \in I$, $(\lambda r \in \RR.x) \in M_I$.
\item  for all $P \colon \RR \to \NN$ and a family $\{ \alpha_i \}_{ i \in \NN}$ of functions $\alpha_i \in M_I$,  $(\lambda r \in \RR. \alpha_{P(r)}(r) ) \in M_I$.
\end{enumerate}
A morphism $f \colon (|I|,M_I) \to (|J|,M_J)$ is a function $f \colon |I| \to |J|$
such that $f \circ \alpha \in M_J$ holds for all $\alpha \in M_I$.
The category $\QBS$ is a (well-pointed) CCC, and has all countable
products and coproducts that are strictly preserved by the forgetful functor $|{-}| \colon \QBS \to \Set$.
It is naturally isomorphic to the global element functor $\QBS(1,-)$.

\paragraph*{Connection to Measurable Spaces: an Adjunction}
We can convert measurable spaces and quasi-Borel spaces using an
adjunction $\adjL \dashv \adjR \colon \Meas \to \QBS$. They
are given by
\begin{align*}
  \adjL I &\triangleq (|I|, \{ U \subseteq |I| ~|~ \forall \alpha \in M_X. \alpha^{-1}(I) \in \Sigma_{\RR} \}) & \adjL f &\triangleq f\\
  \adjR I &\triangleq (|I|, \Meas(\RR,I)) & \adjR f &\triangleq f
\end{align*}
For any standard Borel space $\Omega \in \Meas$, we have
$\adjL\adjR\Omega = \Omega$.  The right adjoint $\adjR$ is
full-faithful when restricted to the standard Borel
spaces~\cite[Proposition 15-(2)]{HeunenKSY17}.  The right adjoint $K$
preserves countable coproducts and function spaces (if exists) of
standard Borel spaces~\cite[Proposition 19]{HeunenKSY17}.

\paragraph*{Probability Measures and the Probability Monad.}
A probability measure on a quasi-Borel space $I$ is a pair
$(\alpha,\mu) \in M_I \times \giry \RR$.  We introduce an equivalence
relation $\sim_I$ over probability measures on $I$  by
\begin{displaymath}
  (\alpha,\mu) \sim_I (\beta,\nu) \iff \mu(\alpha^{-1}(-)) =
  \nu(\beta^{-1}(-)).
\end{displaymath}
Using this, we introduce a probability monad $\probqbs$ on $\QBS$ as
follows:
\begin{itemize}
\item On objects, we define $P:\Obj\QBS\to\Obj\QBS$ by
  \begin{align*}
    |\probqbs(I)| &\triangleq (M_I \times \giry \RR) / \sim_I,
    & M_{\probqbs(I)} \triangleq \{ \lambda r. [(\alpha,g(r))]_{\sim_I} ~|~ \alpha
      \in M_I, g \in \Meas(\RR,\giry \RR)\}.
  \end{align*}
\item The unit is defined by $\eta_I (x) \triangleq [\lambda r.x,\mu]_{\sim_I}$
  for an arbitrary $\mu \in \giry \RR$.
\item The Kleisli extension of $f \colon I \to \probqbs(J)$ is defined
  by $f^\sharp [\alpha,\mu]_{\sim_I} \triangleq [\beta, g^\sharp \mu]$ where
  there are $\beta \in M_J$ and $g \in \Meas(\RR,\giry\RR)$ satisfying
  $f \circ \alpha = \lambda r \in \RR. [\beta,g(r)]_{\sim_J}$ by
  definition of $M_{\probqbs(J)}$.
\end{itemize}
The monad $\probqbs$ is (commutative) strong with respect to the CCC
$(\QBS,1,(\times))$.

\section*{Acknowledgments}

Tetsuya Sato carried out this research under the support by JST ERATO
HASUO Metamathematics for Systems Design Project (No. JPMJER1603) and
JSPS KAKENHI Grant Number 20K19775, Japan. Shin-ya Katsumata carried
out this research under the support by JST ERATO HASUO Metamathematics
for Systems Design Project (No. JPMJER1603) and JSPS KAKENHI Grant
Number 18H03204, Japan. The authors are grateful to Ichiro Hasuo
providing the opportunity of collaborating in that project.  The
authors are grateful to Satoshi Kura, Justin Hsu, Marco Gaboardi,
Borja Balle and Gilles Barthe for fruitful discussions.

\renewcommand{\appendixprelim}{}
\bibliographystyle{plain}
\bibliography{mine}

\newpage
\appendix

\end{document}